\newcommand\nc\newcommand
\nc\bfa{{\boldsymbol a}}\nc\bfA{{\boldsymbol A}}\nc\cA{{\mathcal A}}
\nc\bfb{{\boldsymbol b}}\nc\bfB{{\boldsymbol B}}\nc\cB{{\mathcal B}}
\nc\bfc{{\boldsymbol c}}\nc\bfC{{\boldsymbol C}}\nc\cC{{\mathcal C}}
\nc\sC{{\mathscr C}}
\nc\bfd{{\boldsymbol d}}\nc\bfD{{\boldsymbol D}}\nc\cD{{\mathcal D}}
\nc\bfe{{\boldsymbol e}}\nc\bfE{{\boldsymbol E}}\nc\cE{{\mathcal E}}
\nc\bff{{\boldsymbol f}}\nc\bfF{{\boldsymbol F}}\nc\cF{{\mathcal F}}
\nc\bfg{{\boldsymbol g}}\nc\bfG{{\boldsymbol G}}\nc\cG{{\mathcal G}}
\nc\bfh{{\boldsymbol h}}\nc\bfH{{\boldsymbol H}}\nc\cH{{\mathcal H}}
\nc\bfi{{\boldsymbol i}}\nc\bfI{{\boldsymbol I}}\nc\cI{{\mathcal I}}
\nc\bfj{{\boldsymbol j}}\nc\bfJ{{\boldsymbol J}}\nc\cJ{{\mathcal J}}
\nc\bfk{{\boldsymbol k}}\nc\bfK{{\boldsymbol K}}\nc\cK{{\mathcal K}}
\nc\bfl{{\boldsymbol l}}\nc\bfL{{\boldsymbol L}}\nc\cL{{\mathcal L}}
\nc\bfm{{\boldsymbol m}}\nc\bfM{{\boldsymbol M}}\nc\sM{{\mathscr M}}\nc\cM{{\mathcal M}}
\nc\bfn{{\boldsymbol n}}\nc\bfN{{\boldsymbol N}}\nc\cN{{\mathcal N}}
\nc\bfo{{\boldsymbol o}}\nc\bfO{{\boldsymbol O}}\nc\cO{{\mathcal O}}
\nc\bfp{{\boldsymbol p}}\nc\bfP{{\boldsymbol P}}\nc\cP{{\mathcal P}}
\nc\bfq{{\boldsymbol q}}\nc\bfQ{{\boldsymbol Q}}\nc\cQ{{\mathcal Q}}
\nc\bfr{{\boldsymbol r}}\nc\bfR{{\boldsymbol R}}\nc\cR{{\mathcal R}}
\nc\bfs{{\boldsymbol s}}\nc\bfS{{\boldsymbol S}}\nc\cS{{\mathcal S}}
\nc\bft{{\boldsymbol t}}\nc\bfT{{\boldsymbol T}}\nc\cT{{\mathcal T}}
\nc\bfu{{\boldsymbol u}}\nc\bfU{{\boldsymbol U}}\nc\cU{{\mathcal U}}
\nc\bfv{{\boldsymbol v}}\nc\bfV{{\boldsymbol V}}\nc\cV{{\mathcal V}}
\nc\bfw{{\boldsymbol w}}\nc\bfW{{\boldsymbol W}}\nc\cW{{\mathcal W}}
\nc\bfx{{\boldsymbol x}}\nc\bfX{{\boldsymbol X}}\nc\cX{{\mathcal X}}
\nc\bfy{{\boldsymbol y}}\nc\bfY{{\boldsymbol Y}}\nc\cY{{\mathcal Y}}
\nc\bfz{{\boldsymbol z}}\nc\bfZ{{\boldsymbol Z}}\nc\cZ{{\mathcal Z}}
\nc\diff{{\mathrm d}}
\nc\e{{\mathrm e}}
\nc\calC{{\mathcal C}}
\newcommand{\remove}[1]{}
\newcommand{\correct}[1]{\textcolor{black}{#1}}
\newcommand{\avg}{{\mathbb E}}
\newcommand{\dist}{d_{L}}
\numberwithin{equation}{section}
\theoremstyle{plain}
\newtheorem{theorem}{Theorem}
\newtheorem{lemma}[theorem]{Lemma}
\newtheorem*{theorem*}{Theorem}
\newtheorem*{lemma*}{Lemma}
\newtheorem{corollary}{Corollary}
\newtheorem*{corollary*}{Corollary}
\newtheorem{observation}[theorem]{Observation}
\newtheorem{definition}{Definition}[section]
\newtheorem{remark}{Remark}
\def\DEBUG{true}
  \def\rem#1{{\marginpar{\raggedright\scriptsize #1}}}
  \newcommand{\barnr}[1]{\rem{\textcolor{red}{$\bullet$ #1}}}
  \newcommand{\aryar}[1]{\rem{\textcolor{green}{$\bullet$ #1}}}
  \newcommand{\barnr}[1]{}
  \newcommand{\aryar}[1]{}
\newcommand\reals{{\mathbb R}}
\newcommand\integers{{\mathbb Z}}
\newcommand{\norm}[1]{||#1||}
\title{Connectivity in Random Annulus Graphs and\\ the Geometric Block Model\thanks{Sainyam Galhotra and Barna Saha are supported in part by NSF CAREER 1652303, a Google faculty award and an Alfred P. Sloan fellowship. Arya Mazumdar and Soumyabrata Pal are supported in part by NSF Awards 1642658 and 1642550.}}
\author[1]{Sainyam Galhotra}
\author[1]{Arya Mazumdar}
\author[1]{Soumyabrata Pal}
\author[2]{Barna Saha}
\affil[1]{College of Information and Computer Sciences,\protect\\
  University of Massachusetts Amherst,\protect\\
  Amherst, MA 01003\protect\\
  \texttt{\{sainyam,arya,spal\}@cs.umass.edu}}
\affil[2]{University of California Berkeley, USA\protect\\
        \texttt{barnas@berkeley.edu}}
\begin{document}
\maketitle

\begin{abstract}
We provide new connectivity results for  {\em vertex-random graphs} or {\em random annulus graphs} which are significant generalizations of random geometric graphs. Random geometric graphs (RGG) are one of the most basic models of random graphs for spatial networks proposed by Gilbert in 1961, shortly after the introduction of the Erd\H{o}s-R\'{en}yi random graphs. They resemble social networks in many ways (e.g. by spontaneously creating cluster of nodes with high modularity). The connectivity properties of RGG have been studied since its introduction, and analyzing them has been significantly harder than their  Erd\H{o}s-R\'{en}yi counterparts due to correlated edge formation.

An Erd\H{o}s-R\'{en}yi random graph $G(n,p), n \in \integers_+, p \in [0,1]$ has $n$ vertices, and each pair of vertices form an edge with probability $p$. This is the simplest model of random graphs where the randomness lies in the edges. It is natural to define a similar (in simplicity) model of random graphs where the randomness lies in the vertices. Consider a vertex-random graph $G(n, [r_1, r_2]), 0 \le r_1 <r_2\le 1$  with $n$ vertices.  Each vertex of the graph is assigned a real number  in $[0,1]$ randomly and uniformly. There is an edge between two vertices if the difference between the corresponding two random numbers is between $r_1$ and $r_2$. For the special case of $r_1=0$, this corresponds to random geometric graph in one dimension. We can extend this model to higher dimensions where each vertex is associated with a uniform random vector on a $t$-dimensional unit sphere and an edge gets formed if and only if the Euclidean (or geodesic) distance between the two vertices is between $r_1$ and $r_2$. Again, when $r_1=0$, this reduces to high-dimensional RGGs. We call such graphs {\em random annulus graphs} (RAG). In this paper we study the connectivity properties of such graphs, providing both necessary and sufficient conditions. We show a surprising {\em long edge phenomena} for vertex-random graphs: the minimum gap for connectivity between $r_1$ and $r_2$ is significantly less when $r_1 >0$ vs when $r_1=0$ (RGG). We then extend the connectivity results to high dimensions. 

Our next contribution is in using the connectivity of random annulus graphs to provide necessary and sufficient conditions for efficient recovery of communities for {\em the geometric block model} (GBM). The GBM is a probabilistic model for community detection defined over an RGG in a similar spirit as the popular {\em stochastic block model}, which is defined over an Erd\H{o}s-R\'{en}yi random graph. The geometric block model inherits the transitivity properties of RGGs and thus models communities better than a stochastic block model.  However, analyzing them requires fresh perspectives as all prior tools fail due to correlation in edge formation. We provide a simple and efficient algorithm that can recover communities in GBM exactly with high probability in the regime of connectivity.

\end{abstract}
%\end{frontmatter}

\section{Introduction}
Models of random graphs are ubiquitous with Erd\H{o}s-R\'{en}yi graphs \citep{erdos1959random,gilbert1959random} at the forefront. Studies of the properties of random graphs have led to many fundamental theoretical observations as well as many engineering applications. 
In an Erd\H{o}s-R\'{en}yi graph $G(n,p), n \in \integers_+, p \in [0,1]$, the randomness lies in how the edges are chosen: each possible pair of vertices forms an edge independently with probability $p$. It is also possible to consider models of graphs where randomness lies in the vertices. 

\paragraph{Vertex Random Graphs.} Keeping up with the simplicity of the  Erd\H{o}s-R\'{en}yi model, let us define a vertex-random graph (VRG) in the following way. Given two reals $0 \le r_1\le r_2\le 1/2$, the vertex-random graph  ${\rm VRG}(n, [r_1,r_2])$ is a random graph with $n$ vertices. Each vertex $u$ is assigned a random number $X_u$ selected randomly and uniformly from $[0,1]$. Two vertices $u$ and $v$ are connected by an edge, if and only if $r_1 \le d_L(X_u,X_v) \le r_2$, where $d_L(X_u,X_v)$ can be taken to be the absolute difference $|X_u-X_v|$, however to curtail the boundary effect we define   $d_L(X_u,X_v) \equiv \min\{|X_u-X_v|, 1-|X_u-X_v|\}$ here.

This definition is by no means new. For the case of $r_1=0$, this is the random geometric graphs (RGG) in one dimension. 
Random Geometric graphs were defined first  by \citep{gilbert1961random} and constitute the first and simplest model of spatial networks. The definition of VRG  has been previously mentioned in \citep{dettmann2016random}. The interval $[r_1,r_2]$ is called the connectivity interval in VRG.
%The introduction of random geometric graphs (RGG) shortly follows that of Erd\H{o}s-R\'{en}yi graphs \citep{gilbert1961random}, and they constitute the first and simplest model of spatial networks. The one-dimensional RGG $G(n,r), n \in \mathbb{Z}_+, r \in (0,1]$ is defined in the following way. It is a graph with  $n$ vertices. Each vertex $u$ is assigned a random number $X_u$ selected randomly and uniformly from $[0,1]$. Two vertices $u$ and $v$ are connected by an edge, if and only if $d_L(X_u,X_v) \equiv \min\{|X_u-X_v|, 1-|X_u-X_v|\} \le r.$ This model can straightforwardly be extended to higher dimensional case, where vertices are assigned random (possible non-uniform) points from some closed compact subset of the Euclidean space  and then two vertices will have an edge between them if the inner product between the corresponding vectors is higher than some prescribed value. 
Random geometric graphs have several desirable properties that model real human social networks, such as vertices with high modularity and the degree associativity property (high degree nodes tend to connect). This has led RGGs to be used as models of disease outbreak in social network \citep{eubank2004modelling} and flow of opinions \citep{zhang2014opinion}. RGGs are an excellent model for wireless (ad-hoc) communication networks \citep{dettmann2016random,haenggi2009stochastic}. From a more mathematical stand-point, RGGs act as a bridge between the theory of classical random graphs and that of percolation \citep{b:01,b:06}. Recent works on RGGs also include hypothesis testing between an Erd\H{o}s-R\'{en}yi graph and a random geometric graph \citep{bubecktriangle}.

Threshold properties of   Erd\H{o}s-R\'{en}yi graphs have been at the center of much theoretical interest, and in particular it is known  that many graph properties exhibit sharp phase transition phenomena \citep{friedgut1996every}. Random geometric graphs also exhibit similar threshold properties  \citep{penrose2003random}. %; in particular, the connectivity threshold for RGGs is known to be at $r =\frac{\log n}{n}$ for the one-dimensional model defined above \citep{penrose2003random}.

Consider a ${\rm VRG}(n,[0,r])$ defined above with $r = \frac{a\log n}{n}$. It is known that ${\rm VRG}(n,[0,r])$ is connected with high probability if and only if $a > 1$\footnote{That is, ${\rm VRG}(n,[0,\frac{(1+\epsilon)\log n}{n}])$ is connected for any $\epsilon >0$. We will ignore this $\epsilon$ and just mention connectivity threshold as $\frac{\log{n}}{n}$.}. Now let us consider the graph ${\rm VRG}(n,[\frac{\delta\log n}{n},\frac{\log n}{n}]), \delta>0$. Clearly this graph has less edges than ${\rm VRG}(n,[0,\frac{\log n}{n}])$.  {\bf Is this graph still connected?}
Surprisingly, we show that the above modified graph remains connected as long as $\delta \le 0.5$. Note that, on the other hand,  ${\rm VRG}(n,[0,\frac{(1-\epsilon)\log n}{n}])$ is not connected for any $\epsilon >0$.

%To formalize this point, let us define a mild generalization of RGG, called the vertex-random graph (VRG). Given two numbers $0 \le r_1\le r_2\le 1$, the vertex-random graph ${\rm VRG}(n, [r_1,r_2])$ is a random graph with $n$ vertices. Each vertex $u$ is assigned a random number $X_u$ selected randomly and uniformly from $[0,1]$. Two vertices $u$ and $v$ are connected by an edge, if and only if $r_1 \le d_L(X_u,X_v) \le r_2$. Therefore, for $r_1 =0$, vertex-random graph ${\rm VRG}(n,[0,r_2])$ is simply the random geometric graph $G(n,r_2)$.  

To elaborate, consider a  ${\rm VRG}(n,[r_1,r_2])$ when $r_1 = \frac{b\log n}{n}$ and $r_2 = \frac{a \log n}{n}$. We show that when $b >0$, the vertex-random graph  ${\rm VRG}(n,[\frac{b\log n}{n},\frac{a\log n}{n}])$ is connected with high probability if and only if $a - b > 0.5$ and $a > 1$. This means the graphs ${\rm VRG}(n, [0, \frac{0.99 \log n}{n}])$ and ${\rm VRG}(n, [\frac{0.49 \log n}{n}, \frac{0.99 \log n}{n}])$ are not connected with high probability, whereas ${\rm VRG}(n, [\frac{0.50 \log n}{n}, \frac{\log n}{n}])$ is connected. %Note that, since we are using the Lee distance (or geodesic distance), this fact cannot be intuitively justified by boundary effects. 
For a depiction of the connectivity regime for the vertex-random graph $G(n,[\frac{b\log n}{n},\frac{a\log n}{n}])$ see Figure~\ref{fig:region}.
\vspace{-0.1in}
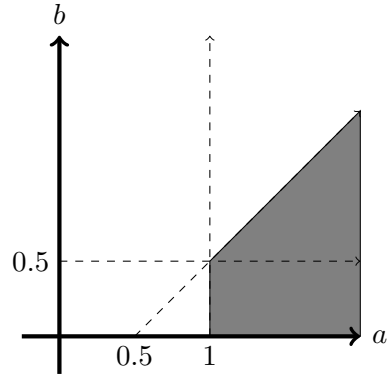
\begin{wrapfigure}{r}{0.4\textwidth}
\centering
\begin{tikzpicture}
%\draw[help lines, color=gray!30, dashed] (-1,-1) grid (2,2);

\draw[fill=gray]  (2,1) -- (4,3) -- (4,0) -- (2,0) -- cycle;
\draw[->,ultra thick] (-0.5,0)--(4,0) node[right]{$a$};
\draw[->,ultra thick] (0,-0.5)--(0,4) node[above]{$b$};

\draw[->,dashed] (0,1)--(4,1); 
\draw[->,dashed] (2,0)--(2,4); 
\draw[->,dashed] (1,0)--(4,3); 

\node[below] at (2,0) {$1$};
\node[below] at (1,0) {$0.5$};
\node[left] at (0,1) {$0.5$};

\draw[fill=gray,transparent]  (2,1) -- (4,3) -- (4,0) -- (2,0) -- cycle;
\end{tikzpicture}
\caption{The shaded area in  the $a$-$b$ plot shows the regime where an VRG $G(n,[\frac{b \log n}{n}, \frac{a\log n}{n}])$ is connected with high probability. \label{fig:region}}
\end{wrapfigure}

Can we explain this seemingly curious shift in connectivity interval, when one goes from $b=0$ to $b >0$? Compare the VRG$(n, [\frac{0.50 \log n}{n}, \frac{\log n}{n}])$ with the ${\rm VRG}(n, [0, \frac{\log n}{n}])$. The former one can  be thought of being obtained by deleting all the `short-distance' edges from the latter. It turns out  the `long-distance' edges are  sufficient to maintain connectivity, because they can connect points over multiple hops in the graph. Another possible explanation is that connectivity threshold for {\rm VRG} is not dictated by isolated nodes as is the case in Erd\H{o}s-R\'{en}yi graphs. Thus, after the connectivity threshold has been achieved, removing certain short edges still retains connectivity. 
%This intriguing observation of  {\em long edge phenomenon} in vertex random graphs is one of the major contributions of this paper. 

\paragraph{The Geometric Block Model.}
We are motivated to study the threshold phenomena of vertex-random graphs, because it appears naturally in the analysis of the geometric block model (GBM) \citep{galhotra2017geometric}.
The geometric block model is a  probabilistic generative model of communities in a variety of networks and is a spatial analogue to the popular stochastic block model (SBM) \citep{holland1983stochastic,dyer1989solution,decelle2011asymptotic,abbe2015community,abh:16,DBLP:conf/colt/HajekWX15,chin2015stochastic,mossel2015consistency}. The SBM %is a random graph model for community detection that
 generalizes the
Erd\H{o}s-R\'{en}yi graphs in the following way. Consider a graph $G(V,E)$, where $V = V_1 \sqcup V_2 \sqcup \dots \sqcup V_k$ is a disjoint union of $k$ clusters denoted by $V_1, \dots, V_k.$ The edges of the graph are drawn randomly: there is an edge between $u \in V_i$ and $v \in V_j$ with probability $q_{i,j}, 1\le i,j \le k.$
Given the adjacency matrix of such a graph, the task is to find exactly (or approximately) the partition $V_1 \sqcup V_2 \sqcup \dots \sqcup V_k$ of $V$.

This model has been incredibly popular both in theoretical and practical domains of community detection. Recent theoretical works  focus on characterizing sharp threshold of recovering the partition in the SBM. For example, when there are only two communities of exactly equal sizes, and the inter-cluster edge probability is $\frac{b\log n}{n}$ and intra-cluster edge probability is $\frac{a\log n}{n}$, it is known that exact recovery is possible if and only if $\sqrt{a} - \sqrt{b} > \sqrt{2}$ 
\citep{abh:16,mossel2015consistency}. The regime of the probabilities being $\Theta\Big(\frac{\log n}{n}\Big)$ has been put forward as one of most interesting ones, because  in an Erd\H{o}s-R\'{en}yi random graph, this is the threshold for graph connectivity \citep{bollobas1998random}. Note that the results are not only of theoretical interest, many real-world networks exhibit a ``sparsely connected'' community feature \citep{leskovec2008statistical}, and any efficient recovery algorithm for sparse SBM has many potential applications.  

While SBM is a popular model (because of its apparent simplicity), there are many aspects of real social networks, such as  ``transitivity rule'' (`friends having common friends') inherent to many social and other community structures, are not accounted for in SBM.  Defining a block model over a random geometric graph, the geometric block model (GBM), circumvents this since GBM naturally inherits the transitivity property of a random geometric graph. In a previous work  \citep{galhotra2017geometric}, we showed GBM models community structures better than an SBM in many real world networks (e.g.  DBLP, Amazon purchase network etc.). The GBM depends on the basic definition of the random geometric graph in  the same way the SBM depends on  Erd\H{o}s-R\'{en}yi graphs. The two-cluster GBM with vertex set $V= V_1\sqcup V_2$, $V_1 = V_2$ is a random graph defined in the following way. Suppose, $0 \le r_d < r_s \le 1/2$ be two  real numbers. For each vertex $u \in V$ randomly and independently choose a number $X_u \in [0,1]$ according to uniform distribution. There will be an edge between $u,v$ if and only if,
\begin{align*}
d_L(X_u, X_v) \le r_s & \text{ when } u, v \in V_1 \text{ or } u,v \in V_2\\
 d_L(X_u, X_v) \le r_d & \text{ when } u \in V_1, v \in V_2 \text{ or } u\in V_2,v \in V_1.
\end{align*}
Let us denote this random graph as ${\rm GBM}(r_s,r_d)$. Given this graph  ${\rm GBM}(r_s,r_d)$, the main problem of community detection is to find the parts $V_1$ and $V_2$. It has been shown in  \citep{galhotra2017geometric} that GBM accurately represents (more so than SBM) many real world networks.
Given a geometric random graph our main  objective  is to recover the partition (i.e., $V_1$ and $V_2$).

Motivated by SBM literature, we here also look at GBM in the connectivity regime, i.e., when $r_s = \frac{a\log n}{n}, r_d = \frac{b\log n}{n}$. Our first contribution in this part is to provide a lower bound that shows that it is impossible to recover the parts from ${\rm GBM}(\frac{a\log n}{n},\frac{b\log n}{n})$ when 
$
a - b < 1/2.
$
We also derive a relation between $a$ and $b$ that defines a sufficient condition of recovery in ${\rm GBM}(\frac{a\log n}{n},\frac{b\log n}{n})$ (see, Theorem~\ref{gbm:upper}). %All the results extend to higher dimensions.
%We also provide a simple, intuitive, and efficient triangle-counting algorithm that  leads to significant improvement over the algorithm proposed in  \citep{galhotra2017geometric}. 
%The algorithm simply count the number of common neighbors for each pair in the GBM connected by an edge, and deletes some edges when the number of common neighbors is below some threshold. In the next step of the algorithm we just find connected components in the redacted graph.
%The relation between $a$ and $b$ that defines a sufficient condition of recovery in ${\rm GBM}(\frac{a\log n}{n},\frac{b\log n}{n})$ has been derived as one of the main results of this paper (see, Theorem~\ref{gbm:upper}). 
%The recovery region is plotted  in Figure \textcolor{red}{Here}
To analyze the algorithm proposed, we need to  crucially use the results obtained for the connectivity of vertex-random graphs. 
%We provide both upper and lower (impossibility) bound on the gap between the $r_s$ and $r_d$ parameters in the sparse regime for successful recovery.

% (random annulus graphs in high dimension). 

%Indeed, we need to go beyond the scenario for VRG when the connectivity interval can be disjoint, such as ${\rm VRG}(n, [0,r_1]\cup[r_2,r_3])$ where $0 <r_1<r_2 <r_3\le 1$. 

%The algorithm we propose works for high dimensional GBM as well, although the calculations becomes quite intricate. We can provide both upper and lower (impossibility) bound on the gap between the $r_s$ and $r_d$ parameters in the sparse regime for successful recovery.  In addition to its theoretical significance, understanding the properties of the GBM in higher dimension is useful for  practical consideration.  The latent feature space of nodes in most networks are high-dimensional. For example, road networks are two-dimensional whereas the number of features used in a social network may have much higher dimension. 

It is possible to generalize the GBM to include different distributions, different metric spaces and multiple parts. It is also possible to construct other type of spatial block models such as the one very recently being put forward in \citep{sankararaman2018community} which rely on the random dot product graphs \citep{young2007random}. In \citep{sankararaman2018community}, edges are drawn between vertices randomly and independently as a function  of the distance between the corresponding vertex random variables. In contrast, in GBM edges are drawn deterministically given the vertex random variables, and edges are dependent unconditionally. \citep{sankararaman2018community} also considers the recovery scenario where in addition to the graph, values of the vertex random variables are provided. In GBM, we only observe the graph. In particular, it will be later clear  that if we are given the corresponding random variables (locations) to the variables in addition to the graph, then recovery of the partitions in ${\rm GBM}(\frac{a\log n}{n},\frac{b\log n}{n})$ is possible if and only if $a - b > 0.5, a>1$.

\paragraph{VRG in Higher Dimension: The Random Annulus Graphs.} It is natural to ask similar question of connectivity for VRGs in higher dimension. In a VRG at dimension $t$, we may assign $t$-dimensional random vectors to each of the vertices, and use a standard metric such as the Euclidean distance to decide whether there should be an edge between two vertices. Formally, let us define the $t$-dimensional sphere as $S^t\equiv \{x \in \reals^{t+1} \mid \norm{x}_2=1\}$. Given two reals $0 \le r_1\le r_2\le 2$, the random annulus graph  ${\rm RAG}_t(n, [r_1,r_2])$ is a random graph with $n$ vertices. Each vertex $u$ is assigned a random vector $X_u$ selected randomly and uniformly from $S^t$. Two vertices $u$ and $v$ are connected by an edge, if and only if $r_1 \le d(u,v) \equiv \|X_u-X_v\|_2 \le r_2.$ Note that, for $t=1$ an ${\rm RAG}_1(n, [r_1,r_2])$ is nothing but a VRG as defined above, where we need to convert the Euclidean distance to the geodesic distance and scale the probabilities by a factor of $2\pi$. The ${\rm RAG}_t(n, [0,r])$ gives the standard definition of random geometric graphs in $t$ dimensions (for example, see \citep{bubecktriangle} or \citep{penrose2003random}).

We give the name random annulus graph (RAG) because two vertices are connected if one is within an `annulus' centered at the other. For the high dimensional random annulus graphs we extend our connectivity results of $t=1$ to general $t$. {In particular we show that there exists an isolated vertex in the ${\rm RAG}_t(n, [b(\frac{\log n}{n})^{\frac1t},a(\frac{\log n}{n})^{\frac1t}])$ with high probability if and only if 
$$a^t -b^t < \frac{\sqrt{\pi}(t+1)\Gamma(\frac{t+2}{2})}{\Gamma(\frac{t+3}{2})}\equiv \psi(t),$$ where $\Gamma(\cdot)$ is the gamma function.} Computing the connectivity threshold of RAG exactly is highly challenging, and we have to use several approximations of high dimensional geometry. Our arguments crucially rely on VC dimensions of sets of geometric objects such as intersections of high dimensional annuluses and hyperplanes. {Overall we find that the  ${\rm RAG}_t(n, [b(\frac{\log n}{n})^{\frac1t},a(\frac{\log n}{n})^{\frac1t}])$ is connected with high probability if 
$$
(a/2)^t-b^t \ge {8(t+1)\psi(t)} \text{  and  }  a>2b.
$$}
Using the connectivity result for ${\rm RAG}_t$, the results for the geometric block model can be extended to high dimensions. The latent feature space of nodes in most networks are high-dimensional. For example, road networks are two-dimensional whereas the number of features used in a social network may have much higher dimensions. In a `high-dimensional' GBM: for any $t >1$, instead of assigning a random variable from $[0,1]$ we assign a random vector $X_u \in S^t$  to each vertex $u$; and two vertices in the same part is connected if and only if their Euclidean distance is less than $r_s$, whereas two vertices from different parts are connected if and only if their distance is less than $r_d$. We show the algorithm developed for one dimension, extends to higher dimensions as well with nearly tight lower and upper bounds.
%Since the connectivity regime of high dimensional RAG is not exactly characterized, it is not immediately clear if the `long edge phenomenon' described above holds for higher dimensional annulus graphs.

In this paper, we consistently refer to the $t=1$ case for RAG as vertex-random graph.

The paper is organized as follows.   In Section \ref{sec:notation}, we provide the formal definitions and the main results of the paper formally.  In Section~\ref{sec:rag}, the sharp connectivity phase transition results for vertex-random graphs are proven (details in Section~\ref{sec:VRG-detail}). In Section~\ref{sec:hrag}, the connectivity results are proven for high dimensional random annulus graphs (details in Section~\ref{sec:hrag-detail}). Finally, in Section~\ref{sec:gbm}, a lower bound for the geometric block model as well as the main recovery algorithm are presented (details in Section~\ref{sec:gbm-detail}).

\section{Main Results}
\label{sec:notation}
We formally define the random graph models, and state our results here.
%We start this section by formally defining the vertex-random graphs. 
%\begin{definition}
%A random geometric graph ${\rm RGG}(n,r)$ on $n$ vertices has parameters $n$,  and  a  real number $r \in [0,1]$. It is defined by assigning a number $X_i \in \reals$ to vertex $i, 1 \le i, n,$ where $X_i, 1\le i \le n$ are independent and identical random variables uniformly distributed in $[0,1]$. There will be an edge between vertices $i$ and $j, i \ne j,$ if and only if $d_L(X_i,Xj) \equiv \min\{|X_i - X_j|, 1 - |X_i - X_j|\} \le r$. 
%\end{definition}

%We define the vertex-random graph, a mild generalization of the  random geometric graphs.
\begin{definition}[Vertex-Random Graph]
A vertex-random graph ${\rm VRG}(n,[r_1,r_2])$ on $n$ vertices has parameters $n$,  and  a pair of  real numbers $r_1, r_2 \in [0,1/2], r_1 \le r_2$. It is defined by assigning a number $X_i \in \reals$ to vertex $i, 1 \le i \le n,$ where $X_i$s are independent and identical random variables uniformly distributed in $[0,1]$. There will be an edge between vertices $i$ and $j, i \ne j,$ if and only if $r_1 \le d_L(X_i,X_j)  \le r_2$ where $d_L(X_i,X_j) \equiv \min\{|X_i - X_j|, 1 - |X_i - X_j|\} $. 
\end{definition}
One can think of the random variables $X_i, 1\le i \le n$, to be uniformly distributed on the perimeter of a circle with radius $\frac1{2\pi}$ and the distance $d_L(\cdot,\cdot)$ to be the geodesic distance.
It will be helpful to consider vertices as just random points on $[0,1]$. Note that every point has a natural left direction (if we think of them as points on a circle then this is the counterclockwise direction) and a right direction. As a shorthand, for any two vertices $u,v$,  let $d(u,v)$ denote $d_L(X_u,X_v)$ where $X_u,X_v$ are corresponding random values to the vertices respectively. We can extend this notion to denote the distance $d(u,v)$ between a vertex $u$ (or the embedding of that vertex in $[0,1]$) and a point $v \in [0,1]$ naturally.

Our main result regarding vertex-random graphs is given in the following theorem. The base of the logarithm is $e$ here and everywhere else in the paper unless otherwise mentioned. 

\begin{theorem}[Connectivity threshold of vertex-random  graphs]\label{thm:rag}
The ${\rm VRG}(n,[\frac{b\log n}{n},\frac{a\log n}{n}])$  is connected with  probability $1-o(1)$ if $a >1$ and $a - b >0.5$. On the other hand, 
the ${\rm VRG}(n,[\frac{b\log n}{n},\frac{a\log n}{n}])$  is not connected with  probability $1-o(1)$ if $a <	1$ or $a-b < 0.5$. 
\end{theorem}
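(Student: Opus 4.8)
The plan is to treat the two directions separately, since they require genuinely different tools.

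\textbf{The negative direction (non-connectivity).} There are two ways connectivity can fail, matching the two conditions. First, if $a < 1$, I would show an isolated vertex exists with high probability. For a fixed vertex $u$ at position $X_u$, the probability no other vertex lands in the annulus $\{v : \frac{b\log n}{n} \le d(u,v) \le \frac{a\log n}{n}\}$ is $(1 - \frac{2(a-b)\log n}{n})^{n-1} \approx n^{-2(a-b)}$. This alone is not enough: we need $2(a-b) < 1$ for a first-moment argument to even be plausible, and then a second-moment/Chebyshev argument (or the standard Poisson-approximation / Chen-Stein method) to conclude the expected number of isolated vertices diverges and concentrates. But wait — when $a<1$ but $a-b$ is large, this annulus argument gives $n^{-2(a-b)}$ which is small, so isolated vertices do \emph{not} drive it. Instead, when $a<1$, I would argue there is an empty interval of length exceeding $\frac{a\log n}{n}$: partition $[0,1]$ into $\frac{n}{a\log n}$ blocks; the probability a given block is empty is $\approx n^{-a}$, and since $a<1$ there are $\omega(1)$ such empty blocks in expectation, concentrated by second moment. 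An empty interval longer than $r_2$ disconnects the circle (no edge can cross it since all edges have length $\le r_2$). Second, if $a - b < 1/2$: here I would show an isolated vertex exists. The annulus around $u$ has total measure $\frac{2(a-b)\log n}{n}$, so $\Pr[u \text{ isolated}] \approx n^{-2(a-b)} = n^{-(1-\varepsilon)}$ for some $\varepsilon>0$, giving expected count $n^{\varepsilon} \to \infty$; a second-moment computation (the positions are independent, and for two far-apart vertices the events are nearly independent) finishes it.

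\textbf{The positive direction (connectivity when $a>1$ and $a-b>0.5$).} This is the main obstacle. The idea is that the ``long edges'' of length near $r_2$ suffice to hop across the circle. I would order the vertices on the circle as $u_1, u_2, \dots, u_n$ by position and look at consecutive gaps $g_i = d(u_i, u_{i+1})$. Since $a > 1$, all gaps are $o(r_2)$ with high probability (max gap is $\Theta(\frac{\log n}{n})$ but strictly less than $r_2$ once we are careful — actually we want max gap $< r_2 - r_1$, which uses $a - b > 1/2$... no: max gap $\approx \frac{\log n}{n}$ and $r_2 - r_1 = \frac{(a-b)\log n}{n}$, so $a-b$ large helps but $a>1$ controls the max gap being $\le \frac{a\log n}{n}$ roughly). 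The cleanest route: define a vertex $v$ to be \emph{reachable} from $u$ and build a path by always jumping to the vertex that advances us as far as possible while staying within $[r_1, r_2]$ of the current vertex. The key lemma is that from any vertex, within distance $[r_1, r_2]$ clockwise there is \emph{some} other vertex (so we can always make progress of at least $r_1$ per hop and at most $r_2$), which needs the interval of length $r_2 - r_1 = \frac{(a-b)\log n}{n}$ to be nonempty — this holds whp for all vertices simultaneously precisely when $a - b > 1/2$ (union bound: $n \cdot n^{-2(a-b)} \to 0$). Then I would show that iterating this greedy jump lets us reach, from any start vertex, a vertex within $[0, r_1)$ (clockwise) of any target — and since two vertices within $[0, r_1)$... are \emph{not} adjacent. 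So the final step needs care: I would instead show the greedy process from $u$ sweeps out a set of vertices that is $r_1$-dense on the whole circle, and symmetrically from $v$; then argue two $r_1$-dense sweeping sets must contain a pair of vertices at mutual distance in $[r_1, r_2]$. The truly delicate part is handling the interplay of $r_1$ and $r_2$ so that we never get ``stuck'' with only forbidden-short-distance neighbors; I expect this to require a more careful multi-hop reachability argument, possibly showing that from $u$ one can reach every other vertex by tracking an interval of ``covered'' positions that only grows.

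\textbf{What I expect to be hardest.} The positive direction's connectivity proof: specifically, showing that deleting all short edges (length $< r_1$) still leaves the graph connected. Unlike Erdős–Rényi, connectivity is not governed solely by isolated vertices, so a pure minimum-degree argument fails; one must exhibit actual paths, and the forbidden short-distance band $[0, r_1)$ means naive greedy hopping can stall. I would likely handle this by a careful deterministic argument conditioned on a high-probability ``good'' configuration of points (bounded maximum gap, every length-$(r_2-r_1)$ arc nonempty), reducing connectivity to a purely combinatorial claim about points on a circle with a forbidden near-distance band.
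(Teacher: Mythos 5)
Your negative direction is sound. For $a<1$ you correctly identify that isolated vertices do not drive disconnection and instead use empty intervals; this matches the spirit of what the paper does (their Lemma~8 counts vertices with no left-neighbor, which is the same phenomenon). For $a-b<1/2$ your second-moment argument on isolated vertices matches the paper's Theorem~5 almost exactly.

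The positive direction has a concrete error that undermines the whole plan. You claim that every vertex has a clockwise neighbor at distance in $[r_1,r_2]$ whp, and justify this by the union bound ``$n\cdot n^{-2(a-b)}\to0$.'' But the one-sided arc $[\,X_u+r_1,\ X_u+r_2\,]$ has Lebesgue measure $(a-b)\frac{\log n}{n}$, \emph{not} $2(a-b)\frac{\log n}{n}$ (the factor of $2$ applies only to the two-sided annulus). The probability that this one-sided arc is empty is $\approx n^{-(a-b)}$, so the union bound is $n^{1-(a-b)}$, which vanishes only when $a-b>1$. The factor-of-two slip hides exactly the hard regime $\tfrac12<a-b\le1$: there, with high probability \emph{many} vertices will have no clockwise neighbor at all in the allowed band, so greedy clockwise hopping does stall. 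This is precisely the obstacle the paper's Lemma~\ref{lem:lemma1} is built to overcome. Their mechanism (dividing the annulus into $L=2^{c}$ small patches and showing via a union bound over the $\binom{4L}{2L-1}$ selections that at least one patch in every collection of $2L-1$ contains a point) lets a stuck vertex $u$ reach a point $v$ on its counterclockwise side that in turn \emph{does} have a clockwise neighbor past $u$, recovering a two-sided cycle structure at the weaker threshold $a-b>1/2$. Your proposal acknowledges the delicacy (``the forbidden short-distance band $[0,r_1)$ means naive greedy hopping can stall'') but offers no substitute mechanism; the vague ``growing covered interval'' idea needs exactly this multi-hop escape and you do not supply it.

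A second, separate gap: even granting that the graph decomposes into cycles each covering $[0,1]$ (the paper's Lemma~\ref{lem:lemma1}), you still must argue that distinct such cycles are connected to each other. Your one-sentence claim that ``two $r_1$-dense sweeping sets must contain a pair of vertices at mutual distance in $[r_1,r_2]$'' is not justified and is not obviously true (being $r_1$-dense does not by itself put any two specific points at a distance $\ge r_1$ from one another). The paper handles this with a second, genuinely different lemma (Lemma~\ref{lem:lemma2}): it exhibits, whp, an explicit anchor configuration of $2k+1$ vertices at carefully chosen distances so that any other cycle must intersect the annulus of one of them, yielding a star that merges all cycles. Your proposal has no analogue of this bridging argument, so even with Lemma~\ref{lem:lemma1} in hand it would not establish a single component.
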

For the special case of $b=0$, the result was known (\citep{muthukrishnan2005bin,penrose2003random} See also \citep{penrose2016}). However, note that the case of $b >0$ is neither a straightforward generalization (i.e., the connectivity region is not defined by $a -b =1$) nor intuitive. 

%So why ${\rm VRG}(n, [0, \frac{0.99 \log n}{n}])$ is not connected with high probability but ${\rm VRG}(n, [\frac{0.50 \log n}{n}, \frac{1.01\log n}{n}])$ is? Note  that, each edge in a geometric graph has a real number associated with it, which is the distance between its endpoints. For any edge $(u,v)\in E$, if $d(u,v) > \frac{\log n}{n}$ we call it a `long-edge'. It turns out long-edges are able connect two vertices via multiple hops (since these edges have more reach) even when there is no direct connection between the points (i.e., suppose the short edges are all deleted). It turns out that the distance $\frac{\log n}{n}$ is precisely the point when long-edge phenomenon starts occurring.

%As a corollary of Theorem~\ref{thm:rag}, we are able to show connectivity regimes for more complicated models such as the one derived in Corollaries \ref{cor:patch} and \ref{cor:extra}. These are useful in analyzing the recovery algorithm for the geometric block model that we define shortly.
% However, before that, let us define the high dimensional version of the vertex-random graphs, which we call the random annulus graphs.
%

\begin{definition}[The Random Annulus Graph]
Let us define the $t$-dimensional unit sphere as $S^t\equiv \{x \in \reals^{t+1} \mid \norm{x}_2=1\}$.
A random annulus graph ${\rm RAG}_t(n,[r_1,r_2])$ on $n$ vertices has parameters $n,t \in \integers_+$,  and a pair of  real numbers $r_1, r_2 \in [0,2], r_1 \le r_2$. It is defined by assigning a number $X_i \in S^t$ to vertex $i, 1 \le i \le n,$ where $X_i$'s are independent and identical random vectors uniformly distributed in $S^t$. There will be an edge between vertices $i$ and $j, i \ne j,$ if and only if $r_1 \le \|X_i-X_j\|_2  \le r_2$ where $\|\cdot\|_2$ denote the $\ell_2$ norm. 
\end{definition}
When from the context it is clear that we are in high dimensions, we use $d(u,v)$ to denote $\|X_u-X_v\|_2$ or just the $\ell_2$ distance between the arguments.
%Let $X_1,X_2,...,X_n \in S^t$ be independent random vectors, uniformly distributed on $S^t$.
%The Random Annulus Graph on $n$ vertices in $t$-dimension with parameters $r_s$ and $r_d$, $0\le r_d \le r_s \le 1$, denoted by  ${\rm RAG}_t(n,[r_s,r_d])$, is obtained if for every two vertices $u$ and $v$, an edge exists if and only if $r_d \le d(u,v)\le r_s$. 

If we substitute $t=1$, then ${\rm RAG}_1(n,[r_1,r_2])$ is a random graph where each vertex is associated with a random variable uniformly distributed in the unit circle.  The distance between two vertices is the length of the chord connecting the random variables corresponding to the two vertices. If the length of the chord is $r \le 2$, then the length of the corresponding (smaller) chord {length of the corresponding arc between the vertices along the circumference of the circle} is $2\sin^{-1}\frac{r}{2}$. If we normalize the circumference of the circle by $2\pi$ we obtain a random graph model that is equivalent to our definition of the vertex-random graphs. Since handling geodesic distances is more cumbersome in the higher dimensions, we resorted to Euclidean distance.

We derived the following results about the existence of isolated vertices in random annulus graphs.
\begin{theorem}[Zero-One law for Isolated Vertex in RAG]\label{th:lb}
For a random annulus graph ${\rm RAG}_t(n,[r_1,r_2])$ where $r_2=a \Big(\frac{ \log n}{n}\Big)^{\frac{1}{t}}$ and $r_1=b\Big(\frac{\log n}{n}\Big)^{\frac{1}{t}}$, there exists isolated nodes with  probability $1-o(1)$ if 
$$a^t -b^t <  \frac{\sqrt{\pi}(t+1)\Gamma(\frac{t+2}{2})}{\Gamma(\frac{t+3}{2})}\equiv \psi(t),$$ where $\Gamma(x) = \int_0^\infty y^{x-1}e^{-y}dy$ is the gamma function, and there does not exist an isolated vertex with probability $1-o(1)$ if $a^t -b^t > \psi(t)$.
%$\frac{c_t}{|S_t(1)|}=\frac{\Gamma(\frac{t+3}{2})}{\sqrt{\pi}(t+1)\Gamma(\frac{t+2}{2})}$.
\end{theorem}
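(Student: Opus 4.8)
\medskip
\noindent\textbf{Proof plan for Theorem~\ref{th:lb}.}
The plan is a first/second moment argument on the number $Z$ of isolated vertices, whose only non-routine ingredient is the sharp small-radius asymptotics of the ``annulus measure'' on $S^t$. By rotational invariance of the uniform measure $\mu$ on $S^t$, for any fixed vertex $v$ the probability $p$ that another fixed vertex is adjacent to $v$ equals $\mu(\cA)$, where $\cA=\{y\in S^t:\ r_1\le\|y-e\|_2\le r_2\}$ for a fixed $e\in S^t$, and $\cA$ is the set-difference of the two spherical caps of chord-radii $r_2$ and $r_1$ about $e$. First I would write the measure of the cap of chord-radius $r$ in spherical coordinates as $\tfrac{\omega_{t-1}}{\omega_t}\int_0^{2\arcsin(r/2)}(\sin\phi)^{t-1}\,d\phi$, where $\omega_t=\tfrac{2\pi^{(t+1)/2}}{\Gamma((t+1)/2)}$ is the surface area of $S^t$. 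Since $r_1,r_2=\Theta\!\big((\log n/n)^{1/t}\big)\to 0$ and $t$ is a fixed constant, Taylor expansion gives $2\arcsin(r/2)=r(1+o(1))$ and $\int_0^{\theta}(\sin\phi)^{t-1}\,d\phi=\tfrac{\theta^t}{t}(1+o(1))$, so the cap measure is $\tfrac{\omega_{t-1}}{t\,\omega_t}\,r^{t}(1+o(1))$ and hence
\[
p=\frac{\omega_{t-1}}{t\,\omega_t}\,(r_2^{\,t}-r_1^{\,t})\,(1+o(1))=\frac{a^t-b^t}{\psi(t)}\cdot\frac{\log n}{n}\,(1+o(1)),
\]
using the identity $\tfrac{\omega_{t-1}}{t\,\omega_t}=\tfrac{\Gamma((t+1)/2)}{t\sqrt\pi\,\Gamma(t/2)}=\tfrac1{\psi(t)}$, which follows by applying $\Gamma(x+1)=x\Gamma(x)$ twice to $\psi(t)=\tfrac{\sqrt\pi(t+1)\Gamma((t+2)/2)}{\Gamma((t+3)/2)}$. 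Set $\lambda:=\tfrac{a^t-b^t}{\psi(t)}$, so the hypotheses $a^t-b^t<\psi(t)$ and $a^t-b^t>\psi(t)$ become $\lambda<1$ and $\lambda>1$.

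Conditioned on the position of $v$, the $n-1$ other vertices fall outside $\cA$ independently, so $v$ is isolated with probability $(1-p)^{n-1}$; therefore $\expect Z=n(1-p)^{n-1}=n^{\,1-\lambda+o(1)}$. If $\lambda>1$ then $\expect Z\to0$, and Markov's inequality yields $\Pr[Z\ge1]=o(1)$, i.e. no isolated vertex w.h.p.; this is the ``zero'' half.

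For the ``one'' half, assume $\lambda<1$, so $\expect Z\to\infty$, and run the second moment method. Writing $\expect Z^2=\expect Z+\sum_{u\ne v}\Pr[u,v\text{ both isolated}]$ and conditioning on $X_u=x$, $X_v=y$ gives $\Pr[u,v\text{ iso}\mid x,y]=\mathbf 1[d(x,y)\notin[r_1,r_2]]\cdot(1-\mu(\cA_x\cup\cA_y))^{n-2}$, where $\cA_x,\cA_y$ are the annuli about $x,y$. I would split the $(x,y)$-integral at $d(x,y)=2r_2$. On $\{d(x,y)>2r_2\}$ the two annuli are disjoint, so $\mu(\cA_x\cup\cA_y)=2p$ and $(1-2p)^{n-2}\le(1+o(1))(1-p)^{2(n-1)}$ (since $1-2p\le(1-p)^2$ and $p=o(1)$); as this region has $\mu\times\mu$-measure at most $1$, its total contribution is at most $(1+o(1))(\expect Z)^2$. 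On $\{d(x,y)\le2r_2\}$ the region has measure $O(r_2^{\,t})=O(\log n/n)$ while $\mu(\cA_x\cup\cA_y)\ge p$ forces the conditional probability down to $(1-p)^{n-2}=n^{-\lambda+o(1)}$, so this part contributes $O(n^2)\cdot O(\log n/n)\cdot n^{-\lambda+o(1)}=n^{\,1-\lambda+o(1)}=o((\expect Z)^2)$. Combined with $\expect Z=o((\expect Z)^2)$, this gives $\expect Z^2\le(1+o(1))(\expect Z)^2$, hence $\Var Z=o((\expect Z)^2)$, and Chebyshev's inequality yields $\Pr[Z=0]=o(1)$: an isolated vertex exists w.h.p.

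The main obstacle is the first step: carrying out the high-dimensional cap/annulus volume computation carefully enough to pin down the exact constant $\psi(t)$ — confirming that the chord-to-angle conversion and the replacement of $(\sin\phi)^{t-1}$ by $\phi^{t-1}$ each contribute only a $1+o(1)$ factor (uniformly, for fixed $t$), and verifying the $\Gamma$-function identity. Once $p$ is known to this precision, the two moment computations are routine.
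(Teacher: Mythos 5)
Your proposal is correct and follows essentially the same route as the paper: you compute the normalized measure of the annulus about a fixed point, obtaining $p=\frac{a^t-b^t}{\psi(t)}\frac{\log n}{n}(1+o(1))$ (the paper does the equivalent computation in Lemma~\ref{lem:area} via the hyperspherical cap area $C_t(\theta)$, yielding $c_t/|S^t|=1/\psi(t)$); then you apply Markov to the first moment $\expect Z=n^{1-\lambda+o(1)}$ for the "zero" direction, and a second moment/Chebyshev argument for the "one" direction, splitting the covariance sum at $d(x,y)=2r_2$ exactly as the paper does. The only differences are cosmetic (your $\omega_t$ normalization versus the paper's $c_t,|S^t|$; your substitution $\lambda=(a^t-b^t)/\psi(t)$), and both arguments hinge on the same two facts — disjointness of the annuli when $d(x,y)>2r_2$, and the smallness ($O(r_2^t)$) of the near-diagonal region.
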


An obvious deduction from this theorem is that an ${\rm RAG}_t(n,[b\Big(\frac{\log n}{n}\Big)^{\frac{1}{t}},a \Big(\frac{ \log n}{n}\Big)^{\frac{1}{t}}])$ is not connected with probability $1-o(1)$ if $a^t -b^t < \psi(t)$. Our main result here gives a condition that guarantees connectivity in this regime.

\begin{theorem}\label{thm:highdem1}
A $t$ dimensional random annulus graph ${\rm RAG}_t(n,[b\Big(\frac{\log n}{n}\Big)^{\frac{1}{t}},a \Big(\frac{ \log n}{n}\Big)^{\frac{1}{t}}])$ is connected with  probability $1-o(1)$ if 
\begin{align*}
(a/2)^t-b^t \ge {8(t+1)\psi(t)}\text{  and  }  a>2b.
\end{align*} 
\end{theorem}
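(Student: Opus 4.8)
The plan is to reduce connectivity of $\mathrm{RAG}_t$ to the much easier, classical connectivity of an auxiliary random geometric graph, at the cost of a ``two-hop bridging'' lemma, in the spirit of the one-dimensional argument behind Theorem~\ref{thm:rag}. Concretely, fix a radius $\rho$ that is a constant fraction of $r_2$ --- say $\rho = r_2/2$ --- and let $H$ be the graph on the same vertex set in which $u \sim v$ whenever $d(u,v) \le \rho$; this $H$ is exactly $\mathrm{RAG}_t(n,[0,\rho])$, an ordinary RGG on $S^t$, which is connected with probability $1-o(1)$ as soon as $\rho^t n / \log n$ exceeds a constant depending only on $t$ (standard, e.g. \citep{penrose2003random}; this holds here because the hypotheses force $a$, hence $\rho/(\log n/n)^{1/t}$, to be large). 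If in addition \emph{every} edge $uv$ of $H$ admits a common neighbour in the RAG --- a vertex $w$ with $r_1 \le d(u,w) \le r_2$ and $r_1 \le d(v,w) \le r_2$ --- then $u$ and $v$ lie in the same RAG-component, so every component of $H$ sits inside a single component of the RAG, and connectivity of $H$ forces connectivity of the RAG. Thus the entire proof rests on establishing this bridging property simultaneously over all such pairs.

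For a fixed pair $u,v$ with $d(u,v) = \delta \le \rho$, the admissible bridge vertices form the ``lens'' $\mathrm{Ann}(u,r_1,r_2) \cap \mathrm{Ann}(v,r_1,r_2)$. I would first establish a high-dimensional volume estimate showing that the uniform measure of this lens on $S^t$ is at least $g(t)\cdot\frac{a^t-b^t}{\psi(t)}\cdot\frac{\log n}{n}$, where the factor $\frac{a^t-b^t}{\psi(t)}\cdot\frac{\log n}{n}$ is simply the measure of a single annulus (normalised via the zero--one law of Theorem~\ref{th:lb}) and $g(t) = 1 - \frac{1}{2^{1+1/t}-1}$ is the fraction of that annulus provably surviving the intersection. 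The role of the side condition $a > 2^{1+1/t}b$ --- equivalently $r_1 < r_2/2^{1+1/t}$, i.e. the inner ``dead-zone'' ball $B(u,r_1)$ has less than half the volume of $B(u,r_2/2)$ --- is precisely to force this surviving fraction below by the clean constant $g(t)$ when $\rho = r_2/2$; this is where the approximations of spherical-cap and annulus volumes, and hence the Gamma-function quantity $\psi(t)$, enter. To upgrade ``each fixed lens is non-empty with high probability'' to ``all lenses are simultaneously non-empty'', I would invoke a VC-dimension / $\varepsilon$-net argument: the family of lenses $\{\mathrm{Ann}(x,r_1,r_2) \cap \mathrm{Ann}(y,r_1,r_2)\}$ consists of Boolean combinations of a bounded number of Euclidean balls in $\reals^{t+1}$, so its VC dimension $d$ is of order $t$, and the $\varepsilon$-net theorem guarantees that $n$ i.i.d.\ uniform points hit every lens of measure at least $\varepsilon$ provided $n\varepsilon \ge 8 d \log(1/\varepsilon)$ up to lower-order terms. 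Taking $\varepsilon = \frac{8(t+1)\log n}{n}$ makes this hold and, compared with the volume lower bound above, yields exactly the stated requirement $a^t-b^t \ge \frac{8(t+1)\psi(t)}{1 - 1/(2^{1+1/t}-1)}$.

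The main obstacle, as the text anticipates, is the high-dimensional geometry inside the bridging lemma: obtaining an honest, \emph{dimension-uniform} lower bound on the surviving fraction $g(t)$ requires carefully lower-bounding the volume of the lens --- an intersection of two thick spherical shells at chord distance $\delta \le r_2/2$ --- relative to a single shell, and it is this comparison that forces the $2^{1+1/t}$-type relation between $r_1$ and $r_2/2$ and pulls in the Gamma-function estimates; a secondary technical point is pinning down the VC dimension of the relevant set system (intersections of annuli, and slices by hyperplanes used in the volume bounds) so that the $\varepsilon$-net bound can be applied with the clean $O(t)$ constant. Once these two estimates are in hand, the remaining steps --- RGG connectivity of $H$, the union bound over $H$-edges realised through the $\varepsilon$-net, and checking that $\rho = r_2/2$ is simultaneously large enough for connectivity of $H$ and small enough to keep the lens large --- are routine.
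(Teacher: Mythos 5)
Your high-level strategy (reduce to a short-range auxiliary graph plus a two-hop bridging lemma established via a VC/$\varepsilon$-net argument) is genuinely different from the paper's argument. The paper proceeds by (i) constructing a \emph{pole}: a distinguished vertex adjacent to its entire ball $B_t(\cdot,r_2)$ (Lemma~\ref{lem:pole}); (ii) proving that every vertex has a RAG-neighbor on \emph{both sides} of every hyperplane through it that is not too tangent (Lemma~\ref{lem:high_stuff1}), using the VC dimension of half-annuli; and (iii) routing each vertex to the pole by repeatedly stepping across the hyperplane $x_1 = \hat{u}_1$ toward the pole, using Lemma~\ref{lem:cut_twice} to handle degenerate planes. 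The constant $1-\tfrac{1}{2^{1+1/t}-1}$ and the side condition $a>2^{1+1/t}b$ enter there as a lower bound on the fraction of the annulus on the smaller side of a hyperplane (Corollary~\ref{cor:ratio}), not as a lens-intersection fraction, and the prefactor $8(t+1)$ comes from the VC dimension bound $t+1$ for half-annuli (Lemma~\ref{lem:VC_dimension}) fed into the $\varepsilon$-net theorem. So the resemblance in the constants is no coincidence --- both proofs plug a geometric fraction and a VC bound into Theorem~\ref{thm:VC_dim} --- but the geometric object and the global routing scheme are different.

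However, the proposal as written has two concrete quantitative gaps that prevent it from closing under the stated hypotheses. First, the auxiliary RGG $H={\rm RAG}_t(n,[0,r_2/2])$ need \emph{not} be connected. Connectivity of $H$ requires $(a/2)^t > \psi(t)$, i.e.\ $a^t > 2^t\psi(t)$, whereas the hypothesis (with $b\to 0$) only forces $a^t \ge \frac{8(t+1)\psi(t)}{1-1/(2^{1+1/t}-1)}$. As $t\to\infty$ the denominator behaves like $\frac{2\ln 2}{t}$, so the hypothesis only guarantees $a^t = \Omega(t^2\psi(t))$, which is dwarfed by $2^t\psi(t)$; numerically the comparison already fails at $t=10$ ($\approx 701\psi(10)$ allowed vs.\ $1024\,\psi(10)$ required). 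Second, and independently, the claimed lens fraction $g(t)=1-\tfrac{1}{2^{1+1/t}-1}$ is not correct for the range of parameters you allow. Already in $t=1$, if $u,v$ are at the extreme separation $\delta = r_2/2$ and $r_1$ is just below the boundary $r_2/2^{1+1/t}=r_2/4$, a direct computation gives lens length $3r_2/2 - 4r_1 \to r_2/2$ while the full annulus has length $2(r_2-r_1)\to 3r_2/2$, so the fraction tends to $1/3$, strictly below $g(1)=2/3$. So the inequality $|\text{lens}| \ge g(t)\cdot|\text{annulus}|$ you need for the $\varepsilon$-net step does not hold on the whole parameter region. A third, smaller, issue is that the lens is a Boolean combination of four balls, so its VC dimension is not $t+1$ but rather some larger $O(t)$ (with a nontrivial constant); this would change the $8(t+1)$ factor as well. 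The overall plan might be rescuable with a different, $t$-dependent choice of the auxiliary radius $\rho$ and a re-derivation of the lens constant, but the constants would then no longer match the statement, and none of this is addressed in the proposal.
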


All these connectivity results find immediate application in analyzing the algorithm that we propose for the geometric block model (GBM). 
A GBM is a generative model for networks (graphs) with underlying community structure.
\begin{definition}[Geometric Block Model]
Given $V = V_1\sqcup V_2, |V_1|=|V_2| = \frac{n}2$,  choose a random variable $X_u$ uniformly distributed in $[0,1]$ for all $u \in V$.
The geometric block model  ${\rm GBM}(r_s, r_d)$ with parameters $r_s> r_d$ is a random graph where an edge exists between vertices $u$ and $v$  if and only if,
\begin{align*}
d_L(X_u, X_v) \le r_s & \text{ when } u, v \in V_1 \text{ or } u,v \in V_2\\
 d_L(X_u, X_v) \le r_d & \text{ when } u \in V_1, v \in V_2 \text{ or } u\in V_2,v \in V_1.
\end{align*}
 \end{definition}
%Given a geometric random graph our main  objective  is to recover the partition (i.e., $V_1$ and $V_2$). 
As a consequence of the connectivity lower bound on  VRG, we are able to show that  recovery of the partition is not possible with high probability in  ${\rm GBM}(\frac{a\log n}{n}, \frac{b \log n}{n})$  by any means whenever $a - b <0.5$ or $a<1$ (see, Theorem~\ref{gbm:lower}). Another consequence of the vertex-random graph results is that we show that if in  addition to a GBM graph, all the locations of the vertices are also provided, then recovery is possible if and only if $a - b >0.5$ or $a>1$ (formal statement in Theorem~\ref{thm:gbmplus}).

Coming back to the actual recovery problem, our main contribution for GBM is to provide a simple and efficient algorithm that performs  well in the sparse regime (see, Algorithm~\ref{alg:alg1}). 
%The algorithm  goes over all the edges of the graph and counts the number of common neighbors for each pair of vertices in the graph. Based on this count, the algorithm then deletes some of the edges.  The next and final step is to compute connected components in the remaining graph. If the redacted graph has exactly two components, then the algorithm returns them as the two clusters. The main result here can be summarized as below.

\begin{theorem}[Recovery algorithm for GBM]\label{gbm:upper}
Suppose we have the graph $G(V,E)$ generated according to ${\rm GBM}(r_s \equiv \frac{a\log n}{n},r_d\equiv \frac{b\log n}{n}), a \ge 2b$. 
Define
\begin{align*}
t_1&=\min\{t: (2b+t)\log \frac{2b+t}{2b}-t > 1\},~~~
t_2=\min\{t : (2b-t)\log \frac{2b-t}{2b}+t > 1\}\\
\theta_1 &= \max\{\theta:\frac{1}{2}\Big((4b+2t_1)\log \frac{4b+2t_1}{2a-\theta}+2a-\theta-4b-2t_1\Big) > 1  \text{ and } 0 \le \theta \le 2a-4b-2t_1\}\\
\theta_2 &= \min\{\theta: \frac{1}{2}\Big((4b-2t_2\log \frac{4b-2t_2}{2a-\theta}+2a-\theta-4b+2t_2\Big) > 1
 \text{ and }  
a \ge \theta \ge \max\{2b,2a-4b+2t_2\}\}.
\end{align*}
Then there exists an efficient algorithm which will recover the correct partition in the GBM with  probability $1-o(1)$  if   $a-\theta_2+\theta_1>2$ OR $a-\theta_2 > 1, a>2$. 
%Then Algorithm~\ref{alg:alg1} with $E_S  = (2b +t_1)\frac{\log n}{n}$ and $E_D  = (2b - t_2)\frac{\log n}{n}$ will recover the correct partition  in the GBM with  probability $1-o(1)$  if   $a-\theta_2+\theta_1>2$ OR $a-\theta_2 > 1, a>2$. 
\end{theorem}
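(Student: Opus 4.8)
The plan is to analyze the common‑neighbor counting algorithm (Algorithm~\ref{alg:alg1}): for every edge $(u,v)\in E$ let $c_{uv}$ denote the number of common neighbors of $u$ and $v$; form the subgraph $H=(V,E')$ with $E'=\{(u,v)\in E:\ c_{uv}>(2b+t_1)\log n \ \text{or}\ c_{uv}<(2b-t_2)\log n\}$; and output the connected components of $H$. I will argue that, under the stated hypotheses, $H$ contains only same‑community edges yet connects each part, so its components are exactly $V_1$ and $V_2$. Throughout, fixing the locations $X_u,X_v$ makes $c_{uv}$ a sum of independent indicators over the remaining $n-2$ vertices, so conditional Chernoff bounds apply, and a union bound over the $\Theta(n\log n)$ edges of $G$ (which is the whp edge count) succeeds as soon as the relevant Chernoff exponent exceeds $1$ — which is precisely the normalization built into $t_1,t_2,\theta_1,\theta_2$.

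\textbf{No cross edge survives (defines $t_1,t_2$).} If $(u,v)$ is a cross edge then $d_L(X_u,X_v)\le r_d$, and since $a\ge 2b$ gives $r_s\ge 2r_d$, the $r_d$‑ball around either endpoint is contained in the $r_s$‑ball around the other. Hence a vertex of $V_1$ is a common neighbor of $u,v$ with probability exactly $2r_d$, and likewise for $V_2$, so $\Ex[c_{uv}]=\tfrac n2\cdot 2r_d+\tfrac n2\cdot 2r_d=2b\log n$ \emph{exactly}, independent of $d_L(X_u,X_v)$. The quantity $t_1$ is exactly the Chernoff deviation making $\Pr[\mathrm{Bin}\ \text{with mean}\ 2b\log n \ge (2b+t_1)\log n]=o\!\big(\tfrac1{n\log n}\big)$, and $t_2$ the matching lower‑tail deviation; this is what the displayed expressions $(2b+t)\log\frac{2b+t}{2b}-t>1$ and $(2b-t)\log\frac{2b-t}{2b}+t>1$ encode. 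Therefore, whp, $c_{uv}\in[(2b-t_2)\log n,(2b+t_1)\log n]$ for every cross edge, so no cross edge lies in $E'$.

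\textbf{Enough same‑community edges survive (defines $\theta_1,\theta_2$).} For a same‑community edge $(u,v)$ with $d_L(X_u,X_v)=\frac{c\log n}{n}$, computing the overlap of the two $r_s$‑balls (together with the two $r_d$‑balls when $c\le 2b$) gives $\Ex[c_{uv}]=(a+b-c)\log n$ for $c\le 2b$ and $\Ex[c_{uv}]=(a-\tfrac c2)\log n$ for $2b\le c\le a$; in particular the mean is decreasing in $c$ and always at least $(a-\tfrac c2)\log n$. One checks that $\tfrac12\big((4b+2t_1)\log\frac{4b+2t_1}{2a-\theta}+2a-\theta-4b-2t_1\big)$ is exactly the lower‑tail Chernoff exponent for a count of mean $(a-\tfrac\theta2)\log n$ falling to $(2b+t_1)\log n$, and the other display is the upper‑tail exponent for such a count reaching $(2b-t_2)\log n$; the side constraints $\theta\le 2a-4b-2t_1$ and $\theta\ge\max\{2b,2a-4b+2t_2\}$ just keep these in the large‑deviation regime. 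Hence, whp, every same‑community edge with $d_L\le\frac{\theta_1\log n}{n}$ has $c_{uv}>(2b+t_1)\log n$, and every same‑community edge with $d_L\ge\frac{\theta_2\log n}{n}$ has $c_{uv}<(2b-t_2)\log n$. Combining with the previous step: whp $E'$ consists entirely of same‑community edges, and $E'$ contains \emph{every} same‑community edge whose endpoints lie at $d_L$‑distance in $[0,\tfrac{\theta_1\log n}{n}]\cup[\tfrac{\theta_2\log n}{n},\tfrac{a\log n}{n}]$.

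\textbf{Connectivity of each part in $H$ — the main obstacle.} It remains to show that, restricted to $V_i$ (which carries $m=n/2$ i.i.d.\ uniform locations), the edge set ``pairs at $d_L$‑distance in $[0,\tfrac{\theta_1\log n}{n}]\cup[\tfrac{\theta_2\log n}{n},\tfrac{a\log n}{n}]$'' yields a connected graph whp; then $H$ connects each part while containing no cross edge, so its components are exactly $V_1,V_2$ and the algorithm is correct. This is a connectivity statement for a vertex‑random graph whose connectivity \emph{set} is a union of two intervals. Rescaling from $n$ to $m=n/2$ vertices turns the three radii into $\tfrac{(\theta_1/2)\log m}{m}$, $\tfrac{(\theta_2/2)\log m}{m}$, $\tfrac{(a/2)\log m}{m}$ up to vanishing corrections, and the two hypotheses $a-\theta_2+\theta_1>2$ and $(a-\theta_2>1,\ a>2)$ are precisely $\tfrac{\theta_1}2+\big(\tfrac a2-\tfrac{\theta_2}2\big)>1$ and $\tfrac a2>1,\ \tfrac a2-\tfrac{\theta_2}2>\tfrac12$. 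When $\theta_1>2$ the short interval alone is already a connected one‑dimensional RGG, and when $a>2,\ a-\theta_2>1$ the long interval alone is connected by Theorem~\ref{thm:rag} (the $b>0$ annulus case); the genuinely new regime is $\theta_1\le 2$ with the long interval disconnected but $\theta_1+a-\theta_2>2$, and I expect proving connectivity there to be the crux. The approach is to re‑run the argument behind the upper‑bound half of Theorem~\ref{thm:rag}: tile the circle into $\Theta\!\big(\tfrac{\log m}{m}\big)$‑sized blocks, show every block is populated whp, and chain blocks together using short edges within a block neighborhood and long ``hopping'' edges to skip over a block — but now the connectivity set is the disconnected set $[0,\theta_1/2]\cup[\theta_2/2,a/2]$ (in $\tfrac{\log m}{m}$ units), so the quantity governing whether a vertex can escape a local trap is the combined length $\tfrac{\theta_1}2+\big(\tfrac a2-\tfrac{\theta_2}2\big)$. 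Pushing the block‑covering / large‑gap bookkeeping of that proof through with a two‑interval connectivity set is the delicate step; the rest is routine Chernoff estimation.
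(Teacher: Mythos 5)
Your proposal is correct and follows the paper's own route almost exactly: (i) a Chernoff/union-bound argument showing all inter-cluster edges are pruned (the paper's Lemma~\ref{lem:defn}), (ii) a conditional Chernoff argument showing same-cluster edges at distance in $[0,\theta_1]\cup[\theta_2,a]$ (in $\tfrac{\log n}{n}$ units) survive (the paper's Lemma~\ref{lem:esed} -- your identification of the displayed expressions as KL/Chernoff exponents against a mean $(a-\theta/2)\log n$ binomial with $n/2$ trials is precisely right), and (iii) a reduction to the connectivity of a one-dimensional vertex-random graph whose connectivity set is the union of two intervals, with the observation that the $n \mapsto n/2$ rescaling halves all constants, turning the hypotheses $a-\theta_2+\theta_1>2$ and $(a-\theta_2>1,a>2)$ into $\tfrac{\theta_1}{2}+(\tfrac a2-\tfrac{\theta_2}{2})>1$ and $(\tfrac a2-\tfrac{\theta_2}{2}>\tfrac12,\ \tfrac a2>1)$.

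The one place you flagged as an open "crux" -- connectivity of a VRG with connectivity set $[0,c]\cup[b,a]$ under $a-b+c>1$ or $(a-b>\tfrac12,a>1)$ -- is not something you need to re-derive from scratch: it is stated as Corollary~\ref{cor:patch} in the paper, obtained as an immediate consequence of Theorem~\ref{thm:upper}. The second alternative $(a-b>\tfrac12,a>1)$ follows at once by monotonicity (adding the interval $[0,c]$ only adds edges to a graph already connected by Theorem~\ref{thm:upper}), which is the case you had already spotted. The first alternative $(a-b+c>1)$ follows from the isolated-vertex / two-sided-neighbor argument already developed in Lemma~\ref{lem:lemma1} (the total measure of the connectivity set exceeds $\tfrac{\log n}{n}$, so whp every vertex has a neighbor on each side, giving the cycle/cover structure), combined with the bridging argument of Lemma~\ref{lem:lemma2}. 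So once you cite Corollary~\ref{cor:patch} in place of re-proving it, your argument is complete and coincides with the paper's.
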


Some example of the parameters when the proposed algorithm (Algorithm~\ref{alg:alg1}) can successfully recover is given  in Table~\ref{tab:per}.
 
\begin{table}[h!]\label{tab:per}
\centering
\begin{tabular}{ |c|c|c|c|c|c|c|c|c| } 
 \hline
$b$ & 0.01 & 1 & 2 & 3 & 4 & 5 & 6 & 7 \\ 
\hline
Minimum value of $a$ & 3.18 & 8.96 & 12.63 & 15.9 & 18.98 & 21.93 & 24.78 & 27.57 \\
 \hline 
\end{tabular}
\caption{Minimum value of $a$, given $b$ for which Algorithm~\ref{alg:alg1} resolves clusters correctly in ${\rm GBM}(\frac{a\log n}{n},\frac{b\log n}{n})$.\label{tab:per}}
\end{table}

As can be anticipated, the connectivity results for RAG applies to the `high dimensional' geometric block model. % (see Section~\ref{sec:sparse-high} for precise definition.).
\begin{definition}[The GBM in High Dimensions]
Given $V = V_1\sqcup V_2, |V_1|=|V_2| = \frac{n}2$,  choose a random vector $X_u$ independently uniformly distributed in $S^t$ for all $u \in V$.
The geometric block model  ${\rm GBM}_t(r_s, r_d)$ with parameters $r_s> r_d$ is a random graph where an edge exists between vertices $u$ and $v$  if and only if,
\begin{align*}
\norm{X_u-X_v}_2 \le r_s & \text{ when } u, v \in V_1 \text{ or } u,v \in V_2\\
\norm{X_u-X_v}_2 \le r_d & \text{ when } u \in V_1, v \in V_2 \text{ or } u\in V_2,v \in V_1.
\end{align*}
\end{definition}
We extend the algorithmic results to high dimensions.
%Algorithm~\ref{alg:alg1} with some changes in parameters also works for GBM in high dimension. Indeed, we can again exploit the connectivity results of RAG for both lower (impossibility) bound and to show the effectiveness of Algorithm~\ref{alg:alg1}.
\begin{theorem}
\label{theorem:intro-1}
There exists a polynomial time efficient algorithm that recovers the partition from ${\rm GBM}_t(r_s, r_d)$ with probability $1-o(1)$ if $r_s=\Theta((\frac{\log{n}}{n})^{\frac{1}{t}})$ and $r_s-r_d=\Omega( (\frac{\log n}{n})^{\frac{1}{t}})$. Moreover, any  algorithm fails to recover the parts with probability at least $1/2$ if $r_s-r_d=o( (\frac{\log n}{n})^{\frac{1}{t}})$ or $r_s=o((\frac{\log{n}}{n})^{\frac{1}{t}})$.
\end{theorem}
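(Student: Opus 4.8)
The plan is to reduce Theorem~\ref{theorem:intro-1} to the connectivity results for random annulus graphs (Theorems~\ref{th:lb} and \ref{thm:highdem1}) together with a high-dimensional analogue of the recovery algorithm analyzed in Theorem~\ref{gbm:upper}. First I would set up the sufficiency direction. Fix $r_s = a(\frac{\log n}{n})^{1/t}$ and $r_d = b(\frac{\log n}{n})^{1/t}$ with $a$ a large enough constant and $a-b$ bounded below by a suitable constant; since $r_s - r_d = \Omega((\frac{\log n}{n})^{1/t})$, we may assume $a > c\cdot b$ for some constant $c > 1$ after possibly rescaling (this is the high-dimensional analogue of the $a \ge 2b$ assumption). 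The key structural observation, exactly as in the one-dimensional case, is that for two vertices $u,v$ in the \emph{same} part, the number of common neighbors is (in expectation) governed by the volume of the intersection of two spherical annuli of radii $\le r_s$, whereas if $u,v$ are in \emph{different} parts the relevant volume is a mixture involving the much smaller radius $r_d$; these two expected counts are separated by a constant multiplicative factor once $a/b$ and $a-b$ are large enough. So the algorithm is: build the graph whose vertices are the original vertices and whose edges join pairs with ``many'' common neighbors (threshold chosen between the two regimes); show via concentration that this graph is, with high probability, exactly two disjoint pieces, one per part, and that each piece is connected.

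The connectedness of each piece is precisely where the RAG connectivity theorem enters. Restricted to one part $V_1$, the induced subgraph on the ``same-part'' relation is itself distributed as ${\rm RAG}_t(\frac n2, [0, r_s])$ on $\frac n2$ points, and more to the point the ``many common neighbors'' graph on $V_1$ contains (with high probability) a $\text{RAG}_t$ with connectivity interval $[r_1', r_2']$ whose parameters $a', b'$ satisfy the hypotheses of Theorem~\ref{thm:highdem1} provided the original $a, a-b$ were chosen large enough; I would carry out this bookkeeping to pin down the explicit constant in the $\Omega(\cdot)$. The concentration step itself is a union bound over $\binom n2$ pairs, each controlled by a Chernoff bound on a sum of \emph{almost} independent indicators — the dependence is mild because common-neighbor events for a fixed pair $(u,v)$ are independent across the other $n-2$ vertices once $X_u, X_v$ are fixed. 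I expect the main obstacle to be the geometric volume estimates: one needs good enough upper and lower bounds on the measure of $\{w : \|X_u - X_w\| \le \rho_1,\ \|X_v - X_w\| \le \rho_2\}$ on $S^t$ as a function of $\|X_u - X_v\|$, uniformly in $t$, to guarantee the constant-factor gap between the two regimes. This is the same difficulty flagged in the paper for RAG connectivity (``we have to use several approximations of high dimensional geometry'', VC-dimension arguments for intersections of annuli and hyperplanes), so I would invoke those lemmas from Section~\ref{sec:hrag-detail} rather than redo them.

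For the impossibility direction, the argument is information-theoretic and local. If $r_s - r_d = o((\frac{\log n}{n})^{1/t})$, I would exhibit with probability bounded away from $0$ a vertex $u$ (or a small constant-size set) whose entire neighborhood pattern is statistically indistinguishable under the two hypotheses ``$u \in V_1$'' and ``$u \in V_2$'' — concretely, a vertex all of whose realized edges happen to go to points at distance $\le r_d$, so that flipping $u$'s label produces an equally likely configuration; such a vertex exists with constant probability by a second-moment / Poissonization argument on the number of points landing in the thin annulus between radii $r_d$ and $r_s$ around $u$. The case $r_s = o((\frac{\log n}{n})^{1/t})$ is even simpler: then even the same-part relation fails to connect $V_1$, so $G$ has isolated vertices by the zero-one law (Theorem~\ref{th:lb}, with $b = 0$ and $a \to 0$), and an isolated vertex can be relabeled freely, forcing any algorithm to err with probability $\ge \frac12$. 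I would phrase both failure statements as: there is an automorphism of the observed graph's distribution that swaps the labels of some vertex, hence no estimator beats random guessing on that vertex. The cleanest way to present this is to first prove the $t=1$ lower bound in Theorem~\ref{gbm:lower} with this ``relabelable vertex'' lemma and then note the proof is dimension-agnostic once the relevant annulus-volume asymptotics (which are $\Theta(\rho^t)$ for $\rho \to 0$) are plugged in.
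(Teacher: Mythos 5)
Your plan is essentially the paper's plan, so let me just flag the points where you diverge from or out-detail the paper. For the sufficiency direction, the paper uses exactly the triangle-counting/common-neighbor test you describe, bounds the counts conditionally on $\|X_u - X_v\|_2$ by a Chernoff bound (the conditional independence you note is the same observation used there), and then invokes Theorem~\ref{thm:highdem1} to show that the surviving intra-cluster edges keep each part connected; you have all of that. One difference: you propose a single threshold ("many common neighbors"), whereas the paper's Algorithm~\ref{alg:alg1} and its high-dimensional analogue use a \emph{two-sided} test with thresholds $E_S$ and $E_D$, keeping an edge when the count is either above $E_S$ or below $E_D$. For Theorem~\ref{theorem:intro-1}, where the constants hidden in $\Theta$ and $\Omega$ may be chosen, this turns out not to matter: the paper's own connectivity step (Lemma~\ref{RAG:cond}) only relies on the edges with distance $\le \ell_1$, which are precisely the ones your single threshold retains, so the lower threshold $E_D$ is dormant in this particular proof (it matters for the sharper 1-D result in Theorem~\ref{gbm:upper}). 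Also note that, because here the retained intra-cluster graph is a ball graph ${\rm RAG}_t(n/2,[0,\ell_1])$ rather than a genuine annulus, you are invoking Theorem~\ref{thm:highdem1} only in its $b=0$ specialization; that is fine, but your phrasing suggests you think you get a proper annulus $[r_1',r_2']$ with $r_1'>0$, which you would only if you used the two-sided test and then proved connectivity of the union $[0,\ell_1]\cup[\ell_2,r_s]$, which neither you nor the paper does. For the impossibility direction, the paper does not spell out the high-dimensional proof (it only remarks that it follows from the RAG lower bound), so your "relabelable vertex" argument is supplying a detail the paper omits; it is morally right but needs a routine repair: since $|V_1|=|V_2|=n/2$ is fixed, flipping a single vertex's label changes the cluster sizes, so you must swap a pair of such vertices (one from each side, far apart so the swap is graph-preserving), or, as the paper does in the 1-D Theorem~\ref{gbm:lower}, argue from having strictly more than two connected components in the auxiliary annulus graph. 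Finally, your annulus-volume estimate $n\cdot\Theta(r_s^{t-1})\cdot(r_s-r_d)=o(\log n)$ genuinely needs $r_s=\Theta((\log n/n)^{1/t})$; read the impossibility clause of the theorem as living in that regime, since for larger $r_s$ the isolated-in-annulus argument no longer produces an isolated vertex when $t\ge 2$.
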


%The definition of GBM can be straight-forwardly extended for $k$ communities. For exact recovery of clusters, it is necessary for each community to be connected. From the connectivity threshold of a random geometric graph \citep{lognradius}, we therefore need $r_s=\Omega(\big[\frac{\log{n}}{n}\big]^{1/t})$. For $t=1$, this becomes $\frac{\log{n}}{n}$ which is also the connectivity threshold of SBM because in an Erd\"{o}s-R\'{e}nyi random graph, this is the threshold for graph connectivity \citep{bollobas1998random}.

%The theorem relies on the concentrations of count of common neighbors  (or other motifs) for pairs of vertices that are in  the same cluster, versus the  pairs that are in different clusters. It also crucially relies on Corollary \ref{cor:patch} for the final result. If instead of Corollary \ref{cor:patch}, the result of Corollary \ref{cor:extra} is used, then the results of this theorem can be slightly improved, but we omit that statement for clarity (and relative obviousness).

%\paragraph{Random Annulus Graphs and High Dimensional GBM}

\section{Connectivity of Vertex-Random Graphs}
\label{sec:rag}
In this section we give a sketch of the proof of  sufficient condition for connectivity of VRG (as part of proving Theorem~\ref{thm:rag}). The full details along with the proof of the necessary condition (Theorem~\ref{thm:lower_bound}) are given in Section~\ref{sec:VRG-detail}. 
%lower bound (impossibility result) first.
\subsection{Sufficient condition for connectivity of VRG}
\begin{theorem}\label{thm:upper}
The vertex-random graph ${\rm VRG}(n,[\frac{b\log n}{n},\frac{a\log n}{n}])$  is connected with  probability $1-o(1)$ if $a >1$ and $a - b >0.5$.
\end{theorem}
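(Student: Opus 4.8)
The plan is to condition on two high-probability ``macroscopic'' events, reduce connectivity to a purely local statement, and then establish that statement by an explicit bounded-length ``routing'' argument. Throughout, view the $n$ vertices as i.i.d.\ uniform points on the circle of circumference $1$, write $r_1=\frac{b\log n}{n}$ and $r_2=\frac{a\log n}{n}$, and let $\mathrm{ann}(u)=[x_u+r_1,x_u+r_2]\cup[x_u-r_2,x_u-r_1]$ be the set of positions adjacent to $u$.

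First I would fix the two events. (i) \emph{No large gap.} Since $a>1$, pick $\epsilon>0$ with $1+\epsilon<a$; a union bound over a net of $O(n/\log n)$ arcs shows that with probability $1-O(n^{-\epsilon'})$ every arc of length $\frac{(1+\epsilon)\log n}{n}$ contains a vertex, so in particular every gap between consecutive points is smaller than $r_2$. (ii) \emph{Dense neighborhoods.} For a fixed $u$, the number of vertices in $\mathrm{ann}(u)$ is distributed as $\mathrm{Bin}\!\left(n-1,\,2(r_2-r_1)\right)$ with mean $2(a-b)\log n$; since $a-b>\tfrac12$, writing $\gamma=2(a-b)-1>0$, a Poisson/Chernoff tail gives $\Pr[\deg(u)\le K]\le n^{-(1+\gamma)}\,\mathrm{polylog}(n)$ for any fixed $K$, so by a union bound whp every vertex has degree at least $K$. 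Event (ii) already excludes connected components of size $\le K$.

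The reduction is: conditioned on (i)--(ii), and on one further event of probability $1-o(1)$, every vertex $u$ lies in the same component as the vertex $v$ immediately clockwise of it; since ``same component as'' is transitive and the clockwise-successor map is a single $n$-cycle on the vertex set, this forces the whole graph to be connected. To prove it, fix $u$ and $v$; by (i), $d(u,v)<r_2$, and if $d(u,v)\ge r_1$ then $uv$ is already an edge, so assume $d(u,v)<r_1$ and build a bounded-length path from $u$ to $v$ that ``steps over'' the forbidden short-distance zone. Each hop is chosen to land inside a window of the form $[y+r_1,y+r_2]$ or $[y-r_2,y-r_1]$ (or a sub-window thereof) that, by event (i), is guaranteed to contain a vertex; clockwise hops advance the position by at least $r_1$, so only a constant number of them (depending on $a,b$) are ever needed, and the last hop is arranged to land within $[r_1,r_2]$ of $v$. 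The ``one further event'' is exactly that, for \emph{every} vertex, all windows consulted along its route are non-empty; this is controlled by a union bound over the $n$ vertices and the $O(1)$ possible route-shapes.

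The main obstacle is the tight corner of the parameter region, $a\to 1^+$ and $a-b\to\tfrac12^+$, where every estimate above has vanishing slack. A single intermediate vertex does \emph{not} suffice to bridge a short-distance pair: the relevant set $\mathrm{ann}(u)\cap\mathrm{ann}(v)$ has measure only $\approx 2\bigl(r_2-r_1-d(u,v)\bigr)$, which can be $o\!\left(\frac{\log n}{n}\right)$, so several hops are genuinely required; worse, a one-sided annulus $[x_u+r_1,x_u+r_2]$ is empty with probability $\approx n^{-(a-b)}$, which is not summable over $n$ vertices once $a-b\le 1$, so the route must sometimes first move counterclockwise and come back. Making the union bound over all such multi-hop routes total to $o(1)$ --- equivalently, guaranteeing that the windows consulted never need to be shorter than $(1+\Omega(1))\frac{\log n}{n}$ --- is the delicate step, and it is precisely where both $a>1$ and $a-b>\tfrac12$ are consumed. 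An alternative organization, possibly the one the authors use, argues by contradiction about a hypothetical component $C\notin\{\emptyset,V\}$: event (i) forces a ``boundary'' pair $u\in C,\ w\notin C$ with $d(u,w)<r_1$, whence a union of annuli around $u$ and $w$ is forced to be point-free; one then shows this forced-empty region is too large to actually be empty, with the same two hypotheses making the bookkeeping close.
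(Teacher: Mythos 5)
Your proposal correctly identifies where the difficulty lies, but it stops precisely at the point where the proof has to actually be done, and the mechanism it gestures at is not the one that works.

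The plan of attack has a structural problem. Your events (i) and (ii) are both genuinely true under the hypotheses, but together they give essentially nothing toward connectivity: event (ii) only rules out small components, and event (i) controls gaps of length about $\tfrac{\log n}{n}$, which is \emph{larger} than the annulus width $(a-b)\tfrac{\log n}{n}$ when $a-b<1$, so it cannot be used to certify that any particular one-sided annulus $[x_u+r_1,x_u+r_2]$ is non-empty. Indeed, a one-sided annulus is empty with probability about $n^{-(a-b)}$, and you correctly note this is not summable over $n$ vertices; the expected number of vertices with an empty right annulus is $n^{1-(a-b)}\to\infty$. The entire burden of the theorem is therefore carried by the ``one further event'' that all routes succeed, and you neither construct the routes nor bound their failure probabilities. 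Asserting that ``only a constant number of hops are ever needed'' and that ``the windows consulted never need to be shorter than $(1+\Omega(1))\tfrac{\log n}{n}$'' is exactly the statement that has to be proved, and you explicitly flag it as the delicate step without resolving it. As written, the argument is a plan, not a proof.

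There is also a more fundamental mismatch between your reduction and the kind of probabilistic tool that can close it. You want to show that \emph{every} vertex $u$ is connected, via a bounded path through wide windows, to its clockwise successor $v$; since this is a for-all statement over $n$ vertices, it must be established by a union bound, i.e.\ a first-moment method, with per-vertex failure probability $o(1/n)$. But when $d(u,v)$ is close to $r_1$ and $a<2b$, the region where a single intermediate vertex could sit has measure $r_2-r_1-d(u,v)\approx(a-2b)\tfrac{\log n}{n}$, which is negative, and two-hop routes run into the same shrinking-window problem. The paper's proof splits the work differently: Lemma~\ref{lem:lemma1} establishes by a union bound over an overlapping-patch decomposition (the $I_u^i,J_u^i,K_u^i,M_u^i$ families) that every vertex has left and right neighbors within two hops, so the graph is a union of cycles each covering $[0,1]$ --- this is the part that genuinely requires $a>1$ \emph{and} $a-b>\tfrac12$, and the patch bookkeeping is nontrivial. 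Crucially, this only shows each vertex is in \emph{some} cycle; it does not show consecutive vertices are in the \emph{same} cycle, because the cycles can interleave. To glue the cycles together, the paper then uses a \emph{second-moment} argument (Lemma~\ref{lem:lemma2}) to show that there exists at least one specially-placed vertex $u_0$ with a scaffold $u_1,\dots,u_k,v_1,\dots,v_k$ to its right, arranged so that any vertex landing within $r_2$ of $u_0$ must connect to the scaffold; by the covering property of cycles this forces every cycle to hit the scaffold. This existential, second-moment step is precisely what cannot be recast as a per-vertex union bound, and it is the step missing from your proposal. Your alternative ``contradiction about a hypothetical component $C$'' sketch runs into the same issue: showing the forced-empty region around a boundary pair is too large requires the same careful decomposition, which you do not carry out.
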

To prove this theorem we use two main technical lemmas that show two different events happen with high probability simultaneously.

\begin{lemma}\label{lem:lemma1}
A set of vertices $\cC \subseteq V$ is called a cover of $[0,1]$, if for any point $y$ in $[0,1]$ there exists a vertex $v\in \cC$ such that $d(v,y) \le \frac{a\log n}{2n}$.  A ${\rm VRG}(n, [\frac{b\log n}{n}, \frac{a\log n}{n}])$ is a union of cycles  %$\{\cC =(w_1\to w_2\to \dots\to w_\ell\to w_1)\}$ 
such that every cycle forms a cover of $[0,1]$  (see Figure~\ref{fig:arr0}) as long as $a-b >0.5$ and $a >1$ with  probability $1-o(1)$.
\end{lemma}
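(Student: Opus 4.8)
The plan is to analyze the local neighborhood structure of each vertex on the circle. Think of the $n$ vertices as points on the circle of circumference $1$. For a vertex $u$ at position $X_u$, its "right neighbors" are those vertices $v$ with $X_v - X_u \pmod 1$ lying in the annulus-interval $[\frac{b\log n}{n}, \frac{a\log n}{n}]$ (an interval of length $\frac{(a-b)\log n}{n}$), and similarly for "left neighbors." I would first show that, with probability $1 - o(1)$, \emph{every} vertex has at least one right neighbor and at least one left neighbor. The expected number of points in an interval of length $\frac{(a-b)\log n}{n}$ is $(a-b)\log n$; the probability that a given such interval is empty is $(1 - \frac{(a-b)\log n}{n})^{n-1} \approx n^{-(a-b)}$. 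Since $a - b > 1/2$... wait — a naive union bound over $n$ vertices needs $a-b>1$, which is \emph{not} assumed. So the key subtlety is that one cannot just union-bound the "empty right-interval" events independently; instead I would argue that the \emph{union} of the left-empty and right-empty bad events is controlled, or more carefully, condition on the configuration and use a second-moment / Poisson-approximation argument, exploiting that if $u$ has no right neighbor then the gap to the next point clockwise-beyond distance $\frac{b\log n}{n}$ is large, and such large gaps are rare enough globally. Concretely I would bound the expected number of "long empty arcs" of length $\geq \frac{(a-b)\log n}{n}$ starting at a point, show it is $o(1)$ when combined with the condition $a>1$ (which handles the truly isolated / long-gap regime) — this is exactly where the two hypotheses $a>1$ and $a-b>1/2$ are both genuinely needed, and it is the main obstacle.

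Granting that every vertex has $\geq 1$ neighbor in each of the two directions, the next step is structural: I claim the graph is then a disjoint union of cycles, each of which winds around the circle and covers $[0,1]$. The idea is to define, for each vertex $u$, its \emph{nearest} right-neighbor and \emph{nearest} left-neighbor (the closest vertex in the annulus on each side). Following nearest-right-neighbors repeatedly traces a path that advances monotonically around the circle by steps of size between $\frac{b\log n}{n}$ and $\frac{a\log n}{n}$; since the circle has finite circumference this path must close up into a cycle after winding around some integer number of times. Consecutive vertices on this cycle are within distance $\frac{a\log n}{n}$, but I need the stronger covering statement: between two consecutive cycle-vertices the gap is at most $\frac{a\log n}{n}$, so every point $y$ on the arc between them is within $\frac{a\log n}{2n}$ of one of the two endpoints — hence within $\frac{a\log n}{2n}$ of \emph{some} cycle vertex. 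That gives the cover property for each cycle. One must also check that distinct such cycles are genuinely disjoint as vertex sets (a vertex's nearest-right-neighbor is unique, so the "successor" map is a function; but it may not be injective, so I'd instead work with the functional graph of the successor map and extract the cycle containing each vertex, or symmetrize using both nearest-left and nearest-right to get a 2-regular-looking structure as in Figure~\ref{fig:arr0}).

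So the ordered steps are: (1) set up the point process on the circle and the left/right annulus-intervals; (2) the crux — prove that with probability $1-o(1)$ no vertex lacks a left neighbor or a right neighbor, using a careful first-moment bound on long empty arcs that invokes both $a>1$ and $a-b>1/2$ rather than a crude union bound; (3) on that event, define the nearest-right successor map and show iterating it traces out cycles that wrap around $[0,1]$; (4) deduce that each such cycle is a cover of $[0,1]$ because consecutive vertices are within $\frac{a\log n}{n}$; (5) tidy up the "union of cycles" description of the whole graph. I expect step (2) to be the real technical heart: the bound must be tight enough to separate the regime $a-b>1/2$ (where this works) from $a-b<1/2$ (where Theorem~\ref{thm:rag} says connectivity fails), which strongly suggests that the relevant bad event is not "one vertex isolated" but "two nearby vertices each missing a neighbor on the same side," whose probability scales like $n^{-2(a-b)}$ per pair and $n^{2} \cdot n^{-2(a-b)} = o(1)$ exactly when $a-b>1$... so in fact the argument must be subtler still, tracking the joint event that a whole short arc is badly covered, and I would set up the first-moment computation over maximal empty arcs to get the threshold at $a-b = 1/2$.
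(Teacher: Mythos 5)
There is a genuine gap, and it is at the step you yourself flag as ``the real technical heart.'' The target you set up in step (2) --- that with probability $1-o(1)$ \emph{every} vertex has a direct annulus-neighbor on its left and on its right --- is in fact \emph{false} in the regime $0.5 < a-b < 1$. The expected number of vertices with no right annulus-neighbor is $n\bigl(1-\tfrac{(a-b)\log n}{n}\bigr)^{n-1} \approx n^{1-(a-b)}$, which diverges when $a-b<1$, and a second-moment argument (exactly of the kind used in the paper's Theorem~\ref{thm:lower_bound}) shows this count concentrates, so with high probability there \emph{are} vertices lacking a direct annulus-neighbor on one side. No refinement of the first- or second-moment bound on ``long empty arcs,'' Poissonization, or pair-event analysis can rescue a claim that is simply false; you sensed this (``so in fact the argument must be subtler still'') but did not find the pivot.

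The paper's Lemma~\ref{lem:lemma1} escapes this by \emph{not} requiring direct annulus-neighbors. It shows that every vertex $u$ has a vertex to its left and a vertex to its right, each within distance $\tfrac{a\log n}{n}$, that lie in the same \emph{connected component} as $u$, possibly via a two-hop path. Concretely, the interval around $u$ is chopped into $\Theta(2^{c})$ equal patches lying in the left annulus $I_u^i$, the near zone $J_u^i$ (distance $\le (a-b)\tfrac{\log n}{n}$ to the left), and the right annulus $K_u^i$; one proves (by a union bound over $\binom{4L}{2L-1} = n^{o(1)}$ choices, which is where $a-b>\tfrac12$ and $a>1$ enter, since $2L-1$ non-overlapping patches have total length $\approx 2(a-b)\tfrac{\log n}{n}$, capped at $a\tfrac{\log n}{n}$) that no collection of $2L-1$ of these patches is simultaneously empty. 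Then, if the left annulus is entirely empty, a vertex $v$ in some near-zone patch $J_u^{\phi}$ and a vertex $w$ in a right-annulus patch $K_u^k$ with $k<\phi$ must both exist, and crucially $d(v,w) \in [\tfrac{b\log n}{n}, \tfrac{a\log n}{n}]$, so $v$--$w$ is an edge; since $u$--$w$ is also an edge, $u$ reaches $v$ on its left in two hops. This ``long edges stitch over the forbidden gap'' idea is precisely what lets the threshold drop to $a-b>\tfrac12$, and it is absent from your proposal. Your steps (3)--(5) on extracting a cyclic cover from ``neighbor on both sides within $\tfrac{a\log n}{n}$'' are fine in spirit once the correct version of (2) is in hand, but as written (2) cannot be proved because it is not true.
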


%We consider this lemma to be a main technical contribution for this paper and a large part of this section is dedicated to the proof. 
This lemma also shows effectively the fact that `long-edges' are able to connect vertices over multiple hops.
%(\textcolor{red}{Mention long edge phenomena}).
Note that, the statement of Lemma~\ref{lem:lemma1} would be easier to prove if the condition were $a-b>1$. 
In that case what we prove is that every vertex  has neighbors (in the VRG) on both of the left and right directions. To  see this
%In order to prove the above statement, we  define  two events that together imply connectivity   and then show that the events simultaneously occur with high probability. 
 %The first event is that 
for each vertex $u$ , assign two indicator $\{0,1\}$-random variables $A_{u}^{l}$ and $A_{u}^{r}$, with $A_{u}^{l}=1$ if and only if there is no node $x$ to the left of node $u$ such that $d(u,x) \in [\frac{b\log n}{n},\frac{a\log n}{n}]$. Similarly, let $A_{u}^{r}=1$ if and only if there is no node $x$ to the right of node $u$ such that $d(u,x)\in [\frac{b\log n}{n},\frac{a\log n}{n}]$. Now define $A=\sum_{u}( A_{u}^{l}+A_{u}^{r})$.  We have, 

\begin{wrapfigure}{r}{0.35\textwidth}
\vspace{-25pt}
\centering
\begin{tikzpicture}[thick, scale=0.5]
 \filldraw[color=black, fill=red!0,  line width=2pt](0,0) circle (3.5);
 \draw [line width=2pt, color=black!60,->](0,0) -- (3.5,0)node[anchor=north east] at (2,0) {$\frac{1 }{2\pi}$};
 \filldraw [gray] (0,0) circle (2pt);%node[anchor=north east] {$\frac{\beta \log n}{n}$};
\filldraw [red] (0,3.5) circle (2pt);% node[anchor=north east] at (0,3.9){u};
\filldraw [red] (3.5,0) circle (2pt);% node[anchor=north east] at (0,3.9){u};
\filldraw [red] (0,-3.5) circle (2pt);% node[anchor=north east] at (0,3.9){u};
\filldraw [red] (-3.5,0) circle (2pt);% node[anchor=north east] at (0,3.9){u}
\filldraw [red] (2.47,2.47) circle (2pt);% node[anchor=north east] at (0,3.9){u};
\filldraw [red] (2.47,-2.47) circle (2pt);% node[anchor=north east] at (0,3.9){u};
\filldraw [red] (-2.47,2.47) circle (2pt);% node[anchor=north east] at (0,3.9){u};
\filldraw [red] (-2.47,-2.47) circle (2pt);% node[anchor=north east] at (0,3.9){u};;
\draw[red]    (0,3.5) to[out=80,in=40] (2.47,2.47);
\draw[red]    (2.47,2.47) to[out=40,in=60](3.5,0) ;
\draw[red]    (3.5,0) to[out=40,in=-50](2.47,-2.47) ;
\draw[red]    (2.47,-2.47) to[out=-20,in=-60](0,-3.5) ;
\draw [red]   (0,-3.5) to[out=-100,in=-100](-2.47,-2.47) ;
\draw [red]   (-2.47,-2.47) to[out=-120,in=-200](-3.5,0) ;
\draw [red]   (-3.5,0) to[out=150,in=-180](-2.47,2.47) ;
\draw [red]   (-2.47,2.47) to[out=150,in=100](0,3.5) ;
\filldraw [blue] (3.23,1.339) circle (2pt);%node[anchor=north east] {$\frac{\beta \log n}{n}$};
\filldraw [blue] (3.23,-1.339) circle (2pt);%node[anchor=north east] {$\frac{\beta \log n}{n}$};
\filldraw [blue] (-3.23,1.339) circle (2pt);%node[anchor=north east] {$\frac{\beta \log n}{n}$};
\filldraw [blue] (-3.23,-1.339) circle (2pt);%node[anchor=north east] {$\frac{\beta \log n}{n}$};
\filldraw [blue] (1.339,3.23) circle (2pt);%node[anchor=north east] {$\frac{\beta \log n}{n}$};
\filldraw [blue] (1.339,-3.23) circle (2pt);%node[anchor=north east] {$\frac{\beta \log n}{n}$};
\filldraw [blue] (-1.339,3.23) circle (2pt);%node[anchor=north east] {$\frac{\beta \log n}{n}$};
\filldraw [blue] (-1.339,-3.23) circle (2pt);%node[anchor=north east] {$\frac{\beta \log n}{n}$};
\draw[blue]   (3.23,1.339) to[out=10,in=10, , distance=2cm](3.23,-1.339);
\draw[blue]    (3.23,-1.339) to[out=-20,in=-60, distance=2cm](1.339,-3.23) ;
\draw[blue]    (1.339,-3.23) to[out=-80,in=-80,distance=2cm](-1.339,-3.23) ;
\draw[blue]    (-1.339,-3.23) to[out=-100,in=-150,distance=2cm](-3.23,-1.339) ;
\draw[blue]    (-3.23,-1.339) to[out=-150,in=150,distance=2cm](-3.23,1.339) ;
\draw[blue]    (-3.23,1.339) to[out=120,in=150,distance=2cm](-1.339,3.23) ;
\draw[blue]    (-1.339,3.23) to[out=120,in=50,distance=2cm](1.339,3.23) ;
\draw[blue]    (1.339,3.23) to[out=20,in=50,distance=2cm](3.23,1.339) ;
\end{tikzpicture}
\vspace{-20pt}
\caption{Each vertex having two neighbors on either direction implies the graph is a union of cycles. The cycles can be interleaving in $[0,1]$.\label{fig:arr0}}
\end{wrapfigure}
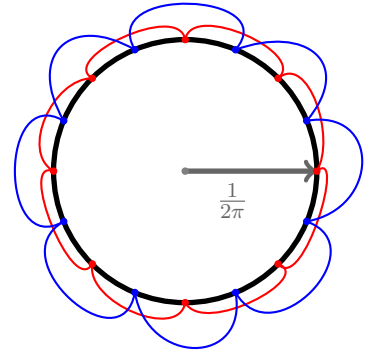
$$\Pr(A_{u}^{l}=1)=\Pr(A_{u}^{r}=1)=(1-\frac{(a-b)\log n}{n})^{n-1},$$ 
and, 
$$\avg[A]=2n(1-\frac{(a-b)\log n}{n})^{n-1} \le 2n^{1-(a-b)}.$$
 If $a-b >1$ then $\avg[A]=o(1)$ which implies, by invoking Markov inequality, that with high probability every node will have neighbors (connected by an edge in the VRG) on either side. This results in the interesting conclusion that every vertex will lie in  a cycle that covers  $[0,1]$. This is true for every vertex, hence the graph is simply a union of cycles each of which is a cover of $[0,1]$.
The main technical challenge is to show that this conclusion remains valid even when $a-b >0.5$, which is proved in Lemma~\ref{lem:lemma1} in Section~\ref{sec:VRG-detail}. %at the end of this section.

\begin{lemma}\label{lem:lemma2}
Set two real numbers $k\equiv \lceil b/(a-b)\rceil+1$ and $\epsilon < \frac1{2k}$. In an ${\rm VRG}(n, [\frac{b\log n}{n}, \frac{a\log n}{n}]), 0 <b <a$, with  probability $1-o(1)$ there exists a vertex $u_0$ and  $k$ nodes $\{u_1, u_2 ,\ldots, u_k\}$ to the right of $u_0$ such that $d(u_0,u_i) \in [\frac{(i(a-b)-2i\epsilon)\log n}{n},\frac{(i(a-b)-(2i-1)\epsilon) \log n}{n}]$ and  $k$ nodes $\{v_1, v_2 ,\ldots, v_k\}$ to the right of $u_0$ such that $d(u_0,v_i) \in [\frac{((i(a-b)+b-(2i-1)\epsilon)\log n}{n},\frac{(i(a-b)+b -(2i-2)\epsilon)\log n}{n}]$, for $i =1,2,\ldots,k$. The arrangement of the vertices is shown  in Figure~\ref{fig:arr} (pg. 18).
\end{lemma}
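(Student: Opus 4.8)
The plan is to prove existence of the configuration by a first-moment/second-moment argument: we identify $2k+1$ disjoint "target" intervals on the circle (one for $u_0$'s location, and $2k$ small windows of width $\epsilon\log n/n$ each, at the prescribed offsets from $u_0$), and we show that with probability $1-o(1)$ there is at least one placement of $u_0$ together with one vertex in each target window. Concretely, I would discretize: tile $[0,1]$ into $\Theta(n/\log n)$ anchor cells each of width $\Theta(\log n / n)$, and for each anchor cell $I_0$ define the $2k$ windows $W_1^+,\dots,W_k^+,W_1^-,\dots,W_k^-$ displaced to the right of $I_0$ by the amounts dictated by the statement (the $u_i$-windows at offsets $\approx i(a-b)\log n/n$ and the $v_i$-windows at offsets $\approx (i(a-b)+b)\log n/n$, each of width $\epsilon\log n/n$). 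Since $\epsilon < 1/(2k)$ and the nominal offsets are spread out in multiples of $(a-b)$ and $(a-b)+b/k$-ish steps, for a suitable choice of the anchor width all $2k+1$ intervals are pairwise disjoint, so the events "cell $I_0$ contains a vertex" and "each $W_j^{\pm}$ contains a vertex" are determined by the number of the $n$ points falling into disjoint regions, hence have a clean product/Poisson structure.

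Next I would compute the first moment. Let $N$ be the number of anchor cells $I_0$ such that $I_0$ is nonempty and every one of its $2k$ associated windows is nonempty. For a single anchor cell, the probability that a given vertex lands in $I_0$ is $\Theta(\log n/n)$, and the probability a given vertex lands in a particular window $W_j^{\pm}$ is $\epsilon \log n / n$; the probability that a fixed window is nonempty is $1 - (1-\epsilon\log n/n)^{n-1} \sim 1 - n^{-\epsilon}$, and since $\epsilon$ is a fixed positive constant this is $1 - o(1)$, in fact $1 - n^{-\epsilon}$. Multiplying over the $2k$ windows (which touch disjoint sets of points, so we can condition cleanly) and times the $\Theta(\log n/n)$ probability that $I_0$ itself is occupied, and summing over the $\Theta(n/\log n)$ anchor cells, gives $\expect[N] = \Theta(1)\cdot(1-n^{-\epsilon})^{2k} = \Theta(1)$; more carefully one picks the anchor width so that $\expect[N] \to \infty$ (e.g. by allowing overlapping/sliding anchor windows, or by noting $\Theta(n/\log n)\cdot\Theta(\log n/n) = \Theta(1)$ but can be boosted to $\omega(1)$ by using a slightly finer anchor grid while keeping windows disjoint). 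The cleanest route is: choose the anchor cell width to be a small constant multiple of $\log n/n$ so that there are $cn/\log n$ anchors, each contributing probability $\asymp (\log n/n)(1-n^{-\epsilon})^{2k} \asymp \log n / n$ up to constants, giving $\expect[N] \asymp 1$ — and then to upgrade from "positive constant probability" to "$1-o(1)$" I would run the second moment method.

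For the second moment, I would bound $\expect[N^2] = \sum_{I_0, I_0'} \Pr(\text{both good})$. For pairs of anchor cells whose window-systems are disjoint (the overwhelming majority, since each system occupies only $O(k\log n/n)$ total length and there are $\Theta(n/\log n)$ cells), the two events are independent, contributing $(\expect[N])^2$. For the $O(k)\cdot\expect[N]\cdot(\log n/n)\cdot(n/\log n)\cdot(\ldots)$ — i.e. $O(k)$-many "nearby" cells per good cell — the joint probability is at most the single-cell probability, contributing $O(k)\expect[N] = o((\expect[N])^2)$ once $\expect[N]\to\infty$; to actually get $\expect[N]\to\infty$ rather than $\Theta(1)$ I would either take a sub-window refinement of the anchor grid (anchors spaced $\omega(\log n/n / \mathrm{poly})$... ) or, more simply, observe that the windows can be taken to have width exactly $\epsilon\log n/n$ while anchors are spaced only $\Theta(\log n/n \cdot \text{tiny const})$ apart so that there are $\omega(1)$ of them relative to the failure probability — the key point being $(1-n^{-\epsilon})^{2k}$ stays bounded away from $0$. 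Then Chebyshev gives $\Pr(N=0) \le \Var(N)/(\expect N)^2 \to 0$, proving the lemma. The main obstacle I anticipate is the bookkeeping to make all $2k+1$ target intervals genuinely disjoint for every anchor while still having enough anchors that $\expect[N]\to\infty$: the offsets $i(a-b)$ and $i(a-b)+b$ must be separated by more than $\epsilon$ in units of $\log n/n$, which is exactly why the hypothesis $\epsilon < 1/(2k)$ and the choice $k = \lceil b/(a-b)\rceil + 1$ appear — one has to verify that with these choices no two of the $2k$ nominal offsets collide modulo the circle and the windows stay clear of the anchor cell itself, and that the "right of $u_0$" constraint (all offsets positive and at most, say, $(k(a-b)+b)\log n/n = O(\log n/n)$, well below the circle length) is consistent. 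Handling the circular wrap-around and the conditioning on exactly which points are "to the right" is the fiddly part; everything else is a routine Poisson-approximation computation.
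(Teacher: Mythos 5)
Your plan is the right one in broad strokes --- discretize, use disjoint target windows, and run a first-moment/second-moment argument; the paper does essentially this too, except it indexes events by \emph{vertices} $u$ rather than by deterministic anchor cells (defining $A_u = 1$ iff the $2k$ windows relative to $u$ are all populated, computing $\Pr(A_u=1)$ and $\mathrm{Var}(\sum_u A_u)$ via Chebyshev). Your cell-indexed version is a legitimate alternative; the trade-off is that you need a little extra bookkeeping to convert ``anchor cell contains a vertex'' into ``there is a valid $u_0$'' (windows measured from $u_0$ shift as $u_0$ moves inside the cell, so your windows should be shrunk by the anchor width), whereas the paper's vertex-indexed version dodges that but pays with messier dependencies between $A_u$ and $A_v$.

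However, there is a genuine arithmetic error in your first-moment computation, and it is the source of all your subsequent difficulty. You write ``times the $\Theta(\log n/n)$ probability that $I_0$ itself is occupied,'' but $\Theta(\log n/n)$ is the probability that a \emph{single fixed vertex} lands in $I_0$, not the probability that $I_0$ contains at least one of the $n$ vertices. The latter is $1-(1-\alpha\log n/n)^{n} \sim 1-n^{-\alpha} = 1-o(1)$, exactly as you correctly compute for the windows $W_j^{\pm}$ a sentence earlier. With the correct factor, your per-cell probability is $(1-n^{-\alpha})(1-n^{-\epsilon})^{2k} = 1-o(1)$, and $\expect[N] = \Theta(n/\log n)\to\infty$ immediately --- there is no need for sub-window refinements, sliding anchors, or any of the ad-hoc devices you then reach for to try to push $\expect[N]$ past $\Theta(1)$. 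Once $\expect[N]\to\infty$, the second-moment step is exactly as you describe: cell-pairs with disjoint window systems contribute the independent term $(\expect N)^2$, only $O(k)$ nearby cells per good cell can be correlated, so $\mathrm{Var}(N)=o((\expect N)^2)$ and Chebyshev finishes. (For comparison, the paper's $\Pr(A_u=1)\geq c\,n^{-2k\epsilon}(\log n)^{2k}$ is itself a very loose lower bound --- the true value is $1-o(1)$ --- and the requirement $\epsilon<1/(2k)$ in the lemma statement exists precisely to make that loose bound still yield $\expect[\sum_u A_u]\to\infty$; your corrected calculation does not even need that constraint at this step, only for window disjointness.)

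Your remaining concern --- verifying that the $2k$ offsets $i(a-b)-O(\epsilon)$ and $i(a-b)+b-O(\epsilon)$ give pairwise disjoint windows of width $\epsilon\log n/n$ --- is a real bookkeeping point, and the paper glosses over it as well (its $(1-2k\epsilon\log n/n)^{n-2k}$ factor implicitly assumes disjointness). Since $\epsilon$ is a free parameter that can be taken arbitrarily small (subject to $\epsilon<1/(2k)$), one can always shrink it until all $2k$ windows are disjoint and fit inside $[0,a\log n/n]$; noting that explicitly would tidy up both your write-up and the original.
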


With the help of these two lemmas, we are in a position to  prove Theorem~\ref{thm:upper}. The proofs of  the two lemmas are given in Section~\ref{sec:VRG-detail} and contain the technical essence of this section.

\begin{proof}[Proof of Theorem~\ref{thm:upper}]
We have shown that the two events mentioned in Lemmas \ref{lem:lemma1} and \ref{lem:lemma2} happen with high probability. Therefore they simultaneously happen under the condition $a>1$ and $a-b >0.5$.
%\begin{itemize}
%\item  every vertex will lie in a cycle $(w_1\to w_2\to \dots\to w_\ell)$ with locations on the vertices appearing in the increasing order in  $[0,1]$;
%\item  there exists a node $u_0$ and exactly $k\equiv \lceil b/(a-b)\rceil+1$ nodes $\{u_1, u_2 ,\ldots, u_k\}$ to the right of $u_0$ such that $d(u_0,u_i) \in [\frac{(i(a-b)-2i\epsilon)\log n}{n},\frac{(i(a-b)-(2i-1)\epsilon) \log n}{n}]$ and exactly $k$ nodes $\{v_1, v_2 ,\ldots, v_k\}$ to the right of $u_0$ such that $d(u_0,v_i) \in [\frac{((i(a-b)+b-(2i-1)\epsilon)\log n}{n},\frac{(i(a-b)+b -(2i-2)\epsilon)\log n}{n}]$, for $i =1,2,\ldots,k$.
%\end{itemize}
Now we will show that these events together imply that the graph is connected. To see this, consider the vertices $u_0,\{u_1, u_2, \ldots, u_k\}$ and $\{v_1, v_2, \ldots, v_k\}$ that satisfy the conditions of Lemma \ref{lem:lemma2}. We can observe that each vertex $v_i$ has an edge with $u_i$ and $u_{i-1}$, $i =1, \ldots,k$. This is because (see Figure~\ref{fig:arr} for a depiction)
$$
d(u_i,v_i) %&\ge d(v_i,u_0) - d(u_i,u_0)\\
 \ge \frac{((i(a-b)+b-(2i-1)\epsilon)\log n}{n} - \frac{i(a-b)-(2i-1)\epsilon) \log n}{n}= \frac{b \log n}{n} \quad \text{and}
$$ 
\begin{align*}
 d(u_i,v_i) &\le %|\max(d(v_i,u_0)) - \min(d(u_i,u_0)) |\\
   \frac{i(a-b)+b -(2i-2)\epsilon\log n}{n} - \frac{(i(a-b)-2i\epsilon)\log n}{n} = \frac{(b + 2\epsilon)\log n}{n}.
\end{align*}

Similarly, 
\vspace{-10pt}
\begin{align*}
d(u_{i-1},v_i)& %\ge |\min(d(v_i,u_0)) - \max(d(u_{i-1},u_0))| \\
\ge \frac{((i(a-b)+b-(2i-1)\epsilon)\log n}{n} - \frac{(i-1)(a-b)-(2i-3)\epsilon) \log n}{n}\\
& = \frac{(a-2\epsilon)\log n}{n} \quad \text{and}
\end{align*}
\begin{align*}
d(u_{i-1},v_i) %& \le |\max(d(v_i,u_0)) - \min(d(u_{i-1},u_0))| \\
&\le \frac{i(a-b)+b -(2i-2)\epsilon\log n}{n} - \frac{((i-1)(a-b)-2(i-1)\epsilon)\log n}{n} = \frac{a\log n}{n}.
\end{align*}
 This implies that $u_0$ is connected to $u_i$ and $v_i$ for all $i=1,\dots,k$. Using Lemma \ref{lem:lemma1}, the first event implies  that %all vertices have neighbors on either side and therefore the 
 the connected components are cycles spanning the entire line $[0,1]$.
 Now consider two such disconnected components, one of which consists of the nodes $u_0,\{u_1, u_2, \ldots, u_k\}$ and $\{v_1, v_2, \ldots, v_k\}$. There must exist a node $t$ in the other component (cycle) such that $t$ is on the right of $u_0$ and $d(u_0,t) \equiv \frac{x \log n}{n}  \le \frac{a \log n}{n}$. If $x\leq b$, $\exists i \mid  i\leq k \text{ and } i(a-b)+b - a-(2i-2)\epsilon \le x\leq i(a-b)-(2i-1)\epsilon$ (see Figure~\ref{fig:arr1d}). 
When $x\leq b$, we can calculate the distance between $t$ and $v_i$ as 
\begin{align*}
d(t,v_i)%& \ge d(v_i,u_0)) - d(t,u_0) \\
&\ge \frac{(i(a-b)+b-(2i-1)\epsilon)\log n}{n} - \frac{(i(a-b)-(2i-1)\epsilon) \log n}{n} = \frac{b \log n}{n}
\end{align*}
and 
\begin{align*}
d(t,v_i)%& \le d(v_i,u_0) - d(t,u_0) \\
&\le \frac{(i(a-b)+b -(2i-2)\epsilon)\log n}{n} - \frac{(i(a-b)+b-a-(2i-2)\epsilon)\log n}{n} = \frac{a\log n}{n}.
\end{align*}
Therefore  $t$ is connected to $v_i$ when  $x\leq b.$ If $x >b$ then $t$ is already connected to  $u_0$. Therefore the two components (cycles) in question are connected.% when $x\ge b $ which implies that those cycles are connected. 
This is true for all cycles and hence there is only a single component in the entire graph. Indeed, if we consider the cycles to be disjoint super-nodes, then we have shown  that there must be a star configuration. % as drawn in Figure~\ref{fig:star}.
\end{proof}

\section{Connectivity of High Dimensional Random Annulus Graphs: Proof of Theorem \ref{thm:highdem1}}
\label{sec:hrag}
In this section we show a proof sketch of Theorem \ref{thm:highdem1} to establish the sufficient condition of connectivity of random annulus graphs. The details of the proof and the necessary conditions are provided in Section~\ref{sec:hrag-detail}.

%first prove an impossibility result on the connectivity of random annulus graphs in $t$ dimensions by showing a value of a gap between $r_s$ and $r_d$, below which there exists at least one isolated node with a high probability. Next, we prove the connectivity threshold, which when satisfied, RAG is fully connected
%Let us define the regions $B(u,r_s)$ and $B(u,r_s,r_d)$ for the node $u$ in the following way:
%\begin{align*}
%B(u,r_s)&=\{x \in S^t \mid \|u-x\|_2 \le r_s \} \\
%B(u,r_s,r_d)&=\{x \in S^t \mid  r_d \le \|u-x\|_2 \le r_s \}.
%\end{align*}
%where $d(p,q)$ represents the Euclidean distance between two points $p$ and $q$.

%\subsection{Impossibility Result}
%The following theorem proves the impossibility result for the connectivity of a random annulus graph by proving a tight threshold for the presence of an isolated node with high probability.
%\begin{theorem}\label{th:lbcpy}[Same as Theorem\ref{th:lb}]
%For a random annulus graph $RAG(n,r_s,r_d)$ where $r_s=a \Big(\frac{ \log n}{n}\Big)^{\frac{1}{t}}$ and $r_d=b\Big(\frac{\log n}{n}\Big)^{\frac{1}{t}}$, there exists isolated nodes with high probability if and only if $a^t -b^t < \frac{|S_t(1)|}{c_t}$ where $\frac{c_t}{|S_t(1)|}=\frac{\Gamma(\frac{t+3}{2})}{\sqrt{\pi}(t+1)\Gamma(\frac{t+2}{2})}$.
%\end{theorem}
 
% \subsection{Connectivity Bound}
Note, here $r_1\equiv   b\left(\frac{\log n}{n}\right)^{1/t}$ and $r_2 \equiv a\left(\frac{\log n}{n}\right)^{1/t}$.
We show the upper bound for connectivity of a Random Annulus Graphs in $t$ dimension as shown in Theorem \ref{thm:highdem1}. 
For this we first define a \emph{pole} as a vertex which is connected to all vertices within a distance of $r_2$ from itself. 
%\begin{theorem}\label{thm:highdem1cpy}[Same as Theorem \ref{thm:highdem1}]
%For $t$ dimensional random annulus graph $RAG(n,r_s,r_d)$ where $r_s=a\Big(\frac{\log n}{n}\Big)^{t}$ and $r_d=b\Big(\frac{\log n}{n}\Big)^{t}$ and $t$ is a constant, the graph is connected with high probability if 
%\begin{align*}
%a^t-b^t \ge \frac{8|S_t(1)|(t+1)}{c_t\Big(1 -  \frac{1}{{2^{1+1/t} - 1}}\Big)} \text{  and  }  a>2^{{1}+\frac{1}{t}}b
%\end{align*} 
%\end{theorem}
In order to prove  Theorem \ref{thm:highdem1}, we first show the existence of a pole with high probability in Lemma \ref{lem:pole}.
\begin{lemma}\label{lem:pole}
In a ${\rm RAG}_t\left(n, \left[b\left(\frac{\log n}{n}\right)^{1/t},a\left(\frac{\log n}{n}\right)^{1/t}\right]\right), 0 <b <a$, with  probability $1-o(1)$ there exists a pole.
\end{lemma}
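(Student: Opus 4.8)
The plan is to prove Lemma~\ref{lem:pole} by a first-moment/second-moment argument on the number of poles, where a \emph{pole} is a vertex adjacent to \emph{every} vertex lying in the Euclidean ball of radius $r_2$ around it. Fix a vertex $u$ with location $X_u$. Conditioned on $X_u$, the event ``$u$ is a pole'' is exactly the event that no other vertex falls in the spherical cap $B(X_u, r_1) \cap S^t$ (the inner ``hole'' of the annulus); indeed every vertex within distance $r_2$ is automatically connected to $u$ unless it is strictly closer than $r_1$. So if $\mu_t(r_1)$ denotes the normalized measure (uniform on $S^t$) of a cap of Euclidean radius $r_1$, then $\Pr(u \text{ is a pole}) = (1 - \mu_t(r_1))^{n-1}$, which is independent of $X_u$ by symmetry of the sphere.

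The first step is therefore to estimate $\mu_t(r_1)$ for $r_1 = b(\tfrac{\log n}{n})^{1/t}$, which is vanishingly small, so the cap is essentially a flat $t$-dimensional Euclidean ball of radius $\approx r_1$; its measure is $\mu_t(r_1) = (1+o(1))\,c_t\, r_1^{\,t}$ for the appropriate constant $c_t$ (the ratio of the volume of a unit $t$-ball to the surface area of $S^t$). Plugging in gives $\mu_t(r_1) = (1+o(1))\, c_t b^t \tfrac{\log n}{n}$, hence $\expect[\#\text{poles}] = n(1-\mu_t(r_1))^{n-1} = (1+o(1))\, n^{1 - c_t b^t}$ up to lower-order factors. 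For this to tend to infinity we need $c_t b^t < 1$; since $\psi(t)$ is (up to the normalization chosen for $S^t$) exactly the threshold constant appearing in Theorem~\ref{th:lb}, one checks $c_t = 1/\psi(t)$ in the relevant normalization, so the expected number of poles is $\gg n^{1-b^t/\psi(t)}$. The key point is that $b^t < \psi(t)$ is guaranteed under the hypotheses of Theorem~\ref{thm:highdem1} (since there $a > 2^{1+1/t} b$ forces $b^t$ to be a small fraction of $\psi(t)$), so within the regime where Lemma~\ref{lem:pole} is actually invoked the expectation blows up polynomially; but to make the lemma self-contained for all $0 < b < a$ one should instead note that even if $b^t \ge \psi(t)$ the statement can be rescued by enlarging the constant or, more simply, by observing that the lemma as used only needs $b$ in the good range. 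I would state and use it in that range.

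To upgrade the first-moment bound to ``a pole exists with probability $1-o(1)$'' I would run the second moment method on $Z = \#\text{poles}$. The main work is bounding $\expect[Z^2] = \sum_{u,v}\Pr(u,v \text{ both poles})$. For a pair $u,v$ at distance $\rho$, being both poles means no vertex lands in $B(X_u,r_1)$ and none lands in $B(X_v,r_1)$; the probability is $(1 - \mu_t(r_1,r_1;\rho))^{n-2}$ where $\mu_t(r_1,r_1;\rho)$ is the measure of the union of the two caps, which equals $2\mu_t(r_1) - (\text{measure of intersection})$. When $\rho \ge 2r_1$ the caps are disjoint and the probability factorizes as $(1-\mu_t(r_1))^{2(n-2)}(1+o(1))$, contributing $(1+o(1))(\expect Z)^2$. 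The remaining ``close pair'' contribution, $\rho < 2r_1$, involves a pair of vertices within distance $2r_1 = O((\tfrac{\log n}{n})^{1/t})$: the probability that a given other vertex is that close is $O(\tfrac{\log n}{n})$, so the expected number of such ordered close pairs is $O(n^2 \cdot \tfrac{\log n}{n}) = O(n\log n)$, each contributing at most $\Pr(u\text{ is a pole}) = n^{-b^t/\psi(t)}$ roughly; this term is $O(n^{2-b^t/\psi(t)}\log n / n) = o((\expect Z)^2)$ provided $\expect Z \to \infty$ fast enough, i.e. provided $b^t/\psi(t) < 1$, which again holds in the invoked regime. Hence $\expect[Z^2] = (1+o(1))(\expect Z)^2$ and Chebyshev gives $\Pr(Z = 0) = o(1)$.

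The main obstacle I anticipate is not the probabilistic skeleton but the geometric estimates on $S^t$: getting clean, explicitly-constant-tracked bounds on the cap measure $\mu_t(r_1)$ and on the two-cap-union measure $\mu_t(r_1,r_1;\rho)$ uniformly in $\rho$, for small radii, with error terms that are genuinely $o(1)$ multiplicatively and uniform in $t$ for the ranges of $t$ one cares about. In particular one must be careful that the approximation ``small spherical cap $\approx$ flat Euclidean ball'' is quantitatively justified (the correction is a multiplicative $1 + O(r_1^2)$ factor, which is fine since $r_1 \to 0$), and that the constant $c_t$ really matches $1/\psi(t)$ under the Euclidean-distance normalization used in the definition of ${\rm RAG}_t$; this is exactly the kind of computation that Theorem~\ref{th:lb} already pins down, so I would cite/reuse that calculation rather than redo it. Once the geometry is in hand, the first- and second-moment computations are routine.
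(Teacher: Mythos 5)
Your proof rests on the premise that ``$u$ is a pole'' is equivalent to ``no other vertex falls inside the inner ball $B(X_u,r_1)$.'' That equivalence holds only if ``connected'' in the definition of a pole means \emph{adjacent}. In the paper, ``a vertex connected to all vertices within distance $r_2$'' is meant in the sense of graph connectivity (same connected component), not adjacency. You can see this from the paper's own proof of Lemma~\ref{lem:pole}: it constructs a chain of $2k$ auxiliary vertices $u_1,\dots,u_k,v_1,\dots,v_k$ (via Lemma~\ref{lem:lemma2pole}) and argues that every vertex in $B_t(u_0,r_2)$ lands in the annulus of some $u_i$ or $v_i$, hence reaches $u_0$ through a short \emph{path}, not a single edge. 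If adjacency were required, the paper's argument would also be insufficient.

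This distinction is fatal for your approach, not merely cosmetic. Under your (stronger) adjacency reading, $\Pr(u \text{ is a pole}) \approx n^{-b^t/\psi(t)}$, so $\expect[\#\text{poles}] \approx n^{1-b^t/\psi(t)}$, and this tends to $0$, not $\infty$, as soon as $b^t>\psi(t)$. Your proposed fix --- that the hypotheses of Theorem~\ref{thm:highdem1} force $b^t < \psi(t)$ --- does not hold: the conditions $a^t-b^t \ge \frac{8(t+1)\psi(t)}{1-1/(2^{1+1/t}-1)}$ and $a>2^{1+1/t}b$ constrain the \emph{gap} and the \emph{ratio} of $a$ and $b$, but place no upper bound on $b$ itself. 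For instance in $t=1$ take $b=100$ and $a=500$: both hypotheses hold, yet $b^t = 100 \gg \psi(1)=\pi$, and in that regime every vertex has $\Theta(\log n)$ other vertices in its inner ball, so adjacency-poles do not exist at all with high probability. The paper's pole nevertheless exists there, precisely because it only requires connectivity through a ladder of intermediate vertices. You would need to redo the argument so that the first- and second-moment computations count the more permissive event (a vertex that \emph{reaches} everything within $r_2$), which is essentially what Lemmas~\ref{lem:pole_helper} and~\ref{lem:lemma2pole} in the paper accomplish.
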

 Next, Lemma \ref{lem:high_stuff1} shows that for every vertex $u$ and every hyperplane $L$ passing through $u$ and not too close to the tangent hyperplane at $u$, there will be a neighbor of $u$ on either side of the plane. Therefore, there should be a neighbor towards the direction of the pole. In order to formalize this, let us define a few regions associated with a node $u$ and a hyperplane $L:w^{T}x=\beta$ passing through $u$.
\begin{align*}
\mathcal{R}_{L}^1 &\equiv \{x \in S^t \mid r_1 \le  d(u,x) \le r_2, w^{T}x \le \beta \} \\
\mathcal{R}_{L}^2 &\equiv \{x \in S^t \mid r_1 \le  d(u,x) \le r_2, w^{T}x \ge \beta \} \\
\mathcal{A}_{L} & \equiv \{x \mid x \in \mathcal{S}^t, \quad w^{T}x=\beta \}.
\end{align*}
Informally, $\mathcal{R}_{L}^1$ and $\mathcal{R}_{L}^2$ represent the partition of the annulus on either side of the hyperplane $L$ and $\mathcal{A}_L$ represents the region on the sphere lying on $L$. %We also need to define the point $v_L$ which represents the farthest point from the node $u$ lying on the surface of the sphere and $L$.
%\begin{align*}
%v_{L}= \sup_{x \in S^t \cap L} d(u,x) 
%\end{align*} 
\begin{lemma}\label{lem:high_stuff1}
If we sample $n$ nodes from $S^t$ according to ${\rm RAG}_t\left(n,\left[b\left(\frac{\log n}{n}\right)^{1/t},a\left(\frac{\log n}{n}\right)^{1/t}\right]\right)$, then for every node $u$ and every hyperplane $L$ passing through $u$ such that $\mathcal{A}_L$ is not all within  distance $r_2$ of $u$, node $u$ has a neighbor on both sides of the hyperplane $L$ with probability at least $1-\frac{1}{n}$ provided 
$
(a/2)^t-b^t \ge   \frac{8\sqrt{\pi}(t+1)^2\Gamma(\frac{t+2}{2})}{\Gamma(\frac{t+3}{2})}
$ and $a>2b$.
\end{lemma}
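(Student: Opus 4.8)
\textbf{Proof plan for Lemma~\ref{lem:high_stuff1}.}
The plan is to bound, for a fixed vertex $u$ and a fixed ``bad'' hyperplane $L$ through $u$, the probability that $\mathcal{R}_L^1$ (say) contains no other vertex, and then take a union bound over all vertices $u$ and over a sufficiently fine net of hyperplanes, using VC-dimension to control the uncountable family of hyperplanes. First I would fix $u$ and condition on its location. For a hyperplane $L$ whose trace $\mathcal{A}_L$ on the sphere is \emph{not} entirely within distance $r_2$ of $u$, I claim the smaller side $\mathcal{R}_L^1$ has normalized measure $\mu(\mathcal{R}_L^1) \ge c\cdot(a^t - b^t)\frac{\log n}{n}$ for an explicit constant $c$ depending on $t$: the full annulus $\{x : r_1 \le d(u,x) \le r_2\}$ has measure $\Theta\big((r_2^t - r_1^t)\big) = \Theta\big((a^t-b^t)\frac{\log n}{n}\big)$ (this is where $\psi(t)$ and the Gamma-function constant enter, via the surface area of a spherical cap of small radius $r$ being $\sim$ const $\cdot r^t$), and the worst case is when $L$ cuts the annulus as unevenly as possible subject to the constraint that $\mathcal{A}_L$ escapes the ball of radius $r_2$ around $u$. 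The factor $\big(1 - \frac{1}{2^{1+1/t}-1}\big)$ and the condition $a > 2^{1+1/t} b$ should be exactly what is needed to guarantee that even the thin side retains a constant fraction of the annulus mass; I would verify this by a direct geometric estimate comparing the chord cut off by $L$ to $r_1$ and $r_2$.

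Given the measure lower bound, the probability that a fixed $\mathcal{R}_L^1$ is empty of the other $n-1$ points is $(1 - \mu(\mathcal{R}_L^1))^{n-1} \le \exp\!\big(-c(a^t-b^t)\log n \cdot (1-o(1))\big) = n^{-c(a^t-b^t)(1-o(1))}$. Plugging in the hypothesis $a^t - b^t \ge \frac{8\sqrt{\pi}(t+1)^2\Gamma(\frac{t+2}{2})}{(1 - \frac{1}{2^{1+1/t}-1})\Gamma(\frac{t+3}{2})}$ makes this exponent at least, say, $3$ (the constant $8$ and the extra factor $(t+1)$ relative to $\psi(t)$ in Theorem~\ref{th:lb} is the slack that pays for the two union bounds). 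A naive union bound over the $n$ choices of $u$ costs a factor $n$, which is fine. The real issue is the union bound over hyperplanes: there are infinitely many, so I would invoke the fact that the class of sets $\{\mathcal{R}_L^1 : L \text{ a hyperplane through } u\}$ — intersections of a fixed annulus with halfspaces — has VC dimension $O(t)$, hence by the $\varepsilon$-net / Sauer–Shelah bound there are only $\mathrm{poly}(n)$ ``combinatorially distinct'' such regions once we have $n$ points, and moreover (the uniform convergence / relative approximation version) no region of measure $\ge c(a^t-b^t)\frac{\log n}{n}$ is missed by all $n$ sampled points, simultaneously for all $L$, with probability $1 - n^{-\Omega(1)}$. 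This is the step I expect to be the main obstacle: one must set up the VC argument so that it applies \emph{per vertex} $u$ (the annulus is centered at the random point $X_u$) and so that the ``not too close to the tangent plane'' restriction is encoded cleanly — I would handle the latter by noting that hyperplanes violating it give $\mathcal{R}_L^1 = \emptyset$ trivially and are simply excluded from the class, and handle the former by conditioning on $X_u$ and applying the uniform bound to the remaining $n-1$ i.i.d. points, then union-bounding over $u$.

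Assembling the pieces: condition on each $X_u$; for that $u$, the family of relevant $\mathcal{R}_L^1$ and $\mathcal{R}_L^2$ has bounded VC dimension and every member has measure $\ge c(a^t-b^t)\frac{\log n}{n}$; by the relative $(p,\varepsilon)$-approximation bound, with probability $\ge 1 - \frac{1}{n^2}$ every such region contains at least one of the other $n-1$ points, i.e.\ $u$ has a neighbor on both sides of every admissible $L$; a union bound over the $n$ vertices gives the stated $1 - \frac1n$. The only quantitative care needed is to track constants so that $c(a^t-b^t)$, under the lemma's hypothesis, comfortably exceeds the exponent demanded by the VC/union-bound overhead — and this is precisely why the threshold in the lemma is a fixed constant multiple ($8(t+1)$-ish) of the isolated-vertex threshold $\psi(t)$ rather than $\psi(t)$ itself.
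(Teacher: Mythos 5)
Your plan matches the paper's proof in all essential respects: the geometric lower bound on the fraction of the annulus on the thinner side of $L$ (Lemma~\ref{conjec:ratio} and Corollary~\ref{cor:ratio}, giving the constant $\delta = 1-\frac{1}{2^{1+1/t}-1}$ under $a > 2^{1+1/t}b$), the VC-dimension bound of $t+1$ on the range space of annulus--halfspace intersections (Lemma~\ref{lem:VC_dimension}), the $\varepsilon$-net hitting theorem (Theorem~\ref{thm:VC_dim}) playing the role of your ``relative $(p,\varepsilon)$-approximation,'' and the final union bound over the $n$ range spaces $(X,\mathcal{R}_u)$. The only small slip is the parenthetical that hyperplanes with $\mathcal{A}_L\subset B_t(u,r_2)$ give $\mathcal{R}_L^1=\emptyset$ ``trivially'' --- those regions need not be empty and are simply excluded by the lemma's hypothesis, with such $u$ handled separately by Lemma~\ref{lem:cut_twice}.
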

We found the proof of this lemma to be challenging. Since, we do not know the location of the pole, we need to show that every point has a neighbor on both sides of the plane $L$ no matter what the orientation of the plane. Since the number of possible orientations is uncountably infinite, we cannot use a union-bound type argument. To show this we have to rely on the VC Dimension of the family of sets $\{x \in S^t \mid r_1 \le  \|u-x\|_2 \le r_2, w^{T}x \ge \beta , \mathcal{A}_{L:w^{T}x=\beta}  \text{ not all within } r_2 \text{ of } u\}$ for all hyperplanes $L$ (which can be shown to be less than $t+1$). We rely on the celebrated result of \cite{haussler1987} (we derived a continuous version of it), see Theorem~\ref{thm:VC_dim}, to deduce our conclusion.

For a node $u$ { and its corresponding location $X_u=(u_1,u_2,\dots,u_{t+1})$}, define the particular hyperplane $L^{\star}_u : x_1=u_1$ which is normal to the line joining $u_0 \equiv (1,0,\dots,0)$ and the origin and passes through $u$. We now need one more lemma that will help us prove Theorem \ref{thm:highdem1}.
\begin{lemma}\label{lem:cut_twice}
For a particular node $u$ and corresponding hyperplane $L^{\star}_u$, if every point in $\mathcal{A}_{L^{\star}_u}$ is within distance $r_2$ from $u$,  then $u$ must be within $r_2$ of $u_0$.
%For a particular node $u$ and corresponding hyperplane $L^{\star}_u$, if $\mathcal{A}_{L^{\star}_u}  \subseteq B(u,r_2)$ then $u$ must be within $r_2$ of $u_0$.
\end{lemma}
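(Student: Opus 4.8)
The plan is to work in the ambient space $\reals^{t+1}$ with coordinates $x = (x_1, \dots, x_{t+1})$, place the pole-direction point at $u_0 = (1,0,\dots,0)$, and use the explicit geometry of the sphere $S^t$ together with the hyperplane $L^\star_u : x_1 = u_1$. The set $\mathcal{A}_{L^\star_u} = \{x \in S^t : x_1 = u_1\}$ is a $(t-1)$-dimensional sub-sphere: it consists of all points of the form $(u_1, y)$ with $y \in \reals^t$ and $\|y\|_2^2 = 1 - u_1^2$. I would first record that for any point $x = (x_1, y) \in S^t$, the squared distance to $u_0$ is $\|x - u_0\|_2^2 = (x_1-1)^2 + \|y\|_2^2 = 2 - 2x_1$, so distance to $u_0$ is a strictly decreasing function of the first coordinate alone; in particular $\|u - u_0\|_2^2 = 2 - 2u_1$.

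Next I would compute, for a point $x = (u_1, y) \in \mathcal{A}_{L^\star_u}$, its squared distance to $u$. Writing $u = (u_1, z)$ with $\|z\|_2^2 = 1 - u_1^2$, we get $\|x - u\|_2^2 = \|y - z\|_2^2 = \|y\|_2^2 + \|z\|_2^2 - 2 z^T y = 2(1-u_1^2) - 2 z^T y$. This is maximized over the sub-sphere $\{y : \|y\|_2^2 = 1-u_1^2\}$ by taking $y = -z$, which is a legitimate point of $\mathcal{A}_{L^\star_u}$; the maximum value is $4(1-u_1^2)$. So the farthest point of $\mathcal{A}_{L^\star_u}$ from $u$ is at distance $2\sqrt{1-u_1^2}$. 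The hypothesis ``every point in $\mathcal{A}_{L^\star_u}$ is within distance $r_2$ of $u$'' is therefore exactly the inequality $2\sqrt{1-u_1^2} \le r_2$, i.e. $1 - u_1^2 \le r_2^2/4$.

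Finally I would translate this into a statement about $\|u - u_0\|_2$. From $1 - u_1^2 = (1-u_1)(1+u_1) \le r_2^2/4$ and the fact that $1 + u_1 \ge 1$ (since $u_1 \in [-1,1]$; and if $u_1 < 0$ the conclusion is even easier to force, which I should double-check in the regime where $r_2 \le 2$ and the bound $1-u_1^2 \le r_2^2/4 \le 1$ forces $u_1^2 \ge 0$ trivially — actually the clean route is: $1-u_1^2 \le r_2^2/4$ together with $u_1 \le 1$ gives $1 - u_1 \le \frac{1-u_1^2}{1+u_1}$ only when $1 + u_1 \ge 1$, so I will split on the sign of $u_1$). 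In the main case $u_1 \ge 0$ we get $1 - u_1 \le 1 - u_1^2 \le r_2^2/4$, hence $\|u - u_0\|_2^2 = 2(1-u_1) \le r_2^2/2 \le r_2^2$ (using $r_2 \le 2$ so $r_2^2/2 \le r_2^2$ trivially, in fact we get the stronger $\le r_2^2/2$), so $\|u - u_0\|_2 \le r_2$. In the case $u_1 < 0$, note $1 - u_1^2 \le r_2^2/4$ with $r_2 = a(\log n/n)^{1/t} \to 0$ forces $u_1^2 \ge 1 - r_2^2/4$, which is incompatible with $u_1 < 0$ being bounded away from $-1$; more carefully $u_1 < 0$ and $u_1^2$ close to $1$ means $u_1$ close to $-1$, i.e. $u$ near the antipode of $u_0$, and then $\mathcal{A}_{L^\star_u}$ is a tiny sub-sphere near $-u_0$, contradicting that it is within $r_2$ of $u$ only if... — this edge case needs the observation that $\mathcal{A}_{L^\star_u}$ always contains points at distance $2\sqrt{1-u_1^2}$ from $u$, which I computed is small, but those points are near $-u_0$ while $u$ is also near $-u_0$, so there is genuinely no contradiction; hence I must instead argue this case cannot arise because then $u$ is not ``towards the pole'' — I expect the lemma is applied only with the pole in the hemisphere near $u_0$, so I will simply note $u_1 \ge 0$ may be assumed, or handle it by the stronger bound. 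The main obstacle is exactly this: being careful that the factorization $1 - u_1^2 = (1-u_1)(1+u_1)$ is only useful when $u_1$ is not near $-1$, and confirming from the context (existence of a pole, Lemma \ref{lem:pole}, and the way $L^\star_u$ is set up relative to $u_0$) that the relevant $u$ satisfy $u_1 \ge 0$ — after which the computation above is immediate.
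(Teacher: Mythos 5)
Your approach is essentially the same as the paper's: identify the farthest point of $\mathcal{A}_{L^\star_u}$ from $u$ as the ``antipode within the sub-sphere'' (your $y=-z$; the paper's $v=(u_1,-u_2,\dots,-u_{t+1})$), translate the hypothesis into $1-u_1^2\le r_2^2/4$, and translate the goal into $2-2u_1\le r_2^2$. Your algebra for closing the gap is actually cleaner than the paper's: you factor $1-u_1^2=(1-u_1)(1+u_1)$ and use $1+u_1\ge 1$ to get $1-u_1\le 1-u_1^2\le r_2^2/4$, which gives the stronger bound $\|u-u_0\|_2^2\le r_2^2/2$ in one line; the paper instead passes from $u_1^2\ge\frac{4-r_2^2}{4}$ to $u_1\ge\frac{2-r_2^2}{2}$ and verifies $\sqrt{\tfrac{4-r_2^2}{4}}\ge\tfrac{2-r_2^2}{2}$ by squaring.

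The obstacle you flag --- the sign of $u_1$ --- is not a defect in your proposal but a genuine gap in the lemma and in the paper's own proof. The paper silently upgrades $u_1^2\ge\frac{4-r_2^2}{4}$ to $u_1\ge\frac{2-r_2^2}{2}$, which requires $u_1\ge 0$ and is never justified. And the statement is in fact false for $u$ near $-u_0$: taking $u=-u_0=(-1,0,\dots,0)$ makes $\mathcal{A}_{L^\star_u}$ the singleton $\{u\}$, so the hypothesis holds vacuously while $\|u-u_0\|_2=2>r_2$. You are right that the lemma needs either an explicit hypothesis $u_1\ge 0$ (equivalently, that $u$ lies in the closed hemisphere facing the pole) or a separate argument in the application that handles the $O(r_2)$-cap around the antipode of the pole --- the paper's proof of Theorem~\ref{thm:highdem1} uses this lemma without addressing that region. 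So your proof is correct in the regime $u_1\ge 0$, matches the paper's intent, improves the algebra, and correctly identifies an unstated assumption that the paper itself relies on.
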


For now, we assume that the Lemmas \ref{lem:pole}, \ref{lem:high_stuff1} and \ref{lem:cut_twice} are true and show why these lemmas together imply the proof of Theorem \ref{thm:highdem1}.
\begin{proof}[Proof of Theorem \ref{thm:highdem1}] 
We consider an alternate (rotated but not shifted) coordinate system by multiplying every vector by an orthonormal matrix such that the new position of the pole is the $t+1$-dimensional vector $(1,0,\dots,0)$ where only the first co-ordinate is non-zero. Let the $t+1$ dimensional vector describing any node $u$ in this new coordinate system be $\hat{u}=(\hat{u}_1,\hat{u}_2,\dots,\hat{u}_{t+1})$. Now consider the hyperplane $L: x_1=\hat{u}_1$ and if $u$ is not connected to the pole already, then by Lemma \ref{lem:high_stuff1} and Lemma \ref{lem:cut_twice}, the node $u$ has a neighbor $u_2$ which has a higher first coordinate ($\hat{u}_2 >\hat{u}_1$). The same analysis applies for $u_2$ and hence we have a path where the first coordinate of every node is higher than the previous node. Since the number of nodes is finite, this path cannot go on indefinitely and at some point, one of the nodes is going to be within $r_2$ of the pole and will be connected to the pole. Therefore every node is going to be connected to the pole and hence our theorem is proved. 
\end{proof}
\section{The Geometric Block Model}
\label{sec:gbm}
In this section, we prove the necessary condition for exact cluster recovery of GBM and give an efficient algorithm that matches that within a constant factor. The details are provided in Section~\ref{sec:gbm-detail}.
\subsection{Immediate consequence of VRG connectivity}
The following lower bound for GBM can be obtained as a  consequence of Theorem~\ref{thm:rag}. 

\begin{theorem}[Impossibility in GBM]\label{gbm:lower}
Any algorithm to recover the partition in ${\rm GBM}(\frac{a \log n}{n},\frac{b \log n}{n})$ will give incorrect output with probability $1-o(1)$ if $a-b < 0.5$ or $a < 1$.
\end{theorem}
\begin{proof}
%\textcolor{red}{This part should be corrected, may be by adding the multi component proof in the appendix} 
%Recall that an edge between $2$ nodes $u$ and $v$ in the $2$-cluster GBM was formed if the associated random variables $X_u$ and $X_v$ uniformly chosen from $[0,1]$ are close in Lee distance. 
Consider the scenario that not only the geometric block model graph  ${\rm GBM}(\frac{a \log n}{n},\frac{b \log n}{n})$ was provided to us, but also the random values $X_u \in [0,1]$ for all vertex $u$ in the graph were provided. We will show that we will still not be able to recover the correct partition of the vertex set $V$ with probability at least $0.5$ (with respect to choices of $X_u,~u, v\in V$ and any randomness in the algorithm).

In this situation, the edge $(u,v)$ where $d_L(X_u,X_v) \le \frac{b \log n}{n}$ does not give any new information than $X_u,X_v$. However the edges $(u,v)$ where $\frac{b \log n}{n} \le d_L(X_u,X_v) \le \frac{a \log n}{n}$ are informative, as existence of such an edge will imply that $u$ and $v$ are in the same part. These edges constitute a vertex-random graph ${\rm VRG}(n, [\frac{b \log n}{n},\frac{a \log n}{n}])$. But if there are more than two components in this vertex-random graph, then it is impossible to separate out the vertices into the correct two parts, as the connected components can be assigned to any of the two parts and the VRG along with the location values ($X_u, u \in V$) will still be consistent. 

What remains to be seen that ${\rm VRG}(n, [\frac{b \log n}{n},\frac{a \log n}{n}])$ will have $\omega(1)$ components with high probability  if $a-b < 0.5$ or $a < 1$. This is certainly true when $a-b < 0.5$ as we have seen in Theorem~\ref{thm:lower_bound}, there can indeed be $\omega(1)$ isolated nodes with high probability. On the other hand, when $a<1$, just by using an analogous argument it is possible to show that there are $\omega(1)$ vertices that do not have any neighbors on the left direction (counterclockwise). We delegate the proof of this claim as Lemma \ref{lem:disc} in the appendix.  %\textcolor{red}{Give  this proof in  appendix?} 
If there are $k$ such vertices, there must be at least $k-1$ disjoint candidates. This completes the proof.
\end{proof}

%\textcolor{red}{Fill up the corollary below} 
Indeed, when the locations $X_u$ associated with every vertex $u$ is provided, it is also possible to recover the partition exactly when $a-b > 0.5$ and $a > 1$, matching the above lower bound exactly (see Theorem~\ref{thm:gbmplus}).

Similar impossibility result extends to higher dimensional GBM from the necessary condition on connectivity of RAG.

\subsection{A recovery algorithm for GBM}
We now turn our attention to an efficient recovery algorithm for GBM. Intriguingly, we show a simple triangle counting algorithm works well for GBM and recovers the clusters in the sparsity regime. Triangle counting algorithms are popular heuristics applied to social networks for clustering \cite{easley2012networks}, however they fail in SBM. Hence, this serves as another validation why GBM are well-suited to model community structures in social networks.

The algorithm is as follows. Given a graph $G=(V:|V|=n,E)$ with two disjoint parts, $V_1, V_2 \subseteq V$ generated according to ${\rm GBM}(r_s, r_d)$, the algorithm (see Algorithm ~\ref{alg:alg1}) goes over  all edges  $(u,v)\in E$. It counts the number of triangles  containing the edge  $(u,v)$ and leave the edge intact if and only if the number of triangles are {\em not} within two specified thresholds $E_S$ and $E_D$. It then returns the connected components of the redacted graph. Having two thresholds is somewhat non-intuitive. We show two vertices in different components can only have number of common neighbors within $E_S$ and $E_D$, and thus all those edges get removed during the first iteration. In this process, some intra-cluster edges also get removed, but using the connectivity property of VRG, we are able to show the clusters still remain connected. 

The same algorithm extends to higher dimensions as well, showing irrespective of underlying dimensionality, there exists a good algorithm. The proof here relies on the connectivity of random annulus graphs.

\section{Connectivity of Vertex-Random Graphs: Details}
\label{sec:VRG-detail}
In this section, we prove the necessary and sufficient condition for connectivity of VRG in full details.
\subsection{Necessary condition for connectivity of VRG}

\begin{theorem}[VRG connectivity lower bound]\label{thm:lower_bound}
The ${\rm VRG}(n,[\frac{b\log n}{n},\frac{a\log n}{n}])$  is not connected
with  probability $1-o(1)$ if $a <	1$ or $a-b < 0.5$. 
\end{theorem}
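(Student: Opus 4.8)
\textbf{Proof plan for Theorem~\ref{thm:lower_bound}.}

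The plan is to establish the two regimes separately, in each case by exhibiting $\omega(1)$ ``defect'' vertices whose local structure forces disconnection, and then applying a second-moment argument to show such defects exist with probability $1-o(1)$.

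\emph{Case $a - b < 0.5$.} Here I would produce $\omega(1)$ isolated vertices. For a fixed vertex $u$, the event that $u$ is isolated is that no other vertex lands in the annulus $\{y : \frac{b\log n}{n} \le d(u,y) \le \frac{a\log n}{n}\}$, which has total measure $\frac{2(a-b)\log n}{n}$ (two intervals, one on each side). Thus $\Pr(u \text{ isolated}) = \bigl(1 - \frac{2(a-b)\log n}{n}\bigr)^{n-1} = n^{-2(a-b)(1+o(1))}$, so the expected number of isolated vertices is $n^{1 - 2(a-b) - o(1)} = \omega(1)$ precisely when $a - b < 0.5$. First I would compute this first moment; then I would bound the second moment by splitting the pair sum according to whether $d(u,v)$ is small (the two annuli overlap, which happens for only an $O(\frac{\log n}{n})$ fraction of pairs) or large (near-independence, giving essentially the square of the first moment). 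Chebyshev then gives that the count is positive whp. A graph with any isolated vertex is disconnected (since $n \ge 2$), which finishes this case. This is essentially the ``isolated vertex'' direction of Theorem~\ref{th:lb} specialized to $t=1$, and the paper indeed references this argument.

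\emph{Case $a < 1$.} Here isolated vertices need not exist, so instead I would follow the hint in the proof of Theorem~\ref{gbm:lower} and count vertices with no neighbor on one side. Call $u$ \emph{left-blocked} if there is no vertex $x$ to the left of $u$ with $d(u,x) \in [\frac{b\log n}{n}, \frac{a\log n}{n}]$; this one-sided annulus has measure $\frac{(a-b)\log n}{n}$, so $\Pr(u \text{ left-blocked}) = \bigl(1 - \frac{(a-b)\log n}{n}\bigr)^{n-1} = n^{-(a-b)(1+o(1))}$. Wait --- this alone gives $\omega(1)$ left-blocked vertices only when $a - b < 1$, which is weaker than $a<1$; I need to be more careful. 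The correct statement should also forbid the short left interval $(0, \frac{b\log n}{n}]$ from being empty, or rather: the relevant ``bad'' event for disconnection is that $u$ has \emph{no} left-neighbor at all in the graph, i.e., the entire left annulus $[\frac{b\log n}{n},\frac{a\log n}{n}]$ is empty \emph{and} there is no multi-hop way around --- but actually a cleaner route is: if $u$ has no vertex at all within left-distance $\frac{a\log n}{n}$ that is within the connectivity interval, and also the near-left $(0,\frac{b\log n}{n})$ is empty, then $u$'s only edges go rightward; so I should define $u$ \emph{left-isolated} if no vertex lies to its left within distance $\frac{a \log n}{n}$, which has probability $(1 - \frac{a\log n}{n})^{n-1} = n^{-a(1+o(1))}$, giving expectation $n^{1-a-o(1)} = \omega(1)$ exactly when $a < 1$. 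Two left-isolated vertices that are far apart cannot both be ``leftmost'' of the same component reachable in that direction; more precisely, I would argue (as the paper does: ``if there are $k$ such vertices, there must be at least $k-1$ disjoint candidates'') that distinct left-isolated vertices force distinct connected components, hence $\omega(1)$ components, hence disconnection. The second-moment computation here is analogous to the previous case (overlap of two one-sided windows is again an $O(\frac{\log n}{n})$ fraction of pairs).

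\emph{Main obstacle.} The delicate part is the $a < 1$ case: turning ``$u$ has no neighbor on the left'' into ``the graph is disconnected'' requires a genuine structural argument, not just a counting argument, because a left-isolated vertex is not itself a separate component --- one must show that many such vertices cannot all be accommodated in a connected graph. I would make this rigorous by ordering the left-isolated vertices around the circle and observing that between consecutive ones there is a ``gap'' region so that the component containing one cannot reach the next without passing through a vertex that is itself not left-isolated, yielding at least (number of left-isolated vertices) $- O(1)$ components; since that number is $\omega(1)$ whp, the graph is disconnected whp. The second-moment bounds themselves are routine once the overlap regions are correctly identified, and I would defer the detailed constants to the appendix (indeed the paper cites Lemma~\ref{lem:disc} for exactly this).
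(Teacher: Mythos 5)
Your handling of the $a-b<0.5$ case matches the paper's proof essentially line for line: count isolated vertices, compute the first moment, split the covariance sum according to whether the two annuli can overlap (distance $\lesssim 2a\log n/n$) or not, apply Chebyshev. No concerns there.

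For the $a<1$ case, you take a genuinely different and more laborious route than the paper. The paper simply invokes the classical fact that the RGG ${\rm VRG}(n,[0,\tfrac{a\log n}{n}])$ is disconnected whp when $a<1$, and then observes that ${\rm VRG}(n,[\tfrac{b\log n}{n},\tfrac{a\log n}{n}])$ is an edge-subgraph of it on the same vertex set, so it too is disconnected --- a one-line monotonicity argument that sidesteps any fresh computation. You instead reprove the RGG disconnection from scratch by counting ``left-isolated'' vertices (an interval of width $\tfrac{a\log n}{n}$ being empty) and running a second-moment bound; this is essentially what the paper does in its appendix Lemma (cited for a different purpose, Theorem~\ref{gbm:lower}), so your computation is sound, just redundant relative to the known-result shortcut. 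Your structural step, though, is stated imprecisely: ``the component containing one cannot reach the next without passing through a vertex that is itself not left-isolated'' is not the right mechanism. The clean argument is that a left-isolated vertex $u$ has an empty arc of length exceeding $\tfrac{a\log n}{n}$ immediately to its left, and since no edge in the graph can span a distance greater than $\tfrac{a\log n}{n}$, no edge crosses that arc; $k$ left-isolated vertices therefore give $k$ such arcs, partitioning the circle into $k$ nonempty segments (each beginning with its left-isolated vertex), so there are at least $k$ connected components once $k\ge 2$, hence $\omega(1)$ components whp. With that replacement your proof closes; as written the key inference is hand-waved. Either fix that step or, better, adopt the monotonicity shortcut, which the paper uses and which needs no structural reasoning at all.
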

\begin{proof}
First of all,  it is known that ${\rm VRG}(n,[0,\frac{a\log n}{n}])$ is not connected  with  high probability when $a <1$ \citep{muthukrishnan2005bin,penrose2003random}. Therefore  ${\rm VRG}(n,[\frac{b\log n}{n},\frac{a\log n}{n}])$ must not be connected with high probability when $a<1$ as the connectivity interval is a strict subset of the previous case, and ${\rm VRG}(n,[\frac{b\log n}{n},\frac{a\log n}{n}])$ can be obtained from ${\rm VRG}(n,[0,\frac{a\log n}{n}])$ by deleting all the edges that has the two corresponding random variables separated by distance less than $\frac{b\log n}{n}$.

%First, let us try to understand the threshold for isolated nodes in the VRG. 
Next we will show that if $a-b < 0.5$ then there exists an isolated vertex with high probability. It would be easier to think of each vertex as a uniform random point in $[0,1]$. % and the distance  between them being the Lee distance.
Define an indicator variable $A_{u}$ for every node $u$ which is 1 when node $u$ is isolated and $0$ otherwise. We have,
$$\Pr(A_{u}=1)=\bigg(1-\frac{2(a-b)\log n}{n}\bigg)^{n-1}.$$ 
Define $A=\sum_{u} A_{u}$, and hence
$$\avg[A]=n\Big(1-\frac{2(a-b)\log n}{n} \Big)^{n-1}= n^{1-2(a-b)-o(1)}.$$ Therefore, when $a-b < 0.5$, $\avg[A]=\Omega(1)$. To prove this statement with high probability we can show that the variance of $A$ is bounded. Since $A$ is a sum of indicator random variables, we have that 
$${\rm Var}(A) \le \avg[A]+\sum_{u \neq v} {\rm Cov}(A_u,A_v)=\avg[A]+\sum_{u \neq v} (\Pr(A_u=1 \cap A_v=1)-\Pr(A_u=1)\Pr(A_v=1)).$$ 
Now, consider the scenario when the  vertices $u$ and $v$ are at a distance more than $\frac{2a \log n}{n}$ apart (happens with probability $1-\frac{4a\log n}{n})$. Then the region in $[0,1]$ that is between  distances $\frac{b \log n}{n}$ and $\frac{a\log n}{n}$ from both of the vertices is empty and therefore $\Pr(A_u=1 \cap A_v=1) =  
\Big(1- \frac{4(a-b) \log n}{n}\Big)^{n-2}.$
%\Pr(A_u=1)\Pr(A_v=1) = (\Pr(A_u=1))^2$.  
When the vertices  are within distance $\frac{2a \log n}{n}$ of one another, then
$
\Pr(A_u=1 \cap A_v=1) \le \Pr(A_u=1).
$
Therefore,
\begin{align*}
\Pr(A_u=1 \cap A_v=1) \le &(1-\frac{4a\log n}{n}) \Big(1- \frac{4(a-b) \log n}{n}\Big)^{n-2} + \frac{4a\log n}{n}\Pr(A_u=1)\\ 
&\le  (1-\frac{4a\log n}{n}) n^{-4(a-b)+o(1)}+ \frac{4a\log n}{n} n^{-2(a-b)+o(1)}.
\end{align*}
Consequently for large enough $n$,
\begin{align*}
\Pr(A_u=1 \cap A_v=1)-\Pr(A_u=1)\Pr(A_v=1) &\le (1-\frac{4a\log n}{n}) n^{-4(a-b)+o(1)} \\+ \frac{4a\log n}{n} n^{-2(a-b)+o(1)}  -& n^{-4(a-b) +o(1)} 
\le  \frac{8a\log n}{n}\Pr(A_u=1).
\end{align*}
Now,
$$
{\rm Var}(A) \le \avg[A] + \binom{n}{2}\frac{8a\log n}{n}\Pr(A_u=1) \le \avg[A](1+ 4a\log n).
$$
By using Chebyshev bound, with probability at least $1-\frac{1}{\log n}$, 
%we can upper bound the probability $\Pr(A_u=1 \cap A_v=1)$ by taking the regions of interest to be the same. Hence
%$$ \Pr(A_i=1 \cap A_j=1) \le \frac{4a \log n}{n}n^{-2(a-b)}+\bigg(1-\frac{4a \log n}{n}\bigg)n^{-4(a-b)}$$ and since $\Pr(A_i=1)\Pr(A_j=1)=n^{-4(a-b)}$, we have that
%$$ {\rm Var}(A) < n^{1-2(a-b)}+\frac{n^{2}}{2}\frac{4a \log n}{n}(n^{-2(a-b)}-n^{-4(a-b)})$$  
%Simplifying, we have that ${\rm Var}(A) < n^{1-2(a-b)}(1+2a\log n)$ which gives us using Chebychev that
$$A >n^{1-2(a-b)}-\sqrt{n^{1-2(a-b)}(1+4a\log n)\log n},$$
which imply for $a-b <0.5$, there will exist isolated nodes with high probability.
 %with probability $1-\frac{1}{\lambda^{2}}$.
% The second condition stems from the fact that the VRG $G(n,[r_s,r_d])$ is disconnected when $a<1$ irrespective of the value of $b$. This is because the connectivity threshold for a Random Geometric Graph $G(n,r)$ where $r=\frac{a \log n}{n}$ is $a=1$. Therefore the RGG is disconnected when $a < 1$ and if we consider the corresponding VRG, the parameter $b$ is only going to reduce the edges. Hence the VRG $G(n,[\frac{b \log n}{n},\frac{a \log n}{n}])$ cannot be connected when $a <1$ and thus we have the statement of the theorem.    
\end{proof}

\subsection{Sufficient condition for connectivity of VRG}

%Next, we show the sufficient condition for connectivity for vertex-random graphs. 
\begin{theorem*}[\ref{thm:upper}]
The vertex-random graph ${\rm VRG}(n,[\frac{b\log n}{n},\frac{a\log n}{n}])$  is connected with  probability $1-o(1)$ if $a >1$ and $a - b >0.5$.
\end{theorem*}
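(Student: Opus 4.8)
The plan is to establish Theorem~\ref{thm:upper} by proving the two structural lemmas, Lemma~\ref{lem:lemma1} and Lemma~\ref{lem:lemma2}, that together force connectivity, since the excerpt already shows how those lemmas imply the theorem. The heart of the work is Lemma~\ref{lem:lemma1}: under $a-b>0.5$ and $a>1$, the graph decomposes as a disjoint union of cycles, each of which covers $[0,1]$ in the sense that every point of $[0,1]$ lies within $\frac{a\log n}{2n}$ of some cycle vertex. The naive argument (every vertex has a neighbor both to its left and to its right, which holds easily when $a-b>1$ by a first-moment computation on the "empty annulus to one side" events) breaks down when $a-b\le 1$ because the expected number of one-sided-isolated vertices is $n^{1-(a-b)}=\omega(1)$. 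So the subtler claim we must prove is that even when such one-sided-stuck vertices exist, they can still be absorbed into a covering cycle by a multi-hop detour.

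The first step I would take is to discretize $[0,1]$ into $\Theta(n/\log n)$ small arcs of length $\Theta(\log n/n)$, so that with high probability every arc contains $\Theta(\log n)$ vertices (a standard Chernoff-plus-union-bound balls-in-bins estimate; this also immediately gives the cover property, since any point is near a nonempty arc and thus near a vertex). Within this discretization, I would track, for a vertex $u$ that fails to have a right-neighbor in the annulus $[\frac{b\log n}{n},\frac{a\log n}{n}]$, the interval just to its right of length $(a-b)\frac{\log n}{n}$ that is forced to be empty; the key observation is that the \emph{next} vertex to the right of $u$ (which exists w.h.p. since gaps bigger than $\frac{\log n}{n}$ do not occur when $a>1$) lies within $b\frac{\log n}{n}$ of $u$, hence within the forbidden "too-close" zone, but then one looks one more hop to \emph{its} right. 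Iterating this, after at most $k=\lceil b/(a-b)\rceil+1$ hops one escapes past the accumulated forbidden region and lands a genuine annulus-edge; the condition $a-b>0.5$ (versus merely $a-b>0$) is what keeps this chain short enough and the accumulated empty-interval probabilities summable. I would carefully union-bound over all $n$ vertices the event that this bounded-length rightward escape fails, showing the total failure probability is $o(1)$ precisely when $a-b>0.5$. The same argument applies symmetrically to the left. Once every vertex provably sits on a bi-infinite (hence, on a circle, cyclic) path of annulus-edges, the graph is a union of cycles, and the cover property follows from the balls-in-bins estimate.

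For Lemma~\ref{lem:lemma2}, the task is purely existential and local: I need to exhibit one vertex $u_0$ together with $2k$ other vertices landing in prescribed tiny windows (each of width $\epsilon\frac{\log n}{n}$) at prescribed offsets to the right of $u_0$. Since there are $\Theta(n)$ candidate choices for $u_0$ and, conditioned on $u_0$'s position, the $2k$ target windows are disjoint intervals each of length $\Theta(\log n/n)$, the probability that a \emph{given} $u_0$ has the required companions is at least a constant power of $\frac{\log n}{n}$ to the $2k$ — i.e. polynomially small but the same for $\Theta(n)$ nearly-independent trials. A second-moment / Poisson-approximation argument (or just the observation that disjoint-support events across well-separated candidate $u_0$'s are independent) then shows such a configuration exists with probability $1-o(1)$; this is the routine direction. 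With both lemmas in hand, I invoke the already-written proof of Theorem~\ref{thm:upper} verbatim: the Lemma~\ref{lem:lemma2} configuration makes $u_0$ adjacent to all the $u_i, v_i$ by the distance calculations shown, so all these vertices lie in one cycle; and any other cycle must have a vertex within $\frac{a\log n}{n}$ to the right of $u_0$ (by the cover property of Lemma~\ref{lem:lemma1}), which is then shown to attach to some $v_i$ or directly to $u_0$ — yielding a single connected component.

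The main obstacle, and where essentially all the difficulty concentrates, is the tight version of Lemma~\ref{lem:lemma1}: pushing the threshold from the easy $a-b>1$ down to the sharp $a-b>0.5$. The delicate point is that the "rightward escape" events for different vertices are positively correlated (a large empty interval helps several nearby vertices get stuck simultaneously), so a crude union bound over per-vertex failure is not obviously enough — one must argue that the number of stuck vertices is controlled and that stuck vertices occurring in clusters still collectively escape, or set up the union bound over \emph{maximal} runs of consecutive empty arcs rather than over vertices. Getting the exponent bookkeeping right so that the critical constant comes out to exactly $0.5$ (matching the isolated-vertex lower bound's $2(a-b)$ exponent after accounting for the two-sided nature) is the crux.
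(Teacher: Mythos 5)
Your overall framing is right: the theorem follows from Lemma~\ref{lem:lemma1} (union of covering cycles) plus Lemma~\ref{lem:lemma2} (existence of the special hub vertex $u_0$), and indeed nearly all the difficulty is concentrated in Lemma~\ref{lem:lemma1}. But the way you propose to prove Lemma~\ref{lem:lemma1} has a genuine hole, and your accounting for Lemma~\ref{lem:lemma2} is quantitatively wrong in a way that, if taken at face value, would give the opposite conclusion.

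For Lemma~\ref{lem:lemma1}, your ``rightward escape'' chain does not produce a \emph{path in the graph}. If $u$ has no right annulus-neighbor and the nearest vertex $u'$ to its right is within $b\frac{\log n}{n}$, then $u$ and $u'$ are \emph{not} adjacent (they are in the forbidden near zone), so hopping from $u$ to $u'$ and then iterating gives a sequence of non-edges, and even if the final hop vertex does have an annulus edge you have not connected $u$ to anything. What actually makes $u$ reach a right-side vertex is a \emph{bounce}: you go from $u$ to a left annulus-neighbor $w$ (which exists since $u$ is not isolated), and then from $w$ to one of $w$'s right annulus-neighbors that happens to lie to the right of $u$. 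It is this two-hop bounce that the paper formalizes — first with the four indicator variables $A_u^i$ (which already gives $a-b>\tfrac23$ because the relevant empty region has total length $\tfrac32(a-b)$), and then with the finer patch discretization (into $L=2^c$ patches of width $(a-b)/2^c \cdot \tfrac{\log n}{n}$ in each of four windows around $u$, followed by a union bound over all $\binom{4L}{2L-1}$ choices of $2L-1$ patches) to push the constant down to exactly the combined two-sided measure $2(a-b)$, hence $a-b>\tfrac12$. Your proposed discretization into arcs of length $\Theta(\log n/n)$ containing $\Theta(\log n)$ vertices each only yields density of the point set, which gives you the cover property but not the cycle/connectivity structure, and you never supply the bounce mechanism. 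You are also conflating $k=\lceil b/(a-b)\rceil+1$ (which is the number of configuration pairs in Lemma~\ref{lem:lemma2}, not a hop count for the escape in Lemma~\ref{lem:lemma1}).

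For Lemma~\ref{lem:lemma2}, your per-$u_0$ probability estimate is wrong. You say the probability that a given $u_0$ has companions in all $2k$ windows is of order $\bigl(\frac{\log n}{n}\bigr)^{2k}$; but that is the probability that a \emph{fixed} set of $2k$ other vertices lands in the prescribed windows. Since any of the $n-1$ other vertices may serve, the correct probability is
\[
\Pr(A_{u_0}=1)\;\approx\;\binom{n}{2k}\Bigl(\frac{\epsilon\log n}{n}\Bigr)^{2k}\Bigl(1-2k\epsilon\frac{\log n}{n}\Bigr)^{n}\;=\;\Theta\!\bigl((\epsilon\log n)^{2k}\,n^{-2k\epsilon}\bigr),
\]
which (for $\epsilon<\tfrac1{2k}$) makes $\sum_{u_0}\Ex[A_{u_0}]=\Theta\bigl((\epsilon\log n)^{2k} n^{1-2k\epsilon}\bigr)\to\infty$, so the second-moment method you propose does close the argument. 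With your stated estimate of $\bigl(\frac{\log n}{n}\bigr)^{2k}$ per vertex, the expected number of good $u_0$'s would be $n\cdot\bigl(\frac{\log n}{n}\bigr)^{2k}\to 0$ for $k\ge1$, which is the opposite of what you need. This is not just a bookkeeping slip: the missing $\Theta(n^{2k})$ factor from choosing \emph{which} vertices land in the windows is exactly what makes the lemma true.
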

To prove this theorem we use two main technical lemmas that show two different events happen with high probability simultaneously.

\begin{lemma*}[\ref{lem:lemma1}]
A set of vertices $\cC \subseteq V$ is called a cover of $[0,1]$, if for any point $y$ in $[0,1]$ there exists a vertex $v\in \cC$ such that $d(v,y) \le \frac{a\log n}{2n}$.  A ${\rm VRG}(n, [\frac{b\log n}{n}, \frac{a\log n}{n}])$ is a union of cycles  %$\{\cC =(w_1\to w_2\to \dots\to w_\ell\to w_1)\}$ 
such that every cycle forms a cover of $[0,1]$  (see Figure~\ref{fig:arr0}) as long as $a-b >0.5$ and $a >1$ with  probability $1-o(1)$.
\end{lemma*}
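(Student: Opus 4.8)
\textbf{Proof proposal for Lemma~\ref{lem:lemma1}.}

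The plan is to establish two things: (i) every vertex has at least one neighbor on its left and at least one on its right, from which it follows mechanically that each connected component is a cycle (or, in degenerate cases, larger two-regular structure) winding around the circle $[0,1]$, and (ii) each such cycle is in fact a cover of $[0,1]$ in the stated sense, meaning consecutive vertices along the cycle are at distance at most $\frac{a\log n}{2n}$ from each other. For part (i), the naive first-moment bound from the excerpt gives $\expect[A]\le 2n^{1-(a-b)}$, which is $o(1)$ only when $a-b>1$; the real work is to push this down to $a-b>0.5$. The key idea I would use is that the bad events ``$u$ has no left-neighbor'' are not really independent across $u$, and more importantly a single isolated-on-the-left vertex is cheap but two or more such vertices in nearby locations is much more expensive. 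Concretely, I would condition on the positions and argue: if $u$ has no left-neighbor, then the interval $(X_u - \frac{a\log n}{n}, X_u - \frac{b\log n}{n})$ of length $\frac{(a-b)\log n}{n}$ is empty of points; but then consider the point immediately to the left of $u$ (the nearest vertex $w$ counterclockwise) — since the gap just left of $u$ is at least $\frac{b\log n}{n}$ long (nothing in $(X_u - \frac{b\log n}{n}, X_u)$ either, as those would not be left-neighbors but by choosing the \emph{nearest} vertex we control this), one shows the empty region forced by $u$ being left-isolated overlaps heavily with the empty region forced by $w$ being left-isolated if $w$ is also left-isolated, so the joint probability is governed by an empty interval of length close to $\frac{(2(a-b))\log n}{n}$ rather than $\frac{2(a-b)\log n}{n}$ independently. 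Carrying this through a second-moment / clustering argument, the expected number of \emph{pairs} of left-isolated vertices is $o(1)$ once $a-b>0.5$, and since left-isolated vertices come in the ``wrong'' structural configuration one can then rule out even a single one. (Alternatively: show that the set of left-isolated vertices, if nonempty, must contain two that are geometrically close, contradicting the pair bound.) This is the step I expect to be the main obstacle, and it is exactly the part the authors flag as ``the main technical challenge.''

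Once part (i) holds, part (ii) — that each cycle covers $[0,1]$ — I would derive as follows. Because every vertex has a neighbor on each side, the graph decomposes into components each of which, traced around the circle, returns to its start, hence wraps around $[0,1]$ at least once. Along any such cycle, consider two cyclically consecutive vertices $v, v'$ (consecutive in the cyclic order on $[0,1]$ among the vertices of that component). I claim $d(v,v') \le \frac{a\log n}{2n}$. Suppose not; then the open arc between them has length exceeding $\frac{a\log n}{2n}$ and contains no vertex of this component. But it may contain vertices of \emph{other} components. The point is that an edge of the cycle that ``jumps over'' this arc must have length in $[\frac{b\log n}{n}, \frac{a\log n}{n}]$; combining this with the fact (from Lemma~\ref{lem:lemma2}'s style of analysis, or directly) that with high probability there is no vertex-free arc of length $\frac{a\log n}{2n}$ anywhere — because the expected number of such empty arcs is $n \cdot (1-\frac{a\log n}{2n})^{n-1} \approx n^{1-a/2} \to 0$ only if $a>2$, which we do NOT have — so this crude bound fails and instead I must argue using the component structure: if consecutive same-component vertices were more than $\frac{a\log n}{2n}$ apart, the two cycle-edges incident to the gap from the two sides would each have to reach across, but a short computation on chord lengths versus arc positions shows at least one of them would exceed $\frac{a\log n}{n}$ unless another component's vertex sits in the middle to ``relay'' — and one shows the relaying vertex together with $v,v'$ forces the covering condition with constant $\frac{a}{2}$ rather than $a$. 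I would make this precise by a direct case analysis on where the endpoints of the two spanning edges lie relative to the empty arc.

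For the write-up I would (1) state and prove the pair bound on left-isolated (and symmetrically right-isolated) vertices via conditioning on nearest-neighbor gaps and a union bound over ordered pairs, getting $o(1)$ when $a-b>0.5$; (2) conclude via a short deterministic argument that two-regularity of every vertex forces the union-of-cycles structure and that each cycle must wind around $[0,1]$; (3) prove the covering claim by the chord-length case analysis above, invoking that no vertex is left- or right-isolated to guarantee the relay vertices exist with the right separation. The condition $a>1$ enters to ensure there are no genuinely isolated vertices (degree zero), which is needed before the left/right-neighbor argument even makes sense, and is used exactly as in the $b=0$ case. The factor $\frac12$ in the covering distance $\frac{a\log n}{2n}$ is the natural output of step (3): a spanning edge of length at most $\frac{a\log n}{n}$ split by a relay vertex yields two sub-gaps, the larger of which can be forced to be at most half the total.
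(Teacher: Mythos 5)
Your plan for part~(i) does not work, and the failure is at the very first step, not in the technical details you flag as hard. You propose to show that every vertex has a direct left-neighbor and a direct right-neighbor in the annulus $[\frac{b\log n}{n},\frac{a\log n}{n}]$, and then to rescue the first-moment bound $\avg[A]\le 2n^{1-(a-b)}$ (which is only $o(1)$ for $a-b>1$) by a second-moment/pair argument. But for $\tfrac12<a-b<1$ the quantity $\avg[A]$ is $\Theta(n^{1-(a-b)})$, which is \emph{polynomially large}, and this is tight: with high probability there really are $\Theta(n^{1-(a-b)})$ vertices with no direct neighbor in, say, the left annulus. A pair bound or variance bound can only show that $A$ concentrates near its (large) mean; it cannot show $A=0$ when $\avg[A]=\omega(1)$. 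The sentence ``rule out even a single one'' has no valid mechanism behind it. So the deterministic consequence you want — that every vertex has degree-2 structure with one edge on each side — is simply false in the regime $\tfrac12<a-b<1$, and no amount of correlation accounting fixes that.

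The paper sidesteps this by \emph{not} requiring a direct neighbor on each side. Instead it shows that for every vertex $u$ there is a vertex to the left and a vertex to the right of $u$ each reachable from $u$ by a path of length at most two. Concretely (in the warm-up case $a-b>\tfrac23$), for each $u$ one defines four indicators $A_u^1,\dots,A_u^4$, where e.g.\ $A_u^1=1$ iff the region $[b\frac{\log n}{n},a\frac{\log n}{n}]\cup[0,\frac{a-b}{2}\frac{\log n}{n}]$ to the right of $u$ is empty, and $A_u^2=1$ iff $[b\frac{\log n}{n},a\frac{\log n}{n}]\cup[-\frac{a+b}{2}\frac{\log n}{n},-b\frac{\log n}{n}]$ is empty. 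If $A_u^1=A_u^2=0$, then either $u$ has a direct right-neighbor, or there are $v_1$ very close to $u$ on the right and $v_2$ at annulus distance on the left, with $d(v_1,v_2)\in[\frac{b\log n}{n},\frac{a\log n}{n}]$, so $u\to v_2\to v_1$ gives a two-hop path to the right. The point is that each bad event now forces a region of length $\approx \tfrac32(a-b)\frac{\log n}{n}$ (or $a\frac{\log n}{n}$) to be empty, pushing the first moment to $\approx n^{1-\frac32(a-b)}$. The full proof then replaces these four coarse regions by $O(2^c)$ finer patches to drive the exponent to $1-\frac{2^{c+1}-1}{2^c}(a-b)\to 1-2(a-b)$, which is negative precisely when $a-b>\tfrac12$, with the condition $a>1$ handling the case where the patches overlap. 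Your ``relay vertex'' intuition in part~(ii) is in fact much closer to the correct mechanism than your part~(i); the paper essentially builds the whole argument out of relays, rather than using them only to repair the covering claim.

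A secondary issue: your attempt to prove the cover property via ``no vertex-free arc of length $\frac{a\log n}{2n}$'' correctly observes that this requires $a>2$, which is not assumed, and then retreats to a vague case analysis. In the paper's structure the cover property is not a separate probabilistic event at all: once one knows every vertex reaches a vertex on each side within distance $\frac{a\log n}{n}$ (possibly via a relay), the cycle through $u$ has consecutive positions at most $\frac{a\log n}{n}$ apart, which immediately gives the $\frac{a\log n}{2n}$ covering radius deterministically. There is nothing extra to prove there.
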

%We consider this lemma to be a main technical contribution for this paper and a large part of this section is dedicated to the proof. 
This lemma also shows effectively the fact that `long-edges' are able to connect vertices over multiple hops.
%(\textcolor{red}{Mention long edge phenomena}).
Note that, the statement of Lemma~\ref{lem:lemma1} would be easier to prove if the condition were $a-b>1$. 
In that case what we prove is that every vertex  has neighbors (in the VRG) on both of the left and right directions. To  see this
%In order to prove the above statement, we  define  two events that together imply connectivity   and then show that the events simultaneously occur with high probability. 
 %The first event is that 
for each vertex $u$ , assign two indicator $\{0,1\}$-random variables $A_{u}^{l}$ and $A_{u}^{r}$, with $A_{u}^{l}=1$ if and only if there is no node $x$ to the left of node $u$ such that $d(u,x) \in [\frac{b\log n}{n},\frac{a\log n}{n}]$. Similarly, let $A_{u}^{r}=1$ if and only if there is no node $x$ to the right of node $u$ such that $d(u,x)\in [\frac{b\log n}{n},\frac{a\log n}{n}]$. Now define $A=\sum_{u}( A_{u}^{l}+A_{u}^{r})$.  We have, 
$$\Pr(A_{u}^{l}=1)=\Pr(A_{u}^{r}=1)=(1-\frac{(a-b)\log n}{n})^{n-1},$$ 
and, 
$$\avg[A]=2n(1-\frac{(a-b)\log n}{n})^{n-1} \le 2n^{1-(a-b)}.$$
 If $a-b >1$ then $\avg[A]=o(1)$ which implies, by invoking Markov inequality, that with high probability every node will have neighbors (connected by an edge in the VRG) on either side. This results in the interesting conclusion that every vertex will lie in  cycle that covers  $[0,1]$. This is true for every vertex, hence the graph is simply a union of cycles each of which is a cover of $[0,1]$.
The main technical challenge is to show that this conclusion remains valid even when $a-b >0.5$, which is proved after we describe the other components of the result in this section. %at the end of this section.

\begin{lemma*}[\ref{lem:lemma2}]
Set two real numbers $k\equiv \lceil b/(a-b)\rceil+1$ and $\epsilon < \frac1{2k}$. In an ${\rm VRG}(n, [\frac{b\log n}{n}, \frac{a\log n}{n}]), 0 <b <a$, with  probability $1-o(1)$ there exists a vertex $u_0$ and  $k$ nodes $\{u_1, u_2 ,\ldots, u_k\}$ to the right of $u_0$ such that $d(u_0,u_i) \in [\frac{(i(a-b)-2i\epsilon)\log n}{n},\frac{(i(a-b)-(2i-1)\epsilon) \log n}{n}]$ and  $k$ nodes $\{v_1, v_2 ,\ldots, v_k\}$ to the right of $u_0$ such that $d(u_0,v_i) \in [\frac{((i(a-b)+b-(2i-1)\epsilon)\log n}{n},\frac{(i(a-b)+b -(2i-2)\epsilon)\log n}{n}]$, for $i =1,2,\ldots,k$. The arrangement of the vertices is shown  in Figure~\ref{fig:arr}.
\end{lemma*}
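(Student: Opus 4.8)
The plan is a second-moment argument over the choice of $u_0$. Write $\ell=\tfrac{\log n}{n}$ and call a vertex $u$ \emph{good} if, reading rightward from $X_u$, each of the $2k$ intervals
\begin{align*}
I_i^u&=\big[X_u+(i(a-b)-2i\epsilon)\ell,\ X_u+(i(a-b)-(2i-1)\epsilon)\ell\big],\\
J_i^u&=\big[X_u+(i(a-b)+b-(2i-1)\epsilon)\ell,\ X_u+(i(a-b)+b-(2i-2)\epsilon)\ell\big],
\end{align*}
$i=1,\dots,k$, contains at least one other vertex; then picking one vertex from each $I_i^u$ as $u_i$ and one from each $J_i^u$ as $v_i$ witnesses the conclusion, so it suffices to prove that the number $N$ of good vertices satisfies $N\ge 1$ with probability $1-o(1)$. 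Note first that, since $0<b<a$ and $\epsilon<\tfrac1{2k}$ (which, together with $a>1$ as in Theorem~\ref{thm:upper}, forces $\epsilon<\tfrac{a-b}{2}$), all $2k$ intervals have width exactly $\epsilon\ell$, have positive left endpoints, lie in an arc of length $O(\ell)$ to the right of $X_u$, and have pairwise distinct centres; hence they are disjoint (a coincidence of centres would require $b/(a-b)\in\integers$ and disappears after an infinitesimal change of $\epsilon$), and for $n$ large there is no wraparound on the circle.

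First moment: fixing $X_u$, each of the $2k$ intervals has Lebesgue measure $\epsilon\ell$, so it is avoided by all $n-1$ other vertices with probability $(1-\epsilon\ell)^{n-1}=n^{-\epsilon+o(1)}$; a union bound over the $2k$ intervals ($k$ a constant depending only on $a,b$) gives $\Pr[u\text{ not good}\mid X_u]\le 2k\,n^{-\epsilon+o(1)}=o(1)$, independent of $X_u$ by translation invariance. Hence $p:=\Pr[u\text{ good}]=1-o(1)$ and $\Ex N=pn=n(1-o(1))\to\infty$. Second moment: set $C:=k(a-b)+b$, an upper bound on the length of the pattern. Conditioning on $X_u,X_v$: if $d(X_u,X_v)>2C\ell$ then the $2k$ intervals attached to $u$ are disjoint from those attached to $v$, and ``$u$ good'' and ``$v$ good'' are increasing functions of the occupancy counts of these $4k$ pairwise-disjoint sets, which are negatively associated (a multinomial; equivalently one may Poissonise the point process to make them independent), so $\Pr[u,v\text{ good}\mid X_u,X_v]\le p^2$; otherwise, which has probability $O(\ell)$, bound this conditional probability crudely by $1$. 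Thus $\Ex[\mathbf{1}[u\text{ good}]\,\mathbf{1}[v\text{ good}]]\le p^2+O(\ell)$, whence $\mathrm{Cov}\le O(\ell)$ and $\sum_{u\ne v}\mathrm{Cov}\le n^2\cdot O(\ell)=O(n\log n)$; together with $\sum_u\Var(\mathbf{1}[u\text{ good}])\le\Ex N\le n$ this yields $\Var(N)=O(n\log n)=o((\Ex N)^2)$, and Chebyshev's inequality gives $\Pr[N=0]\le\Var(N)/(\Ex N)^2=O(\log n/n)=o(1)$.

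The step I expect to need the most care is the covariance bound for the far-apart pairs: it rests on a negative-association property for the joint occupancy counts of disjoint sub-intervals of an i.i.d.\ uniform sample, combined with the monotonicity of the event ``$u$ good'' in those counts — alternatively one runs a Poissonisation/de-Poissonisation argument. A naive union bound over all $n$ vertices fails here because $\Pr[u\text{ not good}]\approx n^{-\epsilon}$ is not $o(1/n)$; it is the second moment that upgrades $\Ex N\to\infty$ to $N\ge 1$ with high probability. The remaining ingredients — the geometric bookkeeping of the $2k$ windows and the asymptotics $(1-\epsilon\ell)^{n-1}=n^{-\epsilon+o(1)}$ — are routine.
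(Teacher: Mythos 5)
Your proof is correct, but it takes a genuinely different route from the paper's. The paper defines $A_u$ as the event that \emph{exactly} one of the other vertices lands in each of the $2k$ windows (and none elsewhere in those windows), so $\Pr[A_u=1]=\Theta\bigl(n^{-2k\epsilon}(\log n)^{2k}\bigr)$; this is rare, a genuine second-moment argument is needed, and the hypothesis $\epsilon<\tfrac{1}{2k}$ is precisely what makes $\sum_u\Ex[A_u]\to\infty$. You instead ask only for ``at least one vertex per window,'' which holds for a given $u$ with probability $p=1-o(1)$ (each window has length $\epsilon\tfrac{\log n}{n}$ and is nonempty with probability $1-n^{-\epsilon+o(1)}$; union bound over the constant number $2k$ of windows). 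That single observation already finishes the lemma: letting $N$ be the number of good vertices, Markov applied to the nonnegative quantity $n-N$ gives $\Pr[N=0]=\Pr[n-N\ge n]\le\Ex[n-N]/n=1-p=o(1)$. The negative-association/second-moment machinery you then build --- and flag as the delicate step --- is therefore superfluous, and your remark that ``it is the second moment that upgrades $\Ex N\to\infty$ to $N\ge1$'' misdiagnoses your own argument: once $p\to 1$, no concentration beyond Markov is needed. A side benefit of your formulation is that it never uses $\epsilon<\tfrac{1}{2k}$, only that $\epsilon$ is small enough for the $2k$ windows to be disjoint; the paper's rare-event count does lean on that bound. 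Both writeups treat disjointness of the windows somewhat casually, a shared and minor point.
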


With  the help of these two lemmas, we are in a position to  prove Theorem~\ref{thm:upper}. The proof of  the two lemmas are given immediately after the proof of the theorem.

\begin{proof}[Proof of Theorem~\ref{thm:upper}]
We have shown that the two events mentioned in Lemmas \ref{lem:lemma1} and \ref{lem:lemma2} happen with high probability. Therefore they simultaneously happen under the condition $a>1$ and $a-b >0.5$.
%\begin{itemize}
%\item  every vertex will lie in a cycle $(w_1\to w_2\to \dots\to w_\ell)$ with locations on the vertices appearing in the increasing order in  $[0,1]$;
%\item  there exists a node $u_0$ and exactly $k\equiv \lceil b/(a-b)\rceil+1$ nodes $\{u_1, u_2 ,\ldots, u_k\}$ to the right of $u_0$ such that $d(u_0,u_i) \in [\frac{(i(a-b)-2i\epsilon)\log n}{n},\frac{(i(a-b)-(2i-1)\epsilon) \log n}{n}]$ and exactly $k$ nodes $\{v_1, v_2 ,\ldots, v_k\}$ to the right of $u_0$ such that $d(u_0,v_i) \in [\frac{((i(a-b)+b-(2i-1)\epsilon)\log n}{n},\frac{(i(a-b)+b -(2i-2)\epsilon)\log n}{n}]$, for $i =1,2,\ldots,k$.
%\end{itemize}
Now we will show that these events together imply that the graph is connected. To see this, consider the vertices $u_0,\{u_1, u_2, \ldots, u_k\}$ and $\{v_1, v_2, \ldots, v_k\}$ that satisfy the conditions of Lemma \ref{lem:lemma2}. We can observe that each vertex $v_i$ has an edge with $u_i$ and $u_{i-1}$, $i =1, \ldots,k$. This is because (see Figure~\ref{fig:arr} for a depiction)
$$
d(u_i,v_i) %&\ge d(v_i,u_0) - d(u_i,u_0)\\
 \ge \frac{((i(a-b)+b-(2i-1)\epsilon)\log n}{n} - \frac{i(a-b)-(2i-1)\epsilon) \log n}{n}= \frac{b \log n}{n} \quad \text{and}
$$ 
\begin{align*}
 d(u_i,v_i) &\le %|\max(d(v_i,u_0)) - \min(d(u_i,u_0)) |\\
   \frac{i(a-b)+b -(2i-2)\epsilon\log n}{n} - \frac{(i(a-b)-2i\epsilon)\log n}{n} = \frac{(b + 2\epsilon)\log n}{n}.
\end{align*}

Similarly, 
\vspace{-10pt}
\begin{align*}
d(u_{i-1},v_i)& %\ge |\min(d(v_i,u_0)) - \max(d(u_{i-1},u_0))| \\
\ge \frac{((i(a-b)+b-(2i-1)\epsilon)\log n}{n} - \frac{(i-1)(a-b)-(2i-3)\epsilon) \log n}{n}\\
& = \frac{(a-2\epsilon)\log n}{n} \quad \text{and}
\end{align*}
\begin{align*}
d(u_{i-1},v_i) %& \le |\max(d(v_i,u_0)) - \min(d(u_{i-1},u_0))| \\
&\le \frac{i(a-b)+b -(2i-2)\epsilon\log n}{n} - \frac{((i-1)(a-b)-2(i-1)\epsilon)\log n}{n} = \frac{a\log n}{n}.
\end{align*}
 This implies that $u_0$ is connected to $u_i$ and $v_i$ for all $i=1,\dots,k$. The first event implies that %all vertices have neighbors on either side and therefore the 
 the connected components are cycles spanning the entire line $[0,1]$.
 Now consider two such disconnected components, one of which consists of the nodes $u_0,\{u_1, u_2, \ldots, u_k\}$ and $\{v_1, v_2, \ldots, v_k\}$. There must exist a node $t$ in the other component (cycle) such that $t$ is on the right of $u_0$ and $d(u_0,t) \equiv \frac{x \log n}{n}  \le \frac{a \log n}{n}$. If $x\leq b$, $\exists i \mid  i\leq k \text{ and } i(a-b)+b - a-(2i-2)\epsilon \le x\leq i(a-b)-(2i-1)\epsilon$ (see Figure~\ref{fig:arr1}). 
When $x\leq b$, we can calculate the distance between $t$ and $v_i$ as 
\begin{align*}
d(t,v_i)%& \ge d(v_i,u_0)) - d(t,u_0) \\
&\ge \frac{(i(a-b)+b-(2i-1)\epsilon)\log n}{n} - \frac{(i(a-b)-(2i-1)\epsilon) \log n}{n} = \frac{b \log n}{n}
\end{align*}
and 
\begin{align*}
d(t,v_i)%& \le d(v_i,u_0) - d(t,u_0) \\
&\le \frac{(i(a-b)+b -(2i-2)\epsilon)\log n}{n} - \frac{(i(a-b)+b-a-(2i-2)\epsilon)\log n}{n} = \frac{a\log n}{n}.
\end{align*}
Therefore  $t$ is connected to $v_i$ when  $x\leq b.$ If $x >b$ then $t$ is already connected to  $u_0$. Therefore the two components (cycles) in question are connected.% when $x\ge b $ which implies that those cycles are connected. 
This is true for all cycles and hence there is only a single component in the entire graph. Indeed, if we consider the cycles to be disjoint super-nodes, then we have shown  that there must be a star configuration. % as drawn in Figure~\ref{fig:star}.
\end{proof}

We will now provide the proof of Lemma~\ref{lem:lemma2}.

\begin{proof}[Proof of Lemma~\ref{lem:lemma2}]
Recall that we want to show that
 there exists a node $u_0$ and  $k$ nodes $\{u_1, u_2 ,\ldots, u_k\}$ to the right of $u_0$ such that $d(u_0,u_i) \in [\frac{(i(a-b)-2i\epsilon)\log n}{n},\frac{(i(a-b)-(2i-1)\epsilon) \log n}{n}]$ and exactly $k$ nodes $\{v_1, v_2 ,\ldots, v_k\}$ to the right of $u_0$ such that $d(u_0,v_i) \in [\frac{((i(a-b)+b-(2i-1)\epsilon)\log n}{n},\frac{(i(a-b)+b -(2i-2)\epsilon)\log n}{n}]$, for $i =1,2,\ldots,k$ and $\epsilon$ is a constant less than $\frac1{2k}$ (see Figure~\ref{fig:arr} for a depiction). 
 Let $A_u$ be an indicator $\{0,1\}$-random variable for every node $u$ which is $1$ if $u$ satisfies the above conditions and $0$ otherwise. We will show $\sum_{u} A_u \ge 1$ with high probability. 
 
 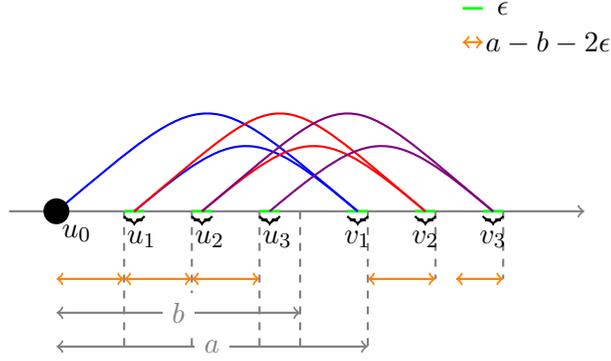
\begin{figure}
 \centering
\begin{tikzpicture}[thick, scale=0.9]

\draw [gray,->] (-1,0) -- (7.5,0);
\draw [gray,dashed] (0.7,0) -- (0.7,-2);
\draw [gray,dashed] (1.7,0) -- (1.7,-2);
\draw [gray,dashed] (2.7,0) -- (2.7,-2);
\draw [gray,dashed] (3.3,0) -- (3.3,-2);
\draw [gray,dashed] (4.3,0) -- (4.3,-2);
\draw [gray,dashed] (5.3,0) -- (5.3,-1);
\draw [gray,dashed] (6.3,0) -- (6.3,-1);

\draw [green, line width=1pt] (4,0) -- (4.3,0);
\draw [decorate,decoration={brace,amplitude=3pt,mirror},xshift=0pt,yshift=-2pt](4.,0) -- (4.3,0) node [black,midway,xshift=0cm,yshift=-0.31cm] {$v_1$};

\draw [green, line width=1pt] (5,0) -- (5.3,0);
\draw [decorate,decoration={brace,amplitude=3pt,mirror},xshift=0pt,yshift=-2pt](5,0) -- (5.3,0) node [black,midway,xshift=0cm,yshift=-0.31cm] {$v_2$};

\draw [green, line width=1pt] (6,0) -- (6.3,0);
\draw [decorate,decoration={brace,amplitude=3pt,mirror},xshift=0pt,yshift=-2pt](6,0) -- (6.3,0) node [black,midway,xshift=0cm,yshift=-0.31cm] {$v_3$};

\draw [green, line width=1pt] (0.7,0) -- (1,0);
\draw [decorate,decoration={brace,amplitude=3pt,mirror},xshift=0pt,yshift=-2pt](0.7,0) -- (1,0) node [black,midway,xshift=0.1cm,yshift=-0.31cm] {$u_1$};

\draw [green, line width=1pt] (1.7,0) -- (2,0);
\draw [decorate,decoration={brace,amplitude=3pt,mirror},xshift=0pt,yshift=-2pt](1.7,0) -- (2,0) node [black,midway,xshift=0.1cm,yshift=-0.31cm] {$u_2$};

\draw [green, line width=1pt] (2.7,0) -- (3,0);
\draw [decorate,decoration={brace,amplitude=3pt,mirror},xshift=0pt,yshift=-2pt](2.7,0) -- (3,0) node [black,midway,xshift=0.1cm,yshift=-0.31cm] {$u_3$};

\draw [orange,<->] (-0.3,-1) -- (0.7,-1);
\draw [orange,<->] (0.7,-1) -- (1.7,-1);
\draw [orange,<->] (1.7,-1) -- (2.7,-1);
\draw [orange,<->] (4.3,-1) -- (5.3,-1);
\draw [orange,<->] (5.6,-1) -- (6.3,-1);

\draw[gray,<->] (-0.3,-2) -- node[midway,fill=white] {${a}$} (4.3,-2);
\draw[gray,<->] (-0.3,-1.5) -- node[midway,fill=white] {${b}$} (3.3,-1.5);

\draw [green, line width=1pt] (5.7,3) -- (6,3)node [black,midway,xshift=0.4cm,yshift=0cm] {$\epsilon$};
\draw [orange,<->] (5.7,2.5) -- (6,2.5)node [black,midway,xshift=1cm,yshift=0cm] {$a-b-2\epsilon$};

%\filldraw [gray] (3.5,0) circle (5pt);%node[anchor=south] at (3.5,-1) {$\frac{a\log n}{n}$};;% node[anchor=north east] at (0,3.9){u};
%\filldraw [gray] (-3.5,0) circle (5pt);%node[anchor=south] at (-3.5,-1) {$-\frac{a\log n}{n}$};;%node[anchor=north east] {$\frac{\beta \log n}{n}$};
%\filldraw [gray] (-2.5,0) circle (5pt);%node[anchor=south] at (-2.5,-1) {$-\frac{b\log n}{n}$};;%node[anchor=north east] {$\frac{\beta \log n}{n}$};
%\filldraw [gray] (2.5,0) circle (5pt);%node[anchor=south] at (2.5,-1) {$\frac{b\log n}{n}$};;%node[anchor=north east] {$\frac{\beta \log n}{n}$};

\draw[blue]    (-0.3,0) to[out=40,in=140,distance=3cm](4.15,0) ;
\draw[blue]    (0.85,0) to[out=40,in=140,distance=2cm](4.15,0) ;
\draw[red]    (0.85,0) to[out=40,in=140,distance=3cm](5.15,0) ;
\draw[red]    (1.85,0) to[out=40,in=140,distance=2cm](5.15,0) ;
\draw[violet]    (1.85,0) to[out=40,in=140,distance=3cm](6.15,0) ;
\draw[violet]    (2.85,0) to[out=40,in=140,distance=2cm](6.15,0) ;

\filldraw [black] (-0.3,0) circle (5pt)node[anchor=south] at (0,-0.6) {$u_0$};
\end{tikzpicture}
\caption{The location of $u_i$ and $v_i$ relative to $u$ scaled by $\frac{\log n}{n}$ in Lemma \ref{lem:lemma2}. Edges stemming put of  $v_1,v_2, v_3$ are shown as blue, red and violet respectively. \label{fig:arr}}
\end{figure}

 We  have,
\begin{align*}
\Pr(A_u=1) & = n(n-1)\dots (n-(2k-1))\Big(\frac{\epsilon \log n}{n}\Big)^{2k} \Big(1-2k\epsilon \frac{\log n}{n}\Big)^{n-2k}\\
%(2k)!\binom{n}{2k} \Big(\frac{\epsilon \log n}{n}\Big)^{2k} \Big(1-2k\epsilon \frac{\log n}{n}\Big)^{n-2k}\\&=\Big(1-2k\epsilon \frac{\log n}{n}\Big)^{n-1}\prod_{i=1}^{2k}(n -(i-1))\Big(\frac{\epsilon \log n}{n}\Big)\Big(1-2k\epsilon \frac{\log n}{n}\Big)^{-1}\\
%& = \Big(1-2k\epsilon \frac{\log n}{n}\Big)^{n-2k}  \prod_{i=0}^{2k-1} (n-i)\frac{\epsilon \log n}{n}\\
& = c_0 n^{-2k\epsilon} (\epsilon \log n)^{2k}  \prod_{i=0}^{2k-1} (1-i/n)\\
&= c_1 n^{-2k\epsilon} (\epsilon \log n)^{2k}
\end{align*}
where $c_0,c_1$ are just absolute constants independent of $n$ (recall $k$ is a constant). %$ \prod_{i=0}^{2k-1} (1-i/n) \rightarrow 1$ as $n \rightarrow \infty$ and $k$ is just a constant. 
Hence,
\begin{align*}
\sum_{u} \avg A_u= c_1 n^{1-2k\epsilon} (\epsilon \log n)^{2k} \ge 1
\end{align*} 
as long as $\epsilon \leq \frac{1}{2k}$.  Now, in order to prove $\sum_{u}  A_u\ge 1$ with high probability, we will show that the variance of $\sum_{u}  A_u$  is bounded from above.  This calculation is very similar to the one in the proof of Theorem~\ref{thm:lower_bound}. 
Recall that if $A =\sum_{u}  A_u$ is a sum of indicator random variables, we must have 
$${\rm Var}(A) \le \avg[A]+\sum_{u \neq v} {\rm Cov}(A_u,A_v)=\avg[A]+\sum_{u \neq v} \Pr(A_u=1 \cap A_v=1)-\Pr(A_u=1)\Pr(A_v=1).$$ 
Now first consider the case when vertices $u$ and $v$ are at a distance of at least $\frac{2(a+b) \log n}{n}$ apart (happens with probability $1- \frac{4(a+b)\log n}{n}$). 
Then the region in $[0,1]$ that is within distance $\frac{(a+b)\log n}{n}$ from both $u$ and $v$ is the empty-set. In this case, $\Pr(A_u=1 \cap A_v=1) = n(n-1)\dots (n-(4k-1))\Big(\frac{\epsilon \log n}{n}\Big)^{4k} \Big(1-4k\epsilon \frac{\log n}{n}\Big)^{n-4k} = c_2 n^{-4k\epsilon} (\epsilon \log n)^{4k},$ where $c_2$ is a constant.

In all other cases, $\Pr(A_u=1 \cap A_v=1) \le \Pr(A_u =1)$.
%then their corresponding regions of interest i.e the regions which are at a distance of $[0,r_s]$ away from their respective nodes are disjoint and hence the random variables $A_u$ and $A_v$ are asymptotically independent (since $k$ is just a constant). 
Therefore,
%Otherwise we can upper bound the probability $\Pr(A_u=1 \cap A_v=1)$ by taking the regions of interest to be the same. Hence, we have that  
\begin{align*}
\Pr(A_u=1 \cap A_v=1)\leq \Big(1-\frac{4(a+b) \log n}{n}\Big) c_2 n^{-4k\epsilon} (\epsilon \log n)^{4k}+ \frac{4(a+b) \log n}{n} c_1n^{-2k\epsilon} (\epsilon \log n)^{2k} 
\end{align*}
and
\begin{align*}
{\rm Var(A)} &\le c_1n^{1-2k\epsilon} (\epsilon \log n)^{2k} +{n \choose 2}\Big(\Pr(A_u=1 \cap A_v=1)-\Pr(A_u=1)\Pr(A_v=1)\Big) \\
&\le c_1n^{1-2k\epsilon} (\epsilon \log n)^{2k}+ c_3 n^{1-2k\epsilon} (\log n)^{2k+1}\\
& \le c_4  n^{1-2k\epsilon} (\log n)^{2k+1}
\end{align*}
where $c_3,c_4$ are constants. Again invoking Chebyshev's inequality, with probability at least $1-\frac{1}{\log n}$ 
$$
A > c_1n^{1-2k\epsilon} (\epsilon \log n)^{2k} - \sqrt{c_4  n^{1-2k\epsilon} (\log n)^{2k+2}}.
$$  

\end{proof}

\begin{figure}
\centering
\begin{tikzpicture}[thick, scale=0.9]

\draw [gray,->] (-1,0) -- (10.5,0);
\draw [gray,dashed] (0.7,0.3) -- (0.7,-2);
\draw [gray,dashed] (1.7,0.7) -- (1.7,-2);
\draw [gray,dashed] (2,0.35) -- (2,0);
\draw [gray,dashed] (3,0.7) -- (3,0);
\draw [gray,dashed] (2.7,0) -- (2.7,-2);
\draw [gray,dashed] (3.3,0) -- (3.3,-2);
\draw [gray,dashed] (4.3,0) -- (4.3,-2);
\draw [gray,dashed] (5.3,0) -- (5.3,-1);
\draw [gray,dashed] (6.3,0) -- (6.3,-1);

\draw [blue, line width=1pt] (4,0) -- (4.3,0);
\draw [decorate,decoration={brace,amplitude=3pt,mirror},xshift=0pt,yshift=-2pt](4.,0) -- (4.3,0) node [black,midway,xshift=0cm,yshift=-0.31cm] {$v_1$};

\draw [blue, line width=1pt] (5,0) -- (5.3,0);
\draw [decorate,decoration={brace,amplitude=3pt,mirror},xshift=0pt,yshift=-2pt](5,0) -- (5.3,0) node [black,midway,xshift=0cm,yshift=-0.31cm] {$v_2$};

\draw [blue, line width=1pt] (6,0) -- (6.3,0);
\draw [decorate,decoration={brace,amplitude=3pt,mirror},xshift=0pt,yshift=-2pt](6,0) -- (6.3,0) node [black,midway,xshift=0cm,yshift=-0.31cm] {$v_3$};

\draw [blue, line width=1pt] (0.7,0) -- (1,0);
%\draw [blue, line width=2pt] (-0.3,-0.05) -- (1,-0.05);
\draw [blue,decorate,decoration={brace,amplitude=3pt},xshift=0pt,yshift=-2pt](-0.3,0.1) -- (1,0.1);% node [black,midway,xshift=0.1cm,yshift=-0.31cm] {$u_1$};
\draw [red,decorate,decoration={brace,amplitude=3pt},xshift=0pt,yshift=-2pt](.7,0.4) -- (2,0.4);% node [black,midway,xshift=0.1cm,yshift=-0.31cm] {$u_1$};
\draw [violet,decorate,decoration={brace,amplitude=3pt},xshift=0pt,yshift=-2pt](1.7,0.8) -- (3,0.8);% node [black,midway,xshift=0.1cm,yshift=-0.31cm] {$u_1$};
%\draw [blue, line width=2pt] (0.7,-0.1) -- (2,-0.1);
%\draw [blue, line width=2pt] (1.7,-0.15) -- (3,-0.15);

\draw [blue, line width=1pt] (1.7,0) -- (2,0);
%\draw [decorate,decoration={brace,amplitude=3pt,mirror},xshift=0pt,yshift=-2pt](1.7,0) -- (2,0) node [black,midway,xshift=0.1cm,yshift=-0.31cm] {$u_2$};

\draw [blue, line width=1pt] (2.7,0) -- (3,0);
%\draw [decorate,decoration={brace,amplitude=3pt,mirror},xshift=0pt,yshift=-2pt](2.7,0) -- (3,0) node [black,midway,xshift=0.1cm,yshift=-0.31cm] {$u_3$};

\draw [orange,<->] (-0.3,-1) -- (0.7,-1);
\draw [orange,<->] (0.7,-1) -- (1.7,-1);
\draw [orange,<->] (1.7,-1) -- (2.7,-1);
\draw [orange,<->] (4.3,-1) -- (5.3,-1);
\draw [orange,<->] (5.6,-1) -- (6.3,-1);

\draw[gray,<->] (-0.3,-2) -- node[midway,fill=white] {${a}$} (4.3,-2);
\draw[gray,<->] (-0.3,-1.5) -- node[midway,fill=white] {${b}$} (3.3,-1.5);

\draw [blue, line width=1pt] (5.7,3) -- (6,3)node [black,midway,xshift=0.4cm,yshift=0cm] {$\epsilon$};
\draw [orange,<->] (5.7,2.5) -- (6,2.5)node [black,midway,xshift=1cm,yshift=0cm] {$a-b-2\epsilon$};

%\filldraw [gray] (3.5,0) circle (5pt);%node[anchor=south] at (3.5,-1) {$\frac{a\log n}{n}$};;% node[anchor=north east] at (0,3.9){u};
%\filldraw [gray] (-3.5,0) circle (5pt);%node[anchor=south] at (-3.5,-1) {$-\frac{a\log n}{n}$};;%node[anchor=north east] {$\frac{\beta \log n}{n}$};
%\filldraw [gray] (-2.5,0) circle (5pt);%node[anchor=south] at (-2.5,-1) {$-\frac{b\log n}{n}$};;%node[anchor=north east] {$\frac{\beta \log n}{n}$};
%\filldraw [gray] (2.5,0) circle (5pt);%node[anchor=south] at (2.5,-1) {$\frac{b\log n}{n}$};;%node[anchor=north east] {$\frac{\beta \log n}{n}$};

\draw[blue]    (0.35,0.17) to[out=40,in=140,distance=2cm](4.15,0) ;
%\draw[blue]    (0.85,0) to[out=40,in=140,distance=2cm](4.15,0) ;
\draw[red]    (1.3,0.4) to[out=50,in=140,distance=3cm](5.15,0) ;
%\draw[red]    (1.85,0) to[out=40,in=140,distance=2cm](5.15,0) ;
%\draw[violet]    (1.85,0) to[out=40,in=140,distance=3cm](6.15,0) ;

\draw[violet]    (2.3,0.8) to[out=40,in=140,distance=2cm](6.15,0) ;

\filldraw [black] (-0.3,0) circle (5pt)node[anchor=south] at (-0.5,-0.6) {$u$};
\end{tikzpicture}
\caption{The line segments where $v_1,v_2,v_3$ can have neighbors (scaled by $\frac{\log n}{n}$) in the proof of Theorem \ref{thm:upper}. The point $t$ has to lie in one of these regions.\label{fig:arr1d}}
\end{figure}
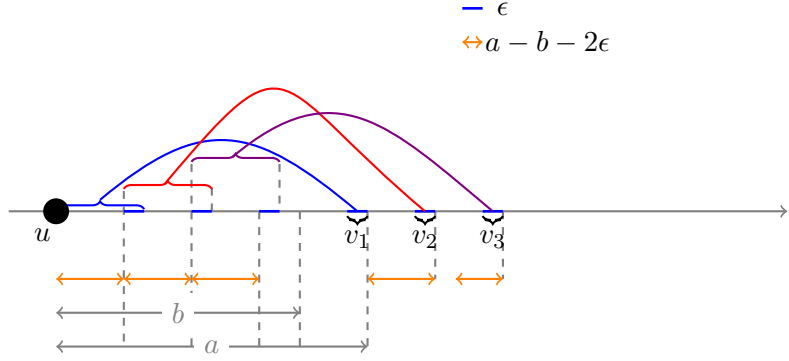

It remains to prove Lemma~\ref{lem:lemma1}. 
 
 \begin{proof}[Proof of Lemma~\ref{lem:lemma1}]
 The proof of this lemma is somewhat easily explained if we consider a weaker result (a stronger condition) with $a -b >2/3$. Let us first briefly describe this case.

Consider a node $u$ and assume without loss of generality that the position of $u$ is $0$ (i.e. $X_u=0$). Associate four indicator $\{0,1\}$-random variables $A_u^i, i =1,2,3,4$ which take the value of $1$ if and only if there does not exist any node $x$ such that 
\begin{enumerate}
\item $d(u,x) \in [b\frac{\log n}{n},a\frac{\log n}{n}] \cup [0,\frac{a-b}{2}\frac{\log n}{n}]\} \text{  for  } i=1$
\item $d(u,x) \in [b\frac{\log n}{n},a\frac{\log n}{n}] \cup [\frac{-a-b}{2}\frac{\log n}{n},-b\frac{\log n}{n}]\}  \text{  for  } i=2$
\item $d(u,x) \in [-a\frac{\log n}{n},-b\frac{\log n}{n}] \cup [\frac{-a+b}{2}\frac{\log n}{n},0]\}  \text{  for  } i=3$
\item $d(u,x) \in [-a\frac{\log n}{n},-b\frac{\log n}{n}] \cup [b\frac{\log n}{n},\frac{a+b}{2}\frac{\log n}{n}]\}  \text{  for  } i=4.$
\end{enumerate}
The intervals representing these random variables are shown in Figure~\ref{fig:four}.

Notice that $\Pr(A^i_u=1)=\max\{\Big(1-1.5(a-b)\frac{\log n}{n}\Big)^{n-1},\Big(1-a\frac{\log n}{n}\Big)^{n-1}\}$ and therefore $\sum_{i,u} \avg A^i_u \approx 4\max\{n^{1-1.5(a-b)},n^{1-a}\}$.  This means that for $a-b \ge 0.67$ and $a \ge 1$, $\sum_{i,u} \avg A^i_u=o(1)$.  Hence there exist vertices in all the regions described above for every node $u$ with high  probability.
 
%Observe that the constraint $\alpha<3\beta$ results in the regions $[0,\frac{\alpha-\beta}{2}\frac{\log n}{n}],[-\beta\frac{\log n}{n},\frac{-\alpha-\beta}{2}\frac{\log n}{n}],[\frac{-\alpha+\beta}{2}\frac{\log n}{n},0]$ and $[\beta\frac{\log n}{n},\frac{\alpha+\beta}{2}\frac{\log n}{n}]$ to have no overlap and therefore they are disjoint regions. 
Now, $A_u^1$ and $A_u^2$ being zero implies that either there is a vertex in $ [b\frac{\log n}{n},a\frac{\log n}{n}] $ or there exists two vertices $v_1,v_2$ in $ [0,\frac{a-b}{2}\frac{\log n}{n}]$ and $[\frac{-a-b}{2}\frac{\log n}{n},-b\frac{\log n}{n}]$ respectively (see, Figure~\ref{fig:four}). In the second case, $u$ is connected to $v_2$ and $v_2$ is connected to $v_1$. Therefore $u$ has nodes on left ($v_2$) and right ($v_1$) and $u$ is connected to both of them through one hop in the graph. 

Similarly, $A_u^3$ and $A_u^4$ being zero implies that either there exists a vertex in $[-a\frac{\log n}{n},-b\frac{\log n}{n}]$ or again $u$ will have vertices on left and right and will be connected to them. So, when all the four $A_u^i, i=1,2,3,4$ are zero together:
{
\begin{itemize}
    \item $A_u^1=A_u^2=0$ implies there is a neighbor of $u$ on either sides or there is a single node in $[b\frac{\log n}{n},a\frac{\log n}{n}]$
    \item $A_u^3=A_u^4=0$ implies there is a neighbor of $u$ on either sides or there is a single node in $[-a\frac{\log n}{n},-b\frac{\log n}{n}]$
\end{itemize}
This shows that when $A_u^1=A_u^2=0$ and  $A_u^3=A_u^4=0$ guarantee a node on only one side of $u$, there are nodes in $[b\frac{\log n}{n},a\frac{\log n}{n}] $ and $[-a\frac{\log n}{n},-b\frac{\log n}{n}]$.} But in that case $u$ has direct neighbors on both its left and right. %So we can conclude that for 
%every node $u$, there exists a node $v$ such that $d(u,v) \in [0,a \frac{\log n}{n}]$ and a node $w$ such that $d(u,w) \in [-a\frac{\log n}{n},0]$ and $u,v,w$ are connected. This implies that
We can conclude that every vertex $u$ is connected to  a vertex  $v$ on  its right and a vertex  $w$ on its left  such that $d(u,v) \in [0,a \frac{\log n}{n}]$ and  $d(u,w) \in [-a\frac{\log n}{n},0]$; therefore every vertex is part of a cycle that covers $[0,1]$.

 \begin{figure}
   \centering
\begin{tikzpicture}[thick, scale=0.9]
\draw [gray,<->] (0,1.5) -- (0.5,1.5)node at (1,1.5) {$\frac{a-b}{2}$};;
\draw[pattern=north west lines,pattern color=red][preaction={fill=blue!20}] (3.5,1.9) -- (4.5,1.9) -- (4.5,1.7) -- (3.5,1.7) -- cycle node at (4.8,1.8) {$A_u^3$};;
\draw[pattern=vertical lines,pattern color=red][preaction={fill=blue!50}]  (3.5,1.4) -- (4.5,1.4) -- (4.5,1.2) -- (3.5,1.2) -- cycle node at (4.8,1.2) {$A_u^4$};;
\draw[pattern=dots,pattern color=blue] [preaction={fill=gray}]  (1.5,1.9) -- (2.5,1.9) -- (2.5,1.7) -- (1.5,1.7) -- cycle node at (2.8,1.8) {$A_u^1$};;
\draw[pattern=grid,pattern color=violet][preaction={fill=red!50}]   (1.5,1.4) -- (2.5,1.4) -- (2.5,1.2) -- (1.5,1.2) -- cycle node at (2.8,1.2) {$A_u^2$};;

\draw[pattern=dots,pattern color=blue][preaction={fill=gray}]  (3.5,0) -- (4.5,0) -- (4.5,-0.3) -- (3.5,-0.3) -- cycle;
\draw[pattern=grid,pattern color=violet] [preaction={fill=red!50}]  (3.5,-0.3) -- (4.5,-0.3) -- (4.5,-0.6) -- (3.5,-0.6) -- cycle;
\draw[pattern=north west lines,pattern color=red][preaction={fill=blue!20}]  (-3.5,0) -- (-4.5,0) -- (-4.5,-0.3) -- (-3.5,-0.3) -- cycle;
\draw[pattern=vertical lines,pattern color=red][preaction={fill=blue!50}]   (-3.5,-0.3) -- (-4.5,-0.3) -- (-4.5,-0.6) -- (-3.5,-0.6) -- cycle;
\draw[pattern=dots,pattern color=blue][preaction={fill=gray}]  (0,0) -- (.5,0) -- (.5,-0.3) -- (0,-0.3) -- cycle;
\draw[pattern=north west lines,pattern color=red][preaction={fill=blue!20}]  (0,0) -- (-.5,0) -- (-.5,-0.3) -- (0,-0.3) -- cycle;
\draw[pattern=vertical lines,pattern color=red][preaction={fill=blue!50}]   (3.5,-0.6) -- (4,-0.6) -- (4,-0.9) -- (3.5,-0.9) -- cycle;
\draw[pattern=grid,pattern color=violet] [preaction={fill=red!50}]  (-3.5,0) -- (-3,0) -- (-3,-0.3) -- (-3.5,-0.3) -- cycle;

%\draw [gray,<->] (-3.5,-1.3) -- (-4.5,-1.3);
%\draw [gray,<->] (3.5,-1.3) -- (4.5,-1.3);
\draw [gray,<->] (3.5,-1.3) -- (4,-1.3);
\draw [gray,<->] (0,-1.3) -- (0.5,-1.3);
\draw [gray,<->] (0,-1.3) -- (-.5,-1.3);
\draw [gray,<->] (-3.5,-1.3) -- (-3,-1.3);

\draw [gray,<->] (-5,0) -- (5,0);
\draw [gray,dashed] (3.5,0) -- (3.5,-1.4);
%\draw [gray,dashed] (4.5,0) -- (4.5,-1.4);
\draw [gray,dashed] (0.5,0) -- (0.5,-1.4);
\draw [gray,dashed] (0,0) -- (0,-1.4);
\draw [gray,dashed] (-0.5,0) -- (-0.5,-1.4);
\draw [gray,dashed] (-3.5,0) -- (-3.5,-1.4);
%\draw [gray,dashed] (-4.5,0) -- (-4.5,-1.4);
\draw [gray,dashed] (-3,0) -- (-3,-1.4);
\draw [gray,dashed] (4,0) -- (4,-1.4);
\draw node[anchor=south] at (3.5,0) {$b$};
\draw node[anchor=south] at (4.5,0) {$a$};
\draw node[anchor=south] at (-3.5,0) {$-b$};
\draw node[anchor=south] at (-4.5,0) {$-a$};

%\draw [blue, line width=1pt,<->] (3.5,-0.5) -- (4.5,-0.5);
%\draw [blue, line width=1pt,<->] (0,-0.5) -- (0.5,-0.5);
%
%\draw [violet, line width=1pt,<->] (3.5,-0.8) -- (4.5,-0.8);
%\draw [violet, line width=1pt,<->] (-3.5,-0.8) -- (-3,-0.8);
%
%\draw [red, line width=1pt,<->] (-3.5,-1.1) -- (-4.5,-1.1);
%\draw [red, line width=1pt,<->] (0,-1.1) -- (-0.5,-1.1);
%
%
%\draw [green, line width=1pt,<->] (3.5,-1.4) -- (4,-1.4);
%\draw [green, line width=1pt,<->] (-3.5,-1.4) -- (-4.5,-1.4);
\filldraw [black] (0,0) circle (3pt)node[anchor=south] at (0,0) {$u$};
\end{tikzpicture}
\caption{Representation of four different random variables for Lemma \ref{lem:lemma1}. \label{fig:four}}
\end{figure}

We can  now extend this proof to the case when $a-b> 0.5.$
%\begin{lemma}
%For any constant $c$, the Random Annulus Graph $G(n,[\frac{a \log n}{n},\frac{b \log n}{n}])$ is connected if $a-b \ge \frac{2^{c}}{2^{c+1}-1}$ and $a\ge 1$
%\end{lemma}

Let $c$ be large number to  be chosen specifically later.
Consider a node $u$ and assume that the position of $u$ is $0$. Now consider the {four different regions $[-a\frac{\log n}{n},-b\frac{\log n}{n}]$, $[-(a-b)\frac{\log n}{n},0]$, $[b\frac{\log n}{n},a\frac{\log n}{n}]$ and $[0,a-b\frac{\log n}{n}]$ around $u$ each divided into $L\equiv 2^{c}$ patches (intervals) of size $\theta=\frac{a-b}{2^c}$  in  the following way}:
\begin{enumerate}
\item $I_u^i=[\frac{(-a + (i-1)\theta )\log n}{n},\frac{(-a + i\theta)  \log n}{n}]$
\item $J_u^i=[\frac{(-(a-b) + (i-1)\theta )\log n}{n},\frac{(-(a-b) + i\theta)  \log n}{n}]$
\item $K_u^i= [\frac{(b + (i-1)\theta )\log n}{n},\frac{(b + i\theta)  \log n}{n}]$
\item $M_u^i= [\frac{( (i-1)\theta )\log n}{n},\frac{ i\theta  \log n}{n}]$
\end{enumerate}
where $i=1,2,3,\ldots,L$. Note that any vertex in $\cup I_u^i \cup K_u^i$ is connected to $u$. See, Figure~\ref{fig:patch} for a depiction.

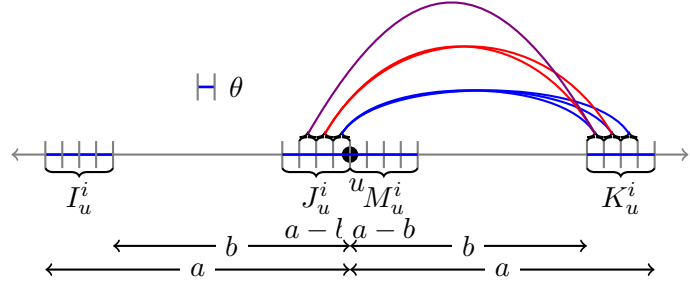
\begin{figure}
\centering
\begin{tikzpicture}[thick, scale=0.9]
\draw [gray,<->] (-5,0) -- (5,0);\filldraw [black] (0,0) circle (3pt)node[anchor=south] at (0.1,-0.7) {$u$};
\draw[<->] (0,-1.1) -- node[midway,fill=white] {$a-b$} (-1,-1.1);
\draw[<->] (0,-1.1) -- node[midway,fill=white] {$a-b$} (1,-1.1);
\draw[<->] (0,-1.35) -- node[midway,fill=white] {$b$} (-3.5,-1.35);
\draw[<->] (0,-1.7) -- node[midway,fill=white] {$a$} (-4.5,-1.7);
\draw[<->] (0,-1.35) -- node[midway,fill=white] {$b$} (3.5,-1.35);
\draw[<->] (0,-1.7) -- node[midway,fill=white] {$a$} (4.5,-1.7);

\draw [blue, line width=1pt] (-4.5,0) -- (-4.25,0);
\draw [blue, line width=1pt] (-4.25,0) -- (-4,0);
\draw [blue, line width=1pt] (-4,0) -- (-3.75,0);
\draw [blue, line width=1pt] (-3.75,0) -- (-3.5,0);

\draw [blue, line width=1pt] (-2.00,1) -- (-2.25,1) node [black,midway,xshift=0.4cm,yshift=0cm] {$\theta$};;
\draw [gray] (-2.00,1.2) -- (-2.00,0.8);
\draw [gray] (-2.25,1.2) -- (-2.25,0.8);

\draw [gray] (4.5,0.2) -- (4.5,-.2);
\draw [gray] (4.25,0.2) -- (4.25,-.2);
\draw [gray] (4,0.2) -- (4,-.2);
\draw [gray] (3.75,0.2) -- (3.75,-.2);
\draw [gray] (3.5,0.2) -- (3.5,-.2);

\draw [gray] (-4.5,0.2) -- (-4.5,-.2);
\draw [gray] (-4.25,0.2) -- (-4.25,-.2);
\draw [gray] (-4,0.2) -- (-4,-.2);
\draw [gray] (-3.75,0.2) -- (-3.75,-.2);
\draw [gray] (-3.5,0.2) -- (-3.5,-.2);

\draw [gray] (-1,0.2) -- (-1,-.2);
\draw [gray] (-.75,0.2) -- (-.75,-.2);
\draw [gray] (-0.5,0.2) -- (-0.5,-.2);
\draw [gray] (-.25,0.2) -- (-.25,-.2);
\draw [gray] (0,0.2) -- (0,-.2);
\draw [gray] (1,0.2) -- (1,-.2);
\draw [gray] (.75,0.2) -- (.75,-.2);
\draw [gray] (0.5,0.2) -- (0.5,-.2);
\draw [gray] (.25,0.2) -- (.25,-.2);
\draw [decorate,decoration={brace,amplitude=3pt,mirror},xshift=0pt,yshift=0pt](3.5,-0.2) -- (4.5,-0.2) node [black,midway,xshift=0cm,yshift=-0.4cm] {$K_u^i$};
\draw [decorate,decoration={brace,amplitude=3pt},xshift=0pt,yshift=0pt](0,-0.2) -- (-1,-0.2) node [black,midway,xshift=0cm,yshift=-0.4cm] {$J_u^i$};
\draw [decorate,decoration={brace,amplitude=3pt,mirror},xshift=0pt,yshift=0pt](0,-0.2) -- (1,-0.2) node [black,midway,xshift=0cm,yshift=-0.4cm] {$M_u^i$};

\draw [decorate,decoration={brace,amplitude=3pt},xshift=0pt,yshift=0pt](-3.5,-0.2) -- (-4.5,-0.2) node [black,midway,xshift=0cm,yshift=-0.4cm] {$I_u^i$};
%\draw [decorate,decoration={brace,amplitude=3pt},xshift=0pt,yshift=-2pt](4.25,-0) -- (4,-0) node [black,midway,xshift=0cm,yshift=-0.3cm] {$C_u^1$};
%\draw [decorate,decoration={brace,amplitude=3pt},xshift=0pt,yshift=-2pt](4,-0) -- (3.75,-0) node [black,midway,xshift=0cm,yshift=-0.3cm] {$C_u^1$};
%\draw [decorate,decoration={brace,amplitude=3pt},xshift=0pt,yshift=-2pt](3.75,-0) -- (3.5,-0) node [black,midway,xshift=0cm,yshift=-0.3cm] {$C_u^1$};

\draw [blue, line width=1pt] (-1,0) -- (-0.75,0);
\draw [blue, line width=1pt] (-0.75,0) -- (-0.5,0);
\draw [blue, line width=1pt] (-0.5,0) -- (-0.25,0);
\draw [blue, line width=1pt] (-0.25,0) -- (0,0);

\draw [blue, line width=1pt] (1,0) -- (0.75,0);
\draw [blue, line width=1pt] (0.75,0) -- (0.5,0);
\draw [blue, line width=1pt] (0.5,0) -- (0.25,0);
\draw [blue, line width=1pt] (0.25,0) -- (0,0);

\draw [blue, line width=1pt] (4.5,0) -- (4.25,0);
\draw [blue, line width=1pt] (4.25,0) -- (4,0);
\draw [blue, line width=1pt] (4,0) -- (3.75,0);
\draw [blue, line width=1pt] (3.75,0) -- (3.5,0);

\draw [decorate,decoration={brace,amplitude=3pt},xshift=0pt,yshift=0pt](3.5,0.2) -- (3.75,0.2);% node [black,midway,xshift=0cm,yshift=0.4cm] {$A_u^i$};
\draw [decorate,decoration={brace,amplitude=3pt},xshift=0pt,yshift=0pt](3.75,0.2) -- (4,0.2);% node [black,midway,xshift=0cm,yshift=0.4cm] {$A_u^i$};
\draw [decorate,decoration={brace,amplitude=3pt},xshift=0pt,yshift=0pt](4,0.2) -- (4.25,0.2);% node [black,midway,xshift=0cm,yshift=0.4cm] {$A_u^i$};
\draw [decorate,decoration={brace,amplitude=3pt},xshift=0pt,yshift=0pt](-0.75,0.2) -- (-0.5,0.2);% node [black,midway,xshift=0cm,yshift=0.4cm] {$A_u^i$};
\draw [decorate,decoration={brace,amplitude=3pt},xshift=0pt,yshift=0pt](-0.5,0.2) -- (-0.25,0.2);% node [black,midway,xshift=0cm,yshift=0.4cm] {$A_u^i$};
\draw [decorate,decoration={brace,amplitude=3pt},xshift=0pt,yshift=0pt](-0.25,0.2) -- (0,0.2);% node [black,midway,xshift=0cm,yshift=0.4cm] {$A_u^i$};

\draw[blue]    (-0.125,0.3) to[out=60,in=120,distance=1cm](3.625,0.3) ;
\draw[blue]    (-0.125,0.3) to[out=60,in=120,distance=1cm](3.875,0.3) ;
\draw[blue]    (-0.125,0.3) to[out=60,in=120,distance=1cm](4.125,0.3) ;

\draw[red]    (-0.375,0.3) to[out=60,in=120,distance=2cm](3.875,0.3) ;
\draw[red]    (-0.375,0.3) to[out=60,in=120,distance=2cm](3.625,0.3) ;

\draw[violet]    (-0.625,0.3) to[out=60,in=120,distance=3cm](3.625,0.3) ;

\end{tikzpicture}
\caption{Pictorial representation of $I_u^i, J_u^i, K_u^i, M_u^i$ and their connectivity as described in Lemma \ref{lem:lemma1}. The colored lines show the regions that are connected to each other.\label{fig:patch}}
\end{figure}

Consider a $\{0,1\}$-indicator random variable $X_u$ that is $1$ if and only if there {does not exist any node in a region formed by union of any $2L-1$ patches amongst the ones described above.} Notice that { when $a<2b$, the patches do not overlap and the total size of $2L-1$ patches is $\frac{2^{c+1}-1}{2^c}\frac{(a-b)\log n}{n}$ and when $a\geq 2b$, the patches can overlap and the total size of the $2L-1$ patches is going to be more than $\min\{\frac{2^{c+1}-1}{2^c}\frac{(a-b)\log n}{n},\frac{a \log n}{n}\}$. }
%Define the $\{0,1\}$-indicator random variable $Y$ which is $1$ if and only if there does not exist any node in a particular region consisting of $2L-1$ patches. Therefore, $\Pr(Y=1) \le \max\{\Big(1-\frac{2^{c+1}-1}{2^c}\frac{(a-b)\log n}{n}\Big)^{n},\Big(1-\frac{a\log n}{n}\Big)^{n}\}$. 
Since there are ${4L \choose 2L-1}\le n^{\frac{4L}{\log n}}$ possible regions that consists of $2L-1$ patches, 
\begin{align*}
 \sum_u \avg X_u &\leq n {4L \choose 2L-1}\Big(1-\min\{\frac{2^{c+1}-1}{2^c}\frac{(a-b)\log n}{n},\frac{a \log n}{n}\}\Big)^{n-1}\\
 & \le \max\{n^{1-\frac{2^{c+1}-1}{2^c}(a-b) +\frac{4L}{\log n}}, n^{1-a+\frac{4L}{\log n}}\}.
 %\max\{n^{1-\frac{2^{c+1}-1}{2^c}(a-b)},n^{1-a}\}.
\end{align*}
At this point we can choose $c= c_n = o(\log n)$ such that $\lim_n c_n = \infty$.
 Hence when $a-b > \frac12$ %\frac{2^c}{2^{c+1}-1}$ 
 and $a > 1$, for every vertex $u$ there exists at least one patch amongst every $2L-1$ patches in $\cup I_u^i \cup J_u^j \cup K_u^k, i,j,k =1,2, \dots, L$ that contains a vertex.

Consider a collection of patches $\cup_iI_u^i \cup_j J_u^j,  i,j=1, 2, \dots, L$. We know that there exist two patches amongst these $I_u^i$s and $K_u^j$s that contain at least one vertices. If one of $I_u^i$s and one of $K_u^j$s contain two vertices, we found one neighbor of $u$ on both left and right directions (see, Figure~\ref{fig:patch}).

We consider the other case now.
Without loss of generality assume that there are no vertex in all $I_u^i$s and there exist at least two patches in $K_u^i$s that contain at least one vertex each. Hence, there exists at least one of $\{K_u^i \mid i\in \{1,2, \ldots, L-1\}\}$ that contains a vertex. Similarly, we can also conclude in this case that there exists at least one of $\{J_u^i \mid i\in \{2,3 \ldots, L\}\}$ which contain a node. Assume $J_u^{\phi}$ to be the left most patch in $\cup J_u^i \mid i \in \{1, 2, \dots, L\}$ that contains a vertex (see, Figure~\ref{fig:patch}) . From our previous observation, we can conclude that $\phi \ge 2$.

We can observe that any vertex in $J_u^j$ is connected to the vertices in patches $K_u^k,  \forall k<j$. 
This is because for two vertices $v\in J_u^j$ and $w\in K_u^k$, we have
\begin{align*}
d(v,w) &\ge \frac{(b + (k-1)\theta )\log n}{n} - \frac{(-(a-b) + j\theta)  \log n}{n}
= \frac{(a + (k-j-1)\theta )\log n}{n}; \\
d(v,w) &\le \frac{(b + k\theta)  \log n}{n} - \frac{(-(a-b) + (j-1)\theta )\log n}{n}
= \frac{(a + (k-j+1)\theta )\log n}{n}. 
\end{align*}

%If there exists a node in $B^{|L|}_u$, then it is connected to the node in $C_u^i \ \forall i\in{1,2,\ldots, |L|-1}$. If this is not the case, then
%If there doesnt exist a node in $B_{|L|}$ but a node in $B_{|L|-1}$ then it is connected to all nodes in $C_i \ \forall i\in\{1,2,\ldots |L|-1\}$. Consider the set $B_|L| \cup A_i \cup C_{j}\forall i\in L, \ j\in\{1,2,3,\ldots, |L|-1\}$ which consists of a collection of $2|L|-1$ patches. Since atleast one of these patch must have a node and none of $A_i$ of $B_{|L|}$ can contain a node. Hence, node in $B_{|L|-1}$ is connected to the node in $C_j$ which is connected to $u$.Considering just $A_i$, $B_j$,  $\forall i\ge 1,j\ge2$, there exists atleast one node in $B_j$, $j\ge2$. Additionally, if there does not exist a node in all $B_j$'s $\forall j > \phi $, 

Consider a collection of $2L-1$ patches $\{\cup I_u^i  \cup J_u^j \cup K_u^k \mid i,j,k\in \{1,\dots, L\}, j>\phi, k\le \phi-1\}$ where $\phi \ge 2$. This is a collection of $2L-1$ patches out of which one must have a vertex and since none of $\{J_u^j \mid j>\phi\}$ and $I_u^i$ can contain a vertex, one of $\{K_u^k \mid k \le \phi -1\}$  must contain the vertex. Recall that the vertex in $J_u^{\phi}$ is connected to any node in $K_u^k$ for any $k\le \phi-1$ and therefore $u$ has a node to the right direction and left direction that are connected to $u$. Therefore every vertex is part of a cycle and each of the circles  covers $[0,1]$.
 \end{proof}

The following result is an immediate corollary of the connectivity upper bound.
\begin{corollary}\label{cor:patch}
Consider a random  graph $G(V,E)$  is being  generated as a variant of the VRG where each  $u,v\in V$ forms an edge if and only if  $d(u,v)\in \left[0,c\frac{\log n}{n}\right]\cup \left[b\frac{\log n}{n},a\frac{\log n}{n}\right], 0 <c<b<a$. This graph is connected with  probability $1-o(1)$ if $a-b + c>1$ or if $a-b > 0.5, a>1$. \label{cor:patches}
\end{corollary}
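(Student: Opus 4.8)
The statement is a disjunction of two sufficient conditions, which I would dispatch separately. If $a-b>0.5$ and $a>1$ there is nothing new to prove: the graph $G$ contains every edge of ${\rm VRG}(n,[\frac{b\log n}{n},\frac{a\log n}{n}])$ (it merely has, in addition, the short edges of length at most $\frac{c\log n}{n}$), and Theorem~\ref{thm:upper} says that subgraph is already connected with probability $1-o(1)$; adding edges cannot destroy connectivity. So the content is the case $a-b+c>1$, where I would mirror the architecture of the proof of Theorem~\ref{thm:upper}, with the short edges allowing the delicate patch argument of Lemma~\ref{lem:lemma1} to be replaced by a one-line first moment bound.

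First note that $a-b+c>1$ forces $a>1+(b-c)>1$ since $b>c$, so the condition $a>1$ comes for free. For a vertex $u$ sitting at position $X_u$, call $(X_u,X_u+\frac{c\log n}{n}]\cup[X_u+\frac{b\log n}{n},X_u+\frac{a\log n}{n}]$ its \emph{right reach}: any other vertex that lands there is adjacent to $u$ in $G$ and lies to the right of $u$. The probability that $u$'s right reach is empty is $(1-\frac{(a-b+c)\log n}{n})^{n-1}=n^{-(a-b+c)+o(1)}$, so the expected number of vertices with empty right reach is $n^{1-(a-b+c)+o(1)}=o(1)$; by Markov, with probability $1-o(1)$ every vertex has a neighbor within distance $\frac{a\log n}{n}$ on its right, and symmetrically on its left. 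Picking, for each $u$, one such right-neighbor $\sigma(u)$ defines $\sigma:V\to V$ with $X_{\sigma(u)}$ a positive shift of $X_u$ by at most $\frac{a\log n}{n}$ on the circle; the functional graph of $\sigma$ is a union of cycles with in-trees attached, each cycle winds around $[0,1]$ at least once, and any two vertices consecutive (in position) on such a cycle differ by at most $\frac{a\log n}{n}$. Hence, exactly as in Lemma~\ref{lem:lemma1}, with probability $1-o(1)$ the graph $G$ decomposes into cycles, each a cover of $[0,1]$ in the sense of Lemma~\ref{lem:lemma1} (covering radius $\frac{a\log n}{2n}$), together with possibly some tree edges hanging off them, so it suffices to connect the covering cycles to one another.

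Here the gluing argument of Theorem~\ref{thm:upper} transfers essentially verbatim. The configuration of Lemma~\ref{lem:lemma2} --- a vertex $u_0$ with $k=\lceil b/(a-b)\rceil+1$ vertices $u_i$ and $k$ vertices $v_i$ at prescribed distances --- exists in $G$ with probability $1-o(1)$, because its proof only concerns the placement of points and every edge it invokes has length in $[\frac{b\log n}{n},\frac{a\log n}{n}]$, hence is present in $G$. Given any other covering cycle $C$, it covers $[0,1]$, so it contains a vertex $t$ to the right of $u_0$ with $d(u_0,t)=\frac{x\log n}{n}\le\frac{a\log n}{n}$; if $x\le c$ or $x>b$ then $t$ is directly adjacent to $u_0$, and if $c<x\le b$ then, as in the proof of Theorem~\ref{thm:upper}, some $v_i$ satisfies $d(t,v_i)\in[\frac{b\log n}{n},\frac{a\log n}{n}]$, so $t$ is adjacent to the cycle vertex $v_i$. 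Thus $u_0$'s cycle is connected to every other cycle, giving a star of cycles, and $G$ is connected.

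\textbf{Main obstacle.} The only genuinely new ingredient is the first moment computation in the second paragraph, which is routine; the threshold $a-b+c>1$ drops out of it because the short interval $[0,\frac{c\log n}{n}]$ supplies, for free, the second ``helper'' portion of the reach that Lemma~\ref{lem:lemma1} had to construct by hand via patches. The point worth checking carefully is the soundness of orienting $\sigma$ consistently on the circle --- i.e.\ that a vertex with non-empty right and left reaches really does lie on a cycle that covers $[0,1]$ with gaps at most $\frac{a\log n}{n}$ --- and the fact, inherited from Theorem~\ref{thm:upper}, that the Lemma~\ref{lem:lemma2} gadget absorbs any vertex of a neighboring cycle that falls in the ``dead zone'' $(\frac{c\log n}{n},\frac{b\log n}{n})$ around $u_0$; the short edges only enlarge the set of $x$ handled directly, so this step never becomes harder.
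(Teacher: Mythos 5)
Your proof is correct. The paper labels this "an immediate corollary of the connectivity upper bound" without giving any argument, so there is nothing explicit to compare against, but the route you take is exactly the natural instantiation of the paper's machinery: the $a-b>0.5$, $a>1$ branch follows by monotonicity from Theorem~\ref{thm:upper}, and the $a-b+c>1$ branch is handled by replacing the delicate patch argument of Lemma~\ref{lem:lemma1} with a one-line first-moment bound (the total reach measure is $(a-b+c)\frac{\log n}{n}$, so isolated sides vanish when $a-b+c>1$), after which the Lemma~\ref{lem:lemma2} gadget and the gluing step of Theorem~\ref{thm:upper} transfer verbatim since Lemma~\ref{lem:lemma2} only requires $0<b<a$ and every edge in the gluing has length in $[\frac{b\log n}{n},\frac{a\log n}{n}]$. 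Your functional-graph phrasing (cycles with in-trees attached, each cycle winding at least once and hence a cover of radius $\frac{a\log n}{2n}$) is in fact a cleaner formalization of the paper's somewhat loose "union of cycles" statement, and you correctly observe that the tree vertices ride along with their cycles so only the cycles need gluing. No gaps.
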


The above corollary can be further improved for some regimes of $a,b,c$. In particular, we can get the following result (proof delegated to appendix).

\begin{corollary}\label{cor:extra}
Consider a random  graph $G(V,E)$  is being  generated as a variant of the VRG where each  $u,v\in V$ forms an edge if and only if  $d(u,v)\in \left[0,c\frac{\log n}{n}\right]\cup \left[b\frac{\log n}{n},a\frac{\log n}{n}\right], 0 <c<b<a$. This graph is connected with  probability $1-o(1)$ if
either of the following conditions are true:
\begin{enumerate}
\item $2(a-b)+ c/2 > 1 \text{ when } a-b<c \text{ and }b>3c/2$
\item $b-c > 1 \text{ when } a-b<c \text{ and }b\le 3c/2$
\item $a > 1 \text{ when } a-b\ge c \text{ and } b\le 3c/2 $
\item $(a-b)+ 3c/2  > 1 \text{ when } a-b\ge c \text{ and } b>3c/2$.
\end{enumerate}
\end{corollary}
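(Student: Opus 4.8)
The plan is to reuse the two-stage strategy behind Theorem~\ref{thm:upper}, improving on Corollary~\ref{cor:patch} by allowing multi-hop connections. \emph{Stage~1:} show that with probability $1-o(1)$ every connected component of $G$ covers $[0,1]$ in the sense of Lemma~\ref{lem:lemma1}, i.e.\ every vertex has a vertex of its own component within distance $\frac{a\log n}{n}$ both to its left and to its right (so, marching rightwards, each component has vertices all the way around the circle with consecutive gaps at most $\frac{a\log n}{n}$). \emph{Stage~2:} glue any two covering components together. Stage~2 needs no new work: the local configuration of Lemma~\ref{lem:lemma2} uses only edges of length in $[\frac{b\log n}{n},\frac{a\log n}{n}]$, which are present in $G$, and exists with probability $1-o(1)$ for every $0<b<a$; and the gluing argument inside the proof of Theorem~\ref{thm:upper} uses only those long edges together with the covering property. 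Corollary~\ref{cor:patch} got Stage~1 from the crude requirement that \emph{every vertex has a direct edge on each side} — the two half-edge-sets have total length $(a-b)+c$, giving the condition $(a-b)+c>1$ — whereas here we only ask that each vertex reach a same-component vertex on each side in \emph{at most two hops}, which shrinks the governing length in the four regimes.

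For Stage~1 I would run the ``patch'' first-moment argument of Lemma~\ref{lem:lemma1}, now exploiting that a vertex $u$ (placed at $0$) reaches a vertex $v$ to its right within at most two hops whenever $v$ lies in a union of intervals inside $[0,\frac{a\log n}{n}]$ assembled from the disk part and the annulus part (all lengths below in units of $\frac{\log n}{n}$): a direct edge contributes $[0,c]\cup[b,a]$; a short-then-short hop contributes about $(c,2c]$; a short-then-long (or long-then-short) hop contributes about $[b-c,a+c]$; and a long-then-long hop with intermediate vertex on the right contributes $[0,a-b]$ (all intersected with $[0,a]$). Cut these into patches of size $\theta=(a-b)/2^{c_n}$, with $c_n\to\infty$ and $c_n=o(\log n)$ as in Lemma~\ref{lem:lemma1} (taking the common refinement when $c$ does not align with this grid), and let $X_u$ be the indicator that some particular ``blocking'' union of such patches within distance $\frac{a\log n}{n}$ of $u$ is vertex-free. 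As in that lemma one gets $\sum_u\avg X_u \le n\binom{O(2^{c_n})}{O(2^{c_n})}\big(1-\mu\tfrac{\log n}{n}\big)^{n-1}=n^{1-\mu+o(1)}$, where $\mu$ is the total length of the \emph{minimal} blocking region — the shortest union of patches whose emptiness really does leave $u$ with no $\le 2$-hop neighbour on one side. This is $o(1)$ exactly when $\mu>1$, and the four cases in the statement are precisely the four values $\mu$ takes: $\mu=a$ when the disk-reach and the annulus abut and one short hop bridges the residual gap ($a-b\ge c$ and $b\le 3c/2$), and $\mu=b-c$, $\mu=(a-b)+\tfrac{3c}{2}$, $\mu=2(a-b)+\tfrac{c}{2}$ in the three remaining regimes, sorted by whether $a-b\ge c$ and whether $b\le 3c/2$. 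Overlapping patches only shorten the blocking region, which is absorbed by taking the minimum with $a$ (exactly as in Lemma~\ref{lem:lemma1}, and exactly what produces $\mu=a$ rather than something larger in the abutting case). Finally, as in Lemma~\ref{lem:lemma1}, one runs this for a small constant family of candidate blocking regions, so that avoiding all of them forces a same-component neighbour on \emph{both} sides; Markov's inequality then finishes Stage~1, and Stage~2 completes the proof.

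The step I expect to be the main obstacle is the regime-by-regime bookkeeping in Stage~1: for each of the four configurations of $(a,b,c)$ one has to pick the right finite family of candidate blocking regions — small enough that avoiding all of them genuinely forces a $\le 2$-hop link on each side, yet with common minimal length exactly the claimed constant — and keep correct track of the overlaps among the disk-reach $[0,c]$, the short-hop reach $(c,2c]$, the mixed-hop reach $[b-c,a+c]$, and the annulus $[b,a]$; this overlap structure is precisely where the breakpoints $c$ and $3c/2$, and hence the four cases, come from. Checking that Stage~2 never imposes a condition stronger than Stage~1 in any of the regimes is routine.
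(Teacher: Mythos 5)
Your broad two-stage plan matches the paper's intent: Stage~1 establishes the covering-cycle property, and Stage~2 reuses the cycle-gluing argument of Theorem~\ref{thm:upper} via Lemma~\ref{lem:lemma2}, which indeed holds verbatim (it uses only the long edges, which are present). But your Stage~1 has a genuine gap, and you acknowledge it yourself. The paper does \emph{not} run a patch argument \`a la Lemma~\ref{lem:lemma1} here; patches are overkill because the corollary is not a sharp threshold. Instead the paper's proof uses four \emph{fixed} indicator variables $A_u^1,\dots,A_u^4$ (two per side by symmetry), each demanding a vertex in a fixed union of three intervals around $u$. The structural idea you are missing is that the two right-side indicators share the ``direct-edge'' piece $[0,c]\cup[b,a]$ but carry \emph{complementary} extra pieces in \emph{opposite} directions: $A_u^1$ adds $[-c,-c/2]$ to the \emph{left} of $u$, while $A_u^2$ adds $[b-c/2,a-c]$ to the \emph{right}. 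If both variables are zero, then either the common piece contains a vertex (direct right-neighbor), or both extra pieces contain vertices $v_1\in[-c,-c/2]$ and $v_2\in[b-c/2,a-c]$, whence $d(u,v_1)\le c$ (short edge) and $d(v_1,v_2)\in[b,a]$ (long edge), so $u$ reaches the right-side vertex $v_2$ in two hops, the first hop going \emph{backwards}.

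Your catalogue of one-sided two-hop reaches (short-then-short giving $(c,2c]$, long-then-long giving $[0,a-b]$, etc.) misses exactly this backward first hop, and with one-sided reaches only the four claimed constants do not fall out. The four cases of the corollary arise precisely from the four overlap patterns among the three intervals in $A_u^2$'s union, governed by whether $a-b\ge c$ (controls whether $[b-c/2,a-c]$ meets $[b,a]$) and whether $b\le 3c/2$ (controls whether it meets $[0,c]$), with $|A_u^1|=a-b+\tfrac{3c}{2}$ always and the bound being the minimum of the two lengths. You state the four $\mu$'s but derive none of them, and explicitly concede that the ``regime-by-regime bookkeeping'' is unresolved --- that bookkeeping, via the specific two-indicator construction above, is the content of the proof, not a routine detail.
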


%Using Lemma \ref{lem:patches} and Corollary \ref{cor:patches}, we get an improved condition for connectivity as defined by the following Lemma.  
%\begin{lemma}
%A graph $G(V,E)$ generated as a variant of the geometric block model where each node $u$ is connected to nodes $v$ iff $\dist(X_u,X_v)\in \left[0,c\frac{\log n}{n}\right]\cup \left[b\frac{\log n}{n},a\frac{\log n}{n}\right]$ is connected if $a-b+c>1$ or $b-c>0.5$ or\\
%$
%\begin{cases}
%2(a-b)+ c/2 > 1 \text{ when } a-b<c \text{ and }b>3c/2\\
%b-c > 1 \text{ when } a-b<c \text{ and }b\le 3c/2\\
%a > 1 \text{ when } a-b\ge c \text{ and } b\le 3c/2 \\
%(a-b)+ 3c/2  > 1 \text{ when } a-b\ge c \text{ and } b>3c/2\\
%\end{cases}$  
%\end{proof}

 \section{Connectivity of High Dimensional Random Annulus Graphs: Detailed Proofs of Theorems \ref{th:lb} and \ref{thm:highdem1}}
\label{sec:hrag-detail}
In this section we first prove an impossibility result on the connectivity of random annulus graphs in $t$ dimensions by showing a sufficient condition of existence of isolated nodes.
% value of a gap between $r_s$ and $r_d$ below which there exists at least one isolated node with a high probability.
 Next, we show that if the  gap between $r_1$ and $r_2$ is large enough then the RAG is fully connected. We will start by introducing a few notations. Let us define the regions $B_t(u,r)$ and $B_t(u,[r_1,r_2])$ for the any  $u\in S^t$ in the following way:
\begin{align*}
B_t(u,r)&=\{x \in S^t \mid \|u-x\|_2 \le r \} \\
B_t(u,[r_1,r_2])&=\{x \in S^t \mid  r_1 \le \|u-x\|_2 \le r_2 \}.
\end{align*}
%where $d(p,q)$ represents the Euclidean distance between two points $p$ and $q$.
First, we calculate  $|B_t(u,r)|$ and show that it is proportional to $r^t$. %It is one of the key lemmas that will be helpful  throughout the proof.
\begin{lemma}
$|B_t(u,r)|=c_tr^t$ for $r=o(1)$ where $c_t \approx \frac{\pi^{t/2}}{\Gamma (\frac{t}{2}+1) }$.
\label{lem:area}
\end{lemma}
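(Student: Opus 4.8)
The plan is to compute the spherical cap $B_t(u,r)$ exactly as an integral over the polar angle measured from $u$, and then extract the leading-order behavior as $r\to 0$. First I would set up coordinates so that $u$ is the ``north pole'' of $S^t$ and parametrize a point $x\in S^t$ by its angular distance $\phi\in[0,\pi]$ from $u$; the Euclidean chord length is $\|u-x\|_2 = 2\sin(\phi/2)$, so the constraint $\|u-x\|_2\le r$ becomes $\phi \le \phi_r := 2\arcsin(r/2)$. The (normalized) surface measure of the zone at angle $\phi$ is proportional to $\sin^{t-1}\phi\, d\phi$, so
\begin{align*}
|B_t(u,r)| = \frac{\int_0^{\phi_r} \sin^{t-1}\phi \,d\phi}{\int_0^{\pi} \sin^{t-1}\phi\, d\phi},
\end{align*}
where the denominator is the standard Wallis-type integral equal to $\sqrt{\pi}\,\Gamma(t/2)/\Gamma((t+1)/2)$.

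Next I would estimate the numerator for small $r$. Since $r=o(1)$ we have $\phi_r = 2\arcsin(r/2) = r(1+o(1))$, and on $[0,\phi_r]$ we have $\sin\phi = \phi(1+O(\phi^2))$, hence $\sin^{t-1}\phi = \phi^{t-1}(1+O(\phi^2))$. Integrating term by term gives $\int_0^{\phi_r}\sin^{t-1}\phi\,d\phi = \frac{\phi_r^t}{t}(1+O(\phi_r^2)) = \frac{r^t}{t}(1+o(1))$. Combining with the denominator,
\begin{align*}
|B_t(u,r)| = \frac{r^t}{t}\cdot\frac{\Gamma((t+1)/2)}{\sqrt{\pi}\,\Gamma(t/2)}\,(1+o(1)) = c_t r^t(1+o(1)),
\end{align*}
and a short manipulation with the duplication/recursion formulas for $\Gamma$ (using $\Gamma(t/2+1)=(t/2)\Gamma(t/2)$) rewrites the constant in the form $c_t \approx \frac{\pi^{t/2}}{\Gamma(t/2+1)}$ claimed in the statement; this is just the familiar fact that the surface area of $S^{t-1}$ relative to that of $S^t$ produces exactly this ratio, so the volume of a small geodesic ball of radius $r$ on $S^t$ matches that of a Euclidean ball of radius $r$ in $\reals^t$ to leading order.

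I do not expect a genuine obstacle here: the only care needed is (i) being explicit that rotational invariance of the uniform measure on $S^t$ lets us place $u$ at the pole and reduce to the one-dimensional angular integral, and (ii) tracking that all the error terms are genuinely $o(1)$ multiplicatively (not just $o(r^t)$ additively), which follows because $\phi_r\to 0$. The mild subtlety is reconciling the two expressions for the constant — the ``$=$'' form $\frac{\Gamma((t+1)/2)}{t\sqrt\pi\,\Gamma(t/2)}$ coming out of the integral versus the ``$\approx$'' form $\frac{\pi^{t/2}}{\Gamma(t/2+1)}$ in the lemma — but the lemma already hedges with $\approx$, and in any case both are $\Theta(1)$ constants depending only on $t$, which is all that is used subsequently.
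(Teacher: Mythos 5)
Your angular-integral setup is sound, and $\phi_r = 2\arcsin(r/2) = r(1+o(1))$ and $\int_0^{\phi_r}\sin^{t-1}\phi\,d\phi = \frac{r^t}{t}(1+o(1))$ are all correct. But there is a normalization error that your final ``short manipulation with $\Gamma$-identities'' cannot fix, because the two constants you are trying to reconcile are genuinely different — they differ by the factor $|S^t|$, not by an identity.

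Concretely, the quantity you wrote down,
\[
\frac{\int_0^{\phi_r}\sin^{t-1}\phi\,d\phi}{\int_0^{\pi}\sin^{t-1}\phi\,d\phi} \;=\; \frac{\Gamma\!\left(\frac{t+1}{2}\right)}{t\sqrt{\pi}\,\Gamma\!\left(\frac{t}{2}\right)}\,r^t\,(1+o(1)),
\]
is the \emph{normalized} cap fraction (a probability), whereas the lemma's $|B_t(u,r)|$ is the \emph{unnormalized} surface measure — the paper later divides by $|S^t|$ when converting it to a probability (see the expression $\Pr(A_u=1)=\bigl(1-\tfrac{c_t(r_2^t-r_1^t)}{|S^t|}\bigr)^{n-1}$ in the proof of Theorem~\ref{th:lb}). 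Your constant times $|S^t|=\tfrac{2\pi^{(t+1)/2}}{\Gamma((t+1)/2)}$ equals $\tfrac{\pi^{t/2}}{\Gamma(t/2+1)}$, so the two cannot be equal; a sanity check at $t=1$ gives $1/\pi$ versus $2$. Dismissing this as ``both are $\Theta(1)$'' is also not safe here, since the precise constant feeds directly into the threshold $\psi(t)=\tfrac{\sqrt{\pi}(t+1)\Gamma((t+2)/2)}{\Gamma((t+3)/2)}$ in Theorem~\ref{th:lb}. The fix is easy: do not divide by $\int_0^\pi\sin^{t-1}\phi\,d\phi$; instead multiply the numerator by the surface area of the $(t-1)$-sphere of latitude circles, $|S^{t-1}|=\tfrac{2\pi^{t/2}}{\Gamma(t/2)}$, giving
\[
|B_t(u,r)| = |S^{t-1}|\int_0^{\phi_r}\sin^{t-1}\phi\,d\phi = \frac{2\pi^{t/2}}{\Gamma(t/2)}\cdot\frac{r^t}{t}(1+o(1)) = \frac{\pi^{t/2}}{\Gamma(t/2+1)}\,r^t\,(1+o(1)),
\]
which is exactly the lemma's constant. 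With that correction your argument matches the paper's in substance; the paper simply quotes the cap-area formula in the $\int_0^{\tan\theta}S_{t-1}(r)/(1+r^2)^2\,dr$ parametrization and Taylor-expands, which is the same computation in a different variable.
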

\begin{proof}
We use the following fact from (\cite{larsen2017improved,li2011concise}) for the proof. For a $t$-dimensional unit sphere, the hyperspherical cap of angular radius $\theta= \max_{x \in B_t(u,r)} \arccos \langle x,u \rangle$ has a surface area $C_t(\theta)$ given by 
\begin{align*}
C_t(\theta)=\int_{0}^{\tan \theta} \frac{S_{t-1}(r)}{(1+r^{2})^{2}}dr 
\end{align*}
where $S_{t-1}(\theta)=\frac{t\pi^{t/2}}{\Gamma (\frac{t}{2}+1) }\theta^{t-1}$. Note that $C_t(\theta)$ is nothing but $|B_t(u,r)|$ where $\cos \theta=1-\frac{r^{2}}{2}$ and therefore $\tan \theta=\frac{r \sqrt{4-r^{2}}}{2-r^{2}} \approx r$ for small $r$. Now since $r=o(1)$ and $1+r^{2}$ is an increasing function of $r$, we must have that 
\begin{align*}
\int_{0}^{r} \frac{t\pi^{t/2}}{(1+o(1))\Gamma (\frac{t}{2}+1) }\theta^{t-1} d\theta  < C_t(\theta) <  \int_{0}^{r} \frac{t\pi^{t/2}}{\Gamma (\frac{t}{2}+1) }\theta^{t-1} d\theta
\end{align*}
and therefore $C_t(\theta)$ can be expressed as $c_t r^t$ where $c_t$ lies in $\Big(\frac{\pi^{t/2}}{(1+o(1))\Gamma (\frac{t}{2}+1) },\frac{\pi^{t/2}}{\Gamma (\frac{t}{2}+1) } \Big)$.
\end{proof}

\subsection{Impossibility Result}
The following theorem proves the impossibility result for the connectivity of a random annulus graph by proving a tight threshold for the presence of an isolated node with high probability.
\begin{theorem*}(\ref{th:lb})
For a random annulus graph ${\rm RAG}_t(n,[r_1,r_2])$ where $r_1=b \Big(\frac{ \log n}{n}\Big)^{\frac{1}{t}}$ and $r_2=a\Big(\frac{\log n}{n}\Big)^{\frac{1}{t}}$, there exists isolated nodes with high probability if and only if 
$a^t -b^t < \frac{\sqrt{\pi}(t+1)\Gamma(\frac{t+2}{2})}{\Gamma(\frac{t+3}{2})}$.
%$a^t -b^t < \frac{|S^t|}{c_t}$ where $\frac{c_t}{|S^t|}=\frac{\Gamma(\frac{t+3}{2})}{\sqrt{\pi}(t+1)\Gamma(\frac{t+2}{2})}$.
\end{theorem*}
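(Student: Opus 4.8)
The plan is the standard first/second moment method for isolated vertices, directly paralleling the one-dimensional lower bound (Theorem~\ref{thm:lower_bound}). Write $Z$ for the number of isolated vertices of ${\rm RAG}_t(n,[r_1,r_2])$ with $r_1=b(\tfrac{\log n}{n})^{1/t}$ and $r_2=a(\tfrac{\log n}{n})^{1/t}$.

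First I would pin down the probability that a fixed vertex $u$ is isolated, i.e.\ that none of the other $n-1$ random points lands in the annulus $B_t(u,[r_1,r_2])$. Viewing the annulus as the union of spherical shells at chord-radius between $r_1$ and $r_2$, the computation behind Lemma~\ref{lem:area} gives that its surface area equals $(1+o(1))\,c_t(r_2^t-r_1^t)$ with $c_t=\tfrac{\pi^{t/2}}{\Gamma(t/2+1)}$ and the $o(1)$ uniform in $u$ (since $r_1,r_2=o(1)$). Dividing by the total surface area $|S^t|=\tfrac{2\pi^{(t+1)/2}}{\Gamma((t+1)/2)}$, a uniform point lies in $B_t(u,[r_1,r_2])$ with probability
$$
p \;=\; (1+o(1))\,\frac{c_t\,(r_2^t-r_1^t)}{|S^t|}\;=\;(1+o(1))\,\frac{a^t-b^t}{\psi(t)}\cdot\frac{\log n}{n},
$$
where the key identity $|S^t|/c_t=\psi(t)$ is verified using $\Gamma(\tfrac{t}{2}+1)=\Gamma(\tfrac{t+2}{2})$ and $2\Gamma(\tfrac{t+3}{2})=(t+1)\Gamma(\tfrac{t+1}{2})$. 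Hence $\Pr(u\text{ isolated})=(1-p)^{n-1}=n^{-(1+o(1))(a^t-b^t)/\psi(t)}$, and therefore $\Ex[Z]=n^{\,1-(1+o(1))(a^t-b^t)/\psi(t)}$.

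When $a^t-b^t>\psi(t)$ this gives $\Ex[Z]\to 0$, so Markov's inequality yields $\Pr(Z\ge 1)=o(1)$: no isolated vertex with probability $1-o(1)$. When $a^t-b^t<\psi(t)$ we have $\Ex[Z]=n^{\Omega(1)}\to\infty$, and it remains to bound $\Var(Z)$. Using $\Var(Z)\le \Ex[Z]+\sum_{u\ne v}\big(\Pr(u,v\text{ both isolated})-\Pr(u)\Pr(v)\big)$, I would split each pair on $\|X_u-X_v\|_2$. If $\|X_u-X_v\|_2>2r_2$, the triangle inequality for the chord metric forces $B_t(u,[r_1,r_2])$ and $B_t(v,[r_1,r_2])$ to be disjoint, so the conditional both-isolated probability is $(1-2p)^{n-2}$; since $(1-2p)<(1-p)^2$ this is at most $(1-p)^{2(n-2)}=(1+O(p))\Pr(u)\Pr(v)$, contributing at most $O(p)\Pr(u)\Pr(v)$ to the covariance. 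If $\|X_u-X_v\|_2\le 2r_2$ — which happens with probability $O(\tfrac{\log n}{n})$ by Lemma~\ref{lem:area} — the both-isolated probability is at most $O(\tfrac{\log n}{n})\cdot(1+o(1))\Pr(u\text{ isolated})$. Summing over the $\binom{n}{2}$ pairs and using $\Pr(u\text{ isolated})=\Ex[Z]/n$ gives $\Var(Z)=O(\tfrac{\log n}{n})\,\Ex[Z]^2+O(\log n)\,\Ex[Z]$, and Chebyshev's inequality yields $\Pr(Z=0)\le \Var(Z)/\Ex[Z]^2=O(\tfrac{\log n}{n})+O\big(\tfrac{\log n}{\Ex[Z]}\big)=o(1)$, so an isolated vertex exists with probability $1-o(1)$.

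The two moment estimates are routine; the step I expect to be the real obstacle is the geometry in the second paragraph above — establishing that the annulus measure on $S^t$ equals $(1+o(1))c_t(r_2^t-r_1^t)$ \emph{uniformly} in $u$ with a genuine $o(1)$ (so that $np=(1+o(1))\tfrac{a^t-b^t}{\psi(t)}\log n$ is not corrupted when raised to the power $n$), which is exactly why Lemma~\ref{lem:area} must be stated with a two-sided estimate on $c_t$ rather than just ``proportional to $r^t$'', and then checking that the threshold constant comes out to be precisely $\psi(t)$ via the gamma-function identities above.
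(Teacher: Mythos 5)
Your proposal is correct and follows essentially the same route as the paper: the paper likewise computes $\Pr(A_u=1)=\bigl(1-\tfrac{c_t(r_2^t-r_1^t)}{|S^t|}\bigr)^{n-1}$, identifies $|S^t|/c_t=\psi(t)$ via the same gamma identities, and proves the existence of isolated vertices by bounding the variance of the isolated-vertex count with exactly the split on $\|X_u-X_v\|_2\gtrless 2r_2$ you describe, finishing by Chebyshev; the converse direction is an immediate first-moment/Markov argument, as you note. Your flag on uniformity of the annulus-volume estimate is also exactly why the paper's Lemma~\ref{lem:area} gives a two-sided bound on $c_t$ rather than a mere proportionality.
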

\begin{proof}
Consider the random annulus graph ${\rm RAG}_t(n,[r_1,r_2])$ in $t$ dimensions. In this graph, a node $u$ is isolated if there are no nodes $v$ such that $ r_1 \le \|u-v\|_2 \le r_2$. Since all nodes are uniformly and randomly distributed on $S^t$, the probability of a node $v$ being connected to a node $u$ is the volume of $B_t(u,[r_1,r_2])$. Define the indicator random variable $A_u \in \{0,1\}$ which is $1$ if and only if the node $u$ is isolated. Also define the random variable $A=\sum_u A_u$ which denotes the total number of isolated nodes. Since $|B_t(u,[r_1,r_2])|=c_t(r_2^t-r_1^t)$, we must have
\begin{align*}
\Pr(A_u=1)=\Bigg(1-\frac{c_t\Big(r_2^t-r_1^t \Big)}{|S^t|}\Bigg)^{n-1}.
\end{align*}
Now, we know from (\cite{surfacensphere}) that  
$|S^t|=\frac{(t+1)\pi^{\frac{t+1}{2}}}{\Gamma(\frac{t+3}{2})}$. Plugging in, we get that $\frac{c_t}{|S^t|}=\frac{\Gamma(\frac{t+3}{2})}{\sqrt{\pi}(t+1)\Gamma(\frac{t+2}{2})}$.
Hence, the expected number of isolated nodes $\avg A$ is going to be 
\begin{align*}
n\Bigg(1-\Big(a^t-b^t\Big)\frac{c_t}{|S^t|}\frac{\log n}{n}\Bigg)^{n-1} \approx n^{1- \frac{c_t(a^t-b^t)}{|S^t|}}.
\end{align*}
Therefore $\avg [A] \ge 1$ if $a^t-b^t <\frac{|S^t|}{c_t}$. In order to show that $A =\omega(1)$ with high probability we are going to show that the variance of $A$ is bounded from above. Since $A$ is a sum of indicator random variables, we have that 

$${\rm Var}(A) \le \avg[A]+\sum_{u \neq v} {\rm Cov}(A_u,A_v)=\avg[A]+\sum_{u \neq v} (\Pr(A_u=1 \cap A_v=1)-\Pr(A_u=1)\Pr(A_v=1)).$$ 
Now, consider the scenario when the vertices $u$ and $v$ are at a distance more than $2r_2$ apart which happens with probability $1-\frac{c_t(2r_2)^t}{|S^t|}$. Then the region  in which every point is within a distance of $r_2$ and $r_1$ from both $u,v$ is empty and therefore $\Pr(A_u=1 \cap A_v=1) =  
\Bigg(1-\frac{2c_t}{|S^t|}\Big(a^t-b^t\Big)\frac{\log n}{n}\Bigg)^{n-2}$.
%\Pr(A_u=1)\Pr(A_v=1) = (\Pr(A_u=1))^2$.  
When the vertices  are within distance $2r_2$ of one another, then
$
\Pr(A_u=1 \cap A_v=1) \le \Pr(A_u=1).
$
Therefore,
\begin{align*}
\Pr(A_u=1 \cap A_v=1) \le &\Big(1-\frac{c_t(2r_2)^t}{|S^t|} \Big) \Big(1- \frac{2c_t}{|S^t|}\Big(a^t-b^t\Big)\frac{\log n}{n}\Big)^{n-2} + \frac{c_t(2r_2)^t}{|S^t|} \Pr(A_u=1)\\ 
&\le  (1-\frac{c_t(2r_2)^t}{|S^t|}) n^{-\frac{2c_t}{|S^t|}(a^t-b^t)+o(1)}+ \frac{c_t(2r_2)^t}{|S^t|} n^{-\frac{c_t(a^t-b^t)}{|S^t|}+o(1)}.
\end{align*}
Consequently for large enough $n$,
\begin{align*}
\Pr(A_u=1 \cap A_v=1)-\Pr(A_u=1)\Pr(A_v=1) &\le (1-\frac{c_t(2r_2)^t}{|S^t|}) n^{-\frac{2c_t(a^t-b^t)}{|S^t|}+o(1)} \\+ \frac{c_t(2r_2)^t}{|S^t|} n^{-\frac{c_t(a^t-b^t)}{|S^t|}+o(1)}  -& n^{-\frac{2c_t(a^t-b^t)}{|S^t|} +o(1)} 
\le  \frac{2c_t(2r_2)^t}{|S^t|}\Pr(A_u=1).
\end{align*}
Now,
$$
{\rm Var}(A) \le \avg[A] + 2\binom{n}{2}\frac{c_t(2r_2)^t}{|S^t|}\Pr(A_u=1) \le \avg[A](1+ \frac{c_t(2a)^t}{|S^t|}  \log n).
$$
By using Chebyshev bound, with probability at least $1-O\Big(\frac{1}{\log n}\Big)$, 
%we can upper bound the probability $\Pr(A_u=1 \cap A_v=1)$ by taking the regions of interest to be the same. Hence
%$$ \Pr(A_i=1 \cap A_j=1) \le \frac{4a \log n}{n}n^{-2(a-b)}+\bigg(1-\frac{4a \log n}{n}\bigg)n^{-4(a-b)}$$ and since $\Pr(A_i=1)\Pr(A_j=1)=n^{-4(a-b)}$, we have that
%$$ {\rm Var}(A) < n^{1-2(a-b)}+\frac{n^{2}}{2}\frac{4a \log n}{n}(n^{-2(a-b)}-n^{-4(a-b)})$$  
%Simplifying, we have that ${\rm Var}(A) < n^{1-2(a-b)}(1+2a\log n)$ which gives us using Chebychev that
$$A >n^{1-\frac{c_t(a^t-b^t)}{|S^t|}}-\sqrt{n^{1-\frac{c_t(a^t-b^t)}{|S^t|}}(1+\frac{c_t(2a)^t}{|S^t|} \log n)\log n},$$
which implies that for $a^t-b^t < \frac{|S^t|}{c_t}$, $A>1$ and hence there will exist isolated nodes with high probability.  
\end{proof}

 \subsection{Connectivity Bound}
We show the upper bound for connectivity of a Random Annulus Graphs in $t$ as per Theorem \ref{thm:highdem1}, rewritten below.

\begin{theorem*}(\ref{thm:highdem1})
For $t$ dimensional random annulus graph ${\rm RAG}_t(n,[r_1,r_2])$ where $r_2=a\Big(\frac{\log n}{n}\Big)^{t}$ and $r_1=b\Big(\frac{\log n}{n}\Big)^{t}$ with $a\ge b$ and $t$ is a constant, the graph is connected with high probability if 
\begin{align*}
a^t-b^t \ge \frac{8|S^t|(t+1)}{c_t\Big(1 -  \frac{1}{{2^{1+1/t} - 1}}\Big)} \text{  and  }  a>2^{{1}+\frac{1}{t}}b.
\end{align*} 
\end{theorem*}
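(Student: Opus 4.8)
The overall strategy is a \emph{pole-and-monotone-path} argument, following the sketch in Section~\ref{sec:hrag}. I will establish the three auxiliary facts --- existence of a pole (Lemma~\ref{lem:pole}), the ``neighbor on both sides of every admissible hyperplane'' property (Lemma~\ref{lem:high_stuff1}), and the geometric statement that a vertex whose equatorial cut $\mathcal{A}_{L^{\star}_u}$ lies close to itself must be close to the pole (Lemma~\ref{lem:cut_twice}) --- and then glue them exactly as in the excerpt: after rotating coordinates so the pole sits at $u_0=(1,0,\dots,0)$, any vertex $u$ not already adjacent to the pole is forced, by Lemmas~\ref{lem:cut_twice} and~\ref{lem:high_stuff1}, to have a neighbor with strictly larger first coordinate; iterating along strictly increasing first coordinates over a finite vertex set must terminate, and it can only terminate at a vertex within distance $r_2=a(\log n/n)^{1/t}$ of the pole, hence adjacent to it. So every vertex connects to the pole and the graph is connected.

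For Lemma~\ref{lem:pole} I would run the standard second-moment argument on the number of vertices with an empty inner ball: the probability that a fixed vertex has no other vertex in $B_t(\cdot,r_1)$ is $\big(1-c_tb^t\tfrac{\log n}{n}/|S^t|\big)^{n-1}=n^{-b^t/\psi(t)-o(1)}$, using $c_t/|S^t|=1/\psi(t)$ from Lemma~\ref{lem:area} and the formula for $|S^t|$, so the expected number of poles is $n^{1-b^t/\psi(t)-o(1)}\to\infty$ in the regime of interest ($b^t<\psi(t)$). The variance is then controlled exactly as in the proof of Theorem~\ref{thm:lower_bound} --- splitting on whether two candidate poles are more than $2r_1$ apart --- and Chebyshev finishes.

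Lemma~\ref{lem:high_stuff1} is the crux and the step I expect to be the main obstacle, because the ``bad'' events range over a continuum of base points $u$ and hyperplanes $L\ni u$, so no union bound is available. I would proceed in three steps. (i) \emph{Geometry:} show that whenever $\mathcal{A}_L$ is not entirely within $r_2$ of $u$, the hyperplane is ``transversal enough'' that the smaller of $\mathcal{R}_L^1,\mathcal{R}_L^2$ still has volume at least $\gamma\cdot|B_t(u,[r_1,r_2])|$ with $\gamma=1-\tfrac{1}{2^{1+1/t}-1}$; in the local tangent-space picture $B_t(u,r_2)$ is an almost-flat $t$-ball centered at $u$ and $\mathcal{R}_L^2$ is the portion of the annulus cut off by a ball of radius $\|w\|\ge r_2/2$ passing through $u$, and the hypothesis $a>2^{1+1/t}b$ guarantees the inner ball $B_t(u,r_1)$ is small enough not to eat this cap. (ii) \emph{Complexity:} bound the VC dimension of the set system $\{\mathcal{R}_L^1\}\cup\{\mathcal{R}_L^2\}$ (over all admissible $(u,L)$) by $O(t)$, and by $t+1$ for fixed $u$ --- these sets are intersections of fixed-shape annuli with halfspaces, and halfspaces in $\reals^{t+1}$ have VC dimension $t+1$. (iii) \emph{Sampling:} invoke the continuous $\varepsilon$-net / packing theorem (Theorem~\ref{thm:VC_dim}, the continuous version of Haussler's bound): for $n$ uniform samples, with probability $\ge 1-1/n$ every member of a bounded-VC family whose normalized measure is at least $\tfrac{8(t+1)\log n}{n}$ contains a sample. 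Here each admissible $\mathcal{R}_L^i$ has normalized measure at least $\gamma\,\tfrac{c_t(a^t-b^t)}{|S^t|}\tfrac{\log n}{n}=\gamma\,\tfrac{a^t-b^t}{\psi(t)}\tfrac{\log n}{n}$, so the hypothesis $a^t-b^t\ge \tfrac{8(t+1)\psi(t)}{\gamma}$ --- which is precisely the stated threshold --- makes every admissible $\mathcal{R}_L^i$ contain a vertex, i.e.\ $u$ has a neighbor on each side of $L$.

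Lemma~\ref{lem:cut_twice} is a one-line spherical computation: $\mathcal{A}_{L^{\star}_u}$ is a $(t-1)$-sphere of radius $\sqrt{1-u_1^2}$ through $u$, so its point farthest from $u$ is at chord distance $2\sqrt{1-u_1^2}$; the hypothesis forces $u_1^2\ge 1-r_2^2/4$, whence $1-u_1\le 1-\sqrt{1-r_2^2/4}\le r_2^2/4$ and $\|u-u_0\|_2^2=2(1-u_1)\le r_2^2/2\le r_2^2$. In the final assembly I use that the pole has no vertex in its inner ball, so ``$u$ not adjacent to the pole'' forces $\|u-u_0\|_2>r_2$ and hence, by the contrapositive of Lemma~\ref{lem:cut_twice}, that $L^{\star}_u$ is admissible; Lemma~\ref{lem:high_stuff1} then supplies the neighbor with larger first coordinate, and the finiteness of $V$ closes the induction.
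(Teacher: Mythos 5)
Your overall architecture --- existence of a pole, a VC-dimension argument showing every vertex has a neighbor on both sides of every admissible hyperplane, the equatorial-cut lemma, and the monotone path in the first coordinate terminating at the pole --- is exactly the paper's proof, and your treatments of Lemma~\ref{lem:high_stuff1} (the ratio bound $\gamma=1-\tfrac{1}{2^{1+1/t}-1}$ from $a>2^{1+1/t}b$, VC dimension at most $t+1$ for fixed $u$, the continuous Haussler-type sampling theorem, and a union bound over the $n$ base points) and of Lemma~\ref{lem:cut_twice} match the paper's essentially verbatim. The gap is in your proof of Lemma~\ref{lem:pole}. You take a pole to be a vertex with an empty inner ball $B_t(\cdot,r_1)$ and compute that the expected number of such vertices is $n^{1-b^t/\psi(t)-o(1)}$, which diverges only when $b^t<\psi(t)$. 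But the hypotheses of the theorem do not imply this: they constrain only $a^t-b^t$ and the ratio $a/b$, so $b$ may be an arbitrarily large constant (for instance $a=3\cdot 2^{1/t}\,b$ with $b$ huge satisfies both $a>2^{1+1/t}b$ and the gap condition). For such $b$, every vertex has $\Theta(b^t\log n)\gg 1$ other vertices in its inner ball with high probability, so your candidate poles do not exist, and the clause ``in the regime of interest'' does not rescue the argument.

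The paper closes exactly this gap with a different construction (Lemmas~\ref{lem:pole_helper} and~\ref{lem:lemma2pole}): its pole is a vertex that is \emph{graph-connected}, not necessarily adjacent, to everything within $r_2$ of itself. By a second-moment argument one exhibits a vertex $u_0$ together with $2k$ vertices, $k=\lceil b/(a-b)\rceil+1$, lying in prescribed balls of radius $\epsilon(\log n/n)^{1/t}$ along a great circle through $u_0$; the annuli centered at these constellation points overlap and jointly cover all of $B_t(u_0,r_2)$, including the inner ball, so every vertex within $r_2$ of $u_0$ reaches $u_0$ in a few hops. This is the high-dimensional analogue of Lemma~\ref{lem:lemma2}. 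Your final assembly still goes through with this weaker notion of pole (replace ``adjacent to the pole'' by ``within $r_2$ of the pole, hence connected to it''), but as written your pole-existence step fails for large $b$ and must be replaced by the chain construction.
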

Let us define a \emph{pole} to be a vertex  which is connected to all vertices within a distance of $r_2$ from itself. 
In order to prove  Theorem \ref{thm:highdem1}, we first show the existence of a pole with high probability in Lemma \ref{lem:pole}. Next, Lemma \ref{lem:high_stuff1} shows that for every vertex $u$ and every hyperplane $L$ passing through $u$ and not too close to the tangential hyperplane at $u$, there will be a neighbor of $u$ on either side of the plane. In order to formalize this, let us define a few regions associated with a node $u$ and a hyperplane $L:w^{T}x=\beta$ passing through $u$.

\begin{align*}
\mathcal{R}_{L}^1 &\equiv \{x \in S^t \mid b\Big(\frac{\log n}{n}\Big)^{1/t} \le  \|u-x\|_2 \le a\Big(\frac{\log n}{n}\Big)^{1/t}, w^{T}x \le \beta \} \\
\mathcal{R}_{L}^2 &\equiv \{x \in S^t \mid b\Big(\frac{\log n}{n}\Big)^{1/t} \le  \|u-x\|_2 \le a\Big(\frac{\log n}{n}\Big)^{1/t}, w^{T}x \ge \beta \} \\
\mathcal{A}_{L} & \equiv \{x \mid x \in \mathcal{S}^t, \quad w^{T}x=\beta \}.
\end{align*}

Informally, $\mathcal{R}_{L}^1$ and $\mathcal{R}_{L}^2$ represents the partition of the region $B_t(u,[r_1,r_2])$ on either side of the hyperplane $L$ and $\mathcal{A}_L$ represents the region on the sphere lying on $L$. %We also need to define the point $v_L$ which represents the farthest point from the node $u$ lying on both the sphere and $L$:
%\begin{align*}
%v_{L}= \arg\sup_{x \in S^t \cap L} d(u,x). 
%\end{align*} 

\begin{lemma*}(\ref{lem:pole})
In  ${\rm RAG}_t\left(n, \left[b\left(\frac{\log n}{n}\right)^{1/t},a\left(\frac{\log n}{n}\right)^{1/t}\right]\right), 0 <b <a$, with  probability $1-o(1)$ there exists a pole.
\end{lemma*}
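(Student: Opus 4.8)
The plan is to prove this by the second moment method on the number of poles, in close analogy with the proof of Theorem~\ref{th:lb}. First I would make the key geometric observation that, by the definition of an edge in a random annulus graph, a vertex $u$ is a pole precisely when no other vertex lies strictly inside the inner ball $B_t(u,r_1)$ (the ``hole'' of the annulus around $u$), where $r_1=b(\log n/n)^{1/t}$; in particular the pole condition depends only on $r_1$, not on $r_2$. Since the remaining $n-1$ points are i.i.d.\ uniform on $S^t$ and $|B_t(u,r_1)|=c_t r_1^t$ by Lemma~\ref{lem:area}, we get
\[
\Pr(u\text{ is a pole})=\Big(1-\frac{c_t r_1^t}{|S^t|}\Big)^{n-1}=\Big(1-\frac{c_t b^t}{|S^t|}\cdot\frac{\log n}{n}\Big)^{n-1}.
\]
Using $|S^t|=\frac{(t+1)\pi^{(t+1)/2}}{\Gamma((t+3)/2)}$ and $c_t\approx\frac{\pi^{t/2}}{\Gamma(t/2+1)}$, exactly as in the proof of Theorem~\ref{th:lb}, one has $c_t/|S^t|=1/\psi(t)$, so the right-hand side equals $n^{-b^t/\psi(t)+o(1)}$. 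Hence if $A$ denotes the number of poles and $A_u$ the indicator that $u$ is a pole, $\avg[A]=n^{1-b^t/\psi(t)+o(1)}$, which grows polynomially in $n$ provided $c_t b^t/|S^t|<1$, i.e.\ $b^t<\psi(t)$ --- the regime in which the lemma is invoked (in particular this holds whenever $b$ is a fixed constant, since then $c_tb^t/|S^t|\to 0$).

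Next I would bound the variance. Writing $\Var(A)\le\avg[A]+\sum_{u\ne v}\big(\Pr(A_u=1\cap A_v=1)-\Pr(A_u=1)\Pr(A_v=1)\big)$, I split on $d(u,v)$: when $d(u,v)>2r_1$ the balls $B_t(u,r_1)$ and $B_t(v,r_1)$ are disjoint and contain neither $u$ nor $v$, so $\Pr(A_u=1\cap A_v=1)=(1-2c_tr_1^t/|S^t|)^{n-2}$, which differs from $\Pr(A_u=1)\Pr(A_v=1)$ only by lower-order terms; when $d(u,v)\le 2r_1$ --- an event of probability $\frac{c_t(2r_1)^t}{|S^t|}=O(\log n/n)$ --- I simply use $\Pr(A_u=1\cap A_v=1)\le\Pr(A_u=1)$. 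This is the same bookkeeping as in the proof of Theorem~\ref{th:lb} with $r_2^t-r_1^t$ replaced throughout by $r_1^t$, and it yields $\Var(A)\le\avg[A]\big(1+O(\log n)\big)$, hence $\Var(A)=o(\avg[A]^2)$. Chebyshev's inequality then gives $A\ge 1$ (indeed $A=\omega(1)$) with probability $1-O(1/\log n)=1-o(1)$, proving the lemma.

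I do not expect a serious obstacle here: the whole argument is the first/second-moment computation of Theorem~\ref{th:lb} with the annulus $[r_1,r_2]$ replaced by the ball $[0,r_1]$, and the only conceptual step is recognizing that being a pole is equivalent to having an empty inner ball $B_t(u,r_1)$. The one place that needs a word of care is checking that the hypotheses in force guarantee $b^t<\psi(t)$, so that $\avg[A]\to\infty$ and the second-moment conclusion is nonvacuous; once that is granted, the covariance estimate is routine.
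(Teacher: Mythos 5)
Your proof rests on a reading of the word ``pole'' that the paper does not use, and this matters for the correctness of the argument. You treat a pole as a vertex $u$ that is \emph{directly adjacent} to every other vertex within distance $r_2$ of it, which is indeed equivalent to the inner ball $B_t(u,r_1)$ being empty. But in this paper a pole only needs to be in the same \emph{connected component} as every vertex within distance $r_2$; that is all that is used in the proof of Theorem~\ref{thm:highdem1} (``at some point one of the nodes is going to be within $r_s$ of the pole and will be connected to the pole''). The paper constructs such a pole via Lemmas~\ref{lem:lemma2pole} and \ref{lem:pole_helper}: a second-moment argument exhibits a vertex $u_0$ together with a constellation of $2k$ witnesses $u_1,\dots,u_k,v_1,\dots,v_k$ (each localized in a tiny ball of radius $\epsilon(\log n/n)^{1/t}$, with $\epsilon$ chosen freely small), all directly adjacent to $u_0$, whose own annular ``regions of connectivity'' overlap and cover $B_t(u_0,r_2)$; any nearby vertex is then connected to $u_0$ by a short path.

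The concrete gap in your approach is the constraint $b^t<\psi(t)$. Your first-moment computation gives $\avg[A]\approx n^{1-b^t/\psi(t)}$, which tends to $0$ once $b^t\ge\psi(t)$, so ``inner-ball-empty'' vertices simply do not exist in that regime. The lemma, however, is stated for all $0<b<a$, and in Theorem~\ref{thm:highdem1} the only upper constraint is the ratio $a>2^{1+1/t}b$ (together with a lower bound on $a^t-b^t$) --- $b$ itself can be an arbitrarily large constant. For example, for $t=1$ (where $\psi(1)=\pi$), the pair $b=10$, $a=100$ satisfies the hypotheses of Theorem~\ref{thm:highdem1}, yet your pole candidates have vanishing expectation. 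Your parenthetical claim that $c_t b^t/|S^t|\to 0$ ``whenever $b$ is a fixed constant'' is also incorrect: $c_t$, $|S^t|$, $b$, and $t$ are all fixed, so $c_t b^t/|S^t|=b^t/\psi(t)$ is a fixed positive constant. The paper's more elaborate constellation construction is exactly what sidesteps this: the probability of the target configuration scales as $n^{-2k\epsilon^t c_t/|S^t|}(\epsilon^t\log n)^{2k}$, and $\epsilon$ can always be taken small enough that the exponent is strictly less than $1$, independent of the size of $b$.
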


\begin{lemma*}(\ref{lem:high_stuff1})
If we sample $n$ nodes from $S^t$ according to ${\rm RAG}_t\left(n,\left[b\left(\frac{\log n}{n}\right)^{1/t},a\left(\frac{\log n}{n}\right)^{1/t}\right]\right)$, then for every node $u$ and every hyperplane $L$ passing through $u$ such that $\mathcal{A}_L \not \subset B_t(u,a\left(\frac{\log n}{n}\right)^{1/t})$, node $u$ has a neighbor on both sides of the hyperplane $L$ with probability at least $1-\frac{1}{n}$ provided 
\begin{align*}
a^t-b^t \ge \frac{8|S^t|(t+1)}{c_t\Big(1 -  \frac{1}{{2^{1+1/t} - 1}}\Big)}
\end{align*}  and $a>2^{1+\frac{1}{t}}b$.
\end{lemma*}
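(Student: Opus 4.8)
The plan is to fix a vertex $u$, condition on its position $X_u$, and reason about the remaining $n-1$ i.i.d.\ uniform points \emph{simultaneously over all admissible hyperplanes} $L$ through $u$, using a VC‑dimension ($\varepsilon$‑net) argument rather than a union bound over the uncountable family of hyperplanes. Rotate coordinates so $u$ is a fixed point, say $u=e_1$, and write an admissible hyperplane as $L:w^{T}x=w^{T}u$ with unit normal $w=\cos\gamma\,e_1+\sin\gamma\,v$, $v\perp e_1$; here $\gamma$ measures how far $L$ is from the tangent hyperplane at $u$, and an elementary computation gives $\max_{x\in\mathcal{A}_L}d(u,x)=2\sin\gamma$, so the admissibility condition $\mathcal{A}_L\not\subset B_t(u,r_2)$ is exactly $2\sin\gamma>r_2$, i.e.\ $\gamma$ is bounded below by $\arcsin(r_2/2)$.

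The first step is the key geometric estimate: for every admissible $L$, both $|\mathcal{R}^1_L|$ and $|\mathcal{R}^2_L|$ are at least $\lambda(t)\,|B_t(u,[r_1,r_2])|$ with the explicit constant $\lambda(t)=1-\tfrac{1}{2^{1+1/t}-1}>0$. Writing a point of the sphere near $u$ in ``chordal radius $r$ $\times$ direction'' form with azimuthal coordinate $s=x_2/\sqrt{1-x_1^2}\in[-1,1]$, such a point lies in $\mathcal{R}^1_L$ iff $s\le \tfrac{r\cot\gamma}{2}$ up to lower‑order curvature terms; hence $\mathcal{R}^1_L$ always contains a full hemisphere of every spherical shell and $|\mathcal{R}^1_L|\ge\tfrac12|B_t(u,[r_1,r_2])|$. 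The delicate side is $\mathcal{R}^2_L$, which is most lopsided when $\gamma$ is as small as admissibility allows ($2\sin\gamma\approx r_2$); there $\mathcal{R}^2_L$ is the annulus intersected with a Euclidean ball through $u$ of radius $\approx r_2/2$, and I would lower bound its volume by restricting to an inner sub‑annulus $r_1\le\|\cdot\|\le\rho$ and a fixed spherical cap of directions — optimizing the choice of $\rho$ is where the hypothesis $a>2^{1+1/t}b$ enters, guaranteeing $\rho^{t}-r_1^{t}$ is a definite fraction of $r_2^{t}-r_1^{t}$ — which yields $\lambda(t)$. Together with Lemma~\ref{lem:area}, every such region has volume at least $\lambda(t)\,c_t(a^t-b^t)\tfrac{\log n}{n}$.

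The second step bounds the VC dimension and invokes the continuous Haussler bound. Given $u$, the annulus $B_t(u,[r_1,r_2])$ is a \emph{fixed} set, and the $\mathcal{R}^i_L$ are exactly its intersections with closed half‑spaces whose bounding hyperplane passes through $u$; half‑spaces through a fixed point of $\reals^{t+1}$, as subsets of $S^t$, form a range space of VC dimension at most $t+1$ (the incidence with $u$ removes one degree of freedom from the $t{+}2$ of general half‑spaces), intersecting with a fixed set cannot increase VC dimension, and restricting to admissible $L$ only shrinks the family; also, flipping $w\mapsto -w$ shows every member of $\mathcal{F}_u=\{\mathcal{R}^i_L:L\text{ admissible},\,i=1,2\}$ has the above volume lower bound. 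Thus every range of $\mathcal{F}_u$ has normalized measure at least $\lambda(t)\tfrac{c_t(a^t-b^t)}{|S^t|}\tfrac{\log n}{n}$, which by the hypothesis $a^t-b^t\ge\tfrac{8|S^t|(t+1)}{c_t(1-1/(2^{1+1/t}-1))}$ is at least $8(t+1)\tfrac{\log n}{n}$. Applying the continuous version of Haussler's packing/$\varepsilon$‑net result (Theorem~\ref{thm:VC_dim}) to $\mathcal{F}_u$ with the $n-1$ i.i.d.\ uniform sample points, the probability that some range of $\mathcal{F}_u$ is missed by all sample points is at most $\mathrm{poly}(n)\cdot n^{-8(t+1)}\le n^{-2}$ for large $n$; a union bound over the $n$ choices of $u$ gives overall failure probability at most $1/n$, which is the claim.

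The main obstacle is the geometric estimate of the second paragraph: a near‑tangent admissible hyperplane cuts the annulus very unevenly, so one must show that admissibility nonetheless forces the thin side to retain a fixed fraction of the annulus's volume, and then track the constants carefully enough to arrive at $1-1/(2^{1+1/t}-1)$ and to see precisely how $a>2^{1+1/t}b$ is needed. A secondary point, but an essential one, is that because the hyperplanes form a continuum, the step from ``each fixed range is hit with high probability'' to ``all ranges are hit simultaneously'' genuinely requires the (continuous) VC/$\varepsilon$‑net machinery in place of a union bound.
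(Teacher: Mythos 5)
Your plan matches the paper's proof almost step for step: the geometric lemma that both sides of an admissible cut retain a $\bigl(1-\tfrac{1}{2^{1+1/t}-1}\bigr)$-fraction of the annulus (the paper's Lemma~\ref{conjec:ratio} and Corollary~\ref{cor:ratio}, proved by comparing to a spherical cap of chordal radius $\approx r_2/2$), the bound that half-spaces through the fixed point $u$ have VC dimension at most $t+1$ (Lemma~\ref{lem:VC_dimension}, via Radon), the continuous Haussler $\varepsilon$-net theorem (Theorem~\ref{thm:VC_dim}) to go from "each range is hit" to "all ranges hit simultaneously," and a union bound over the $n$ vertices. The only cosmetic difference is your explicit parametrization of admissibility via the angle $\gamma$ with $2\sin\gamma>r_2$ and azimuthal coordinates, whereas the paper works directly with the cap volume; the constants and the role of the hypothesis $a>2^{1+1/t}b$ are identical.
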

%{\color{red} These parts are repeated. What should we do about the next 20 lines?} \\

For a node $u \equiv (u_1,u_2,\dots,u_{t+1})$, define the particular hyperplane $L^{\star}_u : x_1=u_1$ which is normal to the line joining $u_0 \equiv (1,0,\dots,0)$ and the origin and passes through $u$. We now have the following lemma.
\begin{lemma*}(\ref{lem:cut_twice})
For a particular node $u$ and corresponding hyperplane $L^{\star}_u$, if $\mathcal{A}_{L^{\star}_u}  \subseteq B_t(u,r_2)$ then $u$ must be within $r_2$ of $u_0$.
\end{lemma*}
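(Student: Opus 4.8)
The plan is to reduce Lemma~\ref{lem:cut_twice} to a short chord computation on the latitude sphere through $u$. Since $L^{\star}_u$ is the hyperplane $\{x : x_1 = u_1\}$ and $u \in S^t$, the point $u$ itself lies on $\mathcal{A}_{L^{\star}_u} = S^t \cap L^{\star}_u$. Writing $u = (u_1,\tilde u)$ with $\tilde u \in \reals^t$ and $\|\tilde u\|_2 = \rho := \sqrt{1-u_1^2}$, the set $\mathcal{A}_{L^{\star}_u}$ is exactly the Euclidean $(t-1)$-sphere $\{(u_1,y) : y\in\reals^t,\ \|y\|_2 = \rho\}$ of radius $\rho$ lying in the affine hyperplane $\{x_1=u_1\}$ and centred at $(u_1,0,\dots,0)$. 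First I would record that its Euclidean diameter --- the largest distance between two of its points --- equals $2\rho$, attained by $u$ together with its reflection $u^{\star} := (u_1,-\tilde u)$ through the centre.

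Next I would invoke the hypothesis. Since $u^{\star}\in\mathcal{A}_{L^{\star}_u}\subseteq B_t(u,r_2)$, we get $\|u-u^{\star}\|_2 = 2\rho \le r_2$, i.e.\ $1-u_1^2 \le r_2^2/4$. Because $r_2 = a(\log n / n)^{1/t}$ with $t$ a constant, $r_2 = o(1)$, so this forces $u_1^2 \ge 1-o(1)$. In the branch $u_1\ge 0$ one then has $1+u_1\ge 1$, hence
\[
\|u-u_0\|_2^2 \;=\; 2(1-u_1) \;=\; \frac{2(1-u_1)(1+u_1)}{1+u_1} \;\le\; 2(1-u_1^2) \;\le\; \frac{r_2^2}{2} \;<\; r_2^2 ,
\]
which gives $\|u-u_0\|_2 < r_2$, as claimed. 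It is worth also recording the identity $\|u-u_0\|_2^2\,\|u+u_0\|_2^2 = (2-2u_1)(2+2u_1) = 4(1-u_1^2) \le r_2^2$ together with $\|u-u_0\|_2^2 + \|u+u_0\|_2^2 = 4$: one of the two factors is at least $2$, so the other is at most $r_2^2/2$, i.e.\ $u$ lies within $r_2/\sqrt 2 < r_2$ of $u_0$ or of the antipode $-u_0$.

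The step I expect to be the main obstacle is pinning down the sign of $u_1$ --- equivalently, excluding the antipodal branch exposed by the identity above, since when $\mathcal{A}_{L^{\star}_u}$ is a tiny latitude sphere near $-u_0$ the hypothesis can hold while $u$ is far from $u_0$. To land exactly on the stated conclusion I would exploit the way the lemma is used inside the proof of Theorem~\ref{thm:highdem1}: the frame is rotated so the pole of Lemma~\ref{lem:pole} sits at $u_0$, and each node is processed by repeatedly moving to a neighbour of strictly larger first coordinate. For a node $u$ with $u_1<0$ not already within $r_2$ of $u_0$, the computation above shows $\mathcal{A}_{L^{\star}_u}\not\subseteq B_t(u,r_2)$ unless $u$ lies within $r_2/\sqrt2$ of $-u_0$; the finitely many vertices in that $o(1)$-radius cap can be connected --- via Lemma~\ref{lem:high_stuff1} applied to those cap members whose latitude sphere is not contained in their own $r_2$-ball --- to a vertex of larger first coordinate, and thence to the climbing path that reaches the pole. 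Modulo this bookkeeping, the lemma is exactly the displayed chord computation.
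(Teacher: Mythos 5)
Your chord computation is exactly the paper's argument: the farthest point of $\mathcal{A}_{L^{\star}_u}$ from $u$ is its antipode on the latitude sphere, giving $4(1-u_1^2)\le r_2^2$, and then $\|u-u_0\|_2^2=2-2u_1$ is bounded in terms of $1-u_1^2$. Your algebra in the $u_1\ge0$ branch is clean and correct, and in fact a bit tighter than the paper's (which verifies $\sqrt{(4-r_2^2)/4}\ge(2-r_2^2)/2$ via $r_2^4-3r_2^2\le0$).

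More importantly, you have correctly spotted a genuine flaw that the paper's proof glosses over. From $u_1^2\ge 1-r_2^2/4$ one only obtains $|u_1|\ge\sqrt{1-r_2^2/4}$; the paper tacitly drops the negative root when it substitutes this bound into $\|u-u_0\|_2^2=2-2u_1$. Your factorization $\|u-u_0\|_2^2\,\|u+u_0\|_2^2=4(1-u_1^2)\le r_2^2$ together with $\|u-u_0\|_2^2+\|u+u_0\|_2^2=4$ makes the defect transparent: the hypothesis $\mathcal{A}_{L^{\star}_u}\subseteq B_t(u,r_2)$ only forces $u$ to lie within $r_2/\sqrt2$ of $u_0$ \emph{or} of $-u_0$, and a vertex near $-u_0$ is a concrete counterexample to the lemma as stated (and, downstream, a hole in the climbing argument in the proof of Theorem~\ref{thm:highdem1}, since Lemma~\ref{lem:high_stuff1} cannot be invoked for such a vertex with the latitude hyperplane). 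Your proposed repair --- treat the $o(1)$-radius cap around $-u_0$ separately, using Lemma~\ref{lem:high_stuff1} with a suitably tilted hyperplane through those vertices to escape the cap --- is the right kind of fix, but as written it is a sketch: one still has to show the tilted hyperplane yields a neighbor of strictly larger first coordinate (or otherwise route these vertices into the climbing path), and that step is not automatic. So your main-case proof matches the paper, and your critique of the antipodal branch is a real catch; the patch needs to be carried out, not just gestured at.
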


For now, we assume that the Lemmas \ref{lem:pole}, \ref{lem:high_stuff1} and \ref{lem:cut_twice} are true and show why these lemmas together imply the proof of Theorem \ref{thm:highdem1}.
\begin{proof}[Proof of Theorem \ref{thm:highdem1}] 
We consider an alternate (rotated but not shifted) coordinate system by multiplying every vector by a orthonormal matrix $R$ such that the new position of the pole is the $t+1$-dimensional vector $(1,0,\dots,0)$ where only the first co-ordinate is non-zero. Let the $t+1$ dimensional vector describing a node $u$ in this new coordinate system be $\hat{u}$. Now consider the hyperplane $L: x_1=\hat{u}_1$ and if $u$ is not connected to the pole already, then by Lemma \ref{lem:high_stuff1} and Lemma \ref{lem:cut_twice} the node $u$ has a neighbor $u_2$ which has a higher first coordinate. The same analysis applies for $u_2$ and hence we have a path where the first coordinate of every node is higher than the previous node. Since the number of nodes is finite, this path cannot go on indefinitely and at some point, one of the nodes is going to be within $r_2$ of the pole and will be connected to the pole. Therefore every node is going to be connected to the pole and hence our theorem is proved. 

% Let us change the coordinate system by multiplying every vector by a orthonormal matrix $R$ such that the new position of $u$ is the $t$-dimensional vector $(1,0,\dots,0)$ where only the first co-ordinate is non-zero. In other words, we now have the line passing through the origin of the sphere and $u$ to be the first dimension. Conditioned on the existence of this node $u$ and the above definition of the first dimension, we have that for every node $v$ it has a neighbor $w$ which satisfies $(R\mathbf{w})_1 >  (R\mathbf{v})_1$ with high probability. Thus every node is part of a component where there exists a path such that every node in the path has a higher first dimension than the previous node in the path. Hence, in that path the first dimension is going to increase until a node falls in the spherical cap of radius $r_s$ around the pole and gets connected to it. Using this fact, we can argue that there must exist a node in that particular component which lies within a distance of $r_s$ of the pole and hence it is connected to the pole. Thus we can conclude that every node is connected  to the pole and therefore the graph is connected.
%\end{proof}
\end{proof}
We show the proofs of Lemma \ref{lem:pole}, \ref{lem:high_stuff1} and \ref{lem:cut_twice} in the following sections.

\subsection{Proof of Lemma \ref{lem:pole}}
Lemma \ref{lem:pole_helper} is a helper lemma that shows the region of connectivity for a small ball of radius $\epsilon (\frac{\log n}{n})^{1/t}$. Lemma \ref{lem:lemma2pole} uses this  lemma  to show the existence of a point $u_0$ which is connected to various balls of radius $\epsilon (\frac{\log n}{n})^{1/t}$.
\begin{lemma}\label{lem:pole_helper}
For a $t$ dimensional random annulus graph ${\rm RAG}_t(n,[r_1,r_2])$ where $r_1=b\left(\frac{\log n}{n}\right)^{1/t}, r_2=a\left(\frac{\log n}{n}\right)^{1/t}$ and $a\ge b$, consider the region ${B_t}(O,\theta)$ centered at $O$ and radius $\theta = \epsilon \left(\frac{\log n}{n}\right)^{1/t}$. Then, every vertex in ${B_t}(O,\theta)$ is connected to all vertices present in  ${B_t}(O,[\theta_1,\theta_2])$ where $\theta_1 = (b+\epsilon)\left(\frac{\log n}{n}\right)^{1/t}$ and $\theta_2 = (a-\epsilon)\left(\frac{\log n}{n}\right)^{1/t}$.
\end{lemma}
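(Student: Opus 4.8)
The statement is purely deterministic --- it holds for every placement of the vertices on $S^t$ --- so no probabilistic machinery is needed; everything follows from the triangle inequality in the ambient space $\reals^{t+1}$. The plan is to fix an arbitrary vertex $x \in B_t(O,\theta)$ and an arbitrary vertex $y \in B_t(O,[\theta_1,\theta_2])$ and show directly that $r_1 \le \|x-y\|_2 \le r_2$, which is exactly the condition for $x$ and $y$ to be adjacent in ${\rm RAG}_t(n,[r_1,r_2])$.

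First I would record the two constraints coming from the hypotheses: $\|x-O\|_2 \le \theta = \epsilon\big(\frac{\log n}{n}\big)^{1/t}$ and $\theta_1 = (b+\epsilon)\big(\frac{\log n}{n}\big)^{1/t} \le \|y-O\|_2 \le (a-\epsilon)\big(\frac{\log n}{n}\big)^{1/t} = \theta_2$. For the upper bound, the triangle inequality gives
\[
\|x-y\|_2 \le \|x-O\|_2 + \|O-y\|_2 \le \theta + \theta_2 = \big(\epsilon + (a-\epsilon)\big)\Big(\tfrac{\log n}{n}\Big)^{1/t} = r_2 .
\]
For the lower bound, the reverse triangle inequality gives
\[
\|x-y\|_2 \ge \|O-y\|_2 - \|O-x\|_2 \ge \theta_1 - \theta = \big((b+\epsilon) - \epsilon\big)\Big(\tfrac{\log n}{n}\Big)^{1/t} = r_1 .
\]
Combining the two displays yields $r_1 \le \|x-y\|_2 \le r_2$, so the edge $(x,y)$ is present in the random annulus graph, which is precisely the assertion of the lemma.

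There is essentially no obstacle: the radii $\theta_1,\theta_2$ are chosen exactly so that the $\epsilon$-slack on each side of the annulus $[r_1,r_2]$ absorbs the distance from the common center $O$ to any point of the small ball $B_t(O,\theta)$. The only degenerate case worth a remark is that the target region $B_t(O,[\theta_1,\theta_2])$ is nonempty only when $\theta_1\le\theta_2$, i.e.\ when $\epsilon \le (a-b)/2$; if it is empty the statement holds vacuously, so nothing is lost by taking $\epsilon$ small in the intended regime.
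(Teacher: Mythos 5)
Your proof is correct and takes essentially the same approach as the paper: both arguments apply the triangle inequality for the upper bound and the reverse triangle inequality for the lower bound, using the $\epsilon$-slack in $\theta_1,\theta_2$ to absorb the radius $\theta$ of the small ball.
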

\begin{proof}
For any point $A\in {B_t}(O,\theta)$, we have $0<\|A-O\|_2\leq \theta$ and for any point $X \in {B_t}(O,[\theta_1,\theta_2])$, we must have $\theta_1 \leq \|X-O\|_2\leq \theta_2$. Hence 
\begin{align*}
\|A-X\|_2&\leq \|A-O\|_2 + \|X-O\|_2   \\
&\leq \theta + \theta_2\\
&=  a \left(\frac{\log n}{n}\right)^{1/t},\\
\|A-X\|_2&\geq \|X-O\|_2 - \|A-O\|_2\\
&\geq \theta_1 -  \theta\\
&=  b \left(\frac{\log n}{n}\right)^{1/t},
\end{align*}
and therefore the claim of the lemma is proved.
\end{proof}

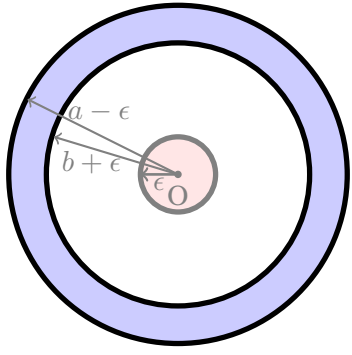
\begin{figure}
\vspace{-20pt}
\centering
\begin{tikzpicture}[thick, scale=0.5]
 \filldraw[color=black, fill=blue!20,  line width=2pt](0,0) circle (4.5);
 \filldraw[color=black, fill=red!0,  line width=2pt](0,0) circle (3.5);
 \filldraw[color=gray, fill=red!10,  line width=2pt](0,0) circle (1);
  \filldraw [gray] (0,0) circle (2pt)node[anchor=north ] at (0,0){O};
  \draw[line width=1pt,gray,->] (0,0)--(-1,0)node[anchor=north ] at (-0.5,0.2){$\epsilon$};
    \draw[gray,->] (0,0)--(-3.3,1)node[anchor=north ] at (-2.3,0.9){$b+\epsilon$};
      \draw[gray,->] (0,0)--(-4,2)node[anchor=north ] at (-2.1,2.2){$a-\epsilon$};
\end{tikzpicture}
\caption{Any node in the red region is connected to any node in the blue region.\label{fig:arr0high}}
\end{figure}

%We call a set of points $\mathcal{A}$ on $S^t$ to be \textit{on a circle} defined by a homogeneous hyperplane $L$ in $\mathbb{R}^{t+1}$ if all points $u \in \mathcal{A}$ lie on $L$. 

\begin{lemma}
Set two real numbers $k\equiv \lceil b/(a-b)\rceil+1$ and $\epsilon < \left( \frac{|S_t|}{2kc_t} \right)^{1/t}$. In an ${\rm RAG}_t\left(n, \left[b\left(\frac{\log n}{n}\right)^{1/t},a\left(\frac{\log n}{n}\right]\right)^{1/t}\right), 0 <b <a$, with  probability $1-o(1)$ there exists a vertex $u_0 \in S^t$  with the following property. 
Consider a homogeneous hyperplane $L$ in $\reals^{t+1}$ that pass through $u_0$.
There are $k$ nodes $\cA = \{u_1, u_2 ,\ldots, u_k\}$  with $u_i \in {B_t}\left(O_{u_i}, \epsilon \left(\frac{\log n}{n}\right)^{1/t}\right)$ for some $O_{u_i}\in L \cap S^t$ such that $\|O_{u_i}-u_0\|_2 = (i(a-b)-(4i-1)\epsilon)\left(\frac{\log n}{n}\right)^{1/t}$ and  $k$ nodes $\cB = \{v_1, v_2 ,\ldots, v_k\}$ with $v_i \in {B_t}\left(O_{v_i}, \epsilon \left(\frac{\log n}{n}\right)^{1/t}\right)$ for some $O_{v_i}\in L\cap S^t$ such that $\|O_{v_i}-u_0\|_2= (i(a-b)+b-(4i-3)\epsilon)\left(\frac{\log n}{n}\right)^{1/t}$, for $i =1,2,\ldots,k$
with $\cA$ and $\cB$ separated by $L$.
% and  $O_{u_i}$s, $O_{v_i}$s and $u_0$ are all  on a circle defined by $L$.  %The arrangement of the vertices is shown  in Figure
~\label{lem:lemma2pole}
\end{lemma}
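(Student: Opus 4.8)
The plan is to prove this exactly as Lemma~\ref{lem:lemma2} was proved --- a first/second moment estimate for a counting random variable --- with the one--dimensional intervals replaced by small spherical balls straddling a great hypersphere; all the new difficulty is geometric rather than probabilistic.

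First I would dispose of the quantifier over $L$ using rotational invariance. Fix a template homogeneous hyperplane $L_0$, the great hypersphere $C_0=L_0\cap S^t$, a base point $p_0\in C_0$, and a sense of direction along $C_0$; for each vertex $u$ let $L_u$ be the image of $L_0$ under any rotation of $S^t$ that takes $p_0$ to $u$ (legitimate since the uniform law on $S^t$ is rotation invariant, so the distribution of everything below is independent of the choice of rotation). Along $L_u\cap S^t$, in the chosen direction, mark the $2k$ centres $O_{u_i}$ and $O_{v_i}$ at chord distances $(i(a-b)-(4i-1)\epsilon)(\tfrac{\log n}{n})^{1/t}$ and $(i(a-b)+b-(4i-3)\epsilon)(\tfrac{\log n}{n})^{1/t}$ from $u$; for each centre $O$ let $H(O)$ be the half of $B_t\!\big(O,\epsilon(\tfrac{\log n}{n})^{1/t}\big)$ lying on the prescribed side of $L_u$ (the $\cA$-side for the $O_{u_i}$'s, the $\cB$-side for the $O_{v_i}$'s). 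Let $A_u$ be the indicator that every one of these $2k$ half-balls contains at least one of the other $n-1$ vertices; then $\{A_u=1\}$ witnesses the statement of the lemma with $u_0=u$, $L=L_u$, so it suffices to show $\Pr\!\big(\sum_uA_u\ge 1\big)=1-o(1)$.

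Next I would establish the geometric facts the argument needs. (i) The $2k$ prescribed distances from $u$ are distinct constants times $(\tfrac{\log n}{n})^{1/t}$, and --- this is exactly what the offsets $4i-1,4i-3$ are tuned for --- for a small enough constant $\epsilon$ the gaps between consecutive ones exceed $2\epsilon(\tfrac{\log n}{n})^{1/t}$, so the $2k$ half-balls are pairwise disjoint; in the degenerate case that two of the distances happen to coincide one perturbs the offsets or shrinks the caps, handling it with the $\min\{\cdot,\cdot\}$ device of the proof of Lemma~\ref{lem:lemma1} without changing orders of magnitude. (ii) By Lemma~\ref{lem:area}, since each centre $O$ lies on $L_u$ and $\epsilon(\tfrac{\log n}{n})^{1/t}=o(1)$, the hyperplane $L_u$ essentially bisects $B_t\!\big(O,\epsilon(\tfrac{\log n}{n})^{1/t}\big)$, so $|H(O)|=(\tfrac12+o(1))c_t\epsilon^t\tfrac{\log n}{n}$; moreover all $2k$ of the half-balls lie inside $B_t\!\big(u,C(\tfrac{\log n}{n})^{1/t}\big)$ for a constant $C=C(a,b,k)$. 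Granting (i)--(ii), the remainder copies the proof of Lemma~\ref{lem:lemma2} line for line: a direct computation gives $\Pr(A_u=1)=\Theta\!\big((\log n)^{2k}\,n^{-\gamma}\big)$ and hence $\Ex[\sum_uA_u]=\Theta\!\big(n^{1-\gamma}(\log n)^{2k}\big)\to\infty$ for a constant $\gamma=\gamma(k,\epsilon,t)$, the hypothesis $\epsilon<(1/2kc_t)^{1/t}$ being precisely what forces $\gamma<1$; and splitting ordered pairs $u\ne v$ by whether $\|X_u-X_v\|_2>2C(\tfrac{\log n}{n})^{1/t}$ (the complement has probability $O(\tfrac{\log n}{n})$ by Lemma~\ref{lem:area}, where one only uses $\Pr(A_u=1\cap A_v=1)\le\Pr(A_u=1)$, and otherwise the $4k$ half-balls are disjoint and $\Pr(A_u=1\cap A_v=1)=(1+o(1))\Pr(A_u=1)^2$) yields ${\rm Var}\!\big(\sum_uA_u\big)=O\!\big(\log n\cdot\Ex[\sum_uA_u]\big)$, so Chebyshev gives $\Pr\!\big(\sum_uA_u=0\big)=O\!\big(\log n/\Ex[\sum_uA_u]\big)=o(1)$.

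The step I expect to be the real obstacle is (i)--(ii): confirming that the $2k$ half-balls are genuinely pairwise disjoint for the allowed range of $\epsilon$ (or cleanly absorbing the rare near-coincidences among the prescribed distances), and, more importantly, that the gadget around $u$ has diameter $O\!\big((\tfrac{\log n}{n})^{1/t}\big)$, since this last fact is exactly what makes the ``far-apart'' case of the covariance bound go through as it does in one dimension. Everything else is the same second--moment bookkeeping as in Lemma~\ref{lem:lemma2}, now carried out on $S^t$ with volumes supplied by Lemma~\ref{lem:area}.
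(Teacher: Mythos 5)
Your proposal follows essentially the same route as the paper's proof: fix a hyperplane through each candidate vertex $u$ by rotational invariance, let $A_u$ indicate that the $2k$ prescribed small regions near $u$ are each occupied, compute $\Pr(A_u=1)$ from the cap-volume estimate of Lemma~\ref{lem:area}, show $\sum_u\avg A_u\to\infty$ under the stated bound on $\epsilon$, and control the variance by the same ``far apart vs.\ close'' split on pairs of vertices used for Lemma~\ref{lem:lemma2}, with closeness at scale $\Theta\big((\log n/n)^{1/t}\big)$ determined by the $O\big((\log n/n)^{1/t}\big)$ diameter of the gadget. The one substantive difference is your use of half-balls to enforce the ``$\cA$ and $\cB$ separated by $L$'' clause --- the paper's first-moment computation uses whole balls and is silent on that clause --- but this changes only constants and not the method.
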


\begin{proof}[Proof of Lemma~\ref{lem:lemma2pole}]
%Recall that we want to show thatthere exists a node $u_0$ and  $k$ nodes $\{u_1, u_2 ,\ldots, u_k\}$ to the right of $u_0$ such that $d(u_0,u_i) \in [\frac{(i(a-b)-2i\epsilon)\log n}{n},\frac{(i(a-b)-(2i-1)\epsilon) \log n}{n}]$ and exactly $k$ nodes $\{v_1, v_2 ,\ldots, v_k\}$ to the right of $u_0$ such that $d(u_0,v_i) \in [\frac{((i(a-b)+b-(2i-1)\epsilon)\log n}{n},\frac{(i(a-b)+b -(2i-2)\epsilon)\log n}{n}]$, for $i =1,2,\ldots,k$ and $\epsilon$ is a constant less than $\frac1{2k}$ (see Figure~\ref{fig:arr} for a depiction). 
 Let $A_u$ be an indicator $\{0,1\}$-random variable for every node $u$ which is $1$ if $u$ satisfies the conditions stated in the lemma and $0$ otherwise. We will show $\sum_{u} A_u \ge 1$ with high probability.

We  have,
\begin{align*}
\Pr(A_u=1) & = \frac{1}{2k!} n(n-1)\dots (n-(2k-1))\Big(\frac{\epsilon^tc_t \log n}{n|S^t|}\Big)^{2k} \Big(1-2kc_t\epsilon^t\frac{\log n}{n|S^t|}\Big)^{n-2k}\\
%(2k)!\binom{n}{2k} \Big(\frac{\epsilon \log n}{n}\Big)^{2k} \Big(1-2k\epsilon \frac{\log n}{n}\Big)^{n-2k}\\&=\Big(1-2k\epsilon \frac{\log n}{n}\Big)^{n-1}\prod_{i=1}^{2k}(n -(i-1))\Big(\frac{\epsilon \log n}{n}\Big)\Big(1-2k\epsilon \frac{\log n}{n}\Big)^{-1}\\
%& = \Big(1-2k\epsilon \frac{\log n}{n}\Big)^{n-2k}  \prod_{i=0}^{2k-1} (n-i)\frac{\epsilon \log n}{n}\\
& = c_1 n^{-2k\epsilon^t c_t/|S^t|} (\epsilon^t \log n)^{2k}  \prod_{i=0}^{2k-1} (1-i/n)\\
&= c_2 n^{-2k\epsilon^t c_t/|S^t|} (\epsilon^t \log n)^{2k}
\end{align*}
where $c_1=\frac{c_t^{2k}}{2k!|S^t|^{2k}},c_2$ are just absolute constants independent of $n$ (recall $k$ is a constant). %$ \prod_{i=0}^{2k-1} (1-i/n) \rightarrow 1$ as $n \rightarrow \infty$ and $k$ is just a constant. 
Hence,
\begin{align*}
\sum_{u} \avg A_u= c_2 n^{1-2k\epsilon^t c_t/|S^t|} (\epsilon^t \log n)^{2k} \ge 1
\end{align*} 
as long as $\epsilon \leq \left( \frac{|S^t|}{2kc_t} \right)^{1/t}$.  Now, in order to prove $\sum_{u}  A_u\ge 1$ with high probability, we will show that the variance of $\sum_{u}  A_u$  is bounded from above.  %This calculation is very similar to the one in the proof of Theorem~\ref{thm:lower_bound}. 
Recall that if $A =\sum_{u}  A_u$ is a sum of indicator random variables, we must have 
$${\rm Var}(A) \le \avg[A]+\sum_{u \neq v} {\rm Cov}(A_u,A_v)=\avg[A]+\sum_{u \neq v} \Pr(A_u=1 \cap A_v=1)-\Pr(A_u=1)\Pr(A_v=1).$$ 
Now first consider the case when vertices $u$ and $v$ are at a distance of at least $2(a+b) \left(\frac{ \log n}{n}\right)^{1/t}$ apart (happens with probability $1- 4^t(a+b)^t c_{t} \left(\frac{\log n}{n|S^t|}\right)$). 
Then the region that is within distance $(a+b) \left(\frac{\log n}{n}\right)^{1/t}$ from both $u$ and $v$ is the empty-set. In that case, $\Pr(A_u=1 \cap A_v=1) = n(n-1)\dots (n-(4k-1)) c_3 \Big(\frac{\epsilon^t c_t \log n}{n|S^t|}\Big)^{4k} \Big(1-4k\epsilon^t \frac{c_t\log n}{n|S^t|}\Big)^{n-4k} = c_4 n^{-4k\epsilon^tc_t/|S^t|} (\epsilon^t \log n)^{4k},$ where $c_3,c_4$ are constants.

In all other cases, $\Pr(A_u=1 \cap A_v=1) \le \Pr(A_u =1)$.
%then their corresponding regions of interest i.e the regions which are at a distance of $[0,r_s]$ away from their respective nodes are disjoint and hence the random variables $A_u$ and $A_v$ are asymptotically independent (since $k$ is just a constant). 
Therefore,\\
%Otherwise we can upper bound the probability $\Pr(A_u=1 \cap A_v=1)$ by taking the regions of interest to be the same. Hence, we have that  
\begin{align*}
\Pr(A_u=1 \cap A_v=1)&\leq &\Big(1- 4^t(a+b)^t c_t \left(\frac{\log n}{n|S^t|}\right)\Big) c_4 n^{-4k\epsilon^tc_t/|S^t|} (\epsilon^t \log n)^{4k}+\\ 
&&\frac{4^t(a+b)^tc_t \log n}{n|S^t|} c_2 n^{-2k\epsilon^t c_t/|S^t|} (\epsilon^t \log n)^{2k}
\end{align*}
and
\begin{align*}
{\rm Var(A)} &\le c_2n^{1-2k\epsilon^tc_t/|S^t|} (\epsilon^t \log n)^{2k} +{n \choose 2}\Big(\Pr(A_u=1 \cap A_v=1)-\Pr(A_u=1)\Pr(A_v=1)\Big) \\
&\le c_2n^{1-2k\epsilon^tc_t/|S^t|} (\epsilon^t \log n)^{2k}+ c_5 n^{1-2k\epsilon^tc_t/|S^t|} (\log n)^{2k+1}\\
& \le c_6  n^{1-2k\epsilon^tc_t/|S^t|} (\log n)^{2k+1}
\end{align*}
where $c_5,c_6$ are constants. Again invoking Chebyshev's inequality, with probability at least $1-O\Big(\frac{1}{\log n}\Big)$ 
$$
A > c_2n^{1-2k\epsilon^tc_t/|S^t|} (\epsilon^t \log n)^{2k} - \sqrt{c_6  n^{1-2k\epsilon^tc_t/|S^t|} (\log n)^{2k+2}}
$$  
which implies that $A>1$ with high probability.
\end{proof}

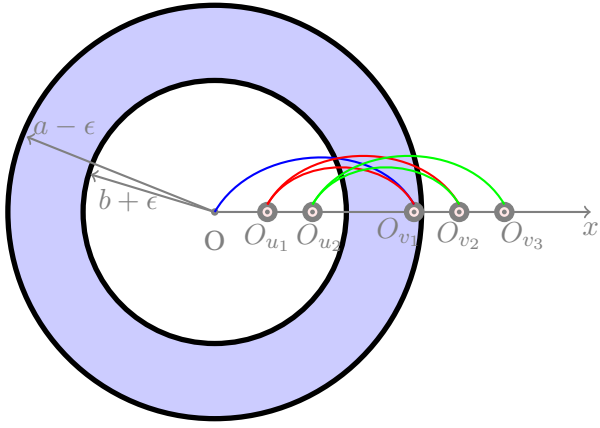
\begin{figure}
\vspace{-20pt}
\centering
\begin{tikzpicture}[thick, scale=0.5]

 \filldraw[color=black, fill=blue!20,  line width=2pt](0,0) circle (5.5);
 \filldraw[color=black, fill=red!0,  line width=2pt](0,0) circle (3.5);
% \filldraw[color=gray, fill=red!10,  line width=2pt](0,0) circle (0.2);
    \draw[gray,->] (0,0)--(-3.3,1)node[anchor=north ] at (-2.3,0.9){$b+\epsilon$};
      \draw[gray,->] (0,0)--(-5,2)node[anchor=north ] at (-4,2.8){$a-\epsilon$};
   \draw[gray,->] (0,0)--(10,0)node[anchor=north ] at (10,0){$x$};

   \filldraw[color=gray, fill=red!10,  line width=2pt](5.3,0) circle (0.2)node[anchor=north ] at (4.9,0.15){$O_{v_1}$};
      \filldraw[color=gray, fill=red!10,  line width=2pt](6.5,0) circle (0.2)node[anchor=north ] at (6.5,0.05){$O_{v_2}$};
            \filldraw[color=gray, fill=red!10,  line width=2pt](7.7,0) circle (0.2)node[anchor=north ] at (8.2,0.05){$O_{v_3}$};
      \filldraw[color=gray, fill=red!10,  line width=2pt](1.4,0) circle (0.2)node[anchor=north ] at (1.4,0){$O_{u_1}$};
	\filldraw[color=gray, fill=red!10,  line width=2pt](2.6,0) circle (0.2)node[anchor=north ] at (2.8,0){$O_{u_2}$};
    \filldraw [gray] (5.3,0) circle (1pt);
      \filldraw [gray] (0,0) circle (2pt)node[anchor=north ] at (0,-0.2){O};
      \filldraw [gray] (1.4,0) circle (1pt);
            \filldraw [gray] (2.6,0) circle (1pt);
                        \filldraw [gray] (6.5,0) circle (1pt);
                                    \filldraw [gray] (7.7,0) circle (1pt);
              \draw[blue]    (0,0) to[out=60,in=120] (5.3,0.2);
  \draw[red]    (1.4,0.2) to[out=60,in=120] (5.3,0.2);
    \draw[red]    (1.4,0.2) to[out=60,in=120] (6.5,0.2);
        \draw[green]    (2.6,0.2) to[out=60,in=120] (6.5,0.2);
                \draw[green]    (2.6,0.2) to[out=60,in=120] (7.7,0.2);
  
%  \draw[line width=1pt,gray,->] (0,0)--(-1,0)node[anchor=north ] at (-0.5,0.2){$\epsilon$};

\end{tikzpicture}
\caption{Representation of $u_i$ and $v_i$ in the $t+1$-dimensional sphere with respect to $u_0$.\label{fig:arr1}}
\end{figure}

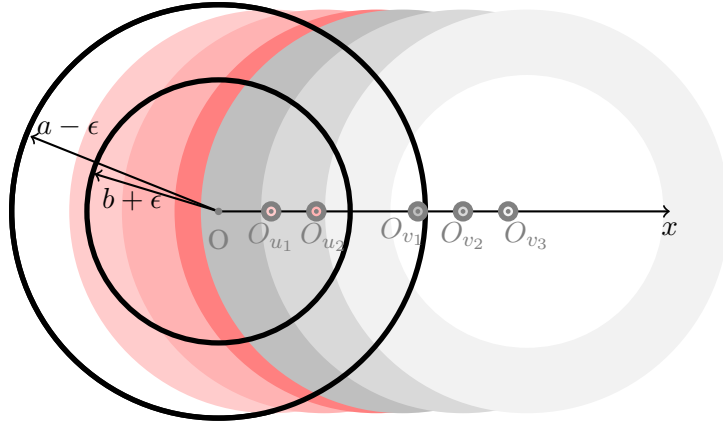
\begin{figure}
\vspace{-10pt}
\centering
\begin{tikzpicture}[thick, scale=0.5]

 \filldraw[color=black, fill=none,  line width=2pt](0,0) circle (5.5);
 
  \filldraw[color=red!20, fill=red!20,  line width=2pt] (1.4,0) circle (5.3);
 \filldraw[color=red!20, fill=red!0,  line width=2pt] (1.4,0) circle (3.7);
  
  \filldraw[color=red!30, fill=red!30,  line width=2pt] (2.8,0) circle (5.3);
    \filldraw[color=red!30, fill=red!0,  line width=2pt] (2.8,0) circle (3.7);
    
        \filldraw[color=red!50, fill=red!50,  line width=2pt] (4.2,0) circle (5.3);    
        \filldraw[color=red!50, fill=red!0,  line width=2pt] (4.2,0) circle (3.7);

                \filldraw[color=gray!50, fill=gray!50,  line width=2pt] (4.9,0) circle (5.3);  
                \filldraw[color=gray!50, fill=gray!0,  line width=2pt] (4.9,0) circle (3.7);
                
                                                \filldraw[color=gray!30, fill=gray!30,  line width=2pt] (6.5,0) circle (5.3);
                                \filldraw[color=gray!30, fill=gray!0,  line width=2pt] (6.5,0) circle (3.7);
                                
                                                                                \filldraw[color=gray!10, fill=gray!10,  line width=2pt] (8.2,0) circle (5.3);
                                \filldraw[color=gray!10, fill=blue!0,  line width=2pt] (8.2,0) circle (3.7);
                                
 \filldraw[color=black, fill=none,  line width=2pt](0,0) circle (3.5);

 \filldraw[color=black, fill=none,  line width=2pt](0,0) circle (5.5);

% \filldraw[color=gray, fill=red!10,  line width=2pt](0,0) circle (0.2);
    \draw[black,->] (0,0)--(-3.3,1)node[anchor=north ] at (-2.3,0.9){$b+\epsilon$};
      \draw[black,->] (0,0)--(-5,2)node[anchor=north ] at (-4,2.8){$a-\epsilon$};
   \draw[black,->] (0,0)--(12,0)node[anchor=north ] at (12,0){$x$};

   \filldraw[color=gray, fill=gray!50,  line width=2pt](5.3,0) circle (0.2)node[anchor=north ] at (4.9,0.15){$O_{v_1}$};
      \filldraw[color=gray, fill=gray!30,  line width=2pt](6.5,0) circle (0.2)node[anchor=north ] at (6.5,0.05){$O_{v_2}$};
            \filldraw[color=gray, fill=gray!10,  line width=2pt](7.7,0) circle (0.2)node[anchor=north ] at (8.2,0.05){$O_{v_3}$};
      \filldraw[color=gray, fill=red!20,  line width=2pt](1.4,0) circle (0.2)node[anchor=north ] at (1.4,0){$O_{u_1}$};
	\filldraw[color=gray, fill=red!30,  line width=2pt](2.6,0) circle (0.2)node[anchor=north ] at (2.8,0){$O_{u_2}$};
    \filldraw [gray] (5.3,0) circle (1pt);
      \filldraw [gray] (0,0) circle (2pt)node[anchor=north ] at (0,-0.2){O};
      \filldraw [gray] (1.4,0) circle (1pt);
            \filldraw [gray] (2.6,0) circle (1pt);
                        \filldraw [gray] (6.5,0) circle (1pt);
                                    \filldraw [gray] (7.7,0) circle (1pt);
\end{tikzpicture}
\caption{Shaded regions represent the region of connectivity with $u_i$ and $v_i$ (red for $u_i$'s and gray for $v_i$'s).\label{fig:arr2}}
\end{figure}
%We will show the high probability events defined in Lemma \ref{lem:lemma2} and \ref{lem:lemma3} together imply that the graph is connected.
\begin{lemma*}(\ref{lem:pole})
In a ${\rm RAG}_t\left(n, \left[b\left(\frac{\log n}{n}\right)^{1/t},a\left(\frac{\log n}{n}\right)^{1/t}\right]\right), 0 <b <a$, with  probability $1-o(1)$ there exists a vertex $u_0$ such that any node $v$ that satisfies $\|u-v\|_2 \le a\left(\frac{\log n}{n}\right)^{1/t}$ is connected to $u_0$.
\end{lemma*}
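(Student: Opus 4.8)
The plan is to exhibit, with probability $1-o(1)$, a vertex $u_0$ together with a small ``skeleton'' of vertices spread out around it along one geodesic, whose combined connectivity regions already cover the whole ball $B_t(u_0,r_2)$ with $r_2=a(\tfrac{\log n}{n})^{1/t}$; every vertex that falls in that ball will then be joined to $u_0$ by a path of length at most two, which is exactly what it means for $u_0$ to be a pole.

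The first observation is a reduction: a vertex $v$ with $r_1\le\|u_0-v\|_2\le r_2$ (where $r_1=b(\tfrac{\log n}{n})^{1/t}$) is already adjacent to $u_0$, so it suffices to connect $u_0$ to every vertex lying strictly inside $B_t(u_0,r_1)$ --- the vertices that are ``too close'' to be neighbours of $u_0$. I would then invoke Lemma~\ref{lem:lemma2pole} with $k=\lceil b/(a-b)\rceil+1$ and a small enough constant $\epsilon$: with probability $1-o(1)$ there is a vertex $u_0$ carrying the configuration $\cA=\{u_1,\dots,u_k\}$, $\cB=\{v_1,\dots,v_k\}$, with $u_i$ in the radius-$\epsilon(\tfrac{\log n}{n})^{1/t}$ ball about $O_{u_i}$ and $v_i$ in the one about $O_{v_i}$, the centres placed along a common geodesic out of $u_0$ at the distances $\|O_{u_i}-u_0\|_2=(i(a-b)-(4i-1)\epsilon)(\tfrac{\log n}{n})^{1/t}$ and $\|O_{v_i}-u_0\|_2=(i(a-b)+b-(4i-3)\epsilon)(\tfrac{\log n}{n})^{1/t}$ (see Figure~\ref{fig:arr1}).

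Next I would verify that this skeleton is internally connected and that, through it, $u_0$ reaches a large region. From the prescribed distances the triangle inequality gives $\|O_{u_{i-1}}-O_{v_i}\|_2=(a-2\epsilon)(\tfrac{\log n}{n})^{1/t}$ and $\|O_{u_i}-O_{v_i}\|_2=(b+2\epsilon)(\tfrac{\log n}{n})^{1/t}$, so --- after absorbing the $\pm\epsilon(\tfrac{\log n}{n})^{1/t}$ slack of the tiny balls --- each of the pairs $u_0v_1,\ v_1u_1,\ u_1v_2,\ \dots,\ v_ku_k$ is at $\ell_2$-distance inside $[r_1,r_2]$, i.e. $u_0-v_1-u_1-v_2-\cdots-v_k-u_k$ is a path in the RAG. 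Combining this with Lemma~\ref{lem:pole_helper} applied at each of the centres $O_{u_i},O_{v_i}$ (and at $u_0$ itself), every vertex of the graph that lies in
\[
\mathcal{U}:=B_t(u_0,[r_1,r_2])\cup\bigcup_{i=1}^{k}B_t\big(O_{u_i},[r_1',r_2']\big)\cup\bigcup_{i=1}^{k}B_t\big(O_{v_i},[r_1',r_2']\big),\qquad r_1'=(b+\epsilon)\big(\tfrac{\log n}{n}\big)^{1/t},\ \ r_2'=(a-\epsilon)\big(\tfrac{\log n}{n}\big)^{1/t},
\]
is adjacent to a skeleton vertex, hence connected to $u_0$.

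It then remains to prove the purely geometric inclusion $B_t(u_0,r_2)\subseteq\mathcal{U}$, and this is the main obstacle. For $w$ with $\|u_0-w\|_2=x(\tfrac{\log n}{n})^{1/t}$, $x<b$, let $\beta$ be the angle at $u_0$ between the skeleton geodesic and the geodesic to $w$; since all radii here are $o(1)$, I would pass to the Euclidean/flat approximation (the spherical curvature correction is of lower order and can be controlled via a volume estimate as in Lemma~\ref{lem:area}), where $\|O_{u_i}-w\|_2$ and $\|O_{v_i}-w\|_2$ become elementary functions of $(x,\beta)$. The key structural facts are that the centres $O_{u_1},\dots,O_{u_k}$ (and likewise the $O_{v_i}$) form an arithmetic progression in distance from $u_0$ with common difference $(a-b)(\tfrac{\log n}{n})^{1/t}$ --- equal to the length of the target window $[r_1',r_2']$ --- that $O_{v_1}$ sits at distance $\approx a(\tfrac{\log n}{n})^{1/t}$ from $u_0$, and that by the choice $k=\lceil b/(a-b)\rceil+1$ the chain reaches past distance $a(\tfrac{\log n}{n})^{1/t}$; because consecutive centres are $(a-b)(\tfrac{\log n}{n})^{1/t}$ apart, the map $i\mapsto\|O_{u_i}-w\|_2$ cannot jump over a window of that width, so for every $(x,\beta)$ at least one of the distances $\|O_{u_i}-w\|_2,\|O_{v_i}-w\|_2$ lands in $[r_1',r_2']$ (for $\cos\beta$ not too negative $O_{v_1}$ alone suffices, and otherwise one walks outward along the $O_{u_i}$). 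The hard part is carrying out this case analysis uniformly over $x<b$ and $\beta\in[0,\pi]$ --- checking that the ``near'' end of the chain never overshoots $r_2'$ while the ``far'' end clears $r_1'$ --- which is precisely where both the value $k=\lceil b/(a-b)\rceil+1$ and the exact $\epsilon$-offsets in Lemma~\ref{lem:lemma2pole} are forced. Establishing $B_t(u_0,r_2)\subseteq\mathcal{U}$ then finishes the proof that $u_0$ is a pole. (When $a>2b$, which holds under the hypotheses of Theorem~\ref{thm:highdem1} since $a>2^{1+1/t}b$, the covering step collapses: $k=2$ and the two annuli of $v_1$ and $u_1$ already cover $B_t(u_0,r_1)$.)
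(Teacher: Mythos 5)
Your proposal follows essentially the same route as the paper's proof: invoke Lemma~\ref{lem:lemma2pole} to obtain the skeleton $u_0,\{u_i\},\{v_i\}$, check via the triangle inequality that $u_0-v_1-u_1-\cdots$ is a path in the RAG, apply Lemma~\ref{lem:pole_helper} to attach an annular connectivity region to each skeleton vertex, and conclude by the covering argument that the translated annuli (consecutive centres being $(a-b)(\tfrac{\log n}{n})^{1/t}$ apart, matching the annulus width $(a-b-2\epsilon)(\tfrac{\log n}{n})^{1/t}$, with inner radius exceeding $r_1$) overlap to cover all of $B_t(u_0,r_2)$. The covering step you flag as the main obstacle is exactly the step the paper also treats at the same (informal) level of detail, so the proposal is correct and matches the paper's argument.
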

\begin{proof}
Consider the vertices $u_0,\{u_1, u_2, \ldots, u_k\}$ and $\{v_1, v_2, \ldots, v_k\}$ that satisfy the conditions of Lemma \ref{lem:lemma2pole} as shown in Fig \ref{fig:arr1}. We can observe that each vertex $v_i$ has an edge with $u_i$ and $u_{i-1}$, $i =1, \ldots,k$. %This is because (see Figure~\ref{fig:arr} for a depiction)
\begin{align*}
\|u_i-v_i\|_2 &\ge \|u_i-O_{v_i}\|_2 - \|v_i-O_{v_i}\|_2\\
&\ge \|O_{v_i}-O_{u_i}\|_2  - \|u_i-O_{u_i}\|_2 - \|v_i-O_{v_i}\|_2\\
& \ge (b+2\epsilon)\left(\frac{\log n}{n}\right)^{1/t} - 2\epsilon \left( \frac{\log n}{n}\right)^{1/t}= b\left(\frac{ \log n}{n}\right)^{1/t} \quad \text{and}
\end{align*}
\begin{align*}
 \|u_i-v_i\|_2 &\le  \|O_{v_i}-O_{u_i}\|_2 + \|u_i-O_{u_i}\|_2 + \|v_i-O_{v_i}\|_2\\
  & =  (b+2\epsilon)\left(\frac{\log n}{n}\right)^{1/t} + 2\epsilon  \left(\frac{\log n}{n}\right)^{1/t}= (b+4\epsilon)\left(\frac{ \log n}{n} \right)^{1/t}
\end{align*}

Similarly, 
\vspace{-10pt}
\begin{align*}
\|u_{i-1}-v_i\|_2& \ge \|u_{i-1}-O_{v_i}\|_2 - \|v_i-O_{v_i}\|_2\\
&\ge \|O_{v_i}-O_{u_{i-1}}\|_2 - \|u_{i-1}-O_{u_{i-1}}\|_2 - \|v_i-O_{v_i}\|_2\\
&\ge (a-2\epsilon)\left( \frac{\log n}{n}\right)^{1/t} - 2\epsilon\left(\frac{ \log n}{n}\right)^{1/t}\\
& = (a-4\epsilon)\left(\frac{\log n}{n}\right)^{1/t} \quad \text{and}
\end{align*}
\begin{align*}
\|u_{i-1}-v_i\|_2 &\le \|O_{v_i}-O_{u_{i-1}}\|_2  + \|u_{i-1}-O_{u_{i-1}}\|_2 + \|v_i-O_{v_i}\|\\
&\le (a-2\epsilon)\left(\frac{\log n}{n}\right)^{1/t} - 2\epsilon\left(\frac{ \log n}{n}\right)^{1/t}= a\left(\frac{\log n}{n}\right)^{1/t}.
\end{align*}
 This implies that $u_0$ is connected to $u_i$ and $v_i$ for all $i=1,\dots,k$.  Next, we show that any point in the region ${B_t}\Big(u_0,r_s=a(\frac{\log n}{n})^{1/t}\Big)$ is connected to $u_0$. Now recall that any point in the region ${B_t}\Big(x,\left[(b+\epsilon)\Big(\frac{\log n}{n}\Big)^{t},(a-\epsilon)\Big(\frac{\log n}{n}\Big)^{t}\right]\Big)$ is connected to any point in the region ${B_t}(x,\epsilon)$.
 We can observe that the nodes $u_1,\ldots, u_k,v_1$,\ldots, $v_k$ form a cover of ${B_t}\Big(u_0,r_s=a(\frac{\log n}{n})^{1/t}\Big)$ in the form of these annulus regions (A region corresponding to a particular node implies the portion of the hypersphere such that any other node in that region is connected to it) translated by $(a-b-4\epsilon)\left(\frac{\log n}{n}\right)^{1/t}$. This is because any node in ${B_t}\Big(O_{u_i},\left[(b+\epsilon)\Big(\frac{\log n}{n}\Big)^{t},(a-\epsilon)\Big(\frac{\log n}{n}\Big)^{t}\right]\Big)$ is connected to $u_i$ and any node in ${B_t}\Big(O_{v_i},\left[(b+\epsilon)\Big(\frac{\log n}{n}\Big)^{t},(a-\epsilon)\Big(\frac{\log n}{n}\Big)^{t}\right]\Big)$  is connected to $v_i$ respectively (Figure\ref{fig:arr2}). Therefore any node falling in any of the aforementioned regions is connected with $u_0$. Since the width of each region is $(a-b-2\epsilon)\left(\frac{\log n}{n}\right)^{1/t}$, the regions overlap with each other. Additionally, the inner radius of a particular region is $(b+\epsilon)\left(\frac{\log n}{n}\right)^{1/t}$ which is greater than $b\left(\frac{\log n}{n}\right)^{1/t}$. Hence, there can not exist any point in ${B_t}\Big(u_0,r_2=a(\frac{\log n}{n})^{1/t}\Big)$ which is not covered by the union of these regions. 
\end{proof}

\subsection{Proofs of Lemma \ref{lem:high_stuff1} and Lemma \ref{lem:cut_twice}} 
%First we define a co-ordinate axis to help us visualize the location of pole and show the existence of a neighbor closer to the pole.
Assume that the $(t+1)$-dimensional space is described by a coordinate system whose center coincides with the center of the sphere. In this coordinate system, let us denote the point $(1,0,0,\dots,0)$ by $u_0$. 

Lemma \ref{conjec:ratio} shows that for any plane $L$ with  $\mathcal{A}_{L} \not \subset B_t(u,r_2)$, the region of connectivity of $u$ on both sides differ by a constant fraction.
\begin{lemma}\label{conjec:ratio}
For a particular node $u$ in ${\rm RAG}_t\left(n,\left[r_1,r_2\right]\right)$ where $r_1=b\Big(\frac{\log n}{n}\Big)^{1/t}$, $r_2=a\Big(\frac{\log n}{n}\Big)^{1/t}$, consider a hyperplane $L$ passing through $u$ such that $\mathcal{A}_{L} \not \subset B_t(u,a\Big(\frac{\log n}{n}\Big)^{1/t})$, then $ \frac{\min (|R_{L}^{1}|,|R_{L}^{2}|)}{|R_{L}^{1}|+|R_{L}^{2}|} \ge \delta$ if $a > 2b$, where $\delta$ is a constant.
\end{lemma}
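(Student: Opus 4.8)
The plan is to push the whole computation into the tangent chart at $u$, where the spherical annulus becomes, up to $(1+o(1))$ factors, an ordinary flat annulus and the hyperplane section becomes a ball passing through the centre of that annulus. Place $u$ at $(1,0,\dots,0)$ and parametrise $S^t$ near $u$ by $x=(\sqrt{1-\|y\|_2^2},\,y)$, $y\in\reals^t$; then $\|u-x\|_2^2=2\bigl(1-\sqrt{1-\|y\|_2^2}\bigr)$, so $B_t(u,[r_1,r_2])$ corresponds to a radial shell $r_1(1-o(1))\le\|y\|_2\le r_2(1-o(1))$, and the surface measure equals $(1+o(1))$ times Lebesgue measure since $r_1,r_2=o(1)$. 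Write $L:w^Tx=\beta$ with $w=(w_1,w')$, so $\beta=w_1$ because $u\in L$; note $\|w'\|_2>0$, for otherwise $\mathcal A_L=\{u\}\subset B_t(u,r_2)$, contradicting the hypothesis. Rotating the last $t$ coordinates we may take $w'=\|w'\|_2\,e_1$ and set $g:=w_1/\|w'\|_2$. A short computation gives two facts: first, a chart point lies on the side ``$w^Tx\ge\beta$'' iff $y_1\ge g\bigl(1-\sqrt{1-\|y\|_2^2}\bigr)$; second, $\max_{x\in\mathcal A_L}\|u-x\|_2^2=4/(g^2+1)$, so the hypothesis $\mathcal A_L\not\subset B_t(u,r_2)$ is exactly $g^2<(4-r_2^2)/r_2^2$. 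Replacing $w$ by $-w$ if necessary we may assume $g\ge0$; then $\mathcal R_L^2$ (the ``$w^Tx\ge\beta$'' side) lies in $\{y_1\ge0\}$, and since the shell is symmetric under $y_1\mapsto-y_1$ we get $\min(R_L^1,R_L^2)=R_L^2$.

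Next I would lower-bound $R_L^2$. Using the \emph{sharp} estimate $1-\sqrt{1-s}=s/(1+\sqrt{1-s})\le\tfrac12 s\,(1+o(1))$ for $s=\|y\|_2^2=o(1)$, together with $g<\sqrt{4-r_2^2}/r_2$, one checks $g\bigl(1-\sqrt{1-\|y\|_2^2}\bigr)\le\|y\|_2^2/\bigl(r_2(1-o(1))\bigr)$; hence in the chart $\mathcal R_L^2$ contains $\bigl\{y:\ r_1(1-o(1))\le\|y\|_2\le r_2(1-o(1)),\ y_1\ge\|y\|_2^2/(r_2(1-o(1)))\bigr\}$. The set $\{y:\ y_1\ge\|y\|_2^2/(r_2(1-o(1)))\}$ is precisely the Euclidean ball $B\bigl(\tfrac{r_2}{2}(1-o(1))\,e_1,\ \tfrac{r_2}{2}(1-o(1))\bigr)$, whose boundary passes through the origin, which sits inside $\overline{B(0,r_2)}$, and which lies in the half-space $\{y_1\ge0\}$. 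Therefore $\mathcal R_L^2$ contains that ball with the hole $B(0,r_1)$ removed, and because the ball lies in a half-space through the origin the removed part has volume at most $\tfrac12 c_t r_1^t(1+o(1))$. Applying Lemma~\ref{lem:area} to the ball and the hole,
\[
R_L^2\;\ge\;(1-o(1))\Bigl(c_t\bigl(\tfrac{r_2}{2}\bigr)^t-\tfrac12 c_t r_1^t\Bigr)\;=\;(1-o(1))\,\tfrac{c_t}{2}\bigl(2^{1-t}r_2^t-r_1^t\bigr),
\]
while $R_L^1+R_L^2=|B_t(u,[r_1,r_2])|=c_t(r_2^t-r_1^t)(1+o(1))$ by Lemma~\ref{lem:area}.

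Dividing, $\dfrac{\min(R_L^1,R_L^2)}{R_L^1+R_L^2}\ge(1-o(1))\dfrac{2^{1-t}r_2^t-r_1^t}{2(r_2^t-r_1^t)}=(1-o(1))\dfrac{2^{1-t}-(b/a)^t}{2\bigl(1-(b/a)^t\bigr)}$, where the last equality uses $r_1=b(\log n/n)^{1/t}$, $r_2=a(\log n/n)^{1/t}$. The numerator is strictly positive precisely when $(b/a)^t<2^{1-t}$, i.e.\ when $a>2^{1-1/t}b$, which is exactly the stated hypothesis; since $a,b,t$ are constants, for $n$ large the right-hand side exceeds the positive constant $\delta:=\tfrac14\cdot\dfrac{2^{1-t}-(b/a)^t}{1-(b/a)^t}$, which proves the lemma.

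The step I expect to be the real obstacle is the chart reduction in the first two paragraphs: one must verify that, to the precision needed, the hyperplane section is sandwiched between two balls both of radius $\tfrac{r_2}{2}(1\pm o(1))$ whose boundaries pass through the shell's centre. In particular the constant $\tfrac12$ in $1-\sqrt{1-s}\le\tfrac12 s\,(1+o(1))$ is essential — the crude bound $1-\sqrt{1-s}\le s$ would only give a ball of radius $\tfrac{r_2}{4}$ and hence the strictly worse hypothesis $a>2^{2-1/t}b$ — and all the $(1\pm o(1))$ slippages (surface versus Lebesgue measure, chord distance versus chart radius, the containing ball staying inside the shell) must be absorbed into the strict inequality $b/a<2^{1/t-1}$. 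The two other points that need care, and that are exactly what make $2^{1-1/t}$ (rather than $2^{1/t}$) appear, are the identification $\min(R_L^1,R_L^2)=R_L^2$ after orienting $w$, and the observation that the removed inner hole costs only half of $c_t r_1^t$ since the relevant ball lies in a half-space through the origin.
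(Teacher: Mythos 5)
Your proof is correct and follows essentially the same route as the paper: both bound the smaller region $R_L^2$ from below by $c_t(r_2/2)^t-\tfrac12 c_t r_1^t$ (a cap of radius $\approx r_2/2$ minus at most half the inner ball $B_t(u,r_1)$, which is bisected by $L$ since $L$ passes through $u$), and this is positive precisely when $a>2^{1-1/t}b$. Your tangent-chart computation makes rigorous the paper's heuristic Pythagoras step and closes two small gaps in its write-up—you exhibit an explicit ball $B\bigl(\tfrac{r_2}{2}e_1,\tfrac{r_2}{2}\bigr)$ sitting inside $\mathcal R_L^2$ for \emph{every} admissible hyperplane via the clean criterion $g^2<(4-r_2^2)/r_2^2$, whereas the paper computes only the extremal configuration and tacitly relies on monotonicity, and you identify $\min(R_L^1,R_L^2)=R_L^2$ by fixing the sign of $g$ and using reflection symmetry of the shell, where the paper simply says ``without loss of generality.''
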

\begin{proof}
%\begin{figure}
%\vspace{-20pt}
%\centering
%\begin{tikzpicture}[thick, scale=0.5]
% \filldraw[color=black, fill=blue!0,  line width=2pt](0,0) circle (4.5);
%      \draw[line width=1pt,gray,-] (0,4.5)--(3.4,3)node[anchor=north ] at (2,4){$r_s$};
%  \filldraw [gray] (0,0)-- (2,4.3)node[anchor=north ] at (0,0){O};
%  %\draw[line width=1pt,gray,->] (0,0)--(-1,0)node[anchor=north ] at (-0.5,0.2){$\epsilon$};
%    \draw[gray,->] node at (2.7,4.4){u'};
%        \draw[gray,->] node at (0,5){u};
%      %\draw[gray,->] (0,0)--(-4,2)node[anchor=north ] at (-2.1,2.2){$a-\epsilon$};
%\end{tikzpicture}
%\caption{Any node in the red region is connected to any node in the blue region.\label{fig:fracbound}}
%\end{figure}

First, for a node $u$ and a given hyperplane $L:w^{T}x=\beta$ passing through $u$, we try to evaluate the surface area of the region corresponding to
\begin{align*}
\{x \in S^t \mid  \|u-x\|_2 \le r_2=a\Big(\frac{\log n}{n}\Big)^{1/t}, w^{T}x \ge \beta \}
\end{align*}
such that  the farthest point from $u$ on the plane $L$ and $S^t$ is at distance $r_2$. This region is a spherical cap corresponding to $B_t(u',r')$ where $u'$ is the intersection of $S^t$ with the normal from the origin to the plane and $r'=\|u-u'\|_2$. Suppose $h$ is the height of this cap (perpendicular distance from $u'$ to the hyperplane $L$). Using pythagoras theorem, we can see that $r'^2=h^2+\left(r_2/2\right)^2$ and $(1-h)^2+ r_2^2/4 = 1$. Simplifying this, we get $h=\frac{r'^2}{2}$ and hence $r'\approx  \frac{r_2}{2}$.
Hence the area of this region is $ c_t \left(r_2/{2}\right)^t$.

%Using Lemma \ref{lem:area}, we  see that  area of a spherical cap formed by $B(u,r)$ is  $C_tr^t$. \cite{li2011concise} give the expression of the area as  $\frac{A_{t+1}}{2}I_{2h-h^2}\left(\frac{t}{2},\frac{1}{2}\right) $ where $h$ is the height of the cap, $A_t=\frac{2\pi^{t/2}}{\Gamma[t/2]}$ and $I_x(a,b)$ is the incomplete beta function. Also, $\frac{A_{t+1}}{2}I_{2h-h^2}\left(\frac{t}{2},\frac{1}{2}\right) =  \frac{A_{t+1}}{2}I_{\alpha^2}\left(\frac{t}{2},\frac{1}{2}\right) $ where $\alpha$ is the radius of the ring formed by the hyperplane corresponding the cap  ($\alpha=\frac{AB}{2}$ in figure \ref{fig:fracbound}). 
%Using Pythagoras theorem, we have  $(\alpha)^2 + h^2 = r^2 $ and $(1-h)^2+\alpha^2=1$. This gives us that $\alpha^2=r^2-r^4/4 \approx r^2$. Hence, $C_tr^t = \frac{A_{t+1}}{2}I_{r^2}\left(\frac{t}{2},\frac{1}{2}\right) $.

%Now, we try to evaluate the minimum value for the region corresponding $ \{x \in S^t \mid  d(u,x) \le a\Big(\frac{\log n}{n}\Big)^{1/t}, w^{T}x \ge b \}$. This region is minimum when the point diamaetrically opposite $u$ is distance $r_s$ from $u$. In this case, the value of $\alpha = r_s/2$. Hence the area of this region is $\frac{A_{t+1}}{2}I_{r_s^2/4}\left(\frac{t}{2},\frac{1}{2}\right)  = C_t \left(r_s/{2}\right)^t$.

Without loss of generality, assume $|\mathcal{R}_{L}^1| \ge |\mathcal{R}_{L}^2|$. Now,
\begin{eqnarray*}
|\mathcal{R}_{L}^1| &=& |\{x \in S^t \mid b\Big(\frac{\log n}{n}\Big)^{1/t} \le  \|u-x\|_2 \le a\Big(\frac{\log n}{n}\Big)^{1/t}, w^{T}x \le \beta \}|\\
&=&  |\{x \in S^t \mid b\Big(\frac{\log n}{n}\Big)^{1/t} \le  \|u-x\|_2 \le a\Big(\frac{\log n}{n}\Big)^{1/t} \}| - |\mathcal{R}_{L}^2|\\
&=& c_t(r_2^t-r_1^t) -  |\mathcal{R}_{L}^2|.\\\\
|\mathcal{R}_{L}^2| &\equiv& |\{x \in S^t \mid b\Big(\frac{\log n}{n}\Big)^{1/t} \le  \|u-x\|_2 \le a\Big(\frac{\log n}{n}\Big)^{1/t}, w^{T}x \ge \beta \}|\\
&\equiv& |\{x \in S^t \mid  \|u-x\|_2 \le a\Big(\frac{\log n}{n}\Big)^{1/t}, w^{T}x \ge \beta \}| - |\{x \in S^t \mid   \|u-x\|_2 \le b\Big(\frac{\log n}{n}\Big)^{1/t}, w^{T}x \ge \beta \}|\\
 &\ge&   |\{x \in S^t \mid  \|u-x\|_2 \le a\Big(\frac{\log n}{n}\Big)^{1/t}, w^{T}x \ge \beta \}| -\\
&& \left[ |\{x \in S^t \mid   \|u-x\|_2 \le b\Big(\frac{\log n}{n}\Big)^{1/t},w^{T}x < \beta \}| + |\{x \in S^t \mid   \|u-x\|_2 \le b\Big(\frac{\log n}{n}\Big)^{1/t},w^{T}x \ge \beta \}|\right]\\
&\equiv& |\{x \in S^t \mid  \|u-x\|_2 \le a\Big(\frac{\log n}{n}\Big)^{1/t}, w^{T}x \ge \beta \}| - |\{x \in S^t \mid   \|u-x\|_2 \le b\Big(\frac{\log n}{n}\Big)^{1/t} \}|\\
&\equiv& c_t \left(r_2/2\right)^t -c_tr_1^t
\end{eqnarray*}
If $c_t \left(r_2/{2}\right)^t -c_tr_1^t > 0$, then,
\begin{eqnarray*}
1\le \frac{|\mathcal{R}_{L}^1|}{|\mathcal{R}_{L}^2|} &\le&  \frac{c_t(r_2^t-r_1^t)}{c_t \left(r_2/{2}\right)^t -c_tr_1^t}-1\\
&=& \frac{ a^t -b^t}{ \left(a/{2}\right)^t -b^t} -1 = \delta'\\
\end{eqnarray*}
Hence, 
\begin{eqnarray*}
2\le \frac{|\mathcal{R}_{L}^1|+ |\mathcal{R}_L^2|}{|\mathcal{R}_{L}^2|} &\le&  1+\delta'\\
\end{eqnarray*}
This gives us that $\frac{\min (|\mathcal{R}_{L}^{1}|,|\mathcal{R}_{L}^{2}|)}{|\mathcal{R}_{L}^{1}|+|\mathcal{R}_{L}^{2}|} =\frac{|\mathcal{R}_{L}^2|}{|\mathcal{R}_{L}^1|+|\mathcal{R}_{L}^2|} \geq \frac{1}{1+\delta'} = \delta$. 
Hence, the claim of the lemma is satisfied   if $\left(a/{2}\right)^t -b^t > 0$ i.e. $a>2b.$
\end{proof}

\begin{corollary}\label{cor:ratio}
For a particular node $u$ in ${\rm RAG}_t\left(n,\left[b\Big(\frac{\log n}{n}\Big)^{1/t},a\Big(\frac{\log n}{n}\Big)^{1/t}\right]\right)$, consider a hyperplane $L$ passing through $u$ such that $\mathcal{A}_{L} \not \subset B_t(u,a\Big(\frac{\log n}{n}\Big)^{1/t})$, then $ \frac{\min (|\mathcal{R}_{L}^{1}|,|\mathcal{R}_{L}^{2}|)}{|\mathcal{R}_{L}^{1}|+|\mathcal{R}_{L}^{2}|} \ge  \frac{ \left(a/2\right)^t -b^t}{ a^t -b^t}$.
\end{corollary}
\begin{proof}
Using Lemma \ref{conjec:ratio}, $ \frac{\min (|\mathcal{R}_{L}^{1}|,|\mathcal{R}_{L}^{2}|)}{|\mathcal{R}_{L}^{1}|+|\mathcal{R}_{L}^{2}|} \ge \delta = \frac{1}{1+ \delta'}$ where $\delta' = \frac{ \left(a\right)^t -b^t}{ \left(a/{2}\right)^t -b^t} -1 $.
%Using $a= 2^{1+1/t}b$, $\delta' =  {2^{t+1} - 2} $ we obtain that, 
%\begin{eqnarray*}
%\delta &\ge&  \frac{1}{{2^{t+1} - 1}}.
%\end{eqnarray*}
\end{proof}

For a node $u$, recall that the hyperplane $L^{\star}_u : x_1=u_1$ is normal to the line joining $u_0$ and the origin and passes through $u$. We now have the following lemma, which tries to show that if the plane satisfies $\mathcal{A}_{L^{\star}_u}  \subseteq B_t(u,r_2)$ then the node $u$ must be within $r_2$  distance of $u_0$. 
\begin{lemma*}(\ref{lem:cut_twice})
For a particular node $u$ and corresponding hyperplane $L^{\star}_u$, if $\mathcal{A}_{L^{\star}_u}  \subseteq B_t(u,r_2)$  then $u$ must be within $r_2$ of $u_0$.
\end{lemma*}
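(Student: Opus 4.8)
The plan is to reduce the statement to a one-line computation in the single scalar $u_1 = \langle u, u_0\rangle$, the first coordinate of $u$, since both the hypothesis $\mathcal{A}_{L^{\star}_u}\subseteq B_t(u,r_2)$ and the conclusion $\|u-u_0\|_2\le r_2$ are determined by $u_1$ alone.

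First I would note that $u$ itself lies on the slice $\mathcal{A}_{L^{\star}_u}=\{x\in S^t: x_1=u_1\}$, and that this slice is a $(t-1)$-dimensional sphere of radius $\rho:=\sqrt{1-u_1^2}$, lying in the affine hyperplane $\{x_1=u_1\}$ with center $(u_1,0,\dots,0)$. The point of $\mathcal{A}_{L^{\star}_u}$ farthest from $u$ is then the antipode of $u$ inside this sub-sphere, namely $u^{\dagger}:=(u_1,-u_2,\dots,-u_{t+1})$, at Euclidean distance $\|u-u^{\dagger}\|_2=2\rho=2\sqrt{1-u_1^2}$. Applying the hypothesis to this single point $u^{\dagger}$ yields $2\sqrt{1-u_1^2}\le r_2$, i.e.\ $1-u_1^2\le r_2^2/4$.

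It then remains only to combine this with the identity $\|u-u_0\|_2^2 = 2-2\langle u,u_0\rangle = 2(1-u_1)$, valid since $u,u_0\in S^t$. Writing
\begin{align*}
\|u-u_0\|_2^2 = 2(1-u_1) = \frac{2(1-u_1^2)}{1+u_1} \le \frac{r_2^2}{2(1+u_1)}
\end{align*}
one sees that $\|u-u_0\|_2 < r_2$ follows as soon as $1+u_1\ge 1/2$, in particular whenever $u_1\ge 0$.

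The one point that needs care — and the only thing I would call a (mild) obstacle — is controlling the sign of $u_1$: the bound $1-u_1^2\le r_2^2/4$ forces $|u_1|=1-o(1)$ but, taken in isolation, does not by itself exclude $u_1\approx -1$, in which case $u$ would instead lie within $O(r_2)$ of the antipode $-u_0$. In the application (the monotone first-coordinate walk in the proof of Theorem~\ref{thm:highdem1}) this residual branch does not arise: the vertex to which Lemma~\ref{lem:cut_twice} is applied is the terminus of a walk that has already moved into the hemisphere containing the pole $u_0$, so one may carry $u_1\ge 0$ as a standing hypothesis there (or dispose of the $u_1<0$ branch separately). The core geometric estimate above is unaffected either way.
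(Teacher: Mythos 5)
Your proof is essentially the same as the paper's: both identify the farthest point of $\mathcal{A}_{L^{\star}_u}$ from $u$ as the in-slice antipode $(u_1,-u_2,\dots,-u_{t+1})$, translate the hypothesis into $4(1-u_1^2)\le r_2^2$, and compare against $\|u-u_0\|_2^2=2(1-u_1)$. Your explicit flag on the sign of $u_1$ is a genuine improvement over the paper's write-up, whose final chain $\sqrt{(4-r_2^2)/4}\ge(2-r_2^2)/2$ only lower-bounds $|u_1|$ rather than $u_1$ and thus silently ignores the $u_1<0$ branch (where $u$ is near $-u_0$); as you note, that branch must be ruled out by the context of the monotone first-coordinate walk, not by the lemma's local argument.
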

\begin{proof}
%Consider a rotation of the sphere (by multiplying every vector $x \in \mathcal{S}^{t}$ by a suitable orthonormal matrix $R$) such that the pole $a$ is placed at a point $(1,0,0,\ldots,0)$ and the point $u$ moves to a position $u'$. Consider the vertical plane $L^{\star}$ in this new coordinate system which is $x_1=u_1'$. 
%The point diametrically opposite $u$ (say $v$) is the farthest from $u$ and lies on both $\mathcal{A}_{L^{\star}_u}$ and $\mathcal{S}^t$. 
The reflection of  $u$ in x-axis (say $v$) is the farthest from $u$ that lies on both $\mathcal{A}_{L^{\star}_u}$ and $\mathcal{S}^t$.  Now we want to show that $v = (u_1,-u_2,\ldots,-u_{t+1})$ has the following property:  if $\|u-v\|_2\le r_2$ then $\|u-u_0\|_2\le r_2$.
We are given that
\begin{eqnarray*}
u_1^2+u_2^2+\dots+u_{t+1}^2 = 1\\
d(u,v) = \sqrt{4(u_2^2+\ldots+u_{t+1}^2)}\le r_2\\
4(1-u_1^2)\leq r_2^2
\end{eqnarray*}
We need to show that, 
\begin{eqnarray*}
d(u,u_0)^2 &=& {(1-u_1)^2 + (u_2^2+\ldots+u_{t+1}^2) }\\
&=&(1-u_1)^2 + (1-u_1^2) \\
&=&  {2-2u_1}\\
&\leq & r_2^2
\end{eqnarray*}
%Required: $u_1\geq\frac{2-r_s^2}{2}$\\
%Known: $u_1\geq \sqrt{\frac{4-r_s^2}{4}}$\\
which holds if $\sqrt{\frac{4-r_2^2}{4}}\ge \frac{2-r_2^2}{2}$. Notice that, 
\begin{eqnarray*}
&\sqrt{\frac{4-r_2^2}{4}}\ge \frac{2-r_2^2}{2}\\
 &\implies 4-r_2^2 \geq4-4r_2^2+r_2^4\\
 & \implies r_2^4-3r_2^2\leq 0 \\
 & \implies r_2^2(r_2^2-3) \le 0
\end{eqnarray*}
which is true since $0 \le  r_2 \le 1$.
\end{proof}
Since, we do not know the location of the pole, we need to show that every point has a neighbor on both sides of the plane $L$ no matter what the orientation of the plane given that $\mathcal{A}_{L} \not \subset B_t(u,r_2)$. For this we need to introduce the concept of VC Dimension. Define $(X,R)$ to be a range space if $X$ is a set (possibly infinite) and $R$ is a family of subsets of $X$. For any set $A \subseteq X$, we define $P_{R}(A)=\{r \cap A \mid r \in R\}$ to be the projection of $R$ on $A$. Finally we define the VC dimension $d$ of a range space $(X,R)$ to be $d=\sup_{A \subseteq X} \{|A| \mid |P_R(A)|=2^{A} \}$. Next we give a modified version of a well-known theorem about VC-dimension (\cite{haussler1987}).

\begin{theorem}\label{thm:VC_dim}
Let $(X,R)$ be a range space of VC dimension $d$ and let $U$ be a uniform probability measure defined on $X$. In that case, if we sample a set $\mathcal{M}$ of $m$ points according to $U$ such that
\begin{align*}
m \ge \max \Big(\frac{8d}{\epsilon}\log \frac{8d}{\epsilon}, \frac{4}{\epsilon}\log \frac{2}{\eta} \Big)
\end{align*}
then with probability $1-\eta$ for any set $r \in R$ such that $\Pr_{x \sim_U X}(x \in r) \ge \epsilon $, we have $|r \cap \mathcal{M}| \neq \Phi$.  \\
%{\color{red} This is the ideal infinite version which we have to prove/find}
\end{theorem}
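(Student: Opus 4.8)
The plan is to establish this as a continuous-measure version of Haussler's $\epsilon$-net theorem, via the standard double-sampling (symmetrization) argument. Call the sample $\mathcal{M}$ \emph{bad} and write $E_1$ for the event that there is a range $r\in R$ with $\Pr_{x\sim_U X}(x\in r)\ge\epsilon$ yet $r\cap\mathcal{M}=\Phi$; the entire goal is to bound $\Pr(E_1)\le\eta$ under the stated lower bound on $m$. The only real difference from the textbook (finite) statement is that $X$ may be infinite, so the argument must never enumerate ranges, and must instead always restrict ranges to a \emph{finite} sample where the Sauer--Shelah lemma applies; I will also tacitly assume the usual mild measurability (``admissibility'') of the range space, which is clear for the families of annulus/hyperplane intersections to which the theorem is later applied.

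First I would introduce a \emph{ghost sample} $\mathcal{M}'$ of another $m$ i.i.d.\ draws from $U$, independent of $\mathcal{M}$, and let $E_2$ be the event that there is $r\in R$ with $\Pr_U(r)\ge\epsilon$, $r\cap\mathcal{M}=\Phi$, and $|r\cap\mathcal{M}'|\ge\epsilon m/2$. Conditioned on any fixed range $r$ with $\Pr_U(r)=p\ge\epsilon$ and $r\cap\mathcal{M}=\Phi$, the count $|r\cap\mathcal{M}'|$ is binomial with mean $pm\ge\epsilon m$, so a one-line Chebyshev estimate gives $\Pr(|r\cap\mathcal{M}'|<\epsilon m/2)\le 4/(\epsilon m)\le 1/2$ as soon as $\epsilon m\ge 8$ (which follows from $m\ge\frac{8d}{\epsilon}\log\frac{8d}{\epsilon}$ with $d\ge 1$). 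Applying this to a range that witnesses $E_1$ — which depends on $\mathcal{M}$ only and hence is independent of $\mathcal{M}'$ — yields $\Pr(E_2)\ge\tfrac12\Pr(E_1)$, so it suffices to show $\Pr(E_2)\le\eta/2$.

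Next comes the symmetrization step, which is the technical core. Let $T$ be the combined multiset $\mathcal{M}\cup\mathcal{M}'$ of $2m$ points; since the $2m$ points are i.i.d., conditioned on $T$ the split into $\mathcal{M}$ and $\mathcal{M}'$ is a uniformly random one, equivalently obtained by independently swapping the $i$-th point of $\mathcal{M}$ with the $i$-th point of $\mathcal{M}'$ with probability $\tfrac12$. Conditioning on $T$, the only ranges that matter are distinguished by their traces $r\cap T$, and the number of distinct such traces is the shatter function $\Pi_R(2m)\le\left(\tfrac{2em}{d}\right)^d=(2m)^{O(d)}$ by Sauer--Shelah; this is precisely where the VC dimension $d$ enters and where the infinite ground set causes no trouble, since $T$ is finite. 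For each fixed trace contributing to $E_2$, at least $\epsilon m/2$ of its points must all land on the $\mathcal{M}'$ side under the random swap, an event of probability at most $2^{-\epsilon m/2}$; a union bound over the traces gives $\Pr(E_2\mid T)\le (2m)^{O(d)}\,2^{-\epsilon m/2}$, hence $\Pr(E_2)\le(2m)^{O(d)}2^{-\epsilon m/2}$.

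Finally I would combine the two bounds into $\Pr(E_1)\le 2(2m)^{O(d)}2^{-\epsilon m/2}$ and verify that the stated lower bound on $m$ drives this below $\eta$: splitting the exponential factor, the clause $m\ge\frac{4}{\epsilon}\log\frac{2}{\eta}$ absorbs an $e^{-\Theta(\epsilon m)}$ against the target $\eta$, while $m\ge\frac{8d}{\epsilon}\log\frac{8d}{\epsilon}$ is exactly what is needed to dominate the polynomial factor $(2m)^{O(d)}$ by the remaining $e^{-\Theta(\epsilon m)}$; this last step is an elementary (if slightly fiddly) inequality of the form $(2m)^d\le e^{c\epsilon m}$ and is the only place genuine arithmetic is required. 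I expect the main obstacle to be not any single estimate but getting the symmetrization bookkeeping right for a continuous ground space: one must argue carefully that conditioning on the unordered $2m$-point multiset $T$ reduces $E_2$ to a finite combinatorial event so that the Sauer--Shelah bound legitimately applies, and that all intermediate events are measurable — which is precisely where a faithful continuous analogue of Haussler's theorem departs from the discrete original.
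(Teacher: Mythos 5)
Your proposal is correct and follows essentially the same route as the paper's proof: the ghost sample, the Chebyshev step showing $\Pr(E_2)\ge\tfrac12\Pr(E_1)$, the symmetrization conditioned on the combined $2m$-point multiset, the Sauer--Shelah bound on traces, and the final union bound. The only cosmetic difference is the symmetrization model (independent pairwise swaps versus a uniformly random equal partition of the $2m$ points); both yield the same $2^{-p}$ bound and are interchangeable here.
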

\begin{proof}
Define a set $r \in R$ to be \emph{heavy} if $\Pr_{x \sim_{U} X} (x \in r) \ge \epsilon$. We pick two random samples $N$ and $T$ each of size $m$ according to the uniform distribution defined on $X$. Consider the event $E_1$ (bad event) for which there exists a heavy $r \in R$ such that $r \cap N=\Phi$. Consider another event 
$E_2$ for which there exists a heavy $r \in R$ such that $r \cap N=\Phi$ and $|r \cap T| \ge \frac{\epsilon m}{2}$. Now, since $r$ is heavy, assume that $\Pr_{x \sim_{U} X} (x \in r)=\alpha$ such that $\alpha>\epsilon$. In that case, $|r \cap T|$ is a Binomial random variable with mean $\alpha m$ and variance at most $\alpha m$ as well. Hence, we have that 
\begin{align*}
\Pr (E_2 \mid E_1)&=\Pr(|r \cap T| \ge \frac{\epsilon m}{2})=1- \Pr(|r \cap T| \le \frac{\epsilon m}{2})  \\
&\ge 1- \Pr(|r \cap T| \le \frac{\alpha m}{2})  \ge 1- \frac{\alpha m}{ (\frac{\alpha m}{2})^{2}} \ge 1-\frac{4}{m \alpha} 
\end{align*} 
Now for $m \ge \frac{8}{\epsilon} \ge \frac{8}{\alpha}$, we conclude that $\Pr (E_2 \mid E_1) \ge \frac{1}{2}$. Now consider the same experiment in a different way. Consider picking $2m$ samples according to the uniform distribution from $X$ and then equally partition them randomly between $N$ and $T$. Consider the following event for a particular set $r$.
\begin{align*}
E_r: r \cap N= \Phi \text{ and } |r \cap T| \ge \frac{\epsilon m}{2} 
\end{align*}
and therefore
\begin{align*}
E_2=\bigcup _{r:\textup{ heavy}} E_r 
\end{align*}
Let us fix $N \cup T$ and define $p=|r \cap (N \cup T)|$. In that case, we have 
\begin{align*}
\Pr(r \cap N= \Phi \mid |r \cap (N \cup T)| \ge \frac{\epsilon m}{2})=\frac{(2m-p)(2m-p-1)\dots(m-p+1)}{2m(2m-1) \dots (2m-p+1)} \le 2^{-p} \le 2^{-\frac{\epsilon m}{2}}
\end{align*}
The last inequality holds since $p \ge \frac{\epsilon m}{2}$. Now, since the VC dimension of the range space $(X,R)$ is $d$, the cardinality of the set $\{ r \cap (N \cup T) \mid r \in R\}$ is at most $\sum_{i \le d} {2m \choose i} \le (2m)^{d}$ ( see \cite{shalev2014understanding}). Notice that
\begin{align*}
\Pr(E_r)=\Pr(r \cap N= \Phi \mid |r \cap (N \cup T)| \ge \frac{\epsilon m}{2}) \Pr(|r \cap (N \cup T)| \ge \frac{\epsilon m}{2}) \le 2^{-\frac{\epsilon m}{2}}
\end{align*}
Therefore by using the union bound over the possible number of distinct events $E_r$, we have
\begin{align*}
\Pr(E_2) \le (2m)^{d} 2^{-\frac{\epsilon m}{2}}
\end{align*}
Since $\Pr(E_2 \mid E_1) \ge \frac{1}{2}$ and $\Pr( E_1 \mid E_2)=1$, we must have 
\begin{align*}
\Pr(E_1) \le 2(2m)^{d} 2^{-\frac{\epsilon m}{2}} \le \delta
\end{align*}
which is ensured by the statement of the theorem.
\end{proof}

In order to use this theorem consider the range space $(X,\mathcal{R}_{u})$ where $X$ is the set of points in $\mathcal{S}^t$ and $\mathcal{R}_u$ be the family of sets $\{x \in S^t \mid b\Big(\frac{\log n}{n}\Big)^{1/t} \le  \|u-x\|_2 \le a\Big(\frac{\log n}{n}\Big)^{1/t}, w^{T}x \ge \beta , \mathcal{A}_{L:w^{T}x=\beta} \not \subset B_t(u,r_2)\}$. We now have the following lemma about the VC Dimension of the above range space which is a straightforward extension of VC dimension of half-spaces  (\cite{shalev2014understanding}):
\begin{lemma}\label{lem:VC_dimension}
VC dimension of the range space $(X,\mathcal{R}_u) \le t+1$. 
\end{lemma}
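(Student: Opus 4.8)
The plan is to realize $\mathcal{R}_u$ as a sub-family of $\{\,C\cap H\,\}$, where $C$ is a single fixed set and the sets $H$ range over the half-spaces cut out by hyperplanes through $u$, and then to invoke two standard monotonicity properties of the VC dimension. First I would note that the annular constraint $r_1\le\|u-x\|_2\le r_2$ appearing in the definition of $\mathcal{R}_u$ does not depend on the hyperplane $L$: it is precisely the fixed set $C:=B_t(u,[r_1,r_2])$. Consequently every member of $\mathcal{R}_u$ can be written as
\[
C\cap\{x\in S^t:\ w^{T}(x-u)\ge 0\},
\]
where $L:w^{T}(x-u)=0$ is the hyperplane through $u$ with normal $w$, and the extra condition $\mathcal{A}_L\not\subseteq B_t(u,r_2)$ only restricts which normals $w$ are admissible, i.e.\ it only deletes members from the family. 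Hence $\mathcal{R}_u\subseteq\{\,C\cap H:\ H\in\mathcal{H}_u\,\}$, where $\mathcal{H}_u$ denotes the family of traces on $S^t$ of the closed half-spaces whose bounding hyperplane passes through $u$.

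Next I would use the following two elementary facts. (a) If a range space $(X,\mathcal{R})$ has VC dimension $d$ and $C\subseteq X$ is a fixed set, then the range space $(X,\{C\cap r:\ r\in\mathcal{R}\})$ has VC dimension at most $d$: any $A$ shattered by the latter must satisfy $A\subseteq C$ (each trace is a subset of $C$), and for such $A$ one has $(C\cap r)\cap A=r\cap A$, so $A$ is already shattered by $\mathcal{R}$ and therefore $|A|\le d$. (b) Passing to a sub-family never increases the VC dimension. Applying (a) with $C=B_t(u,[r_1,r_2])$ and $\mathcal{R}=\mathcal{H}_u$, and then (b) to descend from $\{C\cap H:\ H\in\mathcal{H}_u\}$ to $\mathcal{R}_u$, it suffices to bound the VC dimension of $\mathcal{H}_u$.

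Finally, to bound the VC dimension of $\mathcal{H}_u$ I would translate the coordinate system so that $u$ is moved to the origin; under this translation $\mathcal{H}_u$ becomes exactly the family of homogeneous half-spaces $\{x:\ w^{T}x\ge 0\}$ in $\reals^{t+1}$, restricted to the (translated copy of the) sphere. The class of homogeneous half-spaces in $\reals^{t+1}$ is well known to have VC dimension $t+1$ (see \cite{shalev2014understanding}), and restricting a range family to a fixed subset of the ground set cannot increase its VC dimension, by fact (a) again. Combining this with the reduction of the previous paragraph yields VC dimension$(X,\mathcal{R}_u)\le t+1$.

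The argument is essentially bookkeeping; the one place that requires care is recognizing that all hyperplanes in question pass through $u$, so that after translation we are dealing with \emph{homogeneous} half-spaces in $\reals^{t+1}$ (VC dimension $t+1$) rather than general affine half-spaces (which would only give the weaker bound $t+2$), together with the routine verification that neither the annular constraint nor the side condition $\mathcal{A}_L\not\subseteq B_t(u,r_2)$ can enlarge the range family.
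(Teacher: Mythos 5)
Your proof is correct and takes a genuinely different route from the paper's. The paper argues directly: take $t+2$ points, apply Radon's lemma to split them into two sets whose convex hulls intersect, and derive a contradiction with the existence of a separating hyperplane. Your proof instead reduces to the standard fact that homogeneous half-spaces in $\reals^{t+1}$ have VC dimension $t+1$, after peeling away the annular constraint and the side condition on $\mathcal{A}_L$ via the two monotonicity facts (intersection with a fixed set; passing to a sub-family) and translating $u$ to the origin. Your version is the more careful of the two. The paper's Radon step as written does not quite account for the ambient dimension: Radon's lemma guarantees a partition of $d+2$ points in $\reals^d$, so for points on $S^t\subset\reals^{t+1}$ it only kicks in at $t+3$ points, and indeed arbitrary affine half-spaces restricted to $S^t$ have VC dimension $t+2$, not $t+1$. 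The extra dimension is saved precisely because every hyperplane in $\mathcal{R}_u$ passes through $u$ — which is exactly what you isolate by translating to the origin and invoking the \emph{homogeneous} half-space bound (where linear, rather than affine, dependence of $t+2$ vectors drives the sign argument). The paper never makes this structural point explicit and also silently drops the annular and side constraints, whereas your monotonicity reductions handle them cleanly. The two arguments aim at the same bound, but yours is the one in which every step is justified at the stated dimension.
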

\begin{proof}
In order to show this, consider a set $\mathcal{S}$ of $t+2$ points. Recall that the convex hull of a set $S$ of points $\{x_i\}_{i=1}^{n}$ is the set 
\begin{align*}
C(S)=\{ \sum \lambda_{i}x_{i} \mid \sum \lambda_i=1, \lambda_i \ge 0  \}.
\end{align*}
By Radon's lemma (\cite{shalev2014understanding}) we have that the set of points $\mathcal{S}$ can be partitioned into two sets $\mathcal{S}_1$ and $\mathcal{S}_2$ such that their convex hulls intersect.
Let $p \in \mathcal{S}_1$ be a point in that intersection. Assume there exist a hyperplane such that
\begin{align*}
w^{T}x_i  \le w_0,  \forall x_i  \in \mathcal{S}_1 \\
 w^{T}x_i \ge w_0, \forall x_i  \in \mathcal{S}_2.
\end{align*}
Since $p$ is in the convex hull of $\mathcal{S}_1$ we must have that $w ^{T}p \le w_0$. But then,
\begin{align*}
w^{T}p= \sum_{i:x_i \in \mathcal{S}_2} \lambda_i w^{T} x_i  >( \sum_{i \in \mathcal{S}_2} \lambda_i ) \min_{i: x_i \in \mathcal{S}_2} w^{T}x_i= \min (w^{T} x_i)>w_0.
\end{align*}
which is a contradiction. Hence it is not possible to shatter $t+2$ elements and therefore the VC dimension of this range space is at most $t+1$.
\end{proof}

Using the results shown above, we are ready to prove the following Lemma.

\begin{lemma*}(\ref{lem:high_stuff1})
If we sample $n$ nodes from $S^t$ according to ${\rm RAG}_t(n,[r_1,r_2])$ with $r_1=b\Big(\frac{\log n}{n}\Big)^{1/t}$, $r_2=a\Big(\frac{\log n}{n}\Big)^{1/t}$ , then for every node $u$ and every hyperplane $L$ passing through $u$ such that $\mathcal{A}_L \not \subset B(u,r_2)$, node $u$ has a neighbor on both sides of the hyperplane $L$ with probability at least $1-\frac{1}{n}$ provided 
\begin{align*}
(a/2)^t-b^t \ge \frac{8|S^t|(t+1)}{c_t}
\end{align*} and $a > 2b.$
\end{lemma*}

\begin{proof}
Recall that the volume of $B(u,r_1,r_2)$ is $c_t(r_2^t-r_1^t)=\frac{c_t \log n}{n}(a^t-b^t)$. According to Corollary \ref{cor:ratio}, whenever $a>2b$, $ \frac{\min (|\mathcal{R}_{L}^{1},\mathcal{R}_{L}^{2}|)}{|\mathcal{R}_{L}^{1}|+|\mathcal{R}_{L}^{2}|} \ge \delta$ where $\delta =  \frac{ \left(a/2\right)^t -b^t}{ a^t -b^t}$ when the hyperplane $L$ satisfies the conditions of the lemma. In that case, we have that for all $r \in R_u$,
\begin{align*}
 \Pr_{x \sim_U X}(x \in r) \ge \frac{\delta c_t \log n}{ n |S^t|}(a^t-b^t).
\end{align*}
Since VC Dimension of $(X,R_u) \le t+1$ and $n$ points are sampled from $X$, the conditions of Theorem \ref{thm:VC_dim} is satisfied for $\eta=\frac{2}{n^{2}}$ if
\begin{align*}
n \ge \max \Big ( \frac{8n|S^t|(t+1)}{\delta c_t \log n (a^t-b^t)}\log \frac{8 n|S^t|(t+1)}{\delta c_t\log n (a^t-b^t)},\frac{8n|S^t|}{\delta c_t \log n (a^t-b^t)} \log n  \Big)
\end{align*} 
Since $\lim_{n \rightarrow \infty} \frac{1}{\log n} \log \frac{8|S^t| n(t+1)}{\delta c_t\log n (a^t-b^t)} \rightarrow 1$ for constant $t$, hence we have that 
\begin{align*}
a^t-b^t \ge \frac{8|S^t|(t+1)}{c_t\delta}.
\end{align*} 
By taking a union bound over all the $n$ range spaces $(X,R_u)$ corresponding to the $n$ nodes and applying the statement of Theorem \ref{thm:VC_dim}, we have proved the lemma.
\end{proof}

%%%%%%%%%%%%%%%%%%%%%%%%%%%%%%%%%%%%%%%%%%%%%%%
%%%%%%%%%%%%%%%%%%%%%%%%%%%%%%%%%%%%%%%%%%%%%%%%%%%%%%%%%%%%%%%%%%%%%%%%%%%%%%%%%%%%%%%%%%%%%%%%%%%%%%%%%%%%%%%%%%%%%%%%%%%%%%%%%%%%%%%%%%%%%%%%%%%%%%%%%%%%%%%%%%%%%%%%%%%%%%%%%%%%%%%%%%%%%%%%%%%%%%%%%%%%%%%%%%%%%%%%%%%%%%%%%%%%%%%%%%%%%%%%%%%%%%%%%%%%%%%%%%%%%%%%%%%%%%%%%%%%%%%%%%%%%%%%%%%%%%%%%%%%%%%%%%%%%%%%%%%%%%%%%%%%%%%%%%%%%%%%%%%%%%%%%%%%%%%%%%%%%%%%%%%%%%%%%%%%%%%%%%%%%%%%%%%%%%%%%%%%%%%%%%%%%%%%%%%%%%%%%%%%%%%%%%%%%%%%%%%%%%%%%%%%%%%%%%%%%%%%%%%%%%%%%%%%%%

%Now we are ready to prove the final upper bound. Note that to prove Proposition~\ref{}, we defined two events and shown that the events happen with high probability. The second event happens with high probability as long as $a >b >0$, by choosing $k$ and $\epsilon$ appropriately (that is $k = \lceil b/(a-b)\rceil+1$ and $\epsilon < \frac1{2k}$). On the other hand for the first event to happen with high probability, we needed $a-b> 1$. In the next lemma we will show that the first event happens with  

 \section{The Geometric Block Model: Details}
\label{sec:gbm-detail}

\subsection{Immediate consequence of VRG connectivity}
The following lower bound for GBM can be obtained as a  consequence of Theorem~\ref{thm:rag}.

\begin{theorem*}(\ref{gbm:lower})[Impossibility in GBM]
Any algorithm to recover the partition in ${\rm GBM}(\frac{a \log n}{n},\frac{b \log n}{n})$ will give incorrect output with probability $1-o(1)$ if $a-b < 0.5$ or $a < 1$.
\end{theorem*}
\begin{proof}
%\textcolor{red}{This part should be corrected, may be by adding the multi component proof in the appendix} 
%Recall that an edge between $2$ nodes $u$ and $v$ in the $2$-cluster GBM was formed if the associated random variables $X_u$ and $X_v$ uniformly chosen from $[0,1]$ are close in Lee distance. 
Consider the scenario that not only the geometric block model graph  ${\rm GBM}(\frac{a \log n}{n},\frac{b \log n}{n})$ was provided to us, but also the random values $X_u \in [0,1]$ for all vertex $u$ in the graph were provided. We will show that we will still not be able to recover the correct partition of the vertex set $V$ with probability at least $0.5$ (with respect to choices of $X_u,~u, v\in V$ and any randomness in the algorithm).

In this situation, the edge $(u,v)$ where $d_L(X_u,X_v) \le \frac{b \log n}{n}$ does not give any new information than $X_u,X_v$. However the edges $(u,v)$ where $\frac{b \log n}{n} \le d_L(X_u,X_v) \le \frac{a \log n}{n}$ are informative, as existence of such an edge will imply that $u$ and $v$ are in the same part. These edges constitute a vertex-random graph ${\rm VRG}(n, [\frac{b \log n}{n},\frac{a \log n}{n}])$. But if there are more than two components in this vertex-random graph, then it is impossible to separate out the vertices into the correct two parts, as the connected components can be assigned to any of the two parts and the VRG along with the location values ($X_u, u \in V$) will still be consistent. 

What remains to be seen that ${\rm VRG}(n, [\frac{b \log n}{n},\frac{a \log n}{n}])$ will have $\omega(1)$ components with high probability  if $a-b < 0.5$ or $a < 1$. This is certainly true when $a-b < 0.5$ as we have seen in Theorem~\ref{thm:lower_bound}, there can indeed be $\omega(1)$ isolated nodes with high probability. On the other hand, when $a<1$, just by using an analogous argument it is possible to show that there are $\omega(1)$ vertices that do not have any neighbors on the left direction (counterclockwise). We delegate the proof of this claim as Lemma \ref{lem:disc} in the appendix.  %\textcolor{red}{Give  this proof in  appendix?} 
If there are $k$ such vertices, there must be at least $k-1$ disjoint candidates.This completes the proof.
\end{proof}

%\textcolor{red}{Fill up the corollary below} 

\begin{theorem}[GBM with known vertex locations]\label{thm:gbmplus}
Suppose a geometric block model graph ${\rm GBM}(\frac{a \log n}{n},\frac{b \log n}{n})$ is provided along with the associated  values of the locations $X_u$ for every vertex $u$. Any algorithm to recover the partition in ${\rm GBM}(\frac{a \log n}{n},\frac{b \log n}{n})$ will give incorrect output with probability $1-o(1)$ if $a-b < 0.5$ or $a < 1$. On the other hand it is possible to recover the partition exactly with probability $1-o(1)$ when $a-b > 0.5$ and $a > 1$.
\end{theorem}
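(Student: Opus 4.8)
The plan is to handle the two directions separately. For the impossibility half ($a-b<0.5$ or $a<1$) there is essentially nothing new to do: the proof of Theorem~\ref{gbm:lower} was already carried out in exactly this ``locations-revealed'' model, so I would simply invoke it and the facts it cites. The substance is the positive half, and the plan there is to reduce exact recovery to the connectivity of a vertex-random graph and then quote Theorem~\ref{thm:rag}.

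Here is the reduction I would set up. Condition on all the locations $X_u$. A pair $\{u,v\}$ with $d_L(X_u,X_v)<\frac{b\log n}{n}$ always carries an edge and a pair with $d_L(X_u,X_v)>\frac{a\log n}{n}$ never carries one, so such pairs reveal nothing about the partition; for a pair with $d_L(X_u,X_v)\in[\frac{b\log n}{n},\frac{a\log n}{n}]$ the edge is present if and only if $u,v$ lie in the same part. I would therefore form the signed graph $H$ on $V$ whose edges are exactly the ``annulus pairs'', labelled $+$ (same part) or $-$ (different part) according to whether the GBM edge is present. The point I would stress is that \emph{every} annulus pair is an edge of $H$, no matter which parts its two endpoints occupy --- a missing GBM edge in the annulus range is itself a negative constraint --- so the unsigned graph underlying $H$ is precisely ${\rm VRG}(n,[\frac{b\log n}{n},\frac{a\log n}{n}])$ on the $n$ sampled points.

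To finish I would argue as follows. The true partition witnesses that $H$ is a consistent signed graph (no cycle with an odd number of negative edges), so on each connected component of $H$ the sides of all vertices are pinned down relative to one another, and a whole component may only be flipped as a unit. By Theorem~\ref{thm:rag} (equivalently the sufficiency direction, Theorem~\ref{thm:upper}), when $a>1$ and $a-b>0.5$ the graph ${\rm VRG}(n,[\frac{b\log n}{n},\frac{a\log n}{n}])$ is connected with probability $1-o(1)$; hence $H$ has a single component and the partition is determined up to the unavoidable global swap $V_1\leftrightarrow V_2$. The recovery algorithm is then just a BFS/DFS on $H$ that fixes one vertex and propagates the $\pm$ labels. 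For the impossibility direction the same reduction shows that $H$ equals the same ${\rm VRG}$, which by Theorem~\ref{thm:lower_bound} (for $a-b<0.5$, via $\omega(1)$ isolated vertices) and Lemma~\ref{lem:disc} (for $a<1$, via $\omega(1)$ vertices with no left-neighbour) has $\omega(1)$ connected components with probability $1-o(1)$; independent flips of these components then force error probability $1-o(1)$ for any estimator.

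I do not anticipate a serious obstacle here. The only things that genuinely need care are the claim that $H$ is the \emph{full} VRG and not a proper subgraph of it --- i.e.\ confirming that annulus pairs straddling the two parts are still recorded (as negative edges) rather than discarded --- and the elementary bookkeeping that a consistent connected signed graph has exactly two valid $2$-colourings. Both the connectivity input (Theorem~\ref{thm:rag}) and the disconnection input (Theorem~\ref{thm:lower_bound} and Lemma~\ref{lem:disc}) are already available, so after the reduction the proof is short.
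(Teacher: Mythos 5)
Your proposal is correct and follows essentially the same route as the paper's own proof: for the impossibility half both invoke Theorem~\ref{gbm:lower}, and for the positive half both observe that the annulus pairs form a VRG whose edge/non-edge labels pin down same-cluster/different-cluster relations, so connectivity of that VRG (Theorem~\ref{thm:rag}) propagates the labelling to all of $V$ and determines the partition up to the global swap. Your signed-graph formalization is merely a more explicit packaging of what the paper states informally (``by just checking in the GBM graph whether they are connected by an edge or not we can decide whether they belong to the same cluster or not respectively'').
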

\begin{proof}
We need to only prove that  it is possible to recover the partition exactly with probability $1-o(1)$ when $a-b > 0.5$ and $a > 1$, since the other part is immediate from Theorem~\ref{gbm:lower}. For any pair of vertices $u,v$, we can verify if $d(u,v) \in [\frac{b \log n}{n},\frac{a \log n}{n}]$. If that is the case then by just checking in the GBM graph whether they are connected by an edge or not we can decide whether they belong to the same cluster or not respectively. What remains to be shown that  all vertices can be covered by this procedure. However that will certainly be the case since ${\rm VRG}(n , [\frac{b \log n}{n},\frac{a \log n}{n}])$ is connected with high probability.
\end{proof}

Next we provide the main algorithm for recovery in GBM and its analysis.
\subsection{A recovery algorithm for GBM}

Suppose we are given a graph $G=(V:|V|=n,E)$ with two disjoint parts, $V_1, V_2 \subseteq V$ generated according to ${\rm GBM}(r_s, r_d)$. 
The algorithm (Algorithm ~\ref{alg:alg1}) goes over  all edges  $(u,v)\in E$. It counts the number of triangles  containing the edge  $(u,v)$ by calling the \texttt{process} function that  %to decide if it can confidently label the pair to belong to same cluster.   The \texttt{process} function 
 counts the number of common neighbors of $u$ and $v$.  
 
    \texttt{process} outputs `true' if it is confident that the nodes $u$ and $v$ belong to the same cluster and `false' otherwise. More precisely, if the count is within some prescribed values $E_S$ and $E_D$, it returns `false'\footnote{Note that, the thresholds $E_S$ and $E_D$ refer to the maximum and minimum value of triangle-count for an `inter cluster' edge.}.The algorithm removes the edge on getting a `false' from \texttt{process} function. After processing all the edges of the network, the algorithm is left with a reduced graphs (with certain edges deleted from the original). It then finds the connected components in the graph and returns them  as the parts $V_1$ and $V_2$. %The nodes belonging to the same connected component are returned as belonging to the same cluster.

 \begin{remark}
The algorithm can iteratively maintain the connected components over the processed edges (the pairs for which process function has been called and it returned true) like the union-find algorithm. This reduces the number of queries as the algorithm does not need to  call the \texttt{process} function for the edges which are present in the same connected component. 
 \end{remark}

%\noindent\begin{minipage}{.5\textwidth}
%\setlength{\textfloatsep}{5pt}% Remove \textfloatsep
%\begin{algorithm}[htbp]
\captionof{algorithm}{Cluster recovery in GBM}
\begin{algorithmic}[1]\label{alg:alg1}
{
\REQUIRE GBM $G = (V,E)$, $r_s, r_d$
\FOR {$(u,v)\in E$}
\IF{{\rm process}($u,v,r_s,r_d$)}
\STATE continue
\ELSE
\STATE $E.remove((u,v))$
\ENDIF
\ENDFOR
\RETURN {\rm connectedComponent}$(V,E)$
}
\end{algorithmic}
%\end{algorithm}
%\end{minipage}
 
\captionof{algorithm}{\texttt{process}}
\begin{algorithmic}[1]\label{alg:process}
{
\REQUIRE $u$,$v$, $r_s$, $r_d$
\ENSURE  true/false\\
\COMMENT{Comment: When $a>2b$, $t_1=\min\{t: (2b+t)\log \frac{2b+t}{2b}-t > 1\}, t_2=\min\{t : (2b-t)\log \frac{2b-t}{2b}+t > 1$ and $E_S  = (2b +t_1)\frac{\log n}{n}$ and $E_D  = (2b - t_2)\frac{\log n}{n}$}
\STATE count $\leftarrow |\{z: (z,u)\in E, (z,v)\in E\}|$
\IF{$\frac{\text{count}}{n} \ge E_S(r_d,r_s)$ or $\frac{\text{count}}{n} \le E_D(r_d,r_s)$}
\RETURN true
\ENDIF
\RETURN false
}
\end{algorithmic}
%\end{minipage}

%Next, we describe 
%The \texttt{process} function counts the number of common neighbors of two nodes and then compares the difference of the count with
%two functions of $r_d$ and $r_s$, called $E_D$ and $E_S$. 

It would have been natural to consider two thresholds $E_D$ and $E_S$ and if the triangle count of an edge is closer to $E_S$ than $E_D$, then the two end-points are assigned to the same cluster and otherwise in separate clusters. Indeed such a natural algorithm has been analyzed in \citep{galhotra2017geometric}. On the other hand, here we remove an edge if the triangle count lies in an interval. This is apparently nonintuitive, but gives a significant improvement over the previously known bound (see Figure~\ref{fig:example}). 

\begin{figure}[htbp]
   \centering
   \includegraphics[width=0.5\textwidth]{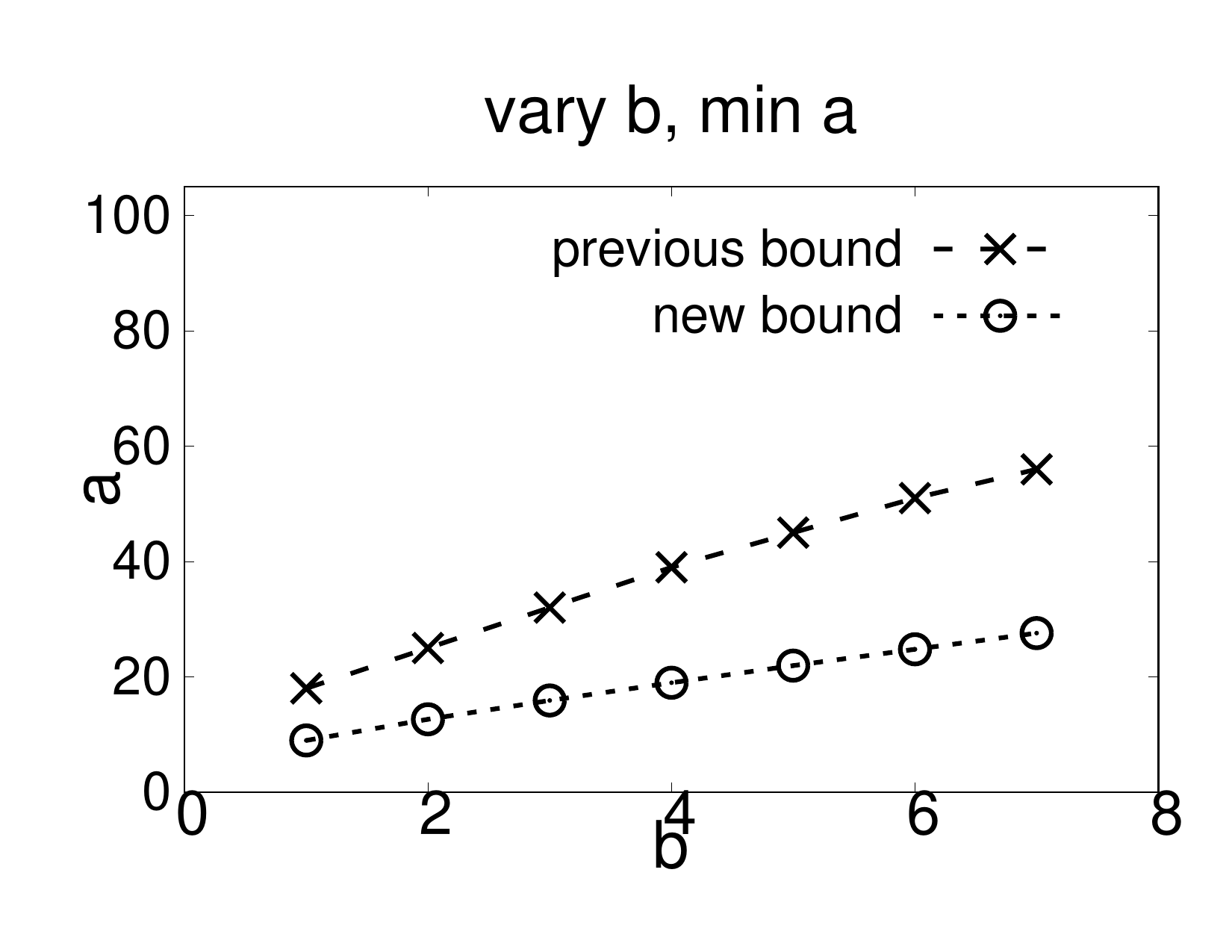} % requires the graphicx package
   \caption{The minimum gap between $a$ and $b$ permitted by our algorithm vs the previously known bound of \citep{galhotra2017geometric}}
   \label{fig:example}
\end{figure}

\subsection{Analysis of Algorithm~\ref{alg:alg1}}
~\label{sec:theory}
Given a GBM graph $G(V,E)$ with two clusters $V = V_1 \sqcup V_2$, and  a pair of vertices $u,v \in V$, the events $\cE^{u,v}_z, z \in V$ of any other vertex $z$ being a common neighbor of both $u$ and $v$ given $(u,v) \in E$ are dependent ; however given the distance between the corresponding random variables  \correct{$d_L(X_u,X_v) =x$}, the events %$\cE^{u,v}_z , z \in V \mid d_L(X_u,X_v) =x$ 
are independent. This is a crucial observation which helps us to overcome the difficulty of handling correlated edge formation.       

%\correct{It is important to note that the event $\cE^{u,v}_z, z \in V$ is independent conditional on the distance between $u$ and $v$ (even when $(u,v)\notin E$). The independence does not exist if we assume only the presence of edge between $u$ and $v$. } 
Moreover, given the distance between two nodes $u$ and $v$ are the same, the probabilities of $\cE^{u,v}_z\mid (u,v) \in E$ are different when $u$ and $v$ are in the same cluster and when they are in different clusters. Therefore the count of the common neighbors are going to be different, and substantially separated with high probability %(due to being sums of independent random variables)
 for two vertices in cases when they are from the same cluster or from different clusters. However, this may not be the case, if we do not restrict the distance to be the same and look at the entire range of possible distances. 
 
 First, we quote two simple lemmas about the expected value of the commons neighbors.
 
 \begin{lemma}\label{lem:sep}
For any two vertices $u,v \in V_i: (u,v) \in E, i =1,2$ belonging to the same cluster with  $d_L(X_u,X_v) = x$, the count of common neighbors $C_{u,v} \equiv |\{z\in V: (z,u), (z,v) \in E\}|$ is a random variable distributed according to ${\rm Bin}(\frac{n}{2}-2, 2r_s-x)$ when $r_s \geq x> 2r_d$ and according to ${\rm Bin}(\frac{n}{2}-2,2r_s-x)+{\rm Bin}(\frac{n}{2},2r_d-x)$ when $x \leq \min(2r_d,r_s)$, where ${\rm Bin}(n,p)$ is a binomial random variable with mean $np$.
\end{lemma}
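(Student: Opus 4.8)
The plan is to compute, for a fixed pair $u,v \in V_i$ with $(u,v)\in E$ and $d_L(X_u,X_v)=x$, the probability that an arbitrary third vertex $z$ is a common neighbor, conditioned on $x$. The key observation already highlighted in the text is that once we condition on the location-distance $x$, the events $\cE_z^{u,v}$ (that $z$ is adjacent to both $u$ and $v$) are mutually independent over the choices of $z$, since each depends only on the independent random location $X_z$. So $C_{u,v}$ is a sum of independent indicators, and I just need to identify which ``type'' of vertex contributes which success probability, and then recognize the resulting sum as a binomial (or sum of two binomials).

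First I would set up the picture on the circle: place $X_u$ and $X_v$ at geodesic distance $x$ apart. A vertex $z$ in the same cluster as $u,v$ (there are $n/2-2$ such vertices besides $u,v$) is a common neighbor iff $d_L(X_z,X_u)\le r_s$ \emph{and} $d_L(X_z,X_v)\le r_s$, i.e. $X_z$ lies in the intersection of two arcs of radius $r_s$ centered at $X_u$ and $X_v$. For $x\le r_s$ this intersection is an arc of total length $2r_s-x$ (two arcs of length $r_s$ overlapping in a segment of length $r_s-x$ on the ``inside'' — more carefully, the union has length $2r_s$ minus... let me just say the standard computation gives measure $2r_s - x$ for the intersection when the circle is large enough that no wraparound occurs, which holds since $r_s=\Theta(\log n/n)$). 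Hence each same-cluster $z$ is a common neighbor independently with probability $2r_s-x$, contributing a ${\rm Bin}(n/2-2,\,2r_s-x)$ term. A vertex $z$ in the \emph{other} cluster ($n/2$ of them) is a common neighbor iff $d_L(X_z,X_u)\le r_d$ and $d_L(X_z,X_v)\le r_d$; the intersection of two arcs of radius $r_d$ at distance $x$ is nonempty only when $x\le 2r_d$, in which case it has measure $2r_d-x$. This gives the ${\rm Bin}(n/2,\,2r_d-x)$ term, present only when $x\le 2r_d$.

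Combining the two cases by comparing $x$ with the thresholds $2r_d$ and $r_s$: when $2r_d < x \le r_s$ only the same-cluster vertices can be common neighbors (the other-cluster intersection is empty), so $C_{u,v}\sim {\rm Bin}(n/2-2,\,2r_s-x)$; when $x\le \min(2r_d,r_s)$ both families contribute, and since the two collections of indicators are independent of each other (disjoint sets of vertices, independent locations), $C_{u,v}$ is distributed as the independent sum ${\rm Bin}(n/2-2,\,2r_s-x)+{\rm Bin}(n/2,\,2r_d-x)$, exactly as claimed. I would also remark that the conditioning on $(u,v)\in E$ affects only $X_u,X_v$ (forcing $d_L(X_u,X_v)\le r_s$, automatically satisfied by $x\le r_s$) and does not alter the distribution of the $X_z$'s, so the independence used above is legitimate.

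The only mild subtlety — and the step I would be most careful about — is verifying that the arc-intersection length is exactly $2r_s-x$ (resp. $2r_d-x$) with no boundary/wraparound correction. This is where one uses that $r_s, r_d = O(\log n / n) = o(1)$, so all relevant arcs have length far below the circumference $1$, and the two radius-$r_s$ caps around points at distance $x\le r_s$ intersect in a single arc whose endpoints are at distance $r_s$ from the farther center on each side; a one-line geometric computation on the line (lifting to the universal cover) gives length $r_s - (x - r_s) = 2r_s - x$ on one side... more precisely the intersection $[X_u-r_s,X_u+r_s]\cap[X_v-r_s,X_v+r_s]$ with $|X_u-X_v|=x$ has length $2r_s-x$. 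I would state this explicitly as the elementary fact underpinning the lemma, and note that the same computation with $r_d$ in place of $r_s$ gives the cross-cluster term, with the convention that the length is $0$ (empty intersection) once $x>2r_d$. Everything else is just bookkeeping of which of the $n$ vertices falls into which independent family.
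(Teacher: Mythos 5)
Your proposal is correct and follows essentially the same route as the paper's proof: fix $x$, note that the events $\cE^{u,v}_z$ are then independent across $z$, compute the arc-intersection measure $2r_s-x$ (respectively $2r_d-x$, zero once $x>2r_d$) for the two vertex types, and count $n/2-2$ same-cluster and $n/2$ other-cluster candidates. The paper's proof states the two per-$z$ probabilities directly without spelling out the interval-intersection calculation that you include, but the argument is the same.
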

 
\begin{proof}
%Let $X_w\in [0,1]$ be the uniform random variable associated with $w \in V$. Let us also denote by $\dist(X,Y) \equiv \min\{|X - Y|, 1- |X- Y|\}, X,Y \in \reals$. 
Without loss of generality, assume $u,v \in V_1$. For any vertex $z \in V$, let $\cE^{u,v}_z \equiv \{(u,z), (v,z) \in E\}$ be the event that $z$ is a common neighbor.
For $z\in V_1$,
\begin{align*}
&\Pr(\cE^{u,v}_z) =\Pr( (z,u) \in E, (z,v) \in E) \\
& = 2r_s - x,
\end{align*}
since $ \dist(X_u,X_v) =x$.
For $z \in V_2$, we have, 
\begin{align*}
&\Pr(\cE^{u,v}_z) =  \Pr( (z,u), (z,v) \in E ) \\
& =  \begin{cases}2r_d-x & \text{ if } x<2r_d \\ 0 & \text{ otherwise} \end{cases} .
\end{align*}
Now since there are $\frac{n}2-2$ points in $V_1 \setminus \{u,v\}$ and $\frac{n}2$ points in $V_2$, we have the statement of the lemma. 
\end{proof} 

In a similar way, we can prove.
 \begin{lemma}\label{lem:sep2}
For any two vertices $u\in V_1,v \in V_2: (u,v) \in E$ belonging to different clusters with $d_L(X_u,X_v) = x$ , the count of common neighbors $C_{u,v} \equiv |\{z\in V: (z,u), (z,v) \in E\}|$ is a random variable distributed according to ${\rm Bin}(n-2,2r_{d})$ when $r_s > 2r_d$ and according to ${\rm Bin}(n-2,\min(r_{s} + r_d -x,2r_d))$ when $r_s \leq 2r_d$ and $x\leq r_d$.
\end{lemma}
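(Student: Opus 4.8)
The plan is to follow the proof of Lemma~\ref{lem:sep} verbatim in structure, the one new ingredient being a symmetry observation that makes the final distribution a \emph{single} binomial rather than a sum of two. Fix $u\in V_1$ and $v\in V_2$ with $(u,v)\in E$. Since $u$ and $v$ lie in different clusters, the edge $(u,v)$ exists precisely when $d_L(X_u,X_v)\le r_d$; hence conditioning on $d_L(X_u,X_v)=x$ with $x\le r_d$ already encodes the event $(u,v)\in E$, and there is nothing further to condition on. For $z\in V\setminus\{u,v\}$ let $\cE^{u,v}_z\equiv\{(u,z),(v,z)\in E\}$. If $z\in V_1$, then $(z,u)\in E$ iff $d_L(X_z,X_u)\le r_s$ while $(z,v)\in E$ iff $d_L(X_z,X_v)\le r_d$, so $\Pr(\cE^{u,v}_z)$ equals the length of the arc $\{y:d_L(y,X_u)\le r_s\}\cap\{y:d_L(y,X_v)\le r_d\}$; if $z\in V_2$ the roles of $r_s$ and $r_d$ are swapped and $\Pr(\cE^{u,v}_z)$ equals the length of $\{y:d_L(y,X_u)\le r_d\}\cap\{y:d_L(y,X_v)\le r_s\}$, which is the \emph{same} number by the reflection $X_u\leftrightarrow X_v$. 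Call this common value $p(x)$.

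Next I would compute $p(x)$ by a short case analysis, using that all radii are $o(1)$ so the two arcs never wrap around the circle and may be treated as intervals on the line. When $r_s>2r_d$: for any $y$ with $d_L(y,X_v)\le r_d$ the triangle inequality gives $d_L(y,X_u)\le x+r_d\le 2r_d<r_s$, so the radius-$r_d$ arc about $X_v$ is contained in the radius-$r_s$ arc about $X_u$ and $p(x)=2r_d$, independently of $x$. When $r_s\le 2r_d$ and $x\le r_d$: writing without loss of generality $X_v=X_u+x$, the intersection of $(X_u-r_s,X_u+r_s)$ with $(X_u+x-r_d,X_u+x+r_d)$ has left endpoint $X_u+x-r_d$ (because $r_d-r_s\le 0\le x$ forces the $v$-arc's left endpoint to dominate) and right endpoint $X_u+\min(r_s,x+r_d)$, hence length $\min(r_s+r_d-x,\,2r_d)$; note $x\le r_d\le r_s$ makes $r_s+r_d-x\ge 0$, and when $x<r_s-r_d$ the minimum is $2r_d$, consistent with the fact that the $v$-arc then sits inside the $u$-arc. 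This matches the parameters in the statement.

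Finally, conditioned on the positions of $X_u$ and $X_v$ — equivalently, by rotation invariance, on $d_L(X_u,X_v)=x$ alone — the remaining $n-2$ points are i.i.d.\ uniform, so the events $\{\cE^{u,v}_z\}_{z\neq u,v}$ are mutually independent, each of probability $p(x)$, and $C_{u,v}=\sum_{z\neq u,v}\mathbf{1}[\cE^{u,v}_z]$ is therefore ${\rm Bin}(n-2,p(x))$; substituting the two values of $p(x)$ gives ${\rm Bin}(n-2,2r_d)$ for $r_s>2r_d$ and ${\rm Bin}(n-2,\min(r_s+r_d-x,2r_d))$ for $r_s\le 2r_d,\ x\le r_d$. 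I expect no real obstacle here: the only place to be careful is the arc-intersection bookkeeping (checking which endpoint governs the intersection over the whole range $x\in[0,r_d]$, and the $o(1)$ claim that rules out wraparound), and the substantive point of the lemma is simply the observation that $V_1$ and $V_2$ contribute the identical Bernoulli parameter, in contrast with Lemma~\ref{lem:sep}.
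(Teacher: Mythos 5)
Your proof is correct and follows the same template as the paper's proof of Lemma~\ref{lem:sep}, which is exactly what the paper intends by ``In a similar way, we can prove.'' The symmetry observation that $z\in V_1$ and $z\in V_2$ contribute the identical Bernoulli parameter (because swapping $r_s$ and $r_d$ between the two arcs preserves the length of their intersection) is the right way to collapse the two binomial summands of Lemma~\ref{lem:sep} into a single ${\rm Bin}(n-2,\cdot)$ here, and your interval bookkeeping correctly produces $2r_d$ when $r_s>2r_d$ and $\min(r_s+r_d-x,2r_d)$ when $r_s\le 2r_d$, $x\le r_d$.
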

 
 %This will lead the function \texttt{process} to correctly characterize two vertices as being from same or different clusters with high probability.  

%We quote here the following two lemmas for a ${\rm GBM}(r_s,r_d)$ graph $G(V,E)$, that appeared in \citep{galhotra2017geometric}. %with two equal-sized (unknown) clusters $V = V_1 \sqcup V_2$, and parameters $r_s,r_d$.

The distribution of the number of common neighbors given $(u,v)\in E$ and $d(u,v)=x$ is given in Table ~\ref{tab:tab1}. %We  provide the values of  $E_D$ and $E_S$  in Theorem \ref{th:main} for the regime of $r_s> 4r_d$.}  
As throughout this paper, we have assumed that there are only two clusters of equal size. The functions change when the cluster sizes are different. %Our analysis described in later sections can be used to calculate new function values.
 In the table, $u\sim v$ means $u$ and $v$ are in the same cluster. 
%Similarly, the \texttt{process} function can be run on other set of motifs by fixing two nodes. On considering a larger set of motifs, the \texttt{process} function can take a majority vote over the decisions received from different motifs.

\begin{table*}[htbp]
\begin{center}
\resizebox{\textwidth}{!}{
\begin{tabular}{|c|p{3.2cm}|p{3cm}|p{3cm}|p{3.6cm}|} 
 \hline
$(u,v) \in E$ & \multicolumn{2}{c}{Distribution of count ($r_s>2r_d$)}&    \multicolumn{2}{c|}{Distribution of count ($r_s\le 2r_d$)} \\
 $d(u,v) =x$      & $u \sim v,  x\leq r_s$ & $u \nsim v, x\leq r_d$ &  $u \sim v, x\leq r_s$ & $u \nsim v, x\leq r_d$\\
 \hline
Motif : $z\mid (z,u)\in E, (z,v)\in E$   &  {\correct{${\rm Bin}(\frac{n}{2}-2,2r_s-x)+\mathbb{1}\{x\leq 2r_d\}{\rm Bin}(\frac{n}{2},2r_d-x)$}} & ${\rm Bin}(n-2,2r_{d})$ &  {\correct{${\rm Bin}(\frac{n}{2}-2,2r_s-x)+{\rm Bin}(\frac{n}{2},2r_d-x)$}}  & \correct{${\rm Bin}(n-2, \min(r_{s}+r_d-x,2r_d))$}\\
% \hline
%Motif 2: $z\mid (z,u)\in E, (z,v)\notin E$ & {\correct{${\rm Bin}(\frac{n}{2}-2,x)+{\rm Bin}(\frac{n}{2},\min(x,2r_d))$ }}& \correct{${\rm Bin}(\frac{n}{2}-1,2(r_{s}-r_{d}))  $ }& \correct{${\rm Bin}(n-2, x)$} & \correct{${\rm Bin}(\frac{n}{2}-1,r_s-r_d+x + \max(r_s-x-r_d,0)) + {\rm Bin}(\frac{n}{2}-1,\max(x+r_d-r_s,0)) $ }\\
% \hline
%Motif 3:  $z\mid (z,u)\notin E, (z,v)\in E$  &{\correct{${\rm Bin}(\frac{n}{2}-2,x)+{\rm Bin}(\frac{n}{2},\min(x,2r_d))$ }} & \correct{${\rm Bin}(\frac{n}{2}-1,2(r_{s}-r_{d})) $ }& \correct{${\rm Bin}(n-2, x)$} &  \correct{${\rm Bin}(\frac{n}{2}-1,r_s+r_d-x + \max(r_s-x-r_d,0)) + {\rm Bin}(\frac{n}{2}-1,\max(x+r_d-r_s,0)) $ }\\
% \hline
% Motif 4: $z\mid (z,u)\notin E, (z,v)\notin E$  &{\correct{${\rm Bin}(\frac{n}{2}-2,1-(x+2r_s))+\mathbb{1}\{x \le 2r_d\}{\rm Bin}(\frac{n}{2},1-(x+2r_d))+\mathbb{1}\{x > 2r_d\}{\rm Bin}(\frac{n}{2},1-4r_d) $}} & \correct{${\rm Bin}(n-2,1-2r_s) $ }& \correct{${\rm Bin}(\frac{n}{2}-2,1-(x+2r_s))+{\rm Bin}(\frac{n}{2},1-(x+2r_d))$ } &  \correct{$\mathbb{1}\{x \le r_s-r_d\}{\rm Bin}(n-2,1-2r_s)+\mathbb{1}\{x >r_s-r_d\}{\rm Bin}(n-2,1-(x+r_s+r_d))$ }\\
\hline
\end{tabular}}
\end{center}
\caption{Distribution of triangle count for an edge $(u,v)$ conditioned on the distance between them $d(u,v) = d_L(X_u, X_v) = x$, when there are two equal sized clusters. Here ${\rm Bin}(n,p)$ denotes a binomial random variable with mean $np$.\label{tab:tab1}}
\end{table*}

At this point note  that, in a ${\rm GBM}(r_s,r_d)$ for any edge $u,v$ that do not belong to the same part, the expected total number of common neighbors of  $u$ and $v$ does not depend on  their distance. We will next show that in this case the normalized total number of common neighbors is concentrated around $2r_d$. Therefore, when Algorithm~\ref{alg:alg1} finished removing all the edges, with high probability all the `inter-cluster' edges are removed. However, some of the `in-cluster' edges will also be removed in the process. This is similar to the case when from an ${\rm VRG}(n, [0,r_s])$, all the edges that correspond to a distance close to $2r_d$ has been removed. 
This situation is shown for the case when $r_s \ge 2r_d$ in Figure~\ref{fig:gbm}.
Finally we show that the edge-reduced  ${\rm VRG}(n, [0,r_s])$ is still connected under certain condition. In  what follows we will assume the ${\rm GBM}(r_s,r_d)$ with $r_s\ge 2r_d$. .The other case of $r_s < 2r_d$ is similar.

In the next lemma, we show a concentration result for the count made in \texttt{process}.

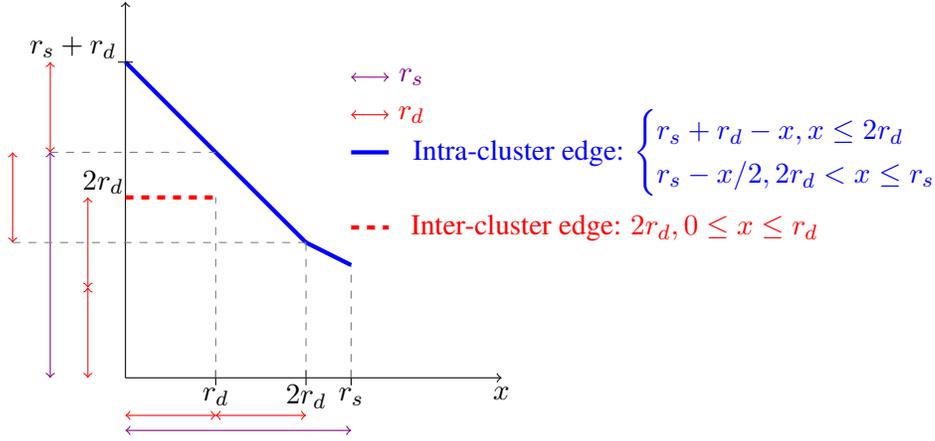
\begin{figure}
\centering
\begin{tikzpicture}
%\draw[help lines, color=gray!30, dashed] (-1,-1) grid (2,2);

%\draw[fill=gray]  (2,1) -- (4,3) -- (4,0) -- (2,0) -- cycle;
%\draw[->,ultra thick] (-0.5,0)--(4,0) node[right]{$a$};
%\draw[->,ultra thick] (0,-0.5)--(0,4) node[above]{$b$};

\draw[->] (0,0)--(5,0); 
\draw[->] (0,0)--(0,5); 
\draw[blue, ultra thick] (0,4.2)--(2.4,1.8); 
\draw[blue, ultra thick] (2.4,1.8)--(3,1.5); 
\draw[red, ultra thick, dashed] (0,2.4)--(1.2,2.4); 
\draw[red,<->] (0,-0.5)--(1.2,-0.5); 
\draw[red,<->] (1.2,-0.5)--(2.4,-0.5); 
\draw[violet, <->] (0,-0.7)--(3,-0.7); 
\draw[red,<->] (-0.5,0)--(-0.5,1.2); 
\draw[red,<->] (-0.5,1.2)--(-0.5,2.4); 
\draw[violet,<->] (-1,0)--(-1,3); 
\draw[red,<->] (-1,3)--(-1,4.2); 
\draw[red,<->] (-1.5,3)--(-1.5,1.8); 
\draw (1.2,0.1)--(1.2,-0.1)node at (1.2,-0.25) {$r_d$}; 
\draw (2.4,0.1)--(2.4,-0.1)node at (2.4,-0.25){$2r_d$}; 
\draw (3,0.1)--(3,-0.1)node at (3,-0.25){$r_s$}; 
\draw[gray,dashed] (3,0)--(3,1.5); 
\draw[gray,dashed] (1.2,0)--(1.2,2.4); 
\draw[gray,dashed] (2.4,0)--(2.4,1.8); 

\draw[gray,dashed] (-1.5,1.8)--(2.4,1.8); 
\draw[gray,dashed] (-1,3)--(1.2,3); 
\node at (-0.3,2.6){$2r_d$}; 
\node at (5,-0.2){$x$}; 
\draw (0.1,4.2)--(-0.1,4.2);
\node at (-0.7,4.4){$r_s+r_d$}; 
\draw[<->, red ] (3,3.5)--(3.5,3.5)node at (3.8,3.5) {$r_d$}; 
\draw[<->, violet ] (3,4)--(3.5,4)node at (3.8,4) {$r_s$}; 
\draw[blue, ultra thick ] (3,3)--(3.5,3)node at (7.5,3) {Intra-cluster edge: $ \begin{cases}
r_s+r_d-x, x\leq 2r_d\\
r_s-x/2, 2r_d< x\le r_s
\end{cases}$}; 
\draw[red, ultra thick, dashed ] (3,2)--(3.5,2)node at (6.5,2) {Inter-cluster edge: $2r_d, 0\le x\le r_d$}; 
%\node[below] at (1,0) {$1$};
%\node[below] at (0.5,0) {$0.5$};
%\node[left] at (0,0.5) {$0.5$};
%\draw [blue, ultra thick]  (9,1)--(10.08,1.04)--(10.5,1.25) -- (11,1.5) -- (13,2) -- (15,2.5)  -- (16,3);% -- cycle;

%\draw[fill=gray,transparent]  (1,0.5) -- (4,3) -- (4,0) -- (1,0) -- cycle;
\end{tikzpicture}
\caption{Average number of common neighbors of $(u,v) \in E$  for varying values of $d(u,v) =x$ when $r_s\ge2r_d$.\label{fig:gbm}}
\end{figure}

\begin{lemma}\label{lem:defn}
Suppose we are given the graph $G(V,E)$ generated according to ${\rm GBM}(r_s\equiv \frac{a\log n}{n},r_d\equiv \frac{b\log n}{n}), a \ge 2b.$ 
Our algorithm with $E_S = (2b + t_1)\frac{\log n}{n}$ and $E_D = (2b - t_2)\frac{\log n}{n}$, removes
all the edges $(u,v)\in E$ such that   $u$ and $v$ are in different parts with probability at least $1-o(1)$, where
%For any pair of nodes $(u,v)\equiv e\in E,$ such that   $u$ and $v$ are in different parts, our algorithm with $E_S = (2b + t_1)\frac{\log n}{n}$ and $E_D = (2b - t_2)\frac{\log n}{n}$, removes the edge $e$ with a probability of at least $1-O\left(\frac{1}{n\log^2 n}\right)$ where
\begin{align*}
t_1&=\min\{t: (2b+t)\log \frac{2b+t}{2b}-t > 1\} \\
t_2&=\min\{t : (2b-t)\log \frac{2b-t}{2b}+t > 1\}.
\end{align*}
\end{lemma}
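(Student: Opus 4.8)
The plan is to show that for every inter-cluster edge $(u,v)$ the normalized common-neighbor count $C_{u,v}/n$ concentrates tightly around $2b\frac{\log n}{n}$, so that with high probability it never escapes the interval $(E_D, E_S) = ((2b-t_2)\frac{\log n}{n},\ (2b+t_1)\frac{\log n}{n})$, which means \texttt{process} returns \textbf{false} and the edge is removed. From Lemma~\ref{lem:sep2} (and Table~\ref{tab:tab1}), conditioned on $(u,v)\in E$ with $u,v$ in different parts and $r_s\ge 2r_d$, the count $C_{u,v}$ is distributed as ${\rm Bin}(n-2, 2r_d) = {\rm Bin}(n-2, \frac{2b\log n}{n})$, regardless of the actual distance $x = d_L(X_u,X_v)$. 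This conditional independence of the motif-events given the distance — the crucial observation highlighted just before the lemma — is what lets us treat $C_{u,v}$ as a clean binomial rather than a sum of correlated indicators.

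The core step is a Chernoff bound on ${\rm Bin}(n-2,\frac{2b\log n}{n})$, whose mean is $\mu = (n-2)\frac{2b\log n}{n} \approx 2b\log n$. I would use the multiplicative Chernoff bound in its sharp form: for $Y\sim{\rm Bin}(m,p)$ with mean $\mu$,
\begin{align*}
\Pr\big(Y \ge (2b+t_1)\log n\big) &\le \exp\!\Big(-\mu\,\big((1+\tfrac{t_1}{2b})\log(1+\tfrac{t_1}{2b}) - \tfrac{t_1}{2b}\big)\Big)\\
&= \exp\!\Big(-\log n\,\big((2b+t_1)\log\tfrac{2b+t_1}{2b} - t_1\big)\Big) = n^{-\big((2b+t_1)\log\frac{2b+t_1}{2b} - t_1\big)},
\end{align*}
and symmetrically $\Pr(Y \le (2b-t_2)\log n) \le n^{-\big((2b-t_2)\log\frac{2b-t_2}{2b} + t_2\big)}$. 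By the definitions of $t_1$ and $t_2$ in the statement, both exponents exceed $1$, so each of these two failure probabilities is $o(n^{-1})$ (indeed $n^{-(1+\Omega(1))}$). Hence for a fixed inter-cluster edge $(u,v)$, the count lands strictly inside $(E_D,E_S)$ with probability at least $1 - o(n^{-1})$.

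Finally I would take a union bound over all inter-cluster edges. There are at most $\binom{n}{2} = O(n^2)$ pairs, but the number of \emph{edges} $(u,v)\in E$ with $u\not\sim v$ is concentrated around $\frac{n^2}{4}\cdot 2r_d = \Theta(n\log n)$ in expectation; in any case it is at most $n^2$. Multiplying $n^2$ by $o(n^{-1})$ does not immediately give $o(1)$, so here I need the exponents strictly above $1$ to actually beat $2$ — more carefully, since $t_1,t_2$ are chosen as the \emph{minimal} values making the exponent exceed $1$, one should note these are the thresholds, and for the union bound to close one wants the exponent to exceed $2$; the standard fix is that the relevant count is $O(n\log n)$ not $n^2$ (each vertex has $O(\log n)$ neighbors whp), so a union bound over $O(n\log n)$ edges against per-edge failure $n^{-(1+\Omega(1))}$ gives $o(1)$. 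This last bookkeeping — getting the union-bound arithmetic to close with the exact thresholds $t_1,t_2$ — is the main obstacle, and the resolution is to first condition on the high-probability event that the maximum degree is $O(\log n)$ (which holds since $r_s = \Theta(\frac{\log n}{n})$), reducing the number of edges to union-bound over to $O(n\log n)$. Everything else is a routine Chernoff computation.
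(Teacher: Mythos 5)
Your proof is correct and follows essentially the same route as the paper: identify $C_{u,v}\sim{\rm Bin}(n-2,2r_d)$ for inter-cluster edges via Lemma~\ref{lem:sep2}, apply a relative-entropy-type Chernoff bound whose exponent reduces to $\big((2b\pm t)\log\frac{2b\pm t}{2b}\mp t\big)\log n$, and close with a union bound over the $O(n\log n)$ edges (obtained from degree concentration). Your remark that the naive union bound over $n^2$ pairs would not close, and that one must first condition on the max degree being $O(\log n)$, is exactly the first step of the paper's proof.
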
 

\begin{proof}
Here we will use the fact that for $a \geq 1$, the number of edges in ${\rm GBM}(r_s\equiv \frac{a\log n}{n},r_d\equiv \frac{b\log n}{n})$ is $O(n\log{n})$ with probability $1-\frac{1}{n^{\Theta(1)}}$. Consider any vertex $u \in V_1$ (symmetrically for $u \in V_2$), since the vertices are thrown uniformly at random in $[0,1]$, the probability that a $v \in V_1$, $v \neq u$, is a neighbor of $u$ is $\frac{a\log{n}}{n}$, and for $v \in V_2$, the corresponding probability is $\frac{b\log{n}}{n}$. Therefore, the expected degree of $u$ is $\frac{(a+b)}{2}\log{n}$. By a simple Chernoff bound argument, the degree of $u$ is therefore $O(\log{n})$ with probability $1-\frac{1}{n^c}$ for $c \geq 2$. By union bound over all the vertices, the total number of edges is $O(n\log{n})$ with probability $1-\frac{1}{n}$.

Let $Z$ denote the random variable that equals the number of common neighbors of two nodes $u,v \in V: (u,v) \in E$ such that $u,v$ are from different parts of the GBM. Using Lemma \ref{lem:sep2}, we know that $Z$ is sampled from the distribution ${\rm Bin}(n-2,2r_d)$, where $r_d = \frac{b\log n}{n}$. Therefore,
\begin{align*}
\Pr(Z \ge nE_S) \le \sum_{i=nE_S}^{n} {n \choose i}(2r_d)^{i}(n-2r_d)^{n-i} \le \exp\Big(-nD\Big((2b + t_1)\frac{\log n}{n}\|\frac{2b\log n}{n}\Big)\Big),
\end{align*}
where $D(p\|q)\equiv p\log\frac{p}{q}+ (1-p)\log \frac{1-p}{1-q}$ is the KL divergence between Bernoulli($p$) and Bernoulli($q$) distributions.
%We will use the following approximation for the Divergence
It is easy to see that,
\begin{align*}
nD(\frac{\alpha \log n}{n}||\frac{\beta \log n}{n})= \Big(\alpha\log\frac{\alpha}{\beta}+(\alpha-\beta)\Big)\log n-o(\log n).
\end{align*}
Therefore $\Pr(Z \ge nE_S) \le \frac{1}{n(\log n)^2}$ because 
$
(2b+t_1)\log\frac{2b+t_1}{2b}-t_1 > 1.
$
Similarly, we have that 
\begin{align*}
\Pr(Z \le nE_D) \le \sum_{i=0}^{nE_D} {n \choose i}(2r_d)^{i}(n-2r_d)^{n-i} \le \exp(-nD((2b-t)\frac{\log n}{n}\|\frac{2b\log n}{n}))\le \frac{1}{n(\log n)^2}.
\end{align*}
%and there we must have that $\Pr(Z \le E_D) \le o(\frac{1}{n\log n})$ if
%\begin{align*}
%(2b-t)\log \frac{2b-t}{2b}+t > 1
%\end{align*}
So all of the inter-cluster edges will be removed by Algorithm~\ref{alg:alg1}   with probability $1 - O(\frac{n \log n}{n(\log n)^2}) =1 -o(1)$, as with probability $1-o(1)$ the total number of edges in the graph is $O(n \log n)$.
\end{proof}
 
After Algorithm~\ref{alg:alg1}  finishes, in the edge-reduced ${\rm GBM}(\frac{a\log n}{n},\frac{b\log n}{n})$, all the edges are `in-cluster' edges with high probability. However some of the  
`in-cluster' edges are also deleted, namely, those that has a count of common neighbors between $E_S$ and $E_D$. In the next two lemmas, we show the necessary condition on  the `in-cluster' edges such that they do not get removed by Algorithm~\ref{alg:alg1}.

\begin{lemma}
Suppose we have the graph $G(V,E)$ generated according to ${\rm GBM}(r_s \equiv \frac{a\log n}{n},r_d\equiv \frac{b\log n}{n}), a \ge 2b$. Define $t_1,t_2, E_D,E_S$ as in  Lemma~\ref{lem:defn}. Consider an edge $(u,v) \in E$ where $u,v$ belong to the same part of the GBM and let  $d(u,v)\equiv x \equiv \frac{\theta \log n}{n}$. Suppose $\theta$ satisfies {\bf either} of the following conditions:
\begin{enumerate}
\item $
\frac{1}{2}\Big((4b+2t_1)\log \frac{4b+2t_1}{2a-\theta}+2a-\theta-4b-2t_1\Big) > 1 \quad \text{ and } \theta \le 2a-4b-2t_1
$
\item $
\frac{1}{2}\Big((4b-2t_2\log \frac{4b-2t_2}{2a-\theta}+2a-\theta-4b+2t_2\Big) > 1
\quad \text{and} \quad 
a \ge \theta \ge \max\{2b,2a-4b+2t_2\}.
$
\end{enumerate}
Then Algorithm~\ref{alg:alg1} with $E_S  = (2b +t_1)\frac{\log n}{n}$ and $E_D  = (2b - t_2)\frac{\log n}{n}$ will not remove this edge with probability at least $1- O(\frac1{n (\log n)^2})$.
%
%Given a pair of nodes $e=(u,v)\in E$ such that $u,v\in C_1$, connected by an edge, our algorithm does not remove the edge $e$ on considering the threshold $E_S  = (2b +t_1)\frac{\log n}{n}$  alone
%when $\dist{(X_u,X_v)} = x = \frac{\theta \log n}{n}$(say) satisfies the following:
%
\label{lem:esed}
\end{lemma}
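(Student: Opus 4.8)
The plan is to show that for an intra-cluster edge $(u,v)$ at distance $x = \frac{\theta\log n}{n}$, the common-neighbor count $C_{u,v}$ concentrates sharply enough that, under either of the two stated conditions on $\theta$, the count lies \emph{outside} the interval $[nE_D, nE_S]$ with probability $1-O(1/(n(\log n)^2))$, so Algorithm~\ref{alg:alg1} keeps the edge. First I would invoke Lemma~\ref{lem:sep} to pin down the distribution of $C_{u,v}$ conditioned on $d(u,v)=x$. Since we are in the regime $r_s \ge 2r_d$ (i.e.\ $a\ge 2b$), there are two sub-cases matching the two conditions: (i) when $2r_d < x \le r_s$, i.e.\ $2b < \theta \le a$, the count is $\mathrm{Bin}(\frac n2 - 2,\, 2r_s - x)$ with mean roughly $\frac{(2a-\theta)\log n}{2}$; (ii) when $x \le 2r_d$, i.e.\ $\theta \le 2b$, the count is $\mathrm{Bin}(\frac n2 - 2, 2r_s - x) + \mathrm{Bin}(\frac n2, 2r_d - x)$, with mean roughly $\frac{(2a + 2b - 2\theta)\log n}{2}$. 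Wait --- I need to re-examine which sub-case each condition addresses; condition 1 has the constraint $\theta \le 2a - 4b - 2t_1$ and condition 2 has $\theta \ge \max\{2b, 2a-4b+2t_2\}$, so the natural reading is that in both we track the first binomial term $\mathrm{Bin}(\frac n2 - 2, 2r_s - x)$ and argue its concentration, treating the second term (when present) as only helping push the count up. I would make this precise by lower-bounding $C_{u,v}$ by $\mathrm{Bin}(\frac n2 -2, 2r_s - x)$ always, and noting that to \emph{exceed} $nE_S$ it suffices that this single binomial exceeds $nE_S$.

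Next, the core estimate is a Chernoff/KL bound for a binomial with parameters $m \approx n/2$ and success probability $p = (2a-\theta)\frac{\log n}{n}$, exactly as in Lemma~\ref{lem:defn}. Using the same identity
\begin{align*}
\tfrac n2\, D\!\left(\tfrac{\alpha\log n}{n}\,\Big\|\,\tfrac{\beta \log n}{n}\right) = \tfrac12\Big(\alpha\log\tfrac{\alpha}{\beta} + (\alpha - \beta)\Big)\log n - o(\log n),
\end{align*}
I would show: under condition 1, because $\theta \le 2a - 4b - 2t_1$ the mean $\frac{(2a-\theta)\log n}{2}$ is at least $\frac{(4b+2t_1)\log n}{2} = n E_S \cdot \frac{1}{?}$ --- more carefully, $2a - \theta \ge 4b + 2t_1 = 2\cdot(2b+t_1)$, and $nE_S = (2b+t_1)\log n$, so the binomial mean $\approx \frac{(2a-\theta)\log n}{2} \ge (2b+t_1)\log n = nE_S$; then a lower-tail Chernoff bound gives $\Pr(\mathrm{Bin} \le nE_S) \le \exp\!\big(-\frac n2 D(\frac{(4b+2t_1)\log n}{n}\|\frac{(2a-\theta)\log n}{n})\big)$, and plugging the identity in turns the exponent into exactly $\frac12\big((4b+2t_1)\log\frac{4b+2t_1}{2a-\theta} + 2a - \theta - 4b - 2t_1\big)\log n - o(\log n)$, which exceeds $\log n$ by hypothesis, yielding the $\frac{1}{n(\log n)^2}$ bound. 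So under condition 1 the edge is kept because its count is \emph{above} $E_S$. Symmetrically, under condition 2, $\theta \ge 2a - 4b + 2t_2$ forces $2a-\theta \le 4b - 2t_2 = 2(2b - t_2)$, so the binomial mean is at most $(2b-t_2)\log n = nE_D$, and an upper-tail Chernoff bound with the same KL identity gives $\Pr(\mathrm{Bin} \ge nE_D) \le \exp\!\big(-\frac n2 D(\frac{(4b-2t_2)\log n}{n}\|\frac{(2a-\theta)\log n}{n})\big) \le \frac{1}{n(\log n)^2}$ precisely when $\frac12\big((4b-2t_2)\log\frac{4b-2t_2}{2a-\theta} + 2a - \theta - 4b + 2t_2\big) > 1$; so under condition 2 the count is \emph{below} $E_D$ and again the edge is kept. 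The extra constraints $\theta \le a$ and $\theta \ge 2b$ in condition 2 ensure we are in the right branch of Lemma~\ref{lem:sep} and that $2a - \theta > 0$ (so the binomial is well-defined and the KL expression makes sense); I would note these are exactly the side conditions under which the distributional formula applies.

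The main obstacle I anticipate is bookkeeping the $o(\log n)$ and lower-order corrections carefully: the binomial has $\frac n2 - 2$ trials rather than $\frac n2$, the conditioning removes two points from each part, and in case (ii) a second independent binomial appears that one must show does not spoil the bound (it only helps in condition 1's ``above $E_S$'' direction, but one must check it cannot drag a count meant to be below $E_D$ in condition 2 --- here the constraint $\theta \ge 2b$ is what guarantees the second binomial is absent, since then $x > 2r_d$). One must also be careful that the KL-divergence tail bound is applied in the correct direction (lower tail vs.\ upper tail) in each of the two cases, and that strict inequality in the hypothesis ($>1$) translates to a bound of the form $n^{-1-\Omega(1)}$, comfortably beating $\frac{1}{n(\log n)^2}$; absorbing the $\log n$ factors into the slack is routine once the strict inequality is in hand. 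Finally, as in Lemma~\ref{lem:defn}, I would remark that although this lemma bounds the failure probability for a \emph{single} fixed edge, the follow-up argument (combining with a union bound over the $O(n\log n)$ edges, in the proof of Theorem~\ref{gbm:upper}) is what upgrades it to a statement about all intra-cluster edges simultaneously; here we only need the per-edge bound.
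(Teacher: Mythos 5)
Your proposal is correct and follows essentially the same route as the paper: invoke Lemma~\ref{lem:sep} to get the distribution of the common-neighbor count, lower-bound it by the $\mathrm{Bin}(\frac n2 - 2, 2r_s - x)$ term, and apply the KL-form Chernoff bound in the appropriate tail direction so that condition~1 pushes the count above $E_S$ and condition~2 pushes it below $E_D$. You also correctly identify that the side constraint $\theta \ge 2b$ in condition~2 is what removes the second binomial term, which is exactly the case split the paper makes.
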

\begin{proof}
Let $Z$ be the number of common neighbors of $u,v$. % \in V: u \sim v,(u,v) \in E, \dist(X_u,X_v) = x$
%. Let us also denote $\mu_s(x)= \avg(Z|u \sim v, \dist(X_u,X_v)=x)$, 
Recall that, $u$ and $v$ are in the same cluster. 
We know from Lemma \ref{lem:sep2} that $Z$ is sampled from the distribution ${\rm Bin}(\frac{n}{2}-2,2r_s-x)+{\rm Bin}(\frac{n}{2},2r_d-x)$ when $x \le 2r_d$, and from the distribution ${\rm Bin}(\frac{n}{2}-2,2r_s-x)$ when $x \ge 2r_d$. 
%Notice that we remove an edge if the motif count for that edge is less than $nE_S$ (assuming that there is only a single threshold). Therefore, 
We have,
\begin{align*}
&\Pr(Z \le nE_S)\\
&=
\begin{cases}
\sum_{i=0}^{nE_S}{\frac{n}{2}-2 \choose i} (2r_s-x)^{i}(1-2r_s+x)^{\frac{n}{2}-i-2}\sum_{j=0}^{nE_S-i}{\frac{n}{2} \choose j} (2r_d-x)^{j}(1-2r_d+x)^{\frac{n}{2}-j} \text{ if $x\le 2r_d$ }\\
\sum_{i=0}^{nE_s}{\frac{n}{2}-2 \choose i} (2r_s-x)^{i}(1-2r_s+x)^{\frac{n}{2}-i} \text{ otherwise} \\
\end{cases} \\
& \le e^{-\frac{n}{2}D(2E_S|| \frac{(2a-\theta)\log n}{n})} \text{ since  } 2a-\theta \ge 4b+2t_1\\
& \le e^{-\frac{n}{2}D(\frac{(4b+2t_1)\log n}{n}|| \frac{(2a-\theta)\log n}{n})} \le \frac{1}{n \log^2 n},
\end{align*}
because of condition 1 of this lemma. Therefore, this edge will not be deleted with high probability.

Similarly, let us find the probability of $Z \ge n E_D = (2b-t_2) \log n.$
Let us just assume the worst case when  $\theta \le 2b$: that the edge is being deleted (see condition 2, this is prohibited if that condition is satisfied).
%
% or $\theta \le 2b$, $Z$ is distributed according  to ${\rm Bin}(\frac{n}{2}-2,2r_s-x)+{\rm Bin}(\frac{n}{2},2r_d-x)$, i.e., $\avg Z = r_s+r_d- x = (a+b -\theta)\frac{\log n}{n} $
Otherwise, $\theta > 2b$ and,
%\begin{align*}
%\Pr(Z \ge nE_D) 
%
%&\le \sum_{i=\frac{nE_D}{2}}^{n}{\frac{n}{2}-2 \choose i} (2r_d-x)^{i}(1-2r_d+x)^{\frac{n}{2}-i-2}+\sum_{i=\frac{nE_D}{2}}^{n}{\frac{n}{2} \choose i} (2r_s-x)^{i}(1-2r_s+x)^{\frac{n}{2}-i}\\
%&\le e^{-\frac{n}{2}D(E_D||2r_d-x)}+e^{-\frac{n}{2}D(E_D||2r_s-x)} \text{  if    } 2a-\theta \le 2b-t_2
%\end{align*}
%But this implies that $\theta \ge 2a-2b$ which is a contradiction to the fact that $\theta \le 2b$ since we have assumed that $a\ge 2b$. Hence, we will ignore the case when $\theta \le 2b$ and only concentrate on the case when $\theta \ge 2b$. Hence, we have that
\begin{align*}
\Pr(Z \ge nE_D)&=\sum_{i=nE_D}^{n}{\frac{n}{2}-2 \choose i} (2r_s-x)^{i}(1-2r_s+x)^{\frac{n}{2}-i-2} \\
&\le e^{-\frac{n}{2}D(2E_D\|\frac{(2a-\theta)\log n}{n})} \text{   if   } 2a-\theta \le 4b-2t_2\\
&= e^{-\frac{n}{2}D(\frac{(4b-2t_2)\log n}{n}\|\frac{(2a-\theta)\log n}{n})} \le \frac{1}{n \log^2 n},
\end{align*}
because of condition 2 of this lemma.
%Hence $\Pr(Z \le nE_D)=o(\frac{1}{n \log n})$ if 
%\begin{align*}
%\frac{1}{2}\Big((4b-2t_2\log\frac{4b-2t_2}{2a-\theta}+2a-\theta-4b+2t_{2}\Big) > 1
%\end{align*} 
\end{proof}

Now we are in a position to prove our main theorem from this part. Let us restate this theorem.

\begin{theorem*}(\ref{gbm:upper})~%\label{gbm:upper}
Suppose we have the graph $G(V,E)$ generated according to ${\rm GBM}(r_s \equiv \frac{a\log n}{n},r_d\equiv \frac{b\log n}{n}), a \ge 2b$. 
Define,
\begin{align*}
t_1&=\min\{t: (2b+t)\log \frac{2b+t}{2b}-t > 1\} \\
t_2&=\min\{t : (2b-t)\log \frac{2b-t}{2b}+t > 1\}\\
\theta_1 &= \max\{\theta:\frac{1}{2}\Big((4b+2t_1)\log \frac{4b+2t_1}{2a-\theta}+2a-\theta-4b-2t_1\Big) > 1  \text{ and } 0 \le \theta \le 2a-4b-2t_1\}\\
\theta_2 &= \min\{\theta: \frac{1}{2}\Big((4b-2t_2\log \frac{4b-2t_2}{2a-\theta}+2a-\theta-4b+2t_2\Big) > 1
 \text{ and }  
a \ge \theta \ge \max\{2b,2a-4b+2t_2\}\}.
\end{align*}
Then Algorithm~\ref{alg:alg1} with $E_S  = (2b +t_1)\frac{\log n}{n}$ and $E_D  = (2b - t_2)\frac{\log n}{n}$ will recover the correct partition  in the GBM with  probability $1-o(1)$  if   $a-\theta_2+\theta_1>2$ OR $a-\theta_2 > 1, a>2$. 
\end{theorem*}
\begin{proof}
From Lemma~\ref{lem:defn}, we know that after Algorithm~\ref{alg:alg1} goes over all the edges, the edges with end-points being in different parts of the GBM are all removed with probability $1-o(1)$. 
There are $O(n \log n)$ edges in the GBM with  probability $1-o(1).$
From Lemma~\ref{lem:esed}, 
we can say that no edge with both ends at the same part is deleted with probability at least $1-o(1)$ (by simply applying a union bound).

After Algorithm~\ref{alg:alg1} goes over all the edges, the remaining edges from a disjoint union of two vertex-random graphs of $\frac{n}{2}$ vertices each. For any two vertices $u,v$ in the same part, there will be an edge if $d(u,v) \in [0,\theta_1] \cup [\theta_2, a]$. From Corollary \ref{cor:patch}, it is evident that each of these two parts (each part  is of size $\frac{n}{2}$) will be connected if either $a-\theta_2+\theta_1>2$ or $a-\theta_2 > 1, a>2$.
\end{proof}
% if either of conditions 1 or 2 of this lemma is satisfied, with probability at least $1-o(1)$.

%By Lemma \ref{lem:numedge} , we know that the maximum number of edges in $GBM(\frac{a \log n}{n},\frac{b \log n}{n})$ is $O(n \log n)$. Hence with a probability of $1-o(1)$ none of the edges $(u,v) \in E \mid u \sim v, \dist(u,v)=\frac{\theta \log n}{n}$ where $\theta \le \theta_1$ or $\theta \ge \theta_2$ is not removed by Algorithm \ref{alg:process}. Additionally, observe that all intercluster edges will be removed. Hence, if we obtain 2 connected components, then we can simply output them as the $2$ clusters and hence in order to gurantee that there exists only 2 connected components, we use Lemma \ref{lem:pat} to complete the proof. 

%When $r_s>2r_d$, Algorithm \ref{alg:alg1} with $E_S = (2b + t^{\star})\frac{\log n}{n}$ and $E_D = (2b - t^{\star})\frac{\log n}{n}$ can recover the clusters $V_1, V_2$ accurately with a probability of $1 - o(1)$ if  $a-\theta_2+\theta_1>2$ or $a-\theta_2 > 1$ or
%$
%\begin{cases}
%2(a-\theta_2)+ \theta_1/2 > 2 \text{ when } a-\theta_2<\theta_1 \text{ and }\theta_2>3\theta_1/2\\
%\theta_2-\theta_1 > 2 \text{ when } a-\theta_2<\theta_1 \text{ and }\theta_2\le 3\theta_1/2\\
%a > 2 \text{ when } a-\theta_2\ge \theta_1 \text{ and } \theta_2\le 3\theta_2/2 \\
%(a-\theta_2)+ 3\theta_1/2  > 2 \text{ when } a-\theta_2\ge \theta_1 \text{ and } \theta_2>3\theta_1/2\\
%\end{cases}$
%\end{theorem}
 It is also possible to incorporate the result of Corollary \ref{cor:extra} as well to get somewhat stronger recovery guarantee for our algorithm.

\sloppy
\section{High Dimensional GBM: Proof of Theorem \ref{theorem:intro-1}}
\label{sec:sparse-high}
In this section, we show that our algorithm for recovery of clusters in GBM, i.e., Algorithm \ref{alg:alg1} extends to higher dimensions. Let us define a higher dimensional GBM precisely.
\begin{definition}[The GBM in High Dimensions]
Given $V = V_1\sqcup V_2, |V_1|=|V_2| = \frac{n}2$,  choose a random vector $X_u$ independently uniformly distributed in $S^t$ for all $u \in V$.
The geometric block model  ${\rm GBM}_t(r_s, r_d)$ with parameters $r_s> r_d$ is a random graph where an edge exists between vertices $u$ and $v$  if and only if,
\begin{align*}
\norm{X_u-X_v}_2 \le r_s & \text{ when } u, v \in V_1 \text{ or } u,v \in V_2\\
\norm{X_u-X_v}_2 \le r_d & \text{ when } u \in V_1, v \in V_2 \text{ or } u\in V_2,v \in V_1.
\end{align*}
\end{definition}

Indeed, for the higher dimensional case the algorithm remains exactly the same, except the value of $E_D$ and $E_S$ in the subroutine \texttt{process} needs to be changed.
Recall that the algorithm proceeds by checking each edge and counting the number of triangle the edge is part of. 
%We can follow the same algorithm for higher dimension and find thresholds $E_D$ and $E_S$ and 
If the count is between $E_D$ and $E_S$ the edge is removed. In this process  we claim to remove all inter-cluster edges with high probability. The main difficulty lies in proving that the original communities remain connected in the redacted graph. For that we crucially use the connectivity results of RAG (from Section~\ref{sec:hrag}) in somewhat different way that what we do for the one dimensional case.

%The connectivity result for {\em vertex  annulus graphs} that \citep{focs18submission} uses to show the second part is specifically tailored for one dimension and does not even extend to $t=2$. RAG arises naturally here. One can find a minimum value $x$ such that an intra-cluster edge $(u,v)$ may get deleted by the algorithm (that is the triangle count is in between $[E_D,E_S]$) only if $d(u,v) \leq x$. Therefore, if $RAG_t(\frac{n}{2},r_s,x)$ is connected, so are the communities. Therefore, the main technical challenge is to show such connectivity results for $RAG$ in high dimension. 
%
%In this paper, one of our main contributions is to show such connectivity result for $RAG$ in high dimensions and thus show that the motif counting algorithm of \citep{focs18submission} works by suitably adjusting $E_D$ and $E_S$.

In the following, we fix the dimension of the GBM as $t$ and hence remove the subscript from all the notations defined above in order to make them less cumbersome. 

\subsection{Analysis of Algorithm \ref{alg:alg1} in High Dimension}

Let us define a few more terminologies to simplify the expressions for high dimensional space. The volume of a $t$-sphere with unit radius is $|S_t|\equiv a_t=\frac{2\pi^{t+1/2}}{\Gamma(\frac{t+1}{2})}$. Let the spherical cap ${B}_t(O,r) \subset S^t$ define a region on the surface of this $t$-sphere $S^t$ such that every point $u
\in {B_t}(O,r)$ satisfies $\|u-O\|_2 \le  r$. Let us denote the volume of the spherical cap ${B_t}(O,r)$ normalized by $a_t$  by
$B_t(r)$. Similarly ${B_t}(O,[r_1,r_2])$ refers to a region on the $t$-sphere such that every point $u \in {B}_t(O,[r_1,r_2])$ satisfies $r_1 \le \|u-O\|_2 \le r_2$ and ${B_t}(r_1,r_2)$ refers to the volume normalized by $a_t$.  Now consider two such spherical caps ${B_t}(O_1,r_1)$ and ${B}_t(O_2,r_2)$ such that $d(O_1,O_2)=\ell$. In that case let us define the volume of the intersection of the two aforementioned spherical caps (again normalized by $a_t$) by $\mathcal{V}_t(r_1,r_2,\ell)$.

Let us use $u \sim v$ ($u \nsim v$) to denote $u$ and $v$ belong to the same cluster (different clusters). Let $\cE^{u,v}_z$ denote the event that $z$ is a common neighbor of $u$ and $v$ and $e(u,v)$ denote the event that there is an edge between $u$ and $v$.  Following are some simple observations.

\begin{observation}
\label{obs:1}
$\Pr(e(u,v) \mid u\sim v)=B_t(r_s)$ and $\Pr(e(u,v) \mid u\nsim v)=B_t(r_d)$.
\end{observation}
\begin{observation}
\label{obs:2}
$\Pr(\cE^{u,v}_z \mid z\sim u, u\sim v \text{ and } \|u-v\|_2=\ell)=\cV_t(r_s,r_s,\ell)$ and $\Pr(\cE^{u,v}_z \mid z\nsim u, u\sim v \text{ and } \|u-v\|_2=\ell)=\cV_t(r_d,r_d,\ell)$.
%$\Pr(\cE^{u,v}_z \mid z\sim u, u\sim v \text{ and } d(u,v)=\ell)=\cV(r_s,r_s,\ell)$ and $\Pr(\cE^{u,v}_z \mid z\nsim u, u\sim v \text{ and } d(u,v)=\ell)=\cV(r_d,r_d,\ell)$
\end{observation}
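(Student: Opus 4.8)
\textbf{Proof plan for Observation~\ref{obs:2}.} The plan is to compute each conditional probability directly from the geometry of the GBM, exploiting the fact that conditioning on the positions of $u$ and $v$ (in particular on $\|u-v\|_2 = \ell$) makes the event $\cE^{u,v}_z$ depend only on the independently chosen location $X_z$. First I would observe that, under the conditioning, $z$ is a common neighbor of $u$ and $v$ precisely when $X_z$ falls in the intersection of the two spherical caps centered at $X_u$ and $X_v$ whose radii are dictated by which clusters the relevant pairs belong to. Concretely, if $z \sim u$ and $u \sim v$, then all three of $u,v,z$ lie in the same part, so the edges $(u,z)$ and $(v,z)$ each exist iff $\|X_z - X_u\|_2 \le r_s$ and $\|X_z - X_v\|_2 \le r_s$ respectively; hence $\cE^{u,v}_z$ holds iff $X_z \in B_t(X_u, r_s) \cap B_t(X_v, r_s)$. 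Since $X_z$ is uniform on $S^t$ and independent of $X_u, X_v$, the conditional probability of this event equals the normalized volume of that intersection, which is by definition $\cV_t(r_s, r_s, \ell)$ with $\ell = \|X_u - X_v\|_2$.

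For the second statement, the argument is identical except that now $z \nsim u$ while $u \sim v$. Then the edge $(u,z)$ is an inter-cluster edge, so it exists iff $\|X_z - X_u\|_2 \le r_d$; likewise $z$ is in the opposite cluster from $v$ as well (since $u$ and $v$ are in the same cluster), so $(v,z)$ exists iff $\|X_z - X_v\|_2 \le r_d$. Thus $\cE^{u,v}_z$ holds iff $X_z \in B_t(X_u, r_d) \cap B_t(X_v, r_d)$, and taking the normalized volume again yields $\cV_t(r_d, r_d, \ell)$. The only technical points to check are that $d(X_u, X_v) = \ell$ fully determines the intersection volume up to the rotational symmetry of $S^t$ (so that $\cV_t$ is well-defined as a function of $r_1, r_2, \ell$ alone), and that independence of $X_z$ from $(X_u, X_v)$ legitimately turns the conditional probability into a volume ratio; both are immediate from the uniformity of the vertex distribution on $S^t$ and its invariance under the isometry group.

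There is essentially no obstacle here: the statement is a definitional unpacking, and the main thing to be careful about is bookkeeping of which radius ($r_s$ versus $r_d$) governs each of the two edges, which is determined entirely by the cluster memberships stipulated in the conditioning. I would present it as a two-line proof per case, citing Observation~\ref{obs:1} for the analogous one-edge computation as a warm-up if a reader needs the pattern.
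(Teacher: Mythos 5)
Your proof is correct and matches what the paper implicitly intends; the paper states this as an unproved observation, and your argument is the natural unpacking of the definition of $\cV_t$. The key bookkeeping points you flag — that $z\sim u$ and $u\sim v$ forces both edges to use radius $r_s$, while $z\nsim u$ together with $u\sim v$ (and there being only two parts) forces $z$ into the opposite cluster of both endpoints so that both edges use $r_d$, and that conditional independence plus rotational invariance turns the probability into the normalized intersection volume — are exactly what makes the statement go through.
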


In the following proof, we assume $r_s \leq 2r_d$. The other situation where the gap between $r_s$ and $r_d$ is higher is only easier to handle.

\begin{lemma}\label{lem:sepd}
For any two vertices $u,v \in V_i: (u,v) \in E, i =1,2$ such that $d(u,v)=\ell$ belonging to the same cluster, the count of common neighbors $C_{u,v} \equiv |\{z\in V: (z,u), (z,v) \in E\}|$ is a random variable distributed according to ${\rm Bin}(\frac{n}{2}-2,\mathcal{V}_t(r_s,r_s,\ell))$  when $r_s \geq \ell> 2r_d$ and according to ${\rm Bin}(\frac{n}{2}-2,\mathcal{V}_t(r_s,r_s,\ell))+{\rm Bin}(\frac{n}{2},\mathcal{V}_t(r_d,r_d,\ell)$ when $\ell \le 2r_d$.
\end{lemma}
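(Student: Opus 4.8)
\textbf{Proof proposal for Lemma~\ref{lem:sepd}.}

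The plan is to mirror the one-dimensional argument of Lemma~\ref{lem:sep}, but now using the spherical-cap volumes introduced above. First I would fix $u,v$ in the same cluster, say $u,v \in V_1$ (the case $V_2$ is symmetric), condition on the edge $e(u,v)$ and on $d(u,v) = \ell$, and look at an arbitrary third vertex $z \in V\setminus\{u,v\}$. The crucial structural fact, exactly as in the one-dimensional case, is that conditioned on the \emph{positions} $X_u, X_v$ (equivalently, conditioned on $d(u,v)=\ell$), the events $\cE^{u,v}_z$ for different $z$ are mutually independent, since each depends only on the independently chosen $X_z$. So the count $C_{u,v}$ is a sum of independent indicators, and I just need to identify the success probability of each indicator depending on whether $z$ lies in $V_1$ or $V_2$.

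Next I would compute these per-vertex probabilities using Observation~\ref{obs:2}. If $z \in V_1$ (so $z\sim u$ and $z\sim v$), then $z$ is a common neighbor iff $X_z$ lies in $B_t(X_u,r_s)\cap B_t(X_v,r_s)$, which by definition has normalized volume $\cV_t(r_s,r_s,\ell)$; hence $\Pr(\cE^{u,v}_z \mid z\in V_1) = \cV_t(r_s,r_s,\ell)$. If $z \in V_2$ (so $z\nsim u$ and $z\nsim v$), then $z$ is a common neighbor iff $X_z \in B_t(X_u,r_d)\cap B_t(X_v,r_d)$, which has normalized volume $\cV_t(r_d,r_d,\ell)$ when $\ell \le 2r_d$ and is empty when $\ell > 2r_d$ (two caps of radius $r_d$ whose centers are more than $2r_d$ apart do not intersect). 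There are $\tfrac{n}{2}-2$ eligible vertices in $V_1\setminus\{u,v\}$ and $\tfrac n2$ vertices in $V_2$, and these two groups contribute independently, giving the claimed mixture ${\rm Bin}(\tfrac n2-2,\cV_t(r_s,r_s,\ell))$ in the regime $r_s \ge \ell > 2r_d$ and ${\rm Bin}(\tfrac n2-2,\cV_t(r_s,r_s,\ell)) + {\rm Bin}(\tfrac n2,\cV_t(r_d,r_d,\ell))$ in the regime $\ell \le 2r_d$. (Here ${\rm Bin}(n,p)$ denotes a binomial with mean $np$, and the sum of two binomials means the distribution of the sum of two independent such variables.)

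The only subtlety — and the step I would be most careful about — is the geometric bookkeeping of which caps can overlap: one must check that $B_t(X_u,r_s)\cap B_t(X_v,r_s)$ is nonempty precisely when $\ell \le r_s$ (guaranteed since $e(u,v)$ and $u\sim v$ force $\ell \le r_s$), and that $B_t(X_u,r_d)\cap B_t(X_v,r_d)$ is nonempty precisely when $\ell \le 2r_d$, so that the $V_2$-contribution genuinely vanishes once $\ell>2r_d$. This is elementary once one recalls that on the sphere $B_t(X_u,r_d)\cap B_t(X_v,r_d)=\emptyset$ iff $d(X_u,X_v) > 2r_d$ (triangle inequality in the Euclidean chordal metric, valid in the small-radius regime $r_s,r_d = o(1)$ that we are in). No other obstacle arises; the rest is just assembling the binomial description from the independence and the per-vertex probabilities.
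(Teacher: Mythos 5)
Your proposal is correct and follows essentially the same approach as the paper's proof: fix $u,v\in V_1$ without loss of generality, condition on $d(u,v)=\ell$, split into the cases $\ell > 2r_d$ and $\ell \le 2r_d$ using the triangle inequality to see that the $V_2$-contribution vanishes in the first case, and then read off the binomial distributions from the cap-intersection volumes via Observation~\ref{obs:2} and the counts $\tfrac n2 - 2$ and $\tfrac n2$. You simply make explicit the independence of the events $\cE^{u,v}_z$ given $X_u,X_v$, which the paper leaves implicit.
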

\begin{lemma}\label{lem:sep2d}
For any two vertices $u\in V_1,v \in V_2: (u,v) \in E$ such that $\|u-v\|_2=\ell$ belonging to different clusters, the count of common neighbors $C_{u,v} \equiv |\{z\in V: (z,u), (z,v) \in E\}|$ is a random variable distributed according to ${\rm Bin}(n-2,B_t(r_d))$ when $r_s > 2r_d$ and according to ${\rm Bin}(n-2,\min(\mathcal{V}_t(r_s,r_d,\ell),B_t(r_d)))$ when $r_s \leq 2r_d$ and $\ell \leq r_d$.
\end{lemma}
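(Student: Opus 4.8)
The plan is to mimic the proof of Lemma~\ref{lem:sep2}, replacing the one-dimensional interval-length computations with normalized spherical-cap volumes. Fix vertices $u \in V_1$, $v \in V_2$ with $(u,v) \in E$, and condition on the value $\ell = \|X_u - X_v\|_2$; since $u,v$ lie in different clusters and are adjacent, $\ell \le r_d$. Conditioned on $\ell$ and on the (deterministic) cluster memberships, the positions $\{X_z\}_{z \ne u,v}$ remain independent and uniform on $S^t$, so the events $\cE^{u,v}_z$ (that $z$ is a common neighbor of $u$ and $v$) are mutually independent across $z$. Hence $C_{u,v}$ is a sum of independent Bernoulli variables, and it only remains to identify their parameters.

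Next I would compute $\Pr(\cE^{u,v}_z)$ in the two cases. If $z \in V_1 \setminus \{u\}$, then $z$ is adjacent to $u$ (same cluster) iff $\|X_z - X_u\|_2 \le r_s$, and adjacent to $v$ (different cluster) iff $\|X_z - X_v\|_2 \le r_d$; thus $\cE^{u,v}_z$ holds iff $X_z \in B_t(X_u, r_s) \cap B_t(X_v, r_d)$, an event of probability $\mathcal{V}_t(r_s, r_d, \ell)$ by definition of $\mathcal{V}_t$ and the rotational homogeneity of $S^t$ (the intersection volume depends only on the distance $\ell$ between the two centers). If instead $z \in V_2 \setminus \{v\}$, the roles of $r_s$ and $r_d$ swap, giving probability $\mathcal{V}_t(r_d, r_s, \ell)$; but the volume of the intersection of two caps is symmetric in their radii, so $\mathcal{V}_t(r_d, r_s, \ell) = \mathcal{V}_t(r_s, r_d, \ell)$. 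Since $|V_1 \setminus \{u\}| = |V_2 \setminus \{v\}| = \frac{n}{2} - 1$, all $n-2$ candidate common neighbors contribute Bernoulli variables with the same parameter $\mathcal{V}_t(r_s, r_d, \ell)$, whence $C_{u,v} \sim {\rm Bin}(n-2, \mathcal{V}_t(r_s, r_d, \ell))$.

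Finally I would reconcile this with the two forms stated in the lemma. The intersection $B_t(X_u, r_s) \cap B_t(X_v, r_d)$ is contained in $B_t(X_v, r_d)$, so $\mathcal{V}_t(r_s, r_d, \ell) \le B_t(r_d)$ always; hence $\mathcal{V}_t(r_s, r_d, \ell) = \min(\mathcal{V}_t(r_s, r_d, \ell), B_t(r_d))$, which is the $r_s \le 2r_d$ case. When $r_s > 2r_d$, the triangle inequality together with $\ell \le r_d$ gives, for any $z$ with $\|X_z - X_v\|_2 \le r_d$, the bound $\|X_z - X_u\|_2 \le r_d + \ell \le 2r_d < r_s$, so $B_t(X_v, r_d) \subseteq B_t(X_u, r_s)$ and $\mathcal{V}_t(r_s, r_d, \ell) = B_t(r_d)$; the count is then ${\rm Bin}(n-2, B_t(r_d))$, matching the first statement.

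There is no real obstacle here: the argument is a direct conditional-independence computation. The only points needing a line of care are (i) that conditioning on $\ell$ makes the $\cE^{u,v}_z$ independent (immediate from independence of the $X_z$), and (ii) the geometric containment $B_t(X_v, r_d) \subseteq B_t(X_u, r_s)$ used in the $r_s > 2r_d$ case; both are elementary. If one wanted full rigor on $\mathcal{V}_t$ depending on $\ell$ alone, one would invoke transitivity of $O(t+1)$ on ordered pairs of points of $S^t$ at a fixed distance, but this is already implicit in the definition of $\mathcal{V}_t$.
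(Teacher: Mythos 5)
Your proof is correct and follows the same route the paper intends (the paper simply remarks that the proof of Lemma~\ref{lem:sep2d} is ``similar'' to that of Lemma~\ref{lem:sepd}, i.e., condition on $\ell$, use independence of the remaining $X_z$, and read off the Bernoulli parameter from the intersection volume via the high-dimensional analogue of Observation~\ref{obs:2}). Your added observations — that the two cluster cases both give the symmetric quantity $\mathcal{V}_t(r_s,r_d,\ell)$, that the stated $\min$ is actually always attained by $\mathcal{V}_t(r_s,r_d,\ell)$, and that $r_s>2r_d$ with $\ell\le r_d$ forces $B_t(X_v,r_d)\subseteq B_t(X_u,r_s)$ by the triangle inequality — make explicit what the paper leaves implicit, and they are all correct.
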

%Similar lemmas exists for other motifs as well. We describe those in detail in the appendix. Here let us give the proof of Lemma \ref{lem:sep}. The proof of Lemma \ref{lem:sep2} will follow similarly. These expressions can also be generalized straightforwardly when the clusters are of unequal sizes, but we omit those for clarity of exposition. 
\begin{proof}[Proof of Lemma \ref{lem:sepd}]
Without loss of generality, assume $u,v \in V_1$. In order for $(u,v) \in E$, we must have $r_s \geq \ell$. Now there are two cases to consider, $\ell > 2r_d$ and $\ell \leq 2r_d$. In case 1, for $z$ to be a common neighbor of $u$ and $v$, $z$ must be in $V_1$ by triangle inequality. Since, there are $\frac{n}{2}-2$ points in $V_1 \setminus \{u,v\}$, from Observation~\ref{obs:2}, $\cE^{u,v}_z) \sim {\rm Bin}(\frac{n}{2}-2,\mathcal{V}_t(r_s,r_s,\ell))$. In case 2, $z$ can also be part of $V_2$ and there are $\frac{n}{2}$ points in $V_2$, thus again from Observation~\ref{obs:2}, $\cE^{u,v}_z) \sim
{\rm Bin}(\frac{n}{2}-2,\mathcal{V}_t(r_s,r_s,\ell))+{\rm Bin}(\frac{n}{2},\mathcal{V}_t(r_d,r_d,\ell)$. 
\end{proof}
The proof of Lemma \ref{lem:sep2d} is similar. We now use the following version of the Chernoff bound to estimate the deviation on the number of common neighbors in the two cases: $u \sim v$ and $u \nsim v$.

\begin{lemma}[Chernoff Bound]
Let $X_1, \ldots, X_n$ be iid random variables in $\{0,1\}$. Let $X$ denote the sum of these $n$ random variables. Then for any $\delta>0$,
\[
 \begin{cases}
\Pr(X > (1+\delta) \avg(X)) \le e^{-\delta^2\avg(X)/3}  = \frac{1}{n\log^2 n}, \text{ when } \delta = \sqrt{\frac{3(\log{n}+2\log\log n)}{\avg(X)}},\\
\Pr(X < (1-\delta)\avg(X)) \leq e^{-\delta^2\avg(X)/2} = \frac{1}{n\log^2 n},\text{ when } \delta = \sqrt{\frac{2(\log{n}+2\log\log n)}{\avg(X)}}.
\end{cases}
\]
\end{lemma}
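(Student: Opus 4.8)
The plan is to obtain both inequalities from the classical exponential-moment (Bernstein--Chernoff) argument and then substitute the prescribed values of $\delta$. Write $\mu \equiv \avg(X) = \sum_{i=1}^n p_i$ with $p_i = \Pr(X_i = 1)$. First I would record the standard bound on the moment generating function: by independence and $1+x \le e^x$, $\avg[e^{sX}] = \prod_{i=1}^n(1 + p_i(e^s-1)) \le \exp(\mu(e^s-1))$ for every real $s$.

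For the upper tail, the next step is to apply Markov's inequality to $e^{sX}$ with $s>0$, obtaining $\Pr(X > (1+\delta)\mu) \le e^{-s(1+\delta)\mu}\avg[e^{sX}] \le \exp(\mu(e^s - 1 - s(1+\delta)))$, and to optimize by setting $s = \ln(1+\delta)$, which gives $\Pr(X > (1+\delta)\mu) \le \left(\frac{e^\delta}{(1+\delta)^{1+\delta}}\right)^\mu$. It then remains to verify the elementary inequality $\frac{e^\delta}{(1+\delta)^{1+\delta}} \le e^{-\delta^2/3}$ for $0 \le \delta \le 1$ --- equivalently $\delta - (1+\delta)\ln(1+\delta) + \delta^2/3 \le 0$ on $[0,1]$ --- which I would check via the Taylor estimate $(1+\delta)\ln(1+\delta) \ge \delta + \delta^2/2 - \delta^3/6$. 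This produces $\Pr(X > (1+\delta)\mu) \le e^{-\delta^2\mu/3}$. The lower-tail bound is obtained by running the same argument on $-X$: for $s>0$, $\Pr(X < (1-\delta)\mu) \le e^{s(1-\delta)\mu}\avg[e^{-sX}] \le \exp(\mu(e^{-s} - 1 + s(1-\delta)))$, and setting $s = -\ln(1-\delta)$ yields $\Pr(X < (1-\delta)\mu) \le \left(\frac{e^{-\delta}}{(1-\delta)^{1-\delta}}\right)^\mu \le e^{-\delta^2\mu/2}$, the last step again reducing to a one-variable inequality ($-(1-\delta)\ln(1-\delta) - \delta \le -\delta^2/2$) valid for $0 < \delta < 1$.

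Finally I would plug in the stated $\delta$'s: with $\delta = \sqrt{3(\log n + 2\log\log n)/\mu}$ we have $\delta^2\mu/3 = \log n + 2\log\log n$, hence $e^{-\delta^2\mu/3} = 1/(n\log^2 n)$, and symmetrically $\delta = \sqrt{2(\log n + 2\log\log n)/\mu}$ gives $e^{-\delta^2\mu/2} = 1/(n\log^2 n)$. The one thing to watch is the admissibility constraint $\delta \le 1$ used in the elementary inequalities above; this is equivalent to $\avg(X) \ge 3(\log n + 2\log\log n)$, which holds at every place the lemma is invoked because the relevant common-neighbor counts have expectation $\Theta(\log n)$ with a sufficiently large constant (one could alternatively carry along the $\delta>1$ variant $e^{-\delta\mu/3}$ for safety). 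The main obstacle is therefore not conceptual but bookkeeping --- keeping the constants $1/3$ and $1/2$ consistent with the precise tail inequalities and confirming that the parameter regime guarantees $\delta \le 1$ wherever the bound is applied.
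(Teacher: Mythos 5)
Your derivation is correct and is the standard moment-generating-function argument for the multiplicative Chernoff bound. Note that the paper states this lemma without proof, treating it as a textbook fact, so there is no paper proof to compare against; your reconstruction (bound the MGF by $\exp(\mu(e^s-1))$, Markov, optimize $s$, then reduce to the elementary scalar inequalities $\delta-(1+\delta)\ln(1+\delta)\le -\delta^2/3$ and $-\delta-(1-\delta)\ln(1-\delta)\le -\delta^2/2$ on $[0,1]$) is exactly the classical route, and your Taylor and sign checks for those inequalities are sound. Your observation that the $e^{-\delta^2\mu/3}$ upper-tail form genuinely requires $\delta\le 1$, and that this constraint is equivalent to $\avg(X)\ge 3(\log n + 2\log\log n)$ which holds in the regimes where the paper invokes the lemma (the relevant expectations are $\Theta(\log n)$ with large leading constants), is the one place where the paper's blanket phrase ``for any $\delta>0$'' is slightly loose; you are right to flag it, and the safe fallback $e^{-\delta\mu/3}$ for $\delta>1$ that you mention is the correct patch. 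The lower-tail bound is unconditional on $(0,1)$ as you note, since the event is vacuous for $\delta\ge 1$.
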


We take $E_S = c^{(t)}_s\cdot (B_t(r_d)n+\sqrt{6B_t(r_d) n\log n})$ and $E_D=c^{(t)}_d \cdot( n\mathcal{V}_t(r_s,r_d,r_d)-\sqrt{2nB_t(r_d)\log n })$ where $c^{(t)}_s \geq 1$ and $c^{(t)}_d \leq 1$ are suitable constants that depend on $t$.

\begin{lemma}
For any pair of nodes $(u,v)=e\in E,\  u\nsim v$, the \texttt{MotifCount} algorithm removes the edge $e$ with a probability of $1-O\left(\frac{1}{n\log^2 n}\right)$ when $E_S \geq B_t(r_d)n+\sqrt{6B_t(r_d) n\log n}$ and $E_D \leq n\mathcal{V}_t(r_s,r_d,r_d)-\sqrt{2nB_t(r_d)\log n }$. \label{lem:diff}
\end{lemma}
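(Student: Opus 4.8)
\textbf{Proof proposal for Lemma~\ref{lem:diff}.}

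The plan is to show that with high probability the common-neighbor count $C_{u,v}$ of an inter-cluster edge falls \emph{outside} the interval $[E_D, E_S]$, so that \texttt{process} returns true and Algorithm~\ref{alg:alg1} removes the edge. By Lemma~\ref{lem:sep2d}, conditioned on $(u,v)\in E$ with $u\nsim v$ and $\|X_u-X_v\|_2 = \ell \le r_d$, the count $C_{u,v}$ is distributed as ${\rm Bin}\bigl(n-2,\,p(\ell)\bigr)$ where $p(\ell) = \min\bigl(\cV_t(r_s,r_d,\ell),\,B_t(r_d)\bigr)$ in the regime $r_s\le 2r_d$ (and as ${\rm Bin}(n-2,B_t(r_d))$ when $r_s>2r_d$, which is the easier case). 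The strategy is a two-sided argument: either $C_{u,v}$ is large enough to exceed $E_S$, or it is small enough to drop below $E_D$. First I would record the elementary geometric bounds on $p(\ell)$: since any common neighbor of $u$ and $v$ must lie within distance $r_d$ of both endpoints, $C_{u,v}$ stochastically dominates ${\rm Bin}(n-2,\cV_t(r_s,r_d,\ell))$ is not the right direction — rather, $p(\ell) \le B_t(r_d)$ always, and $p(\ell) \ge \cV_t(r_s,r_d,r_d)$ when $\ell \le r_d$ is the worst (smallest) case is wrong too; the correct monotonicity is that $\cV_t(r_s,r_d,\ell)$ is decreasing in $\ell$, so for $\ell \le r_d$ we have $p(\ell) \ge \min(\cV_t(r_s,r_d,r_d), B_t(r_d))$. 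I would fix these monotonicity facts precisely at the start.

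The core of the argument is then a Chernoff bound applied to $C_{u,v}\sim{\rm Bin}(n-2,p(\ell))$ with $\mu \equiv \expect C_{u,v} = (n-2)p(\ell) = \Theta(\log n)$ (recall $r_s, r_d = \Theta((\log n/n)^{1/t})$, so $B_t(r_d), \cV_t(\cdot) = \Theta(\log n / n)$). Using the stated Chernoff bound with $\delta = \sqrt{2(\log n + 2\log\log n)/\mu}$ gives a one-sided deviation of $\sqrt{2\mu\log n}$ with failure probability $O(1/(n\log^2 n))$. The key calculation is to verify that the deterministic interval $[E_D,E_S]$ as defined — $E_S = c_s^{(t)}(B_t(r_d)n + \sqrt{6B_t(r_d)n\log n})$ and $E_D = c_d^{(t)}(n\cV_t(r_s,r_d,r_d) - \sqrt{2nB_t(r_d)\log n})$ — cannot contain the high-probability range of $C_{u,v}$. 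Concretely: if $\mu \ge$ (midpoint-ish threshold) then concentration forces $C_{u,v} > E_D$ with the wrong sign — I need instead to argue that for \emph{every} feasible $\ell$, \emph{at least one} of the two tail events ``$C_{u,v} > E_S$'' or ``$C_{u,v} < E_D$'' holds with high probability. This works because the ``dangerous'' zone where both could fail would require $\mu$ to sit strictly between $E_D + \sqrt{2\mu\log n}$ and $E_S - \sqrt{6\mu\log n}$, and the constants $c_s^{(t)}, c_d^{(t)}$ are chosen precisely so that this zone is empty: the gap $E_S - E_D$ relative to the fluctuation scale $\sqrt{\mu\log n}$ is insufficient. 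I would split into the case $p(\ell) = B_t(r_d)$ (then $\mu \approx n B_t(r_d)$, and $C_{u,v} \ge E_D$ by an upper-tail-style bound below $E_D$ being violated — actually here $C_{u,v}$ concentrates around $nB_t(r_d) \ge E_D$ when $c_d^{(t)}$ small, so we instead check $C_{u,v}$ exceeds $E_D$... ) — the bookkeeping of which side each regime falls on is exactly what the constants encode, and I'd make the two sub-cases explicit. Finally, a union bound over the $O(n\log n)$ edges of the GBM (using that the edge count is $O(n\log n)$ whp, as in Lemma~\ref{lem:defn}) upgrades the per-edge $1-O(1/(n\log^2 n))$ to ``all inter-cluster edges removed'' with probability $1-o(1)$, though the lemma as stated is the per-edge version.

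The main obstacle I anticipate is \emph{not} the Chernoff machinery, which is routine, but rather pinning down the geometry: getting clean two-sided estimates on the intersection volume $\cV_t(r_s,r_d,\ell)$ of two spherical caps of radii $r_s,r_d$ whose centers are at distance $\ell$, uniformly over $\ell \in [0,r_d]$, and comparing it to the single-cap volume $B_t(r_d)$. In particular one needs $\cV_t(r_s,r_d,\ell) \ge (1-o(1))\,B_t(r_d)$-type lower bounds (or an explicit constant-factor bound) so that $E_D$ and $E_S$ are genuinely separated by a constant factor while both being $\Theta(\log n)$; this is where the dimension-dependent constants $c_s^{(t)}, c_d^{(t)}$ must be calibrated, and where one must invoke the small-ball volume asymptotics (Lemma~\ref{lem:area}, $|B_t(u,r)| = c_t r^t$) carefully, since the caps here have radius $\Theta((\log n/n)^{1/t}) = o(1)$ so the flat-disk approximation applies and $\cV_t$ reduces to a Euclidean ball-intersection volume in $\reals^t$. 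Once that geometric reduction is in hand, verifying $E_S - E_D \gg \sqrt{\mu\log n}$ fails (so the interval is ``thin'' in the right sense) and concluding via Chernoff is mechanical.
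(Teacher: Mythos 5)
Your proposal starts from a misreading of the algorithm that inverts the entire argument. In Algorithm~\ref{alg:alg1}, when \texttt{process} returns \emph{true} the loop body executes \texttt{continue}, which \emph{keeps} the edge; the edge is removed only when \texttt{process} returns \emph{false}, and \texttt{process} returns false precisely when the common-neighbor count lies \emph{inside} the window $[E_D, E_S]$ (see also the prose description preceding the algorithm: ``leave the edge intact if and only if the number of triangles are \emph{not} within two specified thresholds''). So to prove an inter-cluster edge is removed you must show $C_{u,v} \in [E_D, E_S]$ with high probability, not $C_{u,v} \notin [E_D, E_S]$ as you set out to do. Your whole scheme — the ``either exceeds $E_S$ or drops below $E_D$'' dichotomy, the ``dangerous zone where both tail events fail,'' the claim that the constants are tuned so this zone is empty, and the concluding remark that you need $E_S - E_D$ to be small relative to the fluctuation scale — is aimed at the wrong target. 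That two-sided escape argument is the shape of what is needed for \emph{intra}-cluster edges (Lemmas~\ref{lem:essame-1} and~\ref{lem:edsamed-1}); for \emph{inter}-cluster edges the situation is both simpler and opposite.

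The paper's actual proof is a straightforward concentration-in-the-middle argument. Conditioned on $\ell = \|X_u - X_v\|_2 \le r_d$ and $u \nsim v$, Lemma~\ref{lem:sep2d} gives $Z \sim {\rm Bin}(n-2, \min(\cV_t(r_s,r_d,\ell), B_t(r_d)))$. The mean is bounded above by $(n-2)B_t(r_d)$ uniformly in $\ell$, so the upper Chernoff tail gives $Z \le n B_t(r_d) + \sqrt{3 B_t(r_d) n \log n} + o(1) \le E_S$ with failure probability $O(1/(n\log^2 n))$. For the lower side, since $\cV_t(r_s,r_d,\ell)$ is decreasing in $\ell$ and $\ell \le r_d$ for an inter-cluster edge, the mean is bounded below by $(n-2)\min(\cV_t(r_s,r_d,r_d), B_t(r_d))$, and the lower Chernoff tail (using $\cV_t \le B_t(r_d)$ inside the deviation term) gives $Z \ge n\cV_t(r_s,r_d,r_d) - \sqrt{2 n B_t(r_d)\log n} \ge E_D$. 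Both conditions hold simultaneously with probability $1 - O(1/(n\log^2 n))$, so the count lands inside $[E_D,E_S]$ and the edge is removed. There is no case split, no need for the interval to be thin, and no ``dangerous zone''; in fact a wider interval would only make the inclusion easier. Your Chernoff machinery and the geometric observations about $\cV_t$ and $B_t$ are fine ingredients, but the logic connecting them to the algorithm's behavior needs to be reversed before the proof can go through.
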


\begin{proof}
Let $Z$ denote the random variable for the number of common neighbors of two nodes $u,v \in V: (u,v) \in E, \|u-v\|_2 = \ell, u \nsim v$. From Lemma~\ref{lem:sep2d}, $E[Z]\leq (n -2)B_t(r_d)$. Using the Chernoff bound we know that with a probability of at least $1-\frac{1}{n\log^2 n}$
\begin{align*}
Z\le F_{\nsim}=&\  (n -2)B_t(r_d) + \sqrt{3(\log n+ 2\log \log n)(n-2)B_t(r_d) }\\
=& \  B_t(r_d)n+\sqrt{3B_t(r_d) n\log n}+o(1)\\
\le& \  E_S.
\end{align*}

Moreover again from Lemma~\ref{lem:sep2d}, $E[Z]=(n-2)\min(\mathcal{V}_t(r_s,r_d,\ell),B_t(r_d))$ as we assume $r_s \leq 2r_d$. Hence, with probability of at least  $1-\frac{1}{n\log^2 n}$

\begin{align*}
Z\ge f_{\nsim}= &\ \min_{\ell: \ell \le r_d,r_s \le 2r_d} ( ( n -2)\min(\mathcal{V}_t(r_s,r_d,\ell),B_t(r_d)) - \\
  & \sqrt{2(\log n+ 2\log \log n)(n-2)\min(\mathcal{V}_t(r_s,r_d,\ell),B_t(r_d)) }) \\
&\geq \ \min_{\ell: \ell \le r_d,r_s \le 2r_d} ( ( n -2)\min(\mathcal{V}_t(r_s,r_d,\ell),B_t(r_d)) - \\
  & \sqrt{2(\log n+ 2\log \log n)(n-2)B_t(r_d) }) \,\text{since } \cV_t(r_s,r_d,\ell) \subseteq B_t(r_d) \\
>&\  n\mathcal{V}_t(r_s,r_d,r_d)-\sqrt{2nB_t(r_d)\log n }\, \text{since } \cV_t(r_s,r_d,\ell) \text{ is a decreasing function of } \ell \\
\geq & \  E_D.
\end{align*}

Hence, $E_S \le Z\le E_D$ with a probability of $1-\frac{2}{n\log^2 n}$ for $(u,v)\in E, u \nsim v$. Hence $(u,v)$ gets removed with high probability by the algorithm.
\end{proof}
%Note that, we can calculate a tighter bound for $E_D$ by calculating the minimum value of $f_{\nsim}$ accurately but we choose to keep the equations simple by assuming the worst deviation. 
Applying a union bound, we therefore can assume all inter-cluster edges are removed with probability $1-o(1)$ as there is $O(n\log{n})$ edges. 

In the next two lemmas, we provide two different conditions on $\|u-v\|_2$ when $u \sim v$ such that our algorithm does not remove the edge $(u,v)$.  Then we obtain a sufficient condition for the two communities to remain connected by the edges that are not removed. 

%Note that, we can calculate a tighter bound for $E_D$ by calculating the minimum value of $f_{\nsim}$ accurately but we choose to keep the equations simple by assuming the worst deviation. Next, we describe two different conditions of $d(u,v)$ for which our algorithm correctly identifies intra-cluster edges and then use it to obtain a  condition such the graph induced by the correctly classified edges forms two distinct components  and every vertex lies in one of the component.

\begin{lemma}
\label{lem:essame-1}
Given a pair of nodes $u,v$ belonging to the same cluster such that $(u,v) \in E$, the \texttt{MotifCount} algorithm does not remove the edge $e$ with probability of $1-O\left( \frac{1}{n\log^2 n}\right)$ when $\|u-v\|_2=\ell$ (say) satisfies the following:
\begin{align*}
\frac{n}{2}\Big(\mathcal{V}_t(r_s,r_s,\ell)+\mathcal{V}_t(r_d,r_d,\ell)\Big)- \sqrt{2n\log n}\Big(\sqrt{B_t(r_s)} + \sqrt{B_t(r_d)} \Big) > E_S. 
\end{align*}  
\end{lemma}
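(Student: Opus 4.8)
The plan is to proceed by a direct Chernoff-bound argument on the number of common neighbors $Z$ of the edge $(u,v)$, conditioned on $\|u-v\|_2=\ell$, exactly paralleling the one-dimensional argument in Lemma~\ref{lem:esed} but now using the spherical-cap and cap-intersection volumes from Section~\ref{sec:sparse-high}. By Lemma~\ref{lem:sepd}, since $u\sim v$ and we are in the regime $r_s\le 2r_d$ with $\ell\le 2r_d$ (the only case where the edge can have few common neighbors), $Z$ is distributed as $\mathrm{Bin}(\tfrac n2-2,\cV_t(r_s,r_s,\ell))+\mathrm{Bin}(\tfrac n2,\cV_t(r_d,r_d,\ell))$, so $\expect[Z]=\tfrac n2\big(\cV_t(r_s,r_s,\ell)+\cV_t(r_d,r_d,\ell)\big)-2\cV_t(r_s,r_s,\ell)$, i.e.\ $\tfrac n2\big(\cV_t(r_s,r_s,\ell)+\cV_t(r_d,r_d,\ell)\big)-o(1)$.

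First I would apply the lower-tail Chernoff bound to each of the two independent binomials separately. Writing $Z=Z_1+Z_2$ with $\expect Z_1=(\tfrac n2-2)\cV_t(r_s,r_s,\ell)$ and $\expect Z_2=\tfrac n2\cV_t(r_d,r_d,\ell)$, and using $\cV_t(r_s,r_s,\ell)\le B_t(r_s)$ and $\cV_t(r_d,r_d,\ell)\le B_t(r_d)$ to bound the variances from above, the Chernoff bound of the preceding lemma gives, with the choice $\delta_i=\sqrt{2(\log n+2\log\log n)/\expect Z_i}$, that $Z_i\ge \expect Z_i-\sqrt{2(\log n+2\log\log n)\,\expect Z_i}$ each with probability $1-\tfrac1{n\log^2 n}$. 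Since $\expect Z_1\le \tfrac n2 B_t(r_s)$ and $\expect Z_2\le\tfrac n2 B_t(r_d)$, a union bound over the two events yields, with probability $1-O\!\left(\tfrac1{n\log^2 n}\right)$,
\begin{align*}
Z\;\ge\;\frac n2\big(\cV_t(r_s,r_s,\ell)+\cV_t(r_d,r_d,\ell)\big)-\sqrt{2n\log n}\Big(\sqrt{B_t(r_s)}+\sqrt{B_t(r_d)}\Big)-o(1),
\end{align*}
where the $\log n+2\log\log n$ has been absorbed into $\log n$ up to lower-order terms. By the hypothesis of the lemma the right-hand side exceeds $E_S$, hence $Z>E_S$ and so the subroutine \texttt{process} returns \emph{true}, meaning the edge $(u,v)$ is \emph{not} removed. (If instead $\ell>2r_d$, then $Z$ stochastically dominates the $\ell\le 2r_d$ value, so the same conclusion holds a fortiori; and one must also note that the only way an in-cluster edge is removed is by falling \emph{below} $E_S$ — falling below $E_D$ would only happen for very large $\ell$ and is handled separately, as in the one-dimensional case, which is why this lemma records only the $E_S$ condition.)

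The step I expect to be the genuine obstacle is not the probabilistic estimate — that is routine Chernoff — but rather the bookkeeping around the geometric quantities $\cV_t(r_s,r_s,\ell)$ and $\cV_t(r_d,r_d,\ell)$: one must be careful that these are the correct conditional edge-probabilities (via Observation~\ref{obs:2}), that the monotonicity in $\ell$ used to reduce to a worst case is actually valid, and that the $-2\cV_t(r_s,r_s,\ell)$ correction from $|V_1\setminus\{u,v\}|=\tfrac n2-2$ truly is $o(1)$ given $\cV_t(r_s,r_s,\ell)=O(\tfrac{\log n}{n})$. Everything else — splitting $Z$ into two independent binomials, bounding variances by the full cap volumes $B_t(r_s),B_t(r_d)$, and invoking the stated Chernoff bound with $\delta_i=\sqrt{2(\log n+2\log\log n)/\expect Z_i}$ — is a direct transcription of the one-dimensional proof of Lemma~\ref{lem:esed}, with $2r_s-x$ replaced by $\cV_t(r_s,r_s,\ell)$ and $2r_d-x$ by $\cV_t(r_d,r_d,\ell)$.
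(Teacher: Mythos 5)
Your proposal is correct and follows essentially the same route as the paper: split $Z$ into the two independent binomials from Lemma~\ref{lem:sepd}, apply the lower-tail Chernoff bound to each with the stated $\delta$, bound $\mathcal{V}_t(r_s,r_s,\ell)$ and $\mathcal{V}_t(r_d,r_d,\ell)$ by $B_t(r_s)$ and $B_t(r_d)$ in the deviation terms, and conclude $Z>E_S$ so the edge is retained. The extra remarks you add (the $o(1)$ correction from $\tfrac n2-2$, and the $\ell>2r_d$ case where $\mathcal{V}_t(r_d,r_d,\ell)=0$) are consistent with, and slightly more careful than, the paper's own write-up.
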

\begin{proof}
Let $Z$ denote the random variable corresponding to the number of common neighbors of $u,v$. Let $\mu_s(\ell)= \avg(Z|u \sim v, d(u,v)=\ell)$. From Lemma~\ref{lem:sepd}, $\mu_s(\ell)=(\frac{n}{2}-2)\cV_t(r_s,r_s,l)+\frac{n}{2}\cV_t(r_d,r_d,l)$.

Using the Chernoff bound, with a probability of $1-O\Big(\frac{1}{n\log^2 n}\Big)$

\begin{align*}
Z &>  
 (n/2-2)\mathcal{V}_t(r_s,r_s,\ell)+n/2\mathcal{V}_t(r_d,r_d,\ell)  - \sqrt{2(\log n+2\log\log n)\mathcal{V}_t(r_s,r_s,\ell)n/2}    \\&- \sqrt{2(\log n+2\log\log n)\mathcal{V}_t(r_d,r_d,\ell)n/2}\\
&\ge n/2\mathcal{V}_t(r_s,r_s,\ell)+n/2\mathcal{V}_t(r_d,r_d,\ell) -\Big(\sqrt{B_t(r_s)} + \sqrt{B_t(r_d)} \Big)\sqrt{2n\log n}\\
&=\frac{n}{2}\Big(\mathcal{V}_t(r_s,r_s,\ell)+\mathcal{V}_t(r_d,r_d,\ell)\Big)- \sqrt{2n\log n}\Big(\sqrt{B_t(r_s)} + \sqrt{B_t(r_d)} \Big).
\end{align*}

Therefore,  Algorithm \ref{alg:alg1} will not delete $e$ if 
\begin{align*}
\frac{n}{2}\Big(\mathcal{V}_t(r_s,r_s,\ell)+\mathcal{V}_t(r_d,r_d,\ell)\Big)- \sqrt{2n\log n}\Big(\sqrt{B_t(r_s)} + \sqrt{B_t(r_d)} \Big) > E_S.
\end{align*}

Note that there exists a maximum value of distance (referred to as $\ell_1$) such that whenever $\|u-v\| \leq \ell_1$, the condition will be satisfied.
\end{proof}

%
%\begin{corollary}
%Let us assume  $f(n)>\Omega(\log n)$ and $r_s,r_d$ to be constants. If $\mathcal{V}(r_s,r_s,\ell)+\mathcal{V}(r_d,r_d,\ell)-2B(r_d)=\omega\Big(\sqrt{\frac{\log n}{f(n)}}\Big)$ then $0\le \ell \le \ell^{\star}$ suffices where 
%\begin{align*}
%\ell^{\star}=\argmin_{x} \mathcal{V}(r_s,r_s,x)+\mathcal{V}(r_d,r_d,\ell)> 2B(r_d)
%\end{align*}
%\end{corollary}

%On solving these equations, we get that when $\theta \in [0,a-b)$, our algorithm can identify the intra-cluster edges correctly.

\begin{lemma}
\label{lem:edsamed-1}
Given a pair of nodes $u,v$ belonging to the same cluster such that $(u,v) \in E$, the \texttt{MotifCount} algorithm does not remove the edge $e$ with probability of $1-O\left( \frac{1}{n\log^2 n}\right)$ when $\ell \equiv \|u-v\|_2$ (say) satisfies the following:
$$\frac{n}{2}\Big(\mathcal{V}_t(r_s,r_s,\ell+\mathcal{V}_t(r_d,r_d,\ell)\Big))+ \sqrt{n\log{n}}\sqrt{[\mathcal{V}_t(r_s,r_s,\ell)+\mathcal{V}_t(r_d,r_d,\ell)]} \leq E_D.$$\end{lemma}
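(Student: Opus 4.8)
The plan is a one-sided concentration argument: bound the \emph{upper} tail of the common-neighbour count $Z$ of the intra-cluster edge $(u,v)$ and show that, under the stated hypothesis, $Z\le E_D$ with probability $1-O(1/(n\log^2 n))$, so that \texttt{process}$(u,v,r_s,r_d)$ returns true and the edge is retained. Recall that an intra-cluster edge survives Algorithm~\ref{alg:alg1} precisely when its triangle count is either at least $E_S$ or at most $E_D$; Lemma~\ref{lem:essame-1} covers the former regime (small $\ell$), and the present lemma covers the latter (large $\ell$, where the relevant cap intersections are small).

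First I would invoke Lemma~\ref{lem:sepd}: since $u\sim v$ and $(u,v)\in E$, the count $Z$ is distributed as ${\rm Bin}(\tfrac n2-2,\mathcal V_t(r_s,r_s,\ell))+{\rm Bin}(\tfrac n2,\mathcal V_t(r_d,r_d,\ell))$ when $\ell\le 2r_d$, and as ${\rm Bin}(\tfrac n2-2,\mathcal V_t(r_s,r_s,\ell))$ when $\ell> 2r_d$; in the latter case two caps of radius $r_d$ at distance more than $2r_d$ are disjoint, so $\mathcal V_t(r_d,r_d,\ell)=0$ and the two expressions coincide. In every case $\avg[Z]\le\mu(\ell):=\tfrac n2\bigl(\mathcal V_t(r_s,r_s,\ell)+\mathcal V_t(r_d,r_d,\ell)\bigr)$.

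Next I would apply the multiplicative Chernoff upper bound stated above to the $\{0,1\}$-sum $Z$ with $\delta=\sqrt{3(\log n+2\log\log n)/\mu(\ell)}$, which gives
\[
\Pr\bigl(Z>\mu(\ell)+\sqrt{3(\log n+2\log\log n)\,\mu(\ell)}\bigr)\ \le\ \frac{1}{n\log^2 n}.
\]
Writing $\sqrt{3(\log n+2\log\log n)\,\mu(\ell)}$ as $O\bigl(\sqrt{n\log n}\,\sqrt{\mathcal V_t(r_s,r_s,\ell)+\mathcal V_t(r_d,r_d,\ell)}\bigr)$ and absorbing the absolute constant into the constant $c^{(t)}_d\le 1$ appearing in the definition of $E_D$, I would conclude that with probability $1-O(1/(n\log^2 n))$,
\[
Z\ \le\ \tfrac n2\bigl(\mathcal V_t(r_s,r_s,\ell)+\mathcal V_t(r_d,r_d,\ell)\bigr)+\sqrt{n\log n}\,\sqrt{\mathcal V_t(r_s,r_s,\ell)+\mathcal V_t(r_d,r_d,\ell)}\ \le\ E_D,
\]
the final inequality being exactly the hypothesis of the lemma; hence \texttt{process} returns true and the edge is not removed.

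There is no real obstacle inside the lemma itself — it is a routine one-sided tail bound. The point worth stressing, and the step I would be most careful with, is the monotonicity that makes the lemma usable downstream: $\mathcal V_t(r_s,r_s,\ell)+\mathcal V_t(r_d,r_d,\ell)$ is strictly decreasing in $\ell$ on $[0,r_s]$, so the displayed condition holds for every $\ell$ at least some threshold $\ell_2$. Combined with Lemma~\ref{lem:essame-1} (intra-cluster edges of length at most some $\ell_1$ survive via the $E_S$ branch), this shows that the edges surviving Algorithm~\ref{alg:alg1} inside each part form a random annulus graph with connectivity window $[0,\ell_1]\cup[\ell_2,r_s]$, to which the connectivity results of Section~\ref{sec:hrag} apply. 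The genuinely delicate bookkeeping — choosing $c^{(t)}_s,c^{(t)}_d$ so that $E_S$ (from Lemma~\ref{lem:diff}) and the two windows above are mutually consistent, and so that $[0,\ell_1]$ together with $[\ell_2,r_s]$ leaves each part connected — lies outside this lemma.
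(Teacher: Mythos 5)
Your argument is essentially the paper's proof: invoke Lemma~\ref{lem:sepd} to identify the distribution of $Z$, bound $\avg[Z]$ by $\tfrac n2(\mathcal V_t(r_s,r_s,\ell)+\mathcal V_t(r_d,r_d,\ell))$, apply the multiplicative Chernoff upper tail with $\delta=\Theta(\sqrt{\log n/\mu(\ell)})$, and conclude $Z\le E_D$ under the stated hypothesis. Your explicit remark that $\mathcal V_t(r_d,r_d,\ell)=0$ when $\ell>2r_d$ unifies the two cases of Lemma~\ref{lem:sepd}, but otherwise the reasoning, the use of the Chernoff lemma, and the final inequality match the paper's.
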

\begin{proof}

Let $Z$ denote the random variable corresponding to the number of common neighbors of $u,v$. Let $\mu_s(\ell)= \avg(Z|u \sim v, \|u-v\|_2=\ell)$. 
From Lemma~\ref{lem:sepd}, $\mu_s(\ell)=(\frac{n}{2}-2)\cV_t(r_s,r_s,l)+\frac{n}{2}\cV_t(r_d,r_d,l)$.

Using the Chernoff bound, with a probability of $1-O\Big(\frac{1}{n\log^2 n}\Big)$
\begin{align*}
Z &<  n/2\mathcal{V}_t(r_s,r_s,\ell)+n/2\mathcal{V}_t(r_d,r_d,\ell)  + \sqrt{2(\log n+2\log\log n)[\mathcal{V}_t(r_s,r_s,\ell)+\mathcal{V}_t(r_d,r_d,\ell)](n/2)}   \\
 &\le \frac{n}{2}\Big(\mathcal{V}_t(r_s,r_s,\ell+\mathcal{V}_t(r_d,r_d,\ell)\Big))+ \sqrt{n\log{n}}\sqrt{[\mathcal{V}_t(r_s,r_s,\ell)+\mathcal{V}_t(r_d,r_d,\ell)]}.
\end{align*}
The  \texttt{MotifCount} algorithm will not remove $e$ if 
$$\frac{n}{2}\Big(\mathcal{V}_t(r_s,r_s,\ell+\mathcal{V}_t(r_d,r_d,\ell)\Big))+ \sqrt{n\log{n}}\sqrt{[\mathcal{V}_t(r_s,r_s,\ell)+\mathcal{V}_t(r_d,r_d,\ell)]} \leq E_D.$$ 
Note that there exists a minimum value of distance (referred to as $\ell_2$) such that whenever $\|u-v\|_2 \geq \ell_2$, the condition will be satisfied.
\end{proof}

%
%\begin{corollary}
%Let us assume  $f(n)>\Omega(\log n)$ and $r_s,r_d$ to be constants. If $2\mathcal{V}(r_s,r_d,r_d)-\mathcal{V}(r_s,r_s,\ell)-\mathcal{V}(r_d,r_d,\ell)=\omega\Big(\sqrt{\frac{\log n}{f(n)}}\Big)$ then $r_s \ge \ell \ge \ell^{\star}$ suffices where 
%\begin{align*}
%\ell^{\star}=\argmin_{x} 2\mathcal{V}(r_s,r_d,r_d)-\mathcal{V}(r_s,r_s,x)-\mathcal{V}(r_d,r_d,x) >0
%\end{align*}
%\end{corollary}

\begin{lemma}\label{lem:main1d}
Algorithm \ref{alg:alg1} can identify all edges $(u,v)$ correctly for which $\ell\equiv \|u-v\|_2$ satisfies either of  the following:
\begin{align*}
{\rm Cond. 1:} & \quad \frac{n}{2}\Big(\mathcal{V}_t(r_s,r_s,\ell)+\mathcal{V}_t(r_d,r_d,\ell)\Big)- \sqrt{2n\log n}\Big(\sqrt{B_t(r_s)} + \sqrt{B_t(r_d)} \Big) > E_S   
\end{align*}
or
\begin{align*}
{\rm Cond. 2:} & \quad \frac{n}{2}\Big(\mathcal{V}_t(r_s,r_s,\ell+\mathcal{V}_t(r_d,r_d,\ell)\Big))+ \sqrt{n\log{n}}\sqrt{[\mathcal{V}_t(r_s,r_s,\ell)+\mathcal{V}_t(r_d,r_d,\ell)]} \leq E_D. 
\end{align*}
with probability at least $1-O\Big(\frac{1}{\log n}\Big)$.
\end{lemma}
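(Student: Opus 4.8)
The plan is to obtain Lemma~\ref{lem:main1d} as a union bound over vertex pairs, fed by the three per-edge guarantees already established: Lemma~\ref{lem:diff} handles inter-cluster edges, while Lemma~\ref{lem:essame-1} and Lemma~\ref{lem:edsamed-1} handle intra-cluster edges whose endpoint distance meets Cond.~1 or Cond.~2 respectively. First I would record the volume estimates $B_t(r_s)=\Theta(\log n/n)$ and $B_t(r_d)\le B_t(r_s)=\Theta(\log n/n)$, which follow from Lemma~\ref{lem:area} (in its $S^t$-normalized form) together with $r_s=\Theta((\log n/n)^{1/t})$. These are exactly what convert the ``$1/(n\log^2 n)$'' per-edge failure probabilities of Lemmas~\ref{lem:diff}, \ref{lem:essame-1}, and \ref{lem:edsamed-1} into quantities that remain small after summing over $\Theta(n^2)$ pairs.

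Next, I would fix an intra-cluster pair $(u,v)$ and condition on the distance $\ell=\|X_u-X_v\|_2$, observing that $(u,v)\in E$ is equivalent to $\ell\le r_s$ here. Integrating the bounds of Lemmas~\ref{lem:essame-1} and \ref{lem:edsamed-1} against the density of $\ell$ (the per-$\ell$ bound is uniform, so this integration is harmless) shows that the probability $(u,v)$ is simultaneously an edge, has $\ell$ satisfying Cond.~1 or Cond.~2, and is wrongly removed is at most $O(1/(n\log^2 n))\cdot B_t(r_s)=O(1/(n^2\log n))$. Symmetrically, for an inter-cluster pair $(u,v)$, since $(u,v)\in E$ forces $\|X_u-X_v\|_2\le r_d$ and hence occurs with probability $B_t(r_d)$, Lemma~\ref{lem:diff} gives that the probability $(u,v)$ is an edge that is wrongly retained is $B_t(r_d)\cdot O(1/(n\log^2 n))=O(1/(n^2\log n))$. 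A union bound over the $\binom n2=O(n^2)$ pairs then yields overall error $O(1/\log n)$, exactly the claimed bound; on the complementary event every inter-cluster edge is deleted and every intra-cluster edge meeting Cond.~1 or Cond.~2 survives, so Algorithm~\ref{alg:alg1} classifies all such edges correctly.

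The only real subtlety -- and the place I expect to have to be careful -- is the interaction between the randomness of the edge set $E$ and the union bound: one cannot union-bound naively ``over all edges,'' since $E$ itself is random. Instead I would union-bound over all $\binom n2$ vertex pairs and absorb the factor $\Pr((u,v)\in E)=\Theta(\log n/n)$ into each term, which is precisely what makes the $\log^2 n$ in the denominators large enough to beat the $n^2$ pairs and land at $O(1/\log n)$. An equivalent route is to first condition on the high-probability event $|E|=O(n\log n)$ (as in the proof of Lemma~\ref{lem:defn}) and union-bound over those edges; either way the success probability is $1-O(1/\log n)$.
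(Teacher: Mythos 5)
Your proposal is correct and matches the paper's own (one-line) argument: combine Lemmas~\ref{lem:essame-1} and \ref{lem:edsamed-1} (and, for inter-cluster edges, Lemma~\ref{lem:diff}), then union-bound — the paper does this by conditioning on $|E|=O(n\log n)$ and summing the $O(1/(n\log^2 n))$ per-edge failure probabilities, which is precisely the ``equivalent route'' you mention at the end. Your alternative bookkeeping over all $\binom{n}{2}$ pairs, absorbing the $\Theta(\log n/n)$ edge probability into each term, is a legitimate and slightly more self-contained way to reach the same $O(1/\log n)$ bound.
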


\begin{proof}
Follows from combining Lemma~\ref{lem:essame-1} and Lemma \ref{lem:edsamed-1}, and noting that in the connectivity regime, the number of edges is $O(n\log{n})$. 
\end{proof}

Let $\ell_1$ be the maximum value of $\|u-v\|_2$ such that Cond 1 is satisfied and $\ell_2$ is the minimum value of $\|u-v\|_2$ such that Cond 2 is satisfied. Also note that $\ell_1 \leq \ell_2$. We now give a condition on $\ell_1$ and $\ell_2$ such the two communities are each connected by the edges $(u,v)$ that satisfy either $\|u-v\|_2 \leq \ell_1$ or $\ell_2 \leq \|u-v\|_2 \leq r_s$.

\begin{lemma}
\label{RAG:cond}
If $(\ell_1/2)^t >  {8(t+1)\psi(t)} \frac{ \log n}{ n}$ %or $r_s^t -\ell_2^t >   \frac{8(t+1)\psi(t)}{\Big(1 -  \frac{1}{{2^{1+1/t} - 1}}\Big)} \frac{\log n}{n}$
 then the edges $e$ that satisfy $\|u-v\|_2 \leq \ell_1$% or $\ell_2 \leq \|u-v\|_2 \leq r_s$
  constitute two disjoint connected components corresponding to the two original communities.
\end{lemma}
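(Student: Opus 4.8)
The plan is to reduce the statement to the connectivity theorem for high-dimensional random annulus graphs, Theorem~\ref{thm:highdem1}, applied separately to each of the two communities. Write $C \equiv \frac{8(t+1)\psi(t)}{1 - \frac{1}{2^{1+1/t}-1}}$, so the hypothesis reads $\ell_1^t > C\,\frac{\log n}{n}$. Throughout I would condition on the event $|E| = O(n\log n)$, which holds with probability $1-o(1)$ in the connectivity regime $r_s = \Theta((\log n/n)^{1/t})$ (Chernoff on vertex degrees, as used in the proofs of Lemma~\ref{lem:defn} and Lemma~\ref{lem:main1d}).

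First I would pin down the combinatorial structure of the set of retained edges of length at most $\ell_1$. By Lemma~\ref{lem:diff}, each fixed edge $(u,v)\in E$ with $u\nsim v$ is deleted by the algorithm with probability $1-O(1/(n\log^2 n))$; a union bound over the $O(n\log n)$ edges gives that, with probability $1-O(1/\log n)$, \emph{all} inter-cluster edges are deleted, so every surviving edge — in particular every surviving edge with $\|u-v\|_2\le \ell_1$ — has both endpoints in the same $V_i$. Conversely, since $\ell_1$ is by construction the largest distance at which Cond.~1 of Lemma~\ref{lem:main1d} holds, and the left-hand side of Cond.~1 is monotone decreasing in $\ell$ (both $\mathcal V_t(r_s,r_s,\ell)$ and $\mathcal V_t(r_d,r_d,\ell)$ decrease as $\ell$ grows), Cond.~1 is satisfied at \emph{every} distance $\le\ell_1$. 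Hence by Lemma~\ref{lem:essame-1} each fixed intra-cluster edge $(u,v)$ with $\|u-v\|_2\le\ell_1$ survives with probability $1-O(1/(n\log^2 n))$, and a second union bound keeps all of them with probability $1-o(1)$. Putting these together: with probability $1-o(1)$ the graph formed by the retained edges of length $\le\ell_1$ is exactly $G_1\sqcup G_2$, where $G_i$ is the graph on the $n/2$ points $\{X_u:u\in V_i\}$ — which are i.i.d.\ uniform on $S^t$ — with an edge between any two of them at Euclidean distance $\le\ell_1$. Thus $G_i$ has the law of ${\rm RAG}_t(n/2,[0,\ell_1])$, and the two factors are trivially disjoint and span $V_1$ and $V_2$ respectively.

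It then remains to show each $G_i$ is connected with probability $1-o(1)$, for which I would invoke Theorem~\ref{thm:highdem1} with inner radius $0$ and $n/2$ vertices. Writing $\ell_1 = a\big(\tfrac{\log(n/2)}{n/2}\big)^{1/t}$ we have $a^t = \ell_1^t\cdot\tfrac{n/2}{\log(n/2)} > C\cdot\tfrac{\log n}{2\log(n/2)}$, so $a^t$ exceeds the threshold constant $C$ demanded by Theorem~\ref{thm:highdem1} (the side condition $a>2^{1+1/t}b$ is vacuous when $b=0$); this gives connectivity of $G_i$ with probability $1-o(1)$, and a final union bound over $i\in\{1,2\}$ and over the two structural events of the previous paragraph finishes the proof.

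The substance is entirely in the cited ingredients, so the one point genuinely requiring care is the constant-tracking in the last paragraph: passing from $n$ to $n/2$ vertices inside Theorem~\ref{thm:highdem1} replaces $\tfrac{\log n}{n}$ by $\tfrac{\log(n/2)}{n/2}$, which is larger by a factor tending to $2$, so $\tfrac{\log n}{2\log(n/2)}\to\tfrac12$ and one must ensure the slack in $C$ absorbs this — $C$ sits an $\tfrac{8(t+1)}{1-1/(2^{1+1/t}-1)}$ factor above the isolated-vertex threshold $\psi(t)$ of Theorem~\ref{th:lb}, so the sufficient-condition proof of Theorem~\ref{thm:highdem1} has room to spare; alternatively, and more robustly, one can also feed the retained edges of length in $[\ell_2,r_s]$ into $G_i$ (by Lemma~\ref{lem:edsamed-1} those survive too, and adding edges never destroys connectivity) so that the threshold is reached without fuss. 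Everything else — the two union bounds and the $|E|=O(n\log n)$ bound — is routine.
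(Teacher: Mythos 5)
Your proof takes exactly the route the paper takes: the paper's entire proof is the single sentence that the lemma ``follows from the result of connectivity of random annulus graphs (RAG) in dimension $t$, i.e., Theorem~\ref{thm:highdem1},'' and you likewise reduce each community, restricted to edges of length at most $\ell_1$, to an ${\rm RAG}_t(n/2,[0,\ell_1])$ and cite Theorem~\ref{thm:highdem1} with $b=0$. You are more careful than the paper in two respects. First, you explicitly establish that the retained short-length edges are exactly the intra-cluster short edges (via Lemmas~\ref{lem:diff} and~\ref{lem:essame-1} and a union bound over the $O(n\log n)$ edges), which also gives the ``disjoint'' half of the claim; the paper leaves this implicit. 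Second, and more importantly, you have caught a genuine constant-tracking oversight that the paper does not address: Theorem~\ref{thm:highdem1} is a statement about ${\rm RAG}_t(m,[\cdot,\cdot])$ with radius measured against $(\log m/m)^{1/t}$, and here $m=n/2$, so writing $\ell_1 = a'\bigl(\tfrac{\log(n/2)}{n/2}\bigr)^{1/t}$ the hypothesis $\ell_1^t > C\,\tfrac{\log n}{n}$ only gives $a'^t \to C/2$, falling short of the $a'^t \ge C$ that Theorem~\ref{thm:highdem1} literally demands. Strictly read, the lemma's constant should be doubled for the cited theorem to apply off the shelf.

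Neither of your two offered patches fully closes this as a proof of the lemma \emph{as stated}. Appealing to ``room to spare'' in the proof of Theorem~\ref{thm:highdem1} is plausible — the constant $C = 8(t+1)\psi(t)/\delta$ sits well above the true RGG connectivity threshold $\psi(t)$, so the needed factor of $2$ certainly exists in principle — but it requires reopening and re-deriving the VC-dimension bound of Lemma~\ref{lem:high_stuff1} with the improved constant, which you do not do. The second patch (adding the retained edges of length in $[\ell_2,r_s]$) proves a different, weaker statement than the lemma, since the lemma is explicitly about the $\ell\le\ell_1$ edges alone; it is, however, the statement the paper actually needs downstream, and indeed mirrors how the one-dimensional case is handled in Theorem~\ref{gbm:upper} via Corollary~\ref{cor:patches}. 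Since the only use of this lemma is inside the purely asymptotic Theorem~\ref{theorem:intro-1} (where $\ell_1$ is chosen as $a'_t(\log n/n)^{1/t}$ with $a'_t$ a ``suitable'' constant), the factor-of-two slip is immaterial to the paper's conclusions — but it does mean that, as a standalone statement, the lemma should have its constant adjusted, and your noticing this is a real contribution rather than a defect of your write-up.
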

\begin{proof}
Proof of this lemma follows from the result of connectivity of random annulus graphs (RAG) in dimension $t$, i.e., Theorem~\ref{thm:highdem1}. % presented in Section~\ref{sec:rag} 
\end{proof}

We now find out the values of $r_s$ and $r_d$ such that $\ell_1$ and $\ell_2$ satisfy the condition of Lemma~\ref{RAG:cond} as well as Cond. 1 and Cond 2. respectively.

\begin{theorem}
Algorithm \ref{alg:alg1} recovers the clusters with probability $1-o(1)$ if $r_s=\Theta((\frac{\log{n}}{n})^{\frac{1}{t}})$ and $r_s-r_d=\Omega( (\frac{\log n}{n})^{\frac{1}{t}})$.
\end{theorem}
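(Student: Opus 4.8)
The plan is to show that the parameters $r_s = \Theta((\log n/n)^{1/t})$ and $r_s - r_d = \Omega((\log n/n)^{1/t})$ are exactly enough to make the threshold distances $\ell_1$ and $\ell_2$ from Lemmas~\ref{lem:essame-1} and \ref{lem:edsamed-1} behave well: $\ell_1$ large enough to invoke the RAG connectivity result (Lemma~\ref{RAG:cond}, hence Theorem~\ref{thm:highdem1}), while $\ell_2 \le r_s$ so that the edges of length in $(\ell_1, \ell_2)$ that do get removed do not matter, and so that no edge that should be kept is removed. First I would write $r_s = a(\log n/n)^{1/t}$ and $r_d = b(\log n/n)^{1/t}$ with $a - b = \Omega(1)$ (possibly rescaling so $a \ge 2^{1+1/t} b$, which we may assume by taking the constant in $r_s - r_d = \Omega((\log n/n)^{1/t})$ large enough — or by first splitting into the easy case $r_s > 2r_d$ and the case $r_s \le 2r_d$ handled in the section). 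Then I would use Lemma~\ref{lem:area} to write all the relevant volumes as constants times powers of $(\log n/n)^{1/t}$: $B_t(r_s) = c_t a^t \log n / n$, $B_t(r_d) = c_t b^t \log n / n$, and $\mathcal V_t(r_s, r_s, \ell)$, $\mathcal V_t(r_d, r_d, \ell)$, $\mathcal V_t(r_s, r_d, \ell)$ each of the form $(\text{const} \cdot g(\ell/(\log n/n)^{1/t})) \cdot \log n / n$ for appropriate scale-invariant functions $g$ of the normalized distance. This reduces every inequality (the definition of $E_S$, $E_D$, Cond.~1, Cond.~2) to an inequality among constants multiplied by the common factor $\log n / n$, with the $\sqrt{n \log n}$ error terms becoming lower-order since $\sqrt{n \log n \cdot (\log n / n)} = \log n = o(n \cdot \log n/n / \sqrt{\log n})$... more precisely the Chernoff slack is of smaller order than the mean, which is $\Theta(\log n)$.

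Next I would carry out the scaling argument explicitly. Set $\theta = \ell/(\log n/n)^{1/t}$. Then Cond.~1 becomes: there is a constant $\theta_1^{\max}$, depending only on $a, b, t$ and the constant $c_s^{(t)}$ in $E_S$, such that Cond.~1 holds for all $\theta \le \theta_1^{\max}$; similarly Cond.~2 gives a constant $\theta_2^{\min}$. Thus $\ell_1 = \theta_1^{\max}(\log n/n)^{1/t}$ and $\ell_2 = \theta_2^{\min}(\log n/n)^{1/t}$. Lemma~\ref{RAG:cond} requires $\ell_1^t > \frac{8(t+1)\psi(t)}{1 - (2^{1+1/t}-1)^{-1}} \cdot \frac{\log n}{n}$, i.e. $(\theta_1^{\max})^t$ exceeds a fixed constant $K(t)$. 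So the whole theorem reduces to: \emph{by choosing $a$ large enough relative to $b$ (equivalently the gap constant large enough), we can force $(\theta_1^{\max})^t > K(t)$.} To see this, note that $\theta_1^{\max}$ is (essentially) the largest $\theta$ with
\begin{align*}
\tfrac12\big(\mathcal V_t(a,a,\theta) + \mathcal V_t(b,b,\theta)\big)\, c_t^{-1}\cdot(\text{normalized}) \;>\; b^t\,c_s^{(t)} + o(1),
\end{align*}
and since $\mathcal V_t(a,a,0) = c_t a^t$ (two concentric identical caps), for $\theta$ bounded away from $2a$ the left side is $\Theta(a^t)$ while the right side is $\Theta(b^t)$; taking $a^t / b^t$ large makes $\theta_1^{\max}$ grow without bound (in fact $\theta_1^{\max} = \Theta(a)$), so $(\theta_1^{\max})^t = \Theta(a^t) \to \infty$, easily beating $K(t)$. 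I would also need to check $\ell_1 < \ell_2 < r_s$ — the first is Lemma~\ref{lem:main1d}'s remark, the second follows because Cond.~2's left side, being roughly $\tfrac{n}{2}\mathcal V_t(r_s,r_s,\ell) + \dots$, is at $\ell = r_s$ equal to (order) the count for a just-barely edge, which is below $E_D$ once $r_s - r_d$ is a large enough constant multiple of $(\log n/n)^{1/t}$.

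Finally I would assemble the pieces: Lemma~\ref{lem:diff} (plus a union bound over the $O(n\log n)$ edges) removes all inter-cluster edges with probability $1 - o(1)$; Lemma~\ref{lem:main1d} shows every intra-cluster edge of normalized length $\le \theta_1^{\max}$ (or $\ge \theta_2^{\min}$) is kept with probability $1 - o(1)$; Lemma~\ref{RAG:cond}, applicable because $(\theta_1^{\max})^t > K(t)$, shows the kept short edges already connect each community; hence the redacted graph has exactly two connected components equal to $V_1$ and $V_2$, and returning connected components recovers the partition. The main obstacle I anticipate is the bookkeeping of the intersection-volume functions $\mathcal V_t(r_1, r_2, \ell)$ in high dimensions — establishing that they are decreasing in $\ell$, that they scale as claimed for $\ell = o(1)$, and pinning down their $\ell \to 0$ limits well enough to show $\theta_1^{\max}$ genuinely grows like $a$ — rather than anything probabilistic; the concentration arguments are routine Chernoff bounds whose error terms are transparently of lower order. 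One should also be careful that the constants $c_s^{(t)}, c_d^{(t)}$ in $E_S, E_D$ can be chosen consistently so that both the "remove inter-cluster" and "keep intra-cluster" requirements hold simultaneously, which is where the precise form $r_s \le 2r_d$ versus $r_s > 2r_d$ matters and may require treating the two regimes separately.
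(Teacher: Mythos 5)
Your proposal is correct and takes essentially the same approach as the paper's proof: reduce $r_s$, $r_d$, $\ell_1$, $\ell_2$ to dimensionless constants via the common scaling $(\log n/n)^{1/t}$, use $B_t(r)=\Theta(r^t)$ and the corresponding asymptotics for $\mathcal V_t$ to turn Cond.~1, Cond.~2, $E_S$, $E_D$ into inequalities among constants (with the Chernoff slack of lower order), and then observe that making $a^t/b^t$ large enough forces $\ell_1$ to clear the RAG connectivity threshold of Lemma~\ref{RAG:cond}. The one over-caution in your writeup is the worry about ensuring $a > 2^{1+1/t}b$: the RAG connectivity theorem is applied to the short-edge graph $\{e:\|u-v\|_2\le\ell_1\}$, which is an annulus graph with inner radius $0$, so that hypothesis is vacuous there; beyond that minor point, your argument is a fleshed-out version of the paper's (terser) reasoning and, if anything, more explicit about why the constants $c_s^{(t)},c_d^{(t)}$ can be chosen consistently.
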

\begin{proof}
Let us take $r_s=a_t \Big(\frac{\log n}{n}\Big)^{\frac{1}{t}}$, $r_d \leq b_t \Big(\frac{\log n}{n}\Big)^{\frac{1}{t}}$ for some large constants $a_t$ and $b_t$ that depends on $t$. Then to satisfy Lemma~\ref{RAG:cond}, we can take $\ell_1=a'_t \Big(\frac{\log n}{n}\Big)^{\frac{1}{t}}$ and $\ell_2=b'_t \Big(\frac{\log n}{n}\Big)^{\frac{1}{t}}$ again for suitable constants $a'_t$ and $b'_t$. While it is possible to concisely compute 
$B_t(r)$ and $\cV_t(r_1,r_2,x)$ \cite{li2011concise,ellis2007random}, for the purpose of analysis it is sufficient to know $B_t(r)=\Theta(r^t)$ for fixed $t$. Moreover, $\cV_t(r_1,r_2,x)=\Theta((r_1+r_2-x)^t)$ if $r_1+r_2\geq x \geq \max(r_1,r_2)$, $\Theta(\min\{r_1,r_2\}^t)$ if $x \leq \max(r_1,r_2)$ and $0$ otherwise.

Then, Cond 1. requires 
$$\frac{n}{2}\Big(\mathcal{V}_t(r_s,r_s,\ell_1)+\mathcal{V}_t(r_d,r_d,\ell_1)\Big)- \sqrt{2n\log n}\Big(\sqrt{B_t(r_s)} + \sqrt{B_t(r_d)} \Big) > E_S $$
and Cond 2. requires
$$\frac{n}{2}\Big(\mathcal{V}_t(r_s,r_s,\ell_2+\mathcal{V}_t(r_d,r_d,\ell_2)\Big))+ \sqrt{n\log{n}}\sqrt{[\mathcal{V}_t(r_s,r_s,\ell)+\mathcal{V}_t(r_d,r_d,\ell)]} \leq E_D.$$

By selecting the constants $c^{(t)}_s$ and $c^{t}_d$ involving $E_S$ and $E_D$ suitably, both the conditions are satisfied.
%and for any two nodes $u,v$ belonging to the same cluster let us represent $d(u,v)$ as $x=\theta \sqrt[t]{\frac{\log n}{n}}$. In that case if we substitute the values in the conditions of Theorem \ref{thm:first}, then it leads to some simplification since the conditions now become independent of $n$. Let us further define
%\begin{align*}
%& \theta_1=\argmax_{\theta} \Big(2\mathcal{V}(a,b,b)-\mathcal{V}(a,a,\theta)-\mathcal{V}(b,b,\theta)\Big) \\
%&>c_2\Big(\sqrt{\mathcal{V}(a,a,\theta)} +\sqrt{\mathcal{V}(b,b,\theta)}+\sqrt{2B(b)}\Big)
%\end{align*}
%and 
%\begin{align*}
%&\theta_2=\argmin_x \Big(\mathcal{V}(a,a,\theta)+\mathcal{V}(b,b,\theta)-2B(b)\Big) \\
%& >c_1\Big(\sqrt{\mathcal{V}(a,a,\theta)} + \sqrt{\mathcal{V}(b,b,\theta)} +\sqrt{3B(b)}\Big)
%\end{align*}
%for some absolute constants $c_1,c_2$.  
%Now, we can observe that if $a$ is increased keeping $b$ fixed, then $\theta_1$ increases and $\theta_2$ decreases since $\mathcal{V}(R,r,x)$ is an increasing function of $R$ and $r$ and a decreasing function of $x$. In that case, at some point $\theta_1$ and $\theta_2$ satisfies the conditions of RAG connectivity (Theorem \ref{thm:highdem}) and therefore our algorithm is order optimal.
\end{proof}

\bibliographystyle{abbrv}
\bibliography{references} 

\appendix
\section{Omitted Proofs}

\begin{proof}Consider a node $u$ and assume that the position of $u$ is $0$. Associate a random variable $A_u^i$ for $i \in \{1,2,3,4\}$ which takes the value of $1$ when there does not exist any node $x$ such that 

\begin{figure}[h]
\centering
\begin{tikzpicture}[thick, scale=0.9]
\draw [black,<->] (-0.5,1.8) -- (-1,1.8)node at (0.5,1.8) {$a-b-\frac{c}{2}$};;
\draw [gray,<->] (-0.5,1.3) -- (-1,1.3)node at (0.5,1.3) {$\frac{c}{2}$};;
\draw[pattern=north west lines,pattern color=red][preaction={fill=blue!20}] (3.5,1.9) -- (4.5,1.9) -- (4.5,1.7) -- (3.5,1.7) -- cycle node at (4.8,1.8) {$A_u^3$};;
\draw[pattern=grid,pattern color=blue][preaction={fill=red!50}]   (3.5,1.4) -- (4.5,1.4) -- (4.5,1.2) -- (3.5,1.2) -- cycle node at (4.8,1.2) {$A_u^4$};;
\draw[fill=blue,blue]   (1.5,1.9) -- (2.5,1.9) -- (2.5,1.7) -- (1.5,1.7) -- cycle node at (2.8,1.8) {$A_u^1$};;
\draw[pattern=north east lines,pattern color=violet][preaction={fill=green!30}] (1.5,1.4) -- (2.5,1.4) -- (2.5,1.2) -- (1.5,1.2) -- cycle node at (2.8,1.2) {$A_u^2$};;

\draw[fill=blue,blue]  (3.5,0) -- (4.5,0) -- (4.5,-0.3) -- (3.5,-0.3) -- cycle;
\draw[fill=blue,blue]  (0,0) -- (1,0) -- (1,-0.3) -- (0,-0.3) -- cycle;
\draw[fill=blue,blue]  (-0.5,0) -- (-1,0) -- (-1,-0.3) -- (-0.5,-0.3) -- cycle;

\draw[pattern=north east lines,pattern color=violet][preaction={fill=green!30}] (3.5,-0.6) -- (4.5,-0.60) -- (4.5,-0.3) -- (3.5,-0.3) -- cycle;
\draw[pattern=north east lines,pattern color=violet][preaction={fill=green!30}]  (0,-0.6) -- (1,-0.6) -- (1,-0.3) -- (0,-0.3) -- cycle;
\draw[pattern=north east lines,pattern color=violet][preaction={fill=green!30}]  (3,0) -- (3.5,0) -- (3.5,-0.3) -- (3,-0.3) -- cycle;

\draw[pattern=north west lines,pattern color=red][preaction={fill=blue!20}] (-3.5,0) -- (-4.5,0) -- (-4.5,-0.3) -- (-3.5,-0.3) -- cycle;
\draw[pattern=north west lines,pattern color=red][preaction={fill=blue!20}] (0,-0.60) -- (-1,-0.60) -- (-1,-0.3) -- (0,-0.3) -- cycle;
\draw[pattern=north west lines,pattern color=red][preaction={fill=blue!20}] (0.5,-0.6) -- (1,-0.6) -- (1,-0.3) -- (0.5,-0.3) -- cycle;

\draw[pattern=grid,pattern color=blue][preaction={fill=red!50}](-3.5,-0.6) -- (-4.5,-0.6) -- (-4.5,-0.3) -- (-3.5,-0.3) -- cycle;
\draw[pattern=grid,pattern color=blue][preaction={fill=red!50}] (0,-0.6) -- (-1,-0.6) -- (-1,-0.9) -- (0,-0.9) -- cycle;
\draw[pattern=grid,pattern color=blue][preaction={fill=red!50}]  (-3,-0.3) -- (-3.5,-0.3) -- (-3.5,-0.6) -- (-3,-0.6) -- cycle;
%\draw [gray,<->] (-3.5,-1.3) -- (-4.5,-1.3);
%\draw [gray,<->] (3.5,-1.3) -- (4.5,-1.3);

\draw [gray,<->] (0,-1.3) -- (0.5,-1.3);
\draw [gray,<->] (0,-1.3) -- (-.5,-1.3);
\draw [black,<->] (-3.50,-1.3) -- (-3,-1.3);
\draw [black,<->] (3.50,-1.3) -- (3,-1.3);

\draw [gray,<->] (-5,0) -- (5,0);
%\draw [gray,dashed] (4.5,0) -- (4.5,-1.4);
\draw [gray,dashed] (0.5,0) -- (0.5,-1.4);
\draw [gray,dashed] (0,0) -- (0,-1.4);
\draw [gray,dashed] (-0.5,0) -- (-0.5,-1.4);
\draw [gray,dashed] (-3,0) -- (-3,-1.4);
\draw [gray,dashed] (-3.5,0) -- (-3.5,-1.4);
\draw [gray,dashed] (3.5,0) -- (3.5,-1.4);

%\draw [gray,dashed] (-4.5,0) -- (-4.5,-1.4);

\draw [gray,dashed] (3,0) -- (3,-1.4);
\draw node[anchor=south] at (3.5,0) {$b$};
\draw node[anchor=south] at (4.5,0) {$a$};
\draw node[anchor=south] at (-3.5,0) {$-b$};
\draw node[anchor=south] at (-4.5,0) {$-a$};
\draw node[anchor=south] at (-1,0) {$-c$};

\draw node[anchor=south] at (1,0) {$c$};

%\draw [blue, line width=1pt,<->] (3.5,-0.5) -- (4.5,-0.5);
%\draw [blue, line width=1pt,<->] (0,-0.5) -- (0.5,-0.5);
%
%\draw [violet, line width=1pt,<->] (3.5,-0.8) -- (4.5,-0.8);
%\draw [violet, line width=1pt,<->] (-3.5,-0.8) -- (-3,-0.8);
%
%\draw [red, line width=1pt,<->] (-3.5,-1.1) -- (-4.5,-1.1);
%\draw [red, line width=1pt,<->] (0,-1.1) -- (-0.5,-1.1);
%
%
%\draw [green, line width=1pt,<->] (3.5,-1.4) -- (4,-1.4);
%\draw [green, line width=1pt,<->] (-3.5,-1.4) -- (-4.5,-1.4);
\filldraw [black] (0,0) circle (2pt)node[anchor=south] at (0,0) {$u$};
\end{tikzpicture}
\caption{The representation of different intervals corresponding to each random variable as described in Corollary \ref{cor:extra}}
\end{figure}
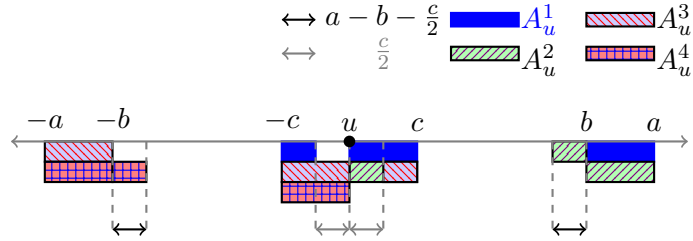

\begin{enumerate}
\item $d(u,x) \in [b\frac{\log n}{n},a\frac{\log n}{n}] \cup [0,c\frac{\log n}{n}]  \cup [\frac{-c\log n}{n},\frac{-c/2\log n}{n}]\} \text{ for } i=1$
\item $d(u,x) \in [b\frac{\log n}{n},a\frac{\log n}{n}] \cup [0,c\frac{\log n}{n}]  \cup [\frac{b-c/2\log n}{n},\frac{(a-c)\log n}{n}]\} \text{  for  } i=2$
\item $d(u,x) \in [-a\frac{\log n}{n},-b\frac{\log n}{n}] \cup [-c\frac{\log n}{n},0]  \cup  [\frac{c/2\log n}{n},\frac{c\log n}{n}]\} \text{  for  } i=3$
\item $d(u,x) \in [-a\frac{\log n}{n},-b\frac{\log n}{n}] \cup [-c\frac{\log n}{n},0]  \cup [\frac{(c-a)\log n}{n},\frac{(c/2-b)\log n}{n}]\} \text{  for  } i=4$
\end{enumerate}

\begin{align*}
\Pr(A_u^i=1)&=
\begin{cases}
\Big(1-(c+a-b+(a-b-c/2))\frac{\log n}{n}\Big)^{n} \text{ when }a-c<b \text{ and } b-c/2 > c\\
\Big(1-(b-c)\frac{\log n}{n}\Big)^{n} \text{ when }a-c<b \text{ and } b-c/2 < c\\
\Big(1-(a)\frac{\log n}{n}\Big)^{n} \text{ when }a-c\ge b \text{ and } b-c/2 < c\\
\Big(1-(c+a-b+c/2)\frac{\log n}{n}\Big)^{n} \text{ when }a-c \ge b \text{ and } b-c/2 \ge c\\
\end{cases}
\end{align*}
%Notice that $\Pr(A^i_u=1)=\Big(1-(c+a-b+\min\{c/2,a-b-c/2\})\frac{\log n}{n}\Big)^{n}$ when $a-c<b$ and $b-c/2 > b$ and therefore $\sum_{i,u} \avg A^i_u=2n^{1-1.5(\alpha-\beta)}+2\max\{n^{1-1.5(\alpha-\beta)},n^{1-\alpha}\}$ and therefore for $\alpha-\beta \ge 0.67$ and $\alpha \ge 1$, there exists nodes fulfilling all the above conditions for every node $u$ almost surely. This analysis can be done for the other regions as well and hence for every node $u$, there exists nodes in $A_1^u,A_2^u,A_3^u$ and $A_4^u$ for every node $u$ by a simple union bound. 
%Observe that the constraint $\alpha<3\beta$ results in the regions $[0,\frac{\alpha-\beta}{2}\frac{\log n}{n}],[-\beta\frac{\log n}{n},\frac{-\alpha-\beta}{2}\frac{\log n}{n}],[\frac{-\alpha+\beta}{2}\frac{\log n}{n},0]$ and $[\beta\frac{\log n}{n},\frac{\alpha+\beta}{2}\frac{\log n}{n}]$ to have no overlap and therefore they are disjoint regions. 
Notice that $A_u^1$ and $A_u^2$ being zero implies that either there is a node in $\{x \mid d(u,x) \in [b\frac{\log n}{n},a\frac{\log n}{n}]\cup [0,c\frac{\log n}{n}] \}$ or there exists nodes $(v_1,v_2)$ in $\{x \mid d(u,x) \in [\frac{-c\log n}{n},\frac{-c/2\log n}{n}]\}$ and $\{x \mid d(u,x) \in  [\frac{b-c/2\log n}{n},\frac{(a-c)\log n}{n}]\}$. In the second case, $u$ is connected to $v_1$ and $v_1$ is connected to $v_2$. Therefore $u$ has nodes on left and right and $u$ is connected to both of them although not directly. Similarly $A_u^3$ and $A_u^4$ being zero implies that there exist nodes in $\{x \mid d(u,x) \in  [-a\frac{\log n}{n},-b\frac{\log n}{n}]\cup [-c\frac{\log n}{n},0]\}$ or again $u$ will have nodes on left and right and will be connected to them. So , when all the $4$ events happen together, the only exceptional case is when there are nodes in $\{x \mid d(u,x) \in [b\frac{\log n}{n},a\frac{\log n}{n}]\cup [0,c\frac{\log n}{n}] \}$ and $\{x \mid d(u,x) \in  [-a\frac{\log n}{n},b\frac{\log n}{n}]\cup [-c\frac{\log n}{n},0] \}$. But in that case $u$ has direct neighbors on both its left and right. So, we can conclude that for every node $u$, there exists a node $v$ such that $d(u,v) \in[0,\frac{a\log n}{n}]$ and a node $w$ such that $d(u,w) \in[\frac{-a\log n}{n},0]$ such that $u$ is connected to both $v$ and $w$. This implies that every node $u$ has neighbors on both its left and right  and therefore every node is part of a cycle that covers $[0,1]$. % {\color{red} The rest of the proof goes in the same way as Lemma \ref{lem:conninf} and hence we are done.}
\end{proof}

\begin{lemma}\label{lem:disc}
A random geometric graph $G(n,\frac{a \log n}{n})$ will have $\omega(1)$ disconnected components for $a <1.$
\end{lemma}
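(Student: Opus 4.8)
I would follow the ``analogous argument'' alluded to just after Theorem~\ref{thm:lower_bound}: rather than isolated vertices, count vertices having no neighbour on their left, and show via the second moment method that there are $\omega(1)$ of them. Throughout, think of the $n$ vertices as i.i.d.\ uniform points on the circle of unit circumference and set $r \equiv \frac{a\log n}{n}$.

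First I would reduce connectivity to a counting problem. For each vertex $u$, let $B_u$ be the indicator that no other vertex lies in the arc of length $r$ immediately counterclockwise of $X_u$; call such a $u$ \emph{left-free}. If $u$ is left-free, then almost surely the gap between $u$ and its counterclockwise neighbour exceeds $r$, so no edge of $G(n,r)$ crosses that gap (edges require distance $\le r$). Distinct left-free vertices witness distinct such gaps, so with $A \equiv \sum_u B_u$ these $A$ gaps partition the circle into $A$ arcs between which there are no edges; hence the number of connected components of $G(n,r)$ is at least $A$. It therefore suffices to show $A = \omega(1)$ with probability $1-o(1)$.

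Next I would run the first/second moment computation, which mirrors the proof of Theorem~\ref{thm:lower_bound} almost verbatim (with $2(a-b)$ there replaced by $a$ here). Since the remaining $n-1$ points are uniform, $\Pr(B_u=1) = (1-r)^{n-1}$, so $\avg[A] = n(1-r)^{n-1} = n^{1-a-o(1)} \to \infty$ as $a<1$. For the variance, $\Var(A) \le \avg[A] + \sum_{u\ne v}\big(\Pr(B_u=1\cap B_v=1) - \Pr(B_u=1)\Pr(B_v=1)\big)$; splitting on whether the circle-distance between $u$ and $v$ exceeds $r$ (so the two length-$r$ arcs are disjoint, giving $\Pr(B_u=1\cap B_v=1) = (1-2r)^{n-2}$) or not (probability $\le 2r$, and then $\Pr(B_u=1\cap B_v=1)\le\Pr(B_u=1)$) yields, exactly as before, $\Var(A) \le \avg[A]\,(1 + O(\log n))$. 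Chebyshev's inequality with $\sqrt{\log n}$ standard deviations then gives, with probability $1-o(1)$,
\[
A \ge \avg[A] - \sqrt{\log n}\cdot\sqrt{\Var(A)} \ge n^{1-a-o(1)} - O\!\big(n^{(1-a)/2}\log n\big) = \omega(1),
\]
because $1-a>0$ forces the first term to dominate.

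The only step that is not entirely mechanical is the reduction in the second paragraph --- the observation that left-free vertices are in one-to-one correspondence with distinct ``unbridgeable'' gaps and hence lower bound the component count --- which relies essentially on the one-dimensional circular geometry. Everything else is the standard second moment argument already carried out for isolated vertices in Theorem~\ref{thm:lower_bound}, so I expect no real obstacle there; the one point to double-check is that the Chebyshev deviation $n^{(1-a)/2}\,\mathrm{polylog}\,n$ is genuinely of smaller order than the mean $n^{1-a-o(1)}$, which holds precisely because $a<1$.
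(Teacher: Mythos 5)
Your proposal is correct and takes essentially the same approach as the paper: you define an indicator for each vertex having no neighbour in the counterclockwise (left) arc of length $r$, compute the first moment $n(1-r)^{n-1}=n^{1-a-o(1)}$, bound the variance by splitting on whether the two arcs are disjoint, apply Chebyshev, and then convert the count of ``left-free'' vertices into a lower bound on the number of components. The only differences are cosmetic — you state the cleaner (and in fact correct) joint probability $(1-2r)^{n-2}$ where the paper loosely writes $\Pr(A_u=1)\Pr(A_v=1)$, and you argue the component count is at least $A$ rather than the paper's $A-1$ — neither of which changes the argument.
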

\begin{proof}
Define an indicator random variable $A_u$ for a node $u$ which is $1$ if it does not have a neighbor on its left. We must have that 
\begin{align*}
\Pr(A_u)=\Big(1-\frac{a \log n}{n}\Big)^{n-1}.
\end{align*}
Therefore we must have that $\sum_u \avg A_u =n^{1-a}=\Omega(1)$ if $a<1$. This statement also holds true with high probability. To show this we need to prove that the variance of $\sum_u \avg A_u$ is bounded. We have that 
\begin{align*}
{\rm Var}(A) <\avg[A]+\sum_{u \neq v} {\rm Cov}(A_u,A_v)=\avg[A]+\sum_{u \neq v} \Pr(A_u=1 \cap A_v=1)-\Pr(A_u=1)\Pr(A_v=1)
\end{align*}
Now, consider the scenario when the  vertices $u$ and $v$ are at a distance more than $\frac{2a \log n}{n}$ apart (happens with probability at least $1-\frac{4a\log n}{n})$. Then the region in $[0,1]$ that is within distance $\frac{a\log n}{n}$ from both of the vertices is empty and therefore $\Pr(A_u=1 \cap A_v=1) = \Pr(A_u=1)\Pr(A_v=1|A_u=1) \leq  \Pr(A_u=1)\Pr(A_v=1) = (\Pr(A_u=1))^2$.  When the vertices  are within distance $\frac{2a \log n}{n}$ of one another, then
$
\Pr(A_u=1 \cap A_v=1) \le \Pr(A_u=1).
$
Therefore,
\begin{align*}
\Pr(A_u=1 \cap A_v=1) \le (1-\frac{4a\log n}{n}) (\Pr(A_u=1))^2 + \frac{4a\log n}{n}\Pr(A_u=1).
\end{align*}
Consequently,
\begin{align*}
\Pr(A_u=1 \cap A_v=1)-\Pr(A_u=1)\Pr(A_v=1) &\le (1-\frac{4a\log n}{n}) (\Pr(A_u=1))^2 \\+ \frac{4a\log n}{n}\Pr(A_u=1) -& (\Pr(A_u=1))^2 
\le  \frac{4a\log n}{n}\Pr(A_u=1).
\end{align*}
Now,
$$
{\rm Var}(A) \le \avg[A] + \binom{n}{2}\frac{4a\log n}{n}\Pr(A_u=1) \le \avg[A](1+ 2a\log n).
$$
By using Chebyshev bound, with probability at least $1-\frac{1}{\log n}$, 
%we can upper bound the probability $\Pr(A_u=1 \cap A_v=1)$ by taking the regions of interest to be the same. Hence
%$$ \Pr(A_i=1 \cap A_j=1) \le \frac{4a \log n}{n}n^{-2(a-b)}+\bigg(1-\frac{4a \log n}{n}\bigg)n^{-4(a-b)}$$ and since $\Pr(A_i=1)\Pr(A_j=1)=n^{-4(a-b)}$, we have that
%$$ {\rm Var}(A) < n^{1-2(a-b)}+\frac{n^{2}}{2}\frac{4a \log n}{n}(n^{-2(a-b)}-n^{-4(a-b)})$$  
%Simplifying, we have that ${\rm Var}(A) < n^{1-2(a-b)}(1+2a\log n)$ which gives us using Chebychev that
$$A >n^{1-a}-\sqrt{n^{1-a}(1+2a\log n)\log n},$$

Now, observe that if there exists $k$ nodes which do not have a neighbor on one side, then there must exist $k-1$ disconnected components. Hence the number of disconnected components in $G(n,\frac{a \log n}{n})$ is $\omega(1)$.
\end{proof}

%\section{High Dimensional GBM in the Connectivity Regime}
%\label{appen:1}
%\input{rgg-one}
%\input{rgg-high}
\end{document}